\def\draft{0} 
\newcommand{\remove}[1]{}
\newcommand{\Ignore}{\remove}
\newcommand{\Draft}[1]{\ifnum\draft=1\texttt{ #1} \fi}
 \newcommand{\authnote}[2]{{\bf [{\color{red} #1's Note:} {\color{blue} #2}]}}
 \newcommand{\authnote}[2]{}
\newcommand{\sdotfill}{\textcolor[rgb]{0.8,0.8,0.8}{\dotfill}} 
\newenvironment{protocol}{\begin{proto}}{\vspace{-\topsep}\sdotfill\end{proto}}
\newenvironment{algorithm}{\begin{algo}}{\vspace{-\topsep}\sdotfill\end{algo}}
\newcommand{\resp}{resp.,\ }
\newcommand{\ie} {i.e.,\ }
\newcommand{\eg} {e.g.,\ }
\newcommand{\wrt} {with respect to\ }
\newcommand{\wlg} {without loss of generality\xspace}
\newcommand{\cf}{{cf.,\ }}
\newcommand{\ceil}[1]{\left\lceil #1 \right\rceil}
\newcommand{\exec}[1]{\left \langle #1 \right \rangle}
\newcommand{\set}[1]{\ens{#1}}
\newcommand{\setb}[1]{\bigl\{#1\bigl\}}
\newcommand{\floor}[1]{\left \lfloor#1 \right \rfloor}
\newcommand{\vect}[1]{\left( #1 \right)}
\newcommand{\eqdef}{:=}
\newcommand{\N}{{\mathbb{N}}}
\newcommand{\zo}{{\{0,1\}}}
\newcommand{\zn}{{\zo^n}}
\newcommand{\zm}{{\zo^m}}
\newcommand{\deffont}\emph{}
\newcommand{\eps}{\varepsilon}
\newcommand{\la}{\gets}
\newcommand{\poly}{\operatorname{poly}}
\newcommand{\Exp}{\operatorname*{E}}
\newcommand{\Hmin}{\operatorname{H_{\infty}}}
\newcommand{\negl}{\operatorname{neg}}
\newcommand{\Supp}{\operatorname{Supp}}
\newcommand{\MathAlg}[1]{\mathsf{#1}}
\renewcommand{\cref}{\Cref}
\newtheorem{theorem}{Theorem}
\newaliascnt{lemma}{theorem}
\newtheorem{lemma}[lemma]{Lemma}
\crefname{lemma}{Lemma}{Lemmas}
\newaliascnt{problem}{theorem}
\crefname{problem}{Open Problem}{Open Problems}
\newaliascnt{claim}{theorem}
\newtheorem{claim}[claim]{Claim}
\crefname{claim}{Claim}{Claims}
\newaliascnt{corollary}{theorem}
\newtheorem{corollary}[corollary]{Corollary}
\crefname{corollary}{Corollary}{Corollaries}
\newaliascnt{construction}{theorem}
\crefname{construction}{Construction}{Constructions}
\newaliascnt{fact}{theorem}
\newtheorem{fact}[fact]{Fact}
\crefname{fact}{Fact}{Facts}
\newaliascnt{proposition}{theorem}
\newtheorem{proposition}[proposition]{Proposition}
\crefname{proposition}{Proposition}{Propositions}
\newaliascnt{conjecture}{theorem}
\crefname{conjecture}{Conjecture}{Conjectures}
\newaliascnt{notation}{theorem}
\crefname{notation}{Notation}{Notation}
\newaliascnt{definition}{theorem}
\newtheorem{definition}[definition]{Definition}
\crefname{definition}{Definition}{Definitions}
\newaliascnt{remark}{theorem}
\newtheorem{remark}[remark]{Remark}
\crefname{remark}{Remark}{Remarks}
\newaliascnt{example}{theorem}
\crefname{example}{Example}{Examples}
\newaliascnt{proto}{theorem}
\newtheorem{proto}[proto]{Protocol}
\crefname{proto}{protocol}{protocols}
\newaliascnt{algo}{theorem}
\newtheorem{algo}[algo]{Algorithm}
\crefname{algo}{algorithm}{algorithms}
\newaliascnt{expr}{theorem}
\crefname{experiment}{experiment}{experiments}
\crefname{descriptiona}{description}{descriptions}
\def\FullBox{$\Box$}
\def\qed{\ifmmode\qquad\FullBox\else{\unskip\nobreak\hfil
\penalty50\hskip1em\null\nobreak\hfil\FullBox
\parfillskip=0pt\finalhyphendemerits=0\endgraf}\fi}
\def\qedsketch{\ifmmode\Box\else{\unskip\nobreak\hfil
\penalty50\hskip1em\null\nobreak\hfil$\Box$
\parfillskip=0pt\finalhyphendemerits=0\endgraf}\fi}
\newcommand{\Tau}{\mathrm{T}} 
\newcommand{\ex}[2]{\Exp_{#1}\left[#2\right]}
\newcommand{\exx}[1]{\Exp\left[#1\right]}
\newcommand{\Ex}{\Exp}
\newcommand{\pr}[1]{\Pr\left[#1\right]}
\newcommand{\prb}[1]{\Pr\bigl[#1\bigl]}
\newcommand{\ppr}[2]{\Pr_{#1}\left[#2\right]}
\newcommand{\Sc}{\ensuremath{ \MathAlg{S}}\xspace}
\newcommand{\Rc}{\ensuremath{ \MathAlg{R}}\xspace}
\newcommand{\Server}{\ensuremath{ \MathAlg{Server}}\xspace}
\newcommand{\User}{\ensuremath{ \MathAlg{User}}\xspace}
\newcommand{\Servers}{\ensuremath{{\widetilde{\Server}}}\xspace}
\newcommand{\Ss}{\ensuremath{{\widetilde{\Sc}}}\xspace}
 \newcommand{\Rs}{\ensuremath{{\widetilde{\Rc}}}\xspace}
\newcommand{\Ac}{\ensuremath{ \MathAlg{A}}\xspace}
\newcommand{\As}{\ensuremath{{\widetilde{\MathAlg{A}}}}\xspace}
\newcommand{\Bc}{\ensuremath{ \MathAlg{B}}\xspace}
\newcommand{\Ec}{\ensuremath{ \MathAlg{E}}\xspace}
\newcommand{\Mc}{\ensuremath{ \MathAlg{M}}\xspace}
\newcommand{\Dc}{\ensuremath{ \MathAlg{D}}\xspace}
\newcommand{\Cc}{\ensuremath{ \MathAlg{C}}\xspace}
\newcommand{\Inv}{\ensuremath{ \MathAlg{Inv}}\xspace}
\newcommand{\tInv}{\ensuremath{ \widetilde{\MathAlg{Inv}}}\xspace}
\newcommand{\Invd}{\ensuremath{ \MathAlg{Inv}_D}\xspace}
\newcommand{\Invf}{\ensuremath{ \MathAlg{Inv}_f}\xspace}
\newcommand{\cI}{\mathcal{I}}
\newcommand{\cY}{\mathcal{Y}}
\newcommand{\cX}{\mathcal{X}}
\newcommand{\cZ}{\mathcal{Z}}
\newcommand{\cS}{\mathcal{S}}
\newcommand{\calO}{{\mathcal{O}}}
\newcommand{\G}{G}
\newcommand{\ens}[1]{\left\{#1\right\}}
\newcommand{\size}[1]{\left|#1\right|}
\newcommand{\out}{\operatorname{out}}
\newcommand{\trans}{{\sf trans}}
\newcommand{\view}{\operatorname{view}}
\newcommand{\Uni}{{\mathord{\mathcal{U}}}}
\newcommand{\oracle}{\calO}
\newcommand{\SD}{\proba{SD}}
\newcommand{\ppt}{{\sc{pptm}}\xspace}
\newcommand{\co}{\calO}
\newcommand{\cQ}{{\cal{Q}}}
\newcommand{\Trans}{\mathrm{Trans}}
\newcommand{\Transs}{\widetilde{\mathrm{Trans}}}
\newcommand{\calH}{\mathcal{H}}
\newcommand{\Vc}{{\mathsf{V}}}
\newcommand{\prob}[2]{\Pr_{#1} \left[ #2 \right]}
\newcommand{\proba}[1]{\mathsf{\textsc{#1}}}
\newcommand{\MyAtop}[2]{\genfrac{}{}{0pt}{}{#1}{#2}}
\newcommand{\lex}{{\mathsf{lex}}}
\newcommand{\Sam}{\ensuremath{\MathAlg{Sam}}\xspace}
\newcommand{\Cf}{\ensuremath{\MathAlg{ColFinder}}\xspace}
\newcommand{\augQ}{augQueries\xspace}
\newcommand{\parent}{{\rm p}}
\newcommand{\extension}{{\mathsf{ext}}}
\newcommand{\Inp}{m}
\newcommand{\Out}{\ell}
\renewcommand{\next}{{\rm next}}
\newcommand{\ext}{{\rm Ext}}
\newcommand{\ans}{{\rm Ans}}
\newcommand{\dec}{{\rm Dec}}
\newcommand{\Jump}{\alpha\mbox{-}{\rm Jump}}
\newcommand{\Gap}{\alpha\beta\mbox{-}{\rm Gap}}
\newcommand{\GapFirst}{{\rm GapFirst}}
\newcommand{\PJump}{\alpha\mbox{-}{\rm PJump}}
\newcommand{\hit}{{\rm hit}}
\newcommand{\Balanced}{{\mathsf{Balanced}}}
\newcommand{\Hit}{{\sf Hit}}
\newcommand{\Decoder}{\ensuremath{\mathsf{Decoder}}\xspace}
\newcommand{\aux}{\ensuremath{\mathsf{aux}}\xspace}
\newcommand{\q}{q}
\newcommand{\Q}{Q}
\newcommand{\h}{h}
\renewcommand{\H}{H}
\newcommand{\s}{s}
\newcommand{\aq}{t}
\newcommand{\qv}{\overline{\q} }
\newcommand{\NoBreak}{\ensuremath{\mathsf{NoBreak}}\xspace}
\newcommand{\TwoDecom}{\ensuremath{\mathsf{TwoOpenings}}\xspace}
\newcommand{\Fail}{\ensuremath{\mathsf{Fail}}\xspace}
\newcommand{\com}{\ensuremath{\mathsf{com}}\xspace}
\newcommand{\Com}{\ensuremath{\mathsf{Com}}\xspace}
\newcommand{\decom}{\ensuremath{\mathsf{decom}}\xspace}
\begin{document}
\sloppy

\title{Finding Collisions in Interactive Protocols --- \\ Tight Lower Bounds
on the Round and Communication Complexities of Statistically Hiding Commitments\thanks{This is the final draft of this paper. The full version was published in the SIAM Journal on Computing 
	\cite{HaitnerHRS15}.  Extended abstracts of this work appeared in the Annual Symposium on Foundations of Computer Science (FOCS) 2007 \cite{HaitnerHRS07} and in  the Theory of Cryptography Conference (TCC) 2013 \cite{HaitnerHS08}.}
\Draft{\\{\small \sc Working Draft: Please Do Not Distribute}}
}
\author{Iftach Haitner\thanks{School of Computer Science, Tel Aviv University. E-mail:
 \texttt{iftachh@cs.tau.ac.il}. Research supported by ISF grant 1076/11, the Israeli Centers of Research Excellence (I-CORE) program (Center  No. 4/11), US-Israel BSF grant 2010196.} \thanks{Part of this research was conducted while at the Weizmann Institute of Science.} \and Jonathan J.\ Hoch\thanks{ImageSat Israel. E-mail:
 \texttt{hoch@imagesatisrael.com}.} \footnotemark[3] \and Omer Reingold\thanks{Stanford University and Weizmann Institute of Science. E-mail: \texttt{omer.reingold@gmail.com}. Research partially supported by the DARPA PROCEED program} \and Gil Segev\thanks{School of Computer Science and Engineering, Hebrew University of Jerusalem, Jerusalem 91904, Israel. Email:  \texttt{segev@cs.huji.ac.il}. Research supported by the European Union's Seventh Framework Programme (FP7) via a Marie Curie Career Integration Grant, by the Israel Science Foundation (Grant No.\ 483/13), and by the Israeli Centers of Research Excellence (I-CORE) Program (Center  No.\ 4/11).} \footnotemark[3]}

\begin{titlepage}
\maketitle

\begin{abstract}
We study the round and communication complexities of various cryptographic protocols. We give tight lower
bounds on the round and communication complexities of any fully black-box reduction of a statistically hiding
commitment scheme from one-way permutations, and from trapdoor permutations. As a corollary, we derive similar tight lower bounds for several
other cryptographic protocols, such as single-server private information retrieval, interactive hashing, and oblivious transfer that guarantees statistical security for one of the parties.

Our techniques extend the collision-finding oracle due to \citeauthor{Simon98} (EUROCRYPT '98) to the setting of
interactive protocols and the reconstruction paradigm of \citeauthor{GennaroT00} (FOCS '00).
\end{abstract}

\noindent\textbf{Keywords:} statistically hiding commitments; private information retrieval; one-way functions; black-box impossibility results.

\thispagestyle{empty}
\pagenumbering{gobble}
\clearpage
\tableofcontents

\thispagestyle{empty}
\clearpage
\pagenumbering{arabic}
\end{titlepage}

\section{Introduction}\label{section:Introduction}

Research in the foundations of cryptography is concerned with the construction of provably secure
cryptographic tools. The security of such constructions relies on a growing number of computational
assumptions, and in the last few decades much research has been devoted to demonstrating the
feasibility of particular cryptographic tasks based on the weakest possible assumptions. For
example, the existence of one-way functions has been shown to be equivalent to the existence of
pseudorandom functions and permutations \cite{GoldreichGM86, LubyR88}, pseudorandom generators
\cite{BlumM84, HastadILL99}, universal one-way hash functions and signature schemes \cite{NaorY89,Rompel90}, different types of commitment schemes \cite{HaitnerNgOnReVa09,HaitnerReVaWe09,HastadILL99, Naor91}, private-key encryption \cite{GoldreichGM84} and other primitives.

Many constructions based on minimal assumptions, however, result in only a theoretical impact due
to their inefficiency, and in practice more efficient constructions based on seemingly stronger
assumptions are being used. Thus, identifying tradeoffs between the \emph{efficiency} of
cryptographic constructions and the strength of the computational assumptions on which they rely is
essential in order to obtain a better understanding of the relationship between cryptographic tasks
and computational assumptions.

In this paper we follow this line of research, and study the tradeoffs between the \emph{round} and \emph{communication} complexities of cryptographic protocols on one hand, and the strength of their underlying computational
assumptions on the other. We provide lower bounds on the round and communication complexities of black-box reduction of
statistically hiding and computationally binding commitment schemes (for short, statistically hiding commitments) from one-way permutations and from families of trapdoor permutations. Our
lower bound matches known upper bounds resulting from \cite{NaorOVY98}. As a corollary of our main
result, we derive similar tight lower bounds for several other cryptographic protocols, such as
{single-server private information retrieval}, {interactive hashing}, and {oblivious
transfer} that guarantees statistical security for one of the parties.

In the following paragraphs we discuss the notion of statistically hiding commitment schemes and describe the setting in which our lower bounds are proved.

\paragraph{Statistically hiding commitments.} A commitment scheme defines a two-stage
interactive protocol between a sender \Sc and a receiver \Rc; informally, after the \emph{commit
stage}, \Sc is bound to (at most) one value, which stays hidden from \Rc, and in the \emph{reveal
stage} \Rc learns this value. The two security properties hinted at in this informal description
are known as \emph{binding} (\Sc is bound to at most one value after the commit stage) and
\emph{hiding} (\Rc does not learn the value to which \Sc commits before the reveal stage). In a
\emph{statistically hiding} commitment scheme, the hiding property holds even against \emph{all-powerful receivers}
(\ie the hiding holds information-theoretically), while the binding property is required to hold
only for polynomially bounded senders.

Statistically hiding commitments can be used as a building block in constructions of statistical
zero-knowledge arguments \cite{BrassardCC88, NaorOVY98} and of certain coin-tossing
protocols~\cite{Lindell03}. When used within protocols in which certain commitments are never
revealed, statistically hiding commitments have the following advantage over computationally hiding
commitment schemes: in such a scenario, it should be infeasible to violate the binding property
\emph{only during the execution of the protocol}, whereas the committed values will remain hidden
\emph{forever} (\ie regardless of how much time the receiver invests after the completion of the
protocol).

Statistically hiding commitments with a constant number of rounds were shown to exist based on
specific number-theoretic assumptions \cite{BoyarKK90, BrassardCC88} (or, more generally, based on
any collection of claw-free permutations \cite{GoldwasserMR88} with an efficiently recognizable
index set \cite{GoldreichK96}), and collision-resistant hash functions \cite{DamgardPP97, NaorY89}.
Protocols with higher round complexity were shown to exist based on different types of one way
functions.  (The communication complexity of the aforementioned protocols varies according to the specific hardness assumption assumed). Protocols with $\Theta( n / \log n)$ rounds and $\Theta(n)$ communication complexity (where $n$ is the input
length of the underlying function) were based on one-way permutations \cite{NaorOVY98} and (known-)
regular one-way functions \cite{HaitnerHKKMS05}.\footnote{The original presentations of the above
protocols have $\Theta(n)$ rounds. By a natural extension, however, the number of rounds in these
protocols can be reduced to $\Theta\left( n / \log n \right)$, see \cite{HaitnerR12,KoshibaS06}.} Finally, protocols with a polynomial number of rounds (and thus, polynomial communication complexity) were based on any one-way
function \cite{HaitnerNgOnReVa09,HaitnerReVaWe09}.\footnote{When provided with a non-uniform advice, the round complexity of \cite{HaitnerReVaWe09} reduces to $\Theta( n / \log n)$.}

\paragraph{Black-box reductions.} As mentioned above, the focus of this paper is proving lower bounds on the round and communication complexity of various cryptographic constructions. In particular,
showing that any construction of statistically hiding commitments based on trapdoor permutations requires a fairly large number of rounds. However, under standard assumptions (\eg the existence of
collision-resistant hash functions), \emph{constant-round statistically hiding commitments do exist}. So if these assumptions hold, the existence of trapdoor permutations implies the existence of
constant-round statistically hiding commitments in a \emph{trivial logical sense}. Faced with similar
difficulties, \citet{ImpagliazzoR89} presented a paradigm for proving
impossibility results under a restricted, yet important, subclass of reductions called
\textit{black-box reductions}. Their method was extended to showing lower bounds on the
\emph{efficiency} of reductions by \citet*{KimST99}.

Intuitively a black-box reduction of a primitive $P$ to a primitive $Q$, is a construction of $P$
out of $Q$ that ignores the internal structure of the implementation of $Q$ and just uses it as a
``subroutine" (\ie as a black-box). In the case of \emph{fully} black-box reductions, the
proof of security (showing that an adversary that breaks the implementation of $P$ implies an
adversary that breaks the implementation of $Q$) is also black-box (\ie the internal structure
of the adversary that breaks the implementation of $P$ is ignored as well). For a more exact
treatment of black-box reductions see \cref{subsection:reductions}.

\subsection{Our Results}\label{subsection:OurResults}
We study the class of fully black-box reductions of statistically hiding commitment schemes from
families of trapdoor permutations, and prove lower bounds on the round and communication complexities of such constructions. Our lower bounds hold also for \emph{enhanced} families of trapdoor permutations: one can efficiently sample a uniformly distributed public key and an element in the permutation's domain, so that inverting the element is hard, even when the random coins used for the above sampling are given as an auxiliary input. Therefore, the bounds stated below imply similar bounds for reduction from one-way permutations\footnote{In general, a black-box impossibility result that is proved w.r.t\ trapdoor permutations does not necessarily hold w.r.t.\ one-way permutation as the additional functionality (of having a trapdoor) may also be used by an attacker. Our result, however, holds even w.r.t.\ {\em enhanced} trapdoor permutations, and these can be used to simulate a one-way permutation by obliviously sampling a public key (the assumed obliviousness of the sampling algorithm enables to transform any attack against such a one-way permutation  into an attack against the underlying enhanced trapdoor permutation).}. Informally, the round complexity lower bound is as follows:
\begin{theorem}[The round complexity lower bound, informal]\label{thm:Intro:RoundComplexityLB}
Any fully black-box reduction of a statistically hiding commitment scheme from a family of
trapdoor permutations over $\zn$ has $\Omega (n / \log n)$ communication
rounds.\footnote{The result holds even if the hiding is only guaranteed to hold against \emph{honest receivers} --- receivers that follow the prescribed protocol.}
\end{theorem}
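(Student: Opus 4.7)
The plan is to follow the Impagliazzo--Rudich oracle separation framework, extending Simon's collision-finding oracle to interactive protocols, and pairing it with a Gennaro--Trevisan style reconstruction argument. Concretely, I would construct an oracle $\calO = (T, \Cf)$ such that (a) enhanced trapdoor permutations exist relative to $\calO$, while (b) every $\calO$-relative protocol that is statistically hiding and has $o(n/\log n)$ rounds can have its binding broken by an efficient $\calO$-oracle adversary. Since fully black-box reductions must work relative to every oracle, this forces the round complexity to be $\Omega(n/\log n)$.

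For the oracle, let $T$ be a uniformly random enhanced trapdoor permutation: a random permutation on the public keys, and for each public key a random permutation on $\zn$ together with its (oblivious) sampling and trapdoor inversion. The new ingredient is $\Cf$, the interactive analogue of Simon's collision finder. Given a description of a (deterministic) circuit-valued next-message function $f$ for one party of a two-party protocol and a partial transcript $\tau$, $\Cf$ returns a pair $(r, r')$ of that party's random coins each consistent with $\tau$, drawn from the natural conditional distribution and such that (with overwhelming probability, when applied to the commit phase of a sender) they decommit to different values. The sender adversary invokes $\Cf$ once per round of the commit phase to progressively reach a transcript with two valid openings to different bits, in the spirit of the binding attack of \citet{Simon98} but now distributed across the rounds.

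The first main step is to show that for any $k = o(n/\log n)$-round statistically hiding commitment, this attacker succeeds. Here I would argue that statistical hiding against the honest receiver implies that, for most receiver views, the conditional distribution over sender coins consistent with the transcript places nearly equal weight on openings to $0$ and to $1$. A hybrid/Markov argument over the $k$ rounds then says that a random walk which, at each round, replaces the sender's coins by a freshly sampled $\Cf$-collision stays close in distribution to a real execution; with constant probability the two terminal coin sequences open to different bits. The total number of $\Cf$-queries used by the attack is polynomial in $k$ and in the protocol length.

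The main obstacle — and the technical heart of the paper — is step (a): showing that $T$ is still one-way (as an enhanced TDP) even given access to $\Cf$. I would adapt the Gennaro--Trevisan reconstruction paradigm. Suppose an adversary $\Inv$, making $q = \poly(n)$ queries to $(T, \Cf)$, inverts $T$ on an $\eps$-fraction of inputs. The encoder, given $T$, runs $\Inv$ in its head on many image points simultaneously and constructs a short advice string from which a decoder can reproduce the truth table of $T$; since a random permutation is incompressible, this yields $\eps \leq \negl(n)$. The delicate point is bookkeeping the $T$-queries ``revealed'' by each $\Cf$ answer: each $\Cf$ call simulates a full interactive execution of some attacker-chosen protocol, which may itself make many $T$-queries, so one must answer $\Cf$ by a consistent simulation against a carefully maintained partial view of $T$, charging only the queries that actually participate in the collision. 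This is exactly the interactive generalization of Simon's counting, and is what dictates the $\log n$ factor in the bound: with $k$ rounds, each $\Cf$-answer is charged roughly $k$ ``heavy queries,'' and the encoding argument goes through as long as $k \cdot \log(1/\eps)^{-1} \cdot q \ll 2^n$, i.e.\ $k = O(n/\log n)$ is the crossover. Combining the three steps yields the claimed $\Omega(n/\log n)$ lower bound on the round complexity of any fully black-box reduction.
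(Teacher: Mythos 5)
Your high-level plan — a Simon-style oracle for finding collisions in interactive protocols, combined with a Gennaro--Trevisan reconstruction argument — is the right starting point, but the proposal has gaps in exactly the places the paper has to work hardest.

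First, the attack on the commitment does not go through as described. You say the cheating sender ``invokes $\Cf$ once per round,'' where each $\Cf$ call returns a \emph{fresh} pair $(r,r')$ of coins consistent with the partial transcript. If both $r$ and $r'$ are drawn anew, there is no mechanism to ensure they are consistent with the sender's messages already sent in earlier rounds; and if instead the adversary is allowed to supply one endpoint of the collision, the oracle becomes strong enough to invert $\pi$ outright (take $C(0,w)=\pi(w)$, $C(1,w)=w$, and ask for a collision starting from $(1,y)$). The paper resolves this by giving $\Sam$ a three-argument interface $(w,\Cc,\Cc_\next)$ where $w$ must be a \emph{previous $\Sam$ answer} for the query $(\cdot,\cdot,\Cc)$ and $\Cc_\next$ must \emph{extend} $\Cc$. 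This lets the attacker commit to a circuit before seeing the returned coins, so it can ``chain'' along the rounds of the protocol while still never gets to hand-pick the anchor $w$. This interface is not cosmetic: it is what makes both directions (attack succeeds; $\pi$ stays one-way) simultaneously hold.

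Second, and relatedly, you do not address why the separation oracle cannot be used \emph{by arbitrary algorithms} to invert $\pi$, or why the reduction algorithm is forced to use it in a restricted way. The paper introduces \emph{normal-form, depth-$d$} algorithms, shows that the cheating sender is such an algorithm with depth $d+1$ for a $d$-round protocol, and then observes that a fully black-box security reduction --- which accesses $\Sam$ \emph{only} through the breaker $\Ss$ --- is itself normal-form of comparable depth (\cref{prop:InseritNormalForm}). Without this observation the separation argument does not compile into a lower bound.

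Third, the quantitative core of your reconstruction step is not the paper's argument, and I don't believe it closes as stated. You claim each $\Cf$ answer can be ``charged roughly $k$ heavy queries,'' giving a condition like $k\cdot\log(1/\eps)^{-1}\cdot q \ll 2^n$; that dependence on $k$ is \emph{linear}, so it would rule out constructions with far more rounds than $n/\log n$. The actual dependence is exponential in the depth: the paper proves two separate lemmas. \cref{lemma:Hitting} converts a $d$-depth normal-form algorithm that ``hits'' $\pi^{-1}(y)$ with probability $\eps$ into one that inverts $\pi$ \emph{without hitting} with probability at least $(\eps/2)^{3d+1}$; this is where the depth enters, and its proof (a martingale-style ``jump'' analysis, influenced by \citet{Wee07}) is the most technical part of the paper and is genuinely separate from Gennaro--Trevisan. \cref{lemma:reconstruction1} then shows non-hitting inverters succeed with probability at most $2^{-n/5}$. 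Combining yields $\aq^{3d+1}<2^{n/8}$, hence $d=\Omega(n/\log\aq)=\Omega(n/\log n)$ for polynomial query complexity. Your proposal also skirts the key difficulty in the reconstruction itself: $\Sam$ has no efficient deterministic implementation since finding a random collision may require exponentially many $\pi$-queries, so the encoder cannot just ``charge the queries in the collision.'' The paper needs the non-hitting assumption precisely to argue that $\Sam$'s answers can be reproduced by the decoder from the already-reconstructed portion of $\pi$.
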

The above lower bound matches the upper bound due to
\cite{HaitnerR06a, KoshibaS06} (the scheme of \cite{NaorOVY98} has $\Theta(n)$ rounds), who give a fully black-box construction of an $n / (c \cdot \log n)$-round statistically hiding commitment scheme from one-way permutations over $\zn$, for any $c>0$.\footnote{Their proof of security reduction runs in time $\poly(n) \cdot 2^{c \log n}$, and thus efficient only for constant $c$.} In addition, we note that our result and its underlying proof technique, in particular rule out fully black-box reductions of collision-resistant hash functions from one-way function. This provides an alternative and somewhat ``cleaner'' proof than that given by Simon \cite{Simon98} (although our proof applies to fully black-box reductions and Simon's proof applies even to semi black-box ones).

The separation oracle introduced for proving \cref{thm:Intro:RoundComplexityLB}, yields the following lower bound on the communication complexity of statistically hiding commitments:
\begin{theorem}[The communication complexity lower bound, informal]\label{thm:Intro:CommComplexityLB}
In any fully black-box reduction of a statistically hiding commitment scheme from family of
trapdoor permutations over $\zn$, the sender communicates $\Omega(n)$ bits.\footnote{The result holds even if the hiding is only guaranteed to hold against \emph{honest receivers}, and the binding is only guaranteed to hold against \emph{honest senders} --- senders that follow the prescribed protocol in the commit stage.}
\end{theorem}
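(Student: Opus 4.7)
The plan is to reuse the oracle $\mathcal{O}=(f,\Sam)$ constructed for \cref{thm:Intro:RoundComplexityLB}, in which $f$ is a uniformly random (enhanced) trapdoor permutation on $\zn$ and $\Sam$ is the interactive collision-finding oracle built on top of Simon's construction. Since \cref{thm:Intro:RoundComplexityLB} already establishes, via the Gennaro--Trevisan reconstruction paradigm, that $f$ remains one-way in the presence of $\Sam$, it suffices to exhibit an efficient $\mathcal{O}$-aided cheating sender that, while behaving honestly in the commit phase, breaks the binding of any candidate commitment scheme whose total sender-side communication in the commit phase is $c(n)=o(n)$ bits.

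Model the honest sender as an oracle circuit $C^{\Sc}$ and fix, by an averaging argument over receiver randomness, a typical choice of the honest receiver's coins $r_R$; the commit phase then becomes a function $F(v,r_S)$ of the sender's input value $v\in\zo$ and randomness $r_S$, outputting the full commit-phase transcript $\tau$. Because the sender transmits only $c(n)$ bits, the range of $F$ has size at most $2^{c(n)}=2^{o(n)}$, while statistical hiding against the honest receiver forces the marginals $F(0,\cdot)$ and $F(1,\cdot)$ over uniform $r_S$ to be statistically close. A routine counting/Markov argument then shows that with probability close to $1$ over uniform $(v,r_S)$, the fiber $F^{-1}(F(v,r_S))$ contains an $\Omega(1)$-fraction of pairs $(v',r'_S)$ with $v'\neq v$.

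The cheating sender $\Ss$ now proceeds as follows: pick $(v,r_S)$ uniformly, run the commit phase honestly to obtain $\tau$, and then use $\Sam$ --- round by round, exactly as in the interactive attack underlying \cref{thm:Intro:RoundComplexityLB} --- to sample a fresh preimage $(v',r'_S)$ of $\tau$ under $F$. By the fiber-structure claim above, $v'\neq v$ with constant probability; in this event $\Ss$ simply plays the honest reveal phase with randomness $r'_S$ and opens to $v'$, violating binding even against an honest receiver and using only an honest commit-phase strategy, as required by the (weaker) form of binding in the theorem. The matching upper bound is provided by the $\Theta(n)$-communication protocol of \cite{NaorOVY98}.

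The main obstacle is verifying that the interactive $\Sam$ truly samples (nearly) uniformly from $F^{-1}(\tau)$: the round-by-round construction accumulates augmented queries needed to keep later rounds consistent with earlier ones, and one must argue that the resulting distribution on $(v',r'_S)$ is statistically close to uniform on the fiber so that the ``constant fraction with $v'\neq v$'' property transfers from counting to the actual sampler without spoiling the reconstruction-based proof of one-wayness. Crucially, the analysis is insensitive to the round complexity of the underlying protocol --- even a $\poly(n)$-round protocol with $o(n)$ bits of sender communication is broken by the same attack --- so the lower bound follows solely from the total sender-bit budget. \qedsketch
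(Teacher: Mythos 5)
Your overall strategy (run the commit phase honestly, then use $\Sam$ to resample the sender's coins from the fiber of the transcript, and invoke the ``both answers consistent with high probability'' hiding argument) matches the paper's high-level plan. But there is a genuine gap precisely at the step you describe as the ``main obstacle,'' and your concluding claim that the analysis is insensitive to round complexity is the thing that needs proof yet your attack does not deliver it.

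The issue is the depth of the resulting $\Sam$-queries. You propose to resample ``round by round, exactly as in the interactive attack underlying \cref{thm:Intro:RoundComplexityLB}.'' That attack issues one $\Sam$ call per round of the protocol, producing a normal-form query forest of depth $d+1$ where $d$ is the number of rounds. The one-wayness theorem (\cref{theorem:RandomPermutationHardForSam}) only yields a contradiction when $\aq^{3d+1} < 2^{n/8}$, i.e.\ when $d \in o(n/\log n)$. If the scheme has, say, $n^2$ rounds but only $o(n)$ sender bits, your round-by-round sampler has depth $n^2+1$, and the reconstruction argument gives you nothing. So for exactly the regime where this theorem has content beyond \cref{thm:Intro:RoundComplexityLB}, the attack you sketch fails.

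The paper avoids this by abandoning the round structure entirely. After the honest commit phase, the cheating sender partitions the \emph{sender's communicated bits} (not the rounds) into $d = o(n/\log n)$ blocks of roughly $\log n$ bits each, and iteratively extends a consistent preimage one block at a time (\cref{theorem:collisionInShortOutputFunc,theorem:collisionInLowComComp}). At each step it repeatedly queries $\Sam$ at the \emph{same depth} (using trivial circuit extensions to force fresh randomness) until the returned preimage agrees with the next $\log n$ bits of the transcript; this succeeds in $\approx 2^{\log n}$ tries. This trades augmented query complexity (which can grow polynomially without harm) for depth (which must stay $o(n/\log n)$). That rebalancing is the crux of the communication lower bound, and it is the piece your proposal does not supply. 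Your intuition about near-uniform sampling from the fiber is fine once the right sampler is in place, but the sampler you specify does not have the right depth.
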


The above lower bound matches (up to a constant factor) the upper bound due to \cite{NaorOVY98, HaitnerR06a, KoshibaS06}, who give a fully black-box reduction from a statistically hiding commitment scheme from a family of trapdoor permutations over $\zn$, where the sender sends $n-1$ bits. We remark, however, that the above bound says nothing about the number of bits sent by the \emph{receiver}, a number which in the case of \cite{NaorOVY98, HaitnerR06a, KoshibaS06} is $\Theta(n^2)$, and thus dominates the overall communication complexity of the protocol. We also note that the above bound does not grow when the number of committed bits grows, and as such it only matches the bound of \cite{NaorOVY98, HaitnerR06a, KoshibaS06} when the number of bits committed is constant (when committing to $k$ bits, the number of bits sent by the sender in \cite{NaorOVY98, HaitnerR06a, KoshibaS06} is $\Theta(n k)$ but this can be easily reduced to $\Theta( n k/\log n)$).

\subsubsection{Any Hardness Reductions}
We also consider a more general notion of hardness for trapdoor permutations that extends the
standard polynomial hardness requirement; a trapdoor permutation $\tau$
over $\zn$ is $\s$-hard, if any probabilistic algorithm running in time $\s(n)$
inverts $\tau$ on a uniformly chosen image in $\zn$ with probability at most $1 / \s(n)$. We show that any fully black-box reduction of a statistically hiding commitment scheme from a family of $\s$-hard trapdoor permutations requires $\Omega(n / \log \s(n))$ communication rounds. This bound matches the any hardness reduction given in \cite{HaitnerR06a}. Interestingly, the communication complexity lower bound does not change when considering stronger trapdoor permutations.

\subsubsection{Taking the Security of the Reduction into Account}
The informal statements above consider constructions that invoke only trapdoor permutations over $n$ bits.
We would like to extend the result to consider constructions which may invoke the trapdoor
permutations over more than a single domain. In this case, however, better upper bounds are known.
In particular, given security parameter $1^n$ it is possible to apply the scheme of \cite{NaorOVY98}
using a one-way permutation over $n^{\epsilon}$ bits. This implies statistically hiding commitments of $\Theta(n^{\epsilon})$ rounds, where the sender communicates $\Theta(n^{\epsilon})$ bits. This subtle issue is not unique to our setting, and in fact arises in
any study of the efficiency of cryptographic reductions (see, in particular, \cite{GennaroT00,
Wee07}). The common approach for addressing this issue is by restricting the class of constructions
(as in the informal statement of our main theorem above). We
follow a less restrictive approach and consider constructions that are given access to trapdoor
permutations over \emph{any} domain size. Specifically, we consider an additional parameter, which we refer to as
the \emph{security-parameter expansion} of the construction. Informally, the proof of security in a
fully black-box reduction gives a way to translate (in a black-box manner) an adversary $\Ss$
that breaks the binding of the commitment scheme into an adversary $\Ac$ that breaks the security of
the trapdoor permutation. Such a reduction is $\ell (n)$-security-parameter expanding, if
whenever the machine $\Ac$ tries to invert a permutation over $n$ bits, it invokes $\Ss$ on security
parameters which are at most $1^{\ell (n)}$. It should be noted that any reduction in which
$\ell (n)$ is significantly larger than $n$, may only be weakly security preserving (for a taxonomy
of security preserving reductions see \cite[Lecture 2]{Luby96}).

Our lower bound proof takes into consideration the security parameter expansion, and therefore our
statements apply for the most general form of fully black-box reductions. In particular, in case
that $\ell (n) = O(n)$, our theorems imply that the required number of rounds is $\Omega \left(
n / \log n \right)$ and the number of bits send by the sender is $\Omega(n)$. In the general case (where $\ell (n)$ may be any polynomial in $n$), our theorems imply that the required number of rounds and the number of bits send by the sender is $n^{\Omega(1)}$ (which as argued above is tight as well).

\subsubsection{Additional Implications}
 Our main results described above can be extended to any
cryptographic protocol which implies statistically hiding commitment schemes in a fully black-box
manner, as long as the reduction essentially preserves the round complexity or the communication complexity of the underlying protocol.
Specifically, we derive similar lower bounds on the round
complexity and communication complexity of fully black-box reductions from trapdoor permutations of single-server private information retrieval, interactive hashing, and oblivious transfer that guarantees statistical
security for one of the parties. To obtain the above bounds we use known reductions from the listed primitives to statistically hiding commitment schemes. The only exception is the lower bound on the communication complexity on single-server private information retrieval. In this case, the parameters of known reduction (due to \citet{BeimelIKM99}) fail too short to yield the desired lower bound, and we had to come up with a new reduction (given in \cref{section:PIR2Com}).

\subsection{Related Work and Follow-Up Work}

\citet{ImpagliazzoR89} showed that there are no black-box reductions of
key-agrement protocols to one-way permutations and substantial additional work in this line
followed (\cf \cite{GertnerKMRV00,Rudich88,Simon98}). \citet*{KimST99}
initiated a new line of impossibility results, providing a lower bound on the \emph{efficiency}
of black-box reductions (rather than on their feasibility). They proved a lower bound on the
efficiency, in terms of the number of calls to the underlying primitive, of any black-box reduction
of universal one-way hash functions to one-way permutations.  \citet{GennaroT00} has improved \cite{KimST99} to match the known upper bound, and their technique has yielded tight lower
bounds on the efficiency of several other black-box reductions \cite{GennaroGK03, GennaroGKT05,
GennaroT00,HorvitzK05}. 
In all the above results, the measure of efficiency under consideration is the number of calls to the underlying primitives.

With respect to the \emph{round complexity} of statistically hiding commitments, \citet{Fischlin02} showed
that every black-box reduction of statistically hiding commitments to trapdoor permutations, has at least
two rounds. His result follows Simon's oracle separation of collision-resistant hash functions from
one-way permutations \cite{Simon98}. \citet{Wee07} considered a \emph{restricted} class of
black-box reductions of statistically hiding commitments to one-way permutations; informally, \cite{Wee07} considered constructions in which the sender first queries the one-way permutation on several independent
inputs. Once the interaction with the receiver starts, the sender only access the outputs of these
queries (and not the inputs) and does not perform any additional queries. \citet{Wee07} showed that every
black-box reduction of the above class has $\Omega \left( n / \log n \right)$ communication
rounds. From the technical point of view, our techniques are inspired by those of Fischlin \cite{Fischlin02} and Wee \cite{Wee07} by significantly refining and generalizing the approach that an oracle-aided attacker can re-sample its view of the protocol (we refer the reader to \cref{sec:intro:Technique} for more details on our approach).

The question of deriving lower bounds on the round complexity of black-box reductions, was also
addressed in the context of zero-knowledge protocols \cite{CanettiKPR03, DworkNS04, GoldreichK86,
HadaT98, KilianRP05, Rosen04}, to name a few. In this context, however, the black-box access is to the, possibly
cheating, verifier and not to any underlying primitive.

Extensions in the spirit of the one we present here to the \citet{GennaroT00} ``reconstruction lemma'', where used in several works, \eg \cite{HaitnerHol09,Pietrzak08,PietrzakRS12,DodisHT12}. In addition, the separation oracle ``$\Sam$" we present here (see \cref{sec:intro:Technique}), was found to be useful in other separation results \cite{HaitnerMGX10,GordonWXY10,PassV10,RosenS09,RosenS10,BrakerskiKSY11}.

\subsection{Overview of the Technique}\label{sec:intro:Technique}
For the sake of simplicity we concentrate below on the round complexity lower bound of fully black-box
reductions of statistically hiding commitment from one-way permutations (see \cref{section:Intro:LowCom} for the communication complexity lower bound). We also assume that the sender's secret in the commitment protocol is a single uniform bit (\ie it is a bit commitment). Let us start by considering Simon's oracle \cite{Simon98} for ruling out a black-box construction of a
family of collision resistant hash functions from one-way permutations.

\subsubsection{Simon's Oracle \texorpdfstring{$\Cf$}{ColFinder}} Simon's oracle $\Cf$ gets as an input a circuit $C$, possibly with
$\pi$ gates,\footnote{In fact, $\Cf$ also accepts circuits $C$ with $\Cf$ gates. \cite{Simon98} use this extension to give a \emph{single} oracle \wrt which one-way permutations exist, but no collision resistance hash functions. Since the focus of our work is fully black-box reductions, we ignore this extension here and leave it as an open problem to extend our approach to the semi black-box setting.} where $\pi$ is a random permutation. It then outputs two elements $w_1$ and $w_2$ that are uniformly distributed subject to the requirement $C(w_1) = C(w_2)$.\footnote{Consider, for example, sampling $w_1$ uniformly at random from the domain of $C$, and then sampling $w_2$ uniformly at random from the set $C^{-1}(C(w_1))$.} Clearly, in the presence of $\Cf$ no family of collision resistant hash functions exists (the adversary simply queries $\Cf$ with the hash function circuit to find a collision). In order to rule out the existence of any two-round statistically hiding commitment scheme relative to $\Cf$, \citet{Fischlin02} used the following adversary $\Ss$ to break any such scheme: assume \wlg that the first message $q_1$ is sent by \Rc and consider the circuit $C_{q_1}$ defined by $q_1$ and \Sc as follows: $C_{q_1}$ gets as an input the random coins of \Sc and outputs the answer that \Sc replies on receiving the message $q_1$ from \Rc. In the commit stage after receiving the message $q_1$, the cheating $\Ss$ constructs $C_{q_1}$, queries $\Cf(C_{q_1})$ to get $w_1$ and $w_2$, and answers as $\Sc(w_1)$ would (\ie by $C_{q_1}(w_1)$). In the reveal stage, $\Ss$ uses both $w_1$ and $w_2$ to open the commitment (\ie once using the random coins $w_1$ and then using $w_2$). Since the protocol is statistically hiding, the set of the sender's random coins that are consistent with
this commit stage transcript is divided to almost equal size parts by the values of their secret bits. Therefore, with probability roughly half $w_1$ and $w_2$ will differ on the value of \Sc's secret bit and the binding of the commitment will be violated.

In order to obtain the black-box impossibility results (both of \cite{Simon98} and of
\cite{Fischlin02}), it is left to show that $\pi$ is one-way in the presence of $\Cf$. Let \Ac be a
circuit trying to invert $\pi$ on a random $y\in \zn$ using $\Cf$, and lets assume for now that \Ac
makes only a single call to $\Cf$. Intuitively, the way we could hope this query to \Cf with
input $C$ could help is by ``hitting" $y$ in the following sense: we say that \Cf \emph{hits} $y$
on input $C$, if the computations of $C(w_1)$ or of $C(w_2)$ query $\pi$ on $\pi^{-1}(y)$. Now we
note that for every input circuit $C$ each one of $w_1$ and $w_2$ (the outputs of \Cf on $C$) is
\emph{individually} uniform. Therefore, the probability that \Cf hits $y$ on input $C$, may only
be larger by a factor two than the probability that evaluating $C$ on a uniform $w$ queries $\pi$
on $\pi^{-1}(y)$. In other words, \Ac does not gain much by querying \Cf (as \Ac can evaluate $C$
on a uniform $w$ on its own). Formalizing the above intuition is far from easy, mainly when we
consider \Ac that queries \Cf more than once. The difficulty lies in formalizing the claim that
the only useful queries are the ones in which \Cf hits $y$ (after all, the reply to a query may
give us some useful global information on $\pi$).\footnote{The  proof of our main theorem (see intuition in  \cref{sec:introPermutationAreHard}), implies an alternative proof for the above claim.}

\subsubsection{Finding Collisions in Interactive Protocols}
We would like to employ Simon's oracle for breaking the binding of more interactive protocols (with more than two rounds). Unfortunately,
the ``natural" attempts to do so seem to fail miserably. The first attempt that comes to mind might
be the following: in the commit stage, $\Ss$ follows the protocol and let $q_1,\dots,q_k$ be the
messages that \Rc sent in this stage. In the reveal stage, $\Ss$ queries \Cf to get a colliding
pair $(w_1,w_2)$ in $C_{q_1,\dots,q_k}$ --- the circuit naturally defined by the code of \Sc and
$q_1,\dots,q_k$ (\ie $C_{q_1,\dots,q_k}$ gets as an input the random coins of \Sc and outputs
the messages sent by \Sc when \Rc's messages are $q_1,\dots,q_k$). The problem is that it is very
unlikely that the outputs of \Sam on $C_{q_1,\dots,q_k}$ will be consistent with the answers that
$\Ss$ \emph{already} gave in the commit stage (we did not encounter this problem when breaking
two-round protocols, since $\Ss$ could query \Cf on $C_{q_1}$ before $\Ss$ sends its first and
only message). Alternatively, we could have changed \Cf such that it gets as an additional input $w_1$
and returns $w_2$ for which $C_{q_1,\dots,q_k}(w_1) = C_{q_1,\dots,q_k}(w_2)$ (that is, the new
\Cf finds second preimages rather than collisions). Indeed, this new \Cf does imply the
breaking of any commitment scheme, but it also implies the inversion of $\pi$.\footnote{Consider a
circuit $C$, whose input is composed of a bit $\sigma$ and an $n$-bit string $w$. The circuit $C$
is defined by $C(0,w)=\pi(w)$ and $C(1,w)=w$. Thus, in order to compute $\pi^{-1}(y)$ we can simply
invoke the new \Cf on input $C$ and $w_1=(1,y)$. With probability half \Cf will return
$w_2=(0,\pi^{-1}(y))$.} We should not be too surprised that both the above attempts failed as they
are both completely oblivious of the round complexity of $(\Sc,\Rc)$. Since one-way permutations \emph{do imply} statistically hiding commitments (in a black-box manner) \cite{NaorOVY98,HaitnerHKKMS05,HaitnerNgOnReVa09,HaitnerReVaWe09}, any oracle that breaks statistically hiding
commitments could also be used to break the underlying one-way permutations.\footnote{In addition,
in both these naive attempts the cheating sender $\Ss$ follows the commit stage honestly (as \Sc
would). It is not hard to come up with two-round protocol that works well for semi-honest commit
stage senders (consider for instance the two-message variant of \cite{NaorOVY98} where the receiver's queries are all sent in the first round).}

So the goal is to extend Simon's oracle to handle interactions, while not making it ``too strong" (so that it does not break the one-way permutations). In fact, the more interactive our oracle will be, the more powerful it will be (eventually, it will allow breaking the one-way permutations). Quantifying this growth in power is how we get the tight bounds
on the round complexity of the reduction.

\subsubsection{The Oracle \Sam}
It will be useful for us to view Simon's oracle as performing two sampling
tasks: first, it samples $w_1$ uniformly, and then it samples a second preimage $w_2$ with
$C(w_1) = C(w_2)$. As explained above, an oracle for sampling a second preimage allows inverting
the one-way permutations. What saves us in the case of  \Cf, is that $w_1$ was chosen by \Cf \emph{after} $C$ is already given. Therefore, an adversary \Ac is very limited
in setting up the second distribution from which \Cf samples (\ie the uniform distribution over
the preimages of $C(w_1)$ under $C$). In other words, this distribution is \emph{jointly} defined by \Ac and \Cf itself.

Extending the above interpretation of \Cf, our separation oracle \Sam is defined as follows: \Sam is given as input a query $\q= (w,C, C_\next)$ and outputs a preimage $w'$, where $w'$ is a uniformly distributed preimage of $C(w)$ (the purpose of the circuit $C_\next$ will be revealed later). In case $C= \perp$, algorithm \Sam outputs a uniform element in the domain.

While the above \Sam \emph{can} be used for inverting random permutations when used by an \emph{arbitrary} algorithm, it is not the case when used by low-depth \emph{normal form} algorithms; an algorithm \Ac is in a normal-form, if it makes the query $\q= (w,C \neq \perp, C_\next)$ to \Sam only if it has \emph{previously} made the query $\q'= (\cdot ,\cdot, C)$ to \Sam, and got $w$ as the answer (namely, the third input to $\Sam$ is used for ``committing'' to $C$ before seeing $w$).\footnote{An additional important restriction, that we will not discuss here, is that $C_\next$ is an \emph{extension} of the circuit $C$, where extension means that $C_\next(w) = (C(w), \widetilde{C}(w))$ for some circuit $\widetilde{C}$ and for every $w$.} A normal-form algorithm is of \emph{depth} $d$ if $d$ is the length of the longest chain of \Sam queries it makes (\ie $\Sam(\cdot ,\cdot, C_2)= w_2, \Sam(w_2,C_2, C_3)= w_3,\dots, \Sam(w_d,C_d, \cdot)= w_{d+1}$). While restricted, it turns out that normal-form algorithms of depth $(d+1)$ are strong enough, with the aid of \Sam, for breaking $d$-round statically hiding commitments.\footnote{In the preliminary versions \cite{HaitnerHRS07,HaitnerHS08}, we equipped \Sam with a signature-based mechanism to enforces normal-form behaviour and depth restriction on the queries it is asked upon. While yielding a simpler (and easier to comprehend) characterization of the power of \Sam (\ie useful for breaking commitments, not useful for inverting random permutations), the signature-based  mechanism had  significantly complicated the whole text.}

Assume there exists a fully black-box reduction from an $o(n / \log n)$-round statically hiding commitments to one-way permutations. By the above observation, the reduction should invert a random permutation when given oracle to an $o(n / \log n)$-depth normal-form algorithm \Ss with oracle access to \Sam. Since the reduction has no direct access to \Sam (but only via accessing \Ss), it is easy to see that the reduction itself is an $o(n / \log n)$-depth normal-form algorithm. This implies a contradiction, since low-depth normal form algorithms cannot invert random permutations.

\subsubsection{A \texorpdfstring{$(d+1)$}{d+1}-depth Normal-Form Algorithm that Breaks \texorpdfstring{$d$}{d}-Round Commitments}\label{sec:introPowerofSam}
Given a $d$-round statistically hiding commitment, the $(d+1)$-depth normal-forma algorithm $\Ss$ for breaking the commitments operates as follows: after getting the first message $q_1$, it constructs $C_{q_1}$ (the circuit that computes \Sc's first message) and queries \Sam on $(\perp,\perp,C_{q_1})$ to get input $w_1$, and sends $C_{q_1}(w_1)$ back to \Rc. On getting the $i$'th receiver message $q_i$, the adversary $\Ss$ constructs $C_{q_1,\dots,q_i}$ (the circuit that computes \Sc's first $i$ messages), queries \Sam on $(w_{i-1},C_{q_1,\dots,q_{i-1}},C_{q_1,\dots,q_{i}})$ to get $w_i$, and sends the $i$'th message of $C_{q_1,\dots,q_i}(w_i)$ back
to \Rc. Finally, after completing the
commit stage (when answering the last receiver message $q_d$) it queries \Sam on
$(w_d,C_{q_1,\dots,q_{d}},\perp)$ to get $w_{d+1}$. Since both $w_d$ and $w_{d+1}$ are sender's
random inputs that are consistent with the commit-stage transcript, with probability
roughly half they can be used for breaking the binding of the protocol.

\subsubsection{Random Permutations Are Hard For \texorpdfstring{$o(n/\log n)$}{o(n/logn)}-Depth Normal-Form Algorithms}\label{sec:introPermutationAreHard}
To complete our impossibility result, it is left to prove that \Sam cannot be used by $d(n)\in o(n/\log n)$-depth normal-form algorithms to invert a random permutation $\pi$. Let \Ac be such a $o(n/\log n)$-depth normal-form algorithms. A \Sam query $(\cdot,C,\cdot)$ is \emph{$y$-hitting} (\wrt $\pi$), if it is answered with $w$, such that $C(w')$ queries $\pi$ on $\pi^{-1}(y)$. Where $\Ac$ \emph{hits} on input $y$, if it makes a $y$-hitting query. Given the above definition, our proof is two folded. We first show that a normal-form algorithm that hits on a random $y$ with high probability, implies an algorithm that, with significant probability, inverts $\pi$ \emph{without hitting} (the proof of this part, influenced by the work of \citet{Wee07}, is the most technical part of the paper). We then extend the reconstruction technique of \citet{GennaroT00}, to show that a non-hitting algorithm is unlikely to invert $\pi$.

\paragraph{From normal-form hitting algorithms to non-hitting inverters.}
Let \Ac be an algorithm that hits on a random $y$ with high probability. The idea is that if $\Ac(y)$ hits, then it ``knew" how to invert $y$ \emph{before} making the hitting \Sam call. Assume for simplicity of notation that $\Ac(y)$'s queries are of the form $q_1 = (\bot, \bot,C_2), q_2= (w_2,C_2, C_3), \cdots, q_d = (w_d,C_d, \cdot)$, where $w_{i+1}$ is \Sam answer on the query $q_i$ (this essentially follows from \Ac being in a $d$-depth normal form). And let $i^\ast$ be such that $\q_{i^\ast}$ hits $y$ with high probability (this follows from the assumption about \Ac being a good hitter). Since $d \in o( n / \log n)$, there exists a location $i$ such that the probability $\q_i$ to hit $y$ is larger than the probability that $q_{i-1}$ hits $y$ by a arbitrary large polynomial. Further, an average argument yields that the probability that $C_i(w_i)$ queries $\pi$ on $\pi^{-1}(y)$, is unlikely to be much smaller than the probability that $\q_i$ hits $y$ (which is the probability that $C_i(w_{i+1})$ queries $\pi$ on $\pi^{-1}(y)$).

Combining the above understandings, we design $\Mc$ that with
non-negligible probability, inverts $\pi$ on $y$ without hitting. Algorithm $\Mc$ emulates \Ac while following each \Sam query $(w_i,C_i,C_{i+1})$ made by \Ac receiving a reply $w_{i+1}$, it evaluates, in addition, $C_{i+1}(w_{i+1})$. If $C_{i+1}$ queries $\pi$ on $x= \pi^{-1}(y)$, then $\Mc$ halts and outputs $x$ (otherwise, it continues with the emulation of \Ac). We argue that with sufficiently
large probability, if the first hinting query of \Ac is $q_i = (w_i,C_i,C_{i+1})$, then
$\Mc$'s computation of $C_i(w_i)$ queries $\pi$ on $\pi^{-1}(y)$. Therefore, $\Mc$ retrieves
$\pi^{-1}(y)$ \emph{before} making the hitting query.

\paragraph{Random permutations are hard for non-hitting inverters.}
\citet{GennaroT00} presented a very elegant argument for proving that random
permutations are hard to invert also for non-uniform adversaries (previous proofs, \eg
\cite{ImpagliazzoR89}, only ruled out uniform adversaries). Let \Ac be a circuit and let $\pi$ be a
permutation that \Ac inverts on a non-negligible fraction of its outputs. \cite{GennaroT00}
showed  that $\pi$ has a  ``short"  description relative to \Ac. (Intuitively, \Ac saves on the description of $\pi$ as it allows us to reconstruct $\pi$ on (many of) the $x$'s for which $\Ac^{\pi}(\pi(x))=x$). Therefore, by a counting argument, there is only a tiny fraction of permutations which \Ac inverts well.

The formal proof strongly relies on a bound on the
number of $\pi$ gates in \Ac: when we use \Ac to reconstruct $\pi$ on $x$ we need all the
$\pi$-queries made by $\Ac^{\pi}(\pi(x))$ (apart perhaps of the query for $\pi(x)$ itself) to already
be reconstructed.

Consider an adversary $\Ac$ that, with significant probability, inverts $\pi$ without hitting. Recall that when queried on $(w,C,\cdot)$, the oracle \Sam returns a random inverse of $C(w)$. We would like to apply the argument of \cite{GennaroT00} to claim that relative to \Ac and \Sam there is a short description of $\pi$. We are faced with a
substantial obstacle, however, as  \Sam might make a huge amount of $\pi$
queries.\footnote{Consider for example $C$ such that on input $w$ it truncates the last bit of
$\pi(w)$ and outputs the result. Finding collisions in $C$ requires knowledge of $\pi$ almost
entirely.} On the intuitive level, we overcome this  obstacle by exploiting the fact that while  \Sam does not have an efficient \emph{deterministic} implementation, it does have an efficient \emph{non-deterministic} one: simply guess where the collision occur, and verify that this is indeed the case. Formalizing the above approach requires much care both in the definition and analysis of
\Sam, and critically use the assumption that $\Ac$ is non hitting. We defer more details to \cref{subsection:hitting}.

\subsubsection{Low Sender-Communication Commitments}\label{section:Intro:LowCom}
The lower bound for low sender-communication statistically hiding commitment follows from the fact that the oracle \Sam, described above, can be used for breaking the binding of such commitments, and moreover, this can be done by low-depth normal-form algorithms. The idea is fairly straightforward; given an $o(n)$-communication commitment, the $o(n/\log n)$-depth normal-form algorithm \Ss for breaking the commitment, acts \emph{honestly} in the commit stage (say by committing to zero), and only then uses \Sam for finding decommitments to both zero and one.

Specifically, after the commit phase is over $\Ss$ partitions $\trans$ into $d \in o(n/\log n)$ blocks $\trans_1,\dots,\trans_d$, where $\trans_i$ contains the $(i-1)\cdot \log(n) +1,\cdots,i\cdot \log(n)$ bits sent by \Sc (for simplicity we assume here that in each round the sender communicates a single bit to the receiver). Then \Ss iteratively applies \Sam, such that after the $i$'th iteration, \Ss obtains random coins $w_i$ that are consistent with $\trans_{1,\dots,i}$. If successful, \Ss makes an additional call $(w_d,C_{q_1,\dots,q_{d}},\perp)$, where $C_{q_1,\dots,q_{d}}$ is as in \cref{sec:introPowerofSam}, to obtain additional coins $w_d'$ consistent with $\trans$, and then uses $w_d$ and $w_d'$ to break the commitment.

It is left to describe how \Ss obtains a consistent $w_{i+1}$, given that it has previously obtained a consistent $w_i$. For that, \Ss keeps calling \Sam on $(w_d,C_{q_1,\dots,q_{i}},C_{q_1,\dots,q_{i+1}})$ until it is replied with $w_{i+1}$ that is consistent with $\trans_1,\dots,\trans_{i+1}$. Since $\trans_{i+1}$ contains only $\log n$ bits sent by \Sc, we expect \Ss to succeeds with high probability after about $n$ such attempts.

\subsection{Paper Organization}
Notations and formal definitions are given in \cref{section:Preliminaries}, where the oracle \Sam and the separation oracle discussed above is formally defined in \cref{section:SeperationOracle}. In \cref{section:PowerOfSam} we show how to use \Sam (by normal form algorithms) to find  collisions in any low-round complexity or low sender communication protocols, wherein \cref{section:inverting} we show that in the hands of normal-form algorithms, \Sam is not useful for inverting random permutations. In \cref{section:LowerBounds} we combine the above fact to derive our lower bounds on statistically hiding commitment schemes, where applications of the above results to
other cryptographic protocols,  are given in \cref{section:implications}. Finally, in \cref{section:PIR2Com} we give a refined reduction of low-communication statistically hiding commitment schemes from low-communication single-server private information retrieval, that implies a lower bound of low-communication private information retrieval schemes.

\section{Preliminaries}\label{section:Preliminaries}
\subsection{Conventions and Basic Notations}
All logarithms are in base two.  We use calligraphic letters to denote sets, uppercase for random variables, and lowercase for values. Let $\poly$ be the set of all polynomials $p : \mathbb{N} \rightarrow \mathbb{N}$. A function $\mu \colon \N \rightarrow [0,1]$ is \textit{negligible}, denoted $\mu(n) = \negl(n)$, if $\mu(n) < 1/p(n)$ for all $p\in \poly$ and large enough $n$. For $n\in \N$, let $[n]=\set{1,\ldots,n}$. For a finite
set $\cX$, denote by $x \la \cX$ the experiment of choosing an element of
$\cX$ according to the uniform distribution, and by $U_n$ the uniform distribution over the set $\zn$. Similarly, for a distribution $D$ over a set $\Uni$, denote by $u \la D$ the experiment of choosing an element of $\Uni$ according to the distribution
$D$. The statistical distance between two distributions $P$ and $Q$
over a set $\Uni$, denoted $\SD(P, Q)$, is defined as $\frac{1}{2} \sum_{u \in \Uni} \size{\prob{P}{u} - \prob{Q}{u} }$. Given an event ${\sf E}$, we denote by $\SD(X, Y\mid {\sf E})$ the statistical distance between the conditional distributions $X|_{\sf E}$ and $Y|_{\sf E}$.

\subsection{Algorithms and Circuits}\label{sec:prelim:Alg}
Let \ppt stand for probabilistic algorithm (\ie Turing machines) that runs in \emph{strict} polynomial time. The input and output length of a circuit $\Cc$, denoted $\Inp(\Cc)$ and $\Out(\Cc)$, are the number of input wires and output wires in $\Cc$ respectively. Given a circuit family $\Cc = \set{\Cc_n}_{n\in \N}$ and input $x\in \zn$, let $\Cc(x)$ stands for $\Cc_n(x)$.

An oracle-aided algorithm $\Ac$ is an interactive Turing machines equipped with an additional tape called the \emph{oracle tape}; the Turing machine can make a query to the oracle by writing a string $q$ on its tape. It then receives a string $ans$ (denoting the answer for this query) on the oracle tape. Giving a deterministic function $\oracle$, we denote by $\Ac^\oracle$ the algorithm defined by $\Ac$ with oracle access $\oracle$. The definition naturally extends to circuits, where in this case the circuit is equipped with \emph{oracle gates}. In all the above cases, we allow the access to several different oracles, where this is nothing but a syntactic sugar to denote a \emph{single} oracle that answers the queries it is asked upon by the relevant oracle, according to some syntax imposed on the queries (\ie each query starts with a string telling the oracle it refers to).\footnote{The above only consider ``oracles" that implement   \emph{deterministic} functions. We will not consider random or state-full oracles.}

When dealing with an execution of an oracle-aided Turing-machines, we identify the queries according to their chronological order, where when dealing with circuits, we assume an arbitrary order that respects the topological structure of the circuit (\ie a query asked in a gate of depth $i$, appears before any of the queries asked in gates of depth larger than $i$).

A $q$-query oracle-aided algorithm asks at most $q(n)$ oracle queries on input of length $n$, where in a $q$-query oracle-aided circuit family $\set{\Ac_n}_{n\in \N}$, the circuit $\Cc_n$ has at most $q(n)$ oracle gates.

An oracle-aided function mapping $n$-bit strings to $\ell(n)$-bit strings, stands for a \emph{deterministic} oracle-aided algorithm that given access to any oracle and $n$-bit input, outputs $\ell(n)$-bit string.\footnote{Since we only consider deterministic stateless oracles, for any fixing of the oracle, such algorithm indeed computes a function from $n$ bits to $\ell(n)$ bits.}

\subsection{Interactive Protocols}\label{section:prelimInteractiveP}
A two-party protocol $\pi=\vect{\Ac,\Bc}$ is a pair of \ppt's. The communication between the Turing machines $\Ac$ and $\Bc$ is carried out in rounds. Each round consists of a message sent from $\Ac$ to $\Bc$ followed by a message sent from
the $\Bc$ to $\Ac$. We call $\pi$ an $m$-round protocol, if for \emph{every} possible random coins for the parties, the number of rounds is at \emph{exactly} $m$. A communication transcript $\trans$ (\ie the ``transcript") is the list of messages exchanged between the parties in an execution of the protocol, where $\trans_{1,\ldots,j}$ denotes the first $j$ messages in $\trans$. A view of a party contains its input, its random tape and the messages exchanged by the parties during the execution. Specifically, $\Ac$'s view is a tuple $v_\Ac = (i_\Ac,r_\Ac,\trans)$, where $i_\Ac$ is $\Ac$'s input, $r_\Ac$ are $\Ac$'s coins, and $\trans$ is the transcript of the execution. Let the random variable $\exec{(\Ac(i_\Ac),\Bc(i_\Bc)(i))}$ denote the common transcript, the parties' local outputs and the parties's views in a random execution of $(\Ac(i_\Ac),\Bc(i_\Bc)(i))$ (\ie the private inputs of $\Ac$ and $\Bc$ are $i_\Ac$ and $i_\Bc$ respectively, and $i$ is the common input). We naturally refer to the different parts of $\exec{\cdot}$ with $\exec{\cdot}_\trans$, $\exec{\cdot}_{\out^\Ac}$, $\exec{\cdot}_{\out^\Bc}$, $\exec{\cdot}_{\view^\Ac}$ and $\exec{\cdot}_{\view^\Bc}$, respectively.

 The above notation naturally extends to oracle-aided protocols, where the main distinction is that the view of an oracle-aided party also contains the answers it got from the oracle.

\subsection{Random Permutations and One-Way Permutations}
For $n\in \N$, let $\Pi_n$ be the set of all permutations over $\zn$, and let $\Pi$ be the set all infinite collections $\pi = (\pi_1,\pi_2,\dots)$ with $\pi_i \in \Pi_i$ for every $n\in \N$ (note that the set $\Pi$ is not countable, and as a result our probability analysis in this paper deals with non-countable probability spaces). Our lower bound proof is based on analyzing random instances of such permutation collections.
\begin{definition}[random permutations]
A random choice of $\pi=\set{\pi_n}_{n\in \N}$ from $\Pi$, denoted $\pi \la \Pi$, means that $\pi_n$, for every $i\in \N$, is chosen uniformly at random and independently from $\Pi_n$.
\end{definition}

A collection of permutations is hard (i.e., one-way), if no algorithm can invert it with high probability.
\begin{definition}[one-way permutations]
A collection of permutations $\pi \in \Pi$ is {\sf $\s(n)$-hard}, if for
every oracle-aided algorithm \Ac of running time $\s(n)$ and all sufficiently large $n$, it holds that
$$\prob{y\la \zn}{\Ac^{\pi}(1^n, y) = \pi^{-1}_n(y)} \leq \frac{1}{\s(n)},$$
where the probability is taken also over the random coins of \Ac. The permutation $\pi$ is {\sf polynomially-hard}, if it is $\s(n)$-hard for some $\s(n) = n^{\omega(1)}$.
\end{definition}
It is well known (\cf \cite{GennaroGKT05,ImpagliazzoR89}) that random permutations (and also random trapdoor permutations, see below) are hard to invert when given oracle access to the permutation.
\begin{theorem}[\cite{GennaroGKT05}]\label{thm:RandomPermAreHard}
For large enough $n\in \N$ and any $2^{n/5}$-query circuit $\Cc$, it holds that
$$\prob{\pi_n \la \Pi_n}{\prob{y\la \zn}{\Cc^{\pi_n}(y) = \pi_n^{-1}(y)} > 2^{-n/5}} < 2^{-2^{n/2}}.$$
\end{theorem}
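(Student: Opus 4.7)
The plan is a compression/reconstruction argument in the spirit of \cite{GennaroT00}. Fix a circuit $\Cc$ making at most $q := 2^{n/5}$ oracle queries, write $Q_y$ for the set of queries made by $\Cc^{\pi_n}(y)$ on a given $\pi_n$, and call $\pi_n \in \Pi_n$ \emph{good} if $\prob{y \la \zn}{\Cc^{\pi_n}(y) = \pi_n^{-1}(y)} > \delta := 2^{-n/5}$. The aim is to exhibit a succinct description of every good $\pi_n$, so that a counting argument bounds the total number of good permutations.

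First, I would extract from each good $\pi_n$ a \emph{reconstruction set} $Y \subseteq G := \{y : \Cc^{\pi_n}(y) = \pi_n^{-1}(y)\}$ with the property that $\pi_n^{-1}(y') \notin Q_y$ for all distinct $y,y' \in Y$. To this end, consider the symmetric ``conflict'' graph on $G$ in which $y \sim y'$ iff $\pi_n^{-1}(y') \in Q_y$ or $\pi_n^{-1}(y) \in Q_{y'}$. The total directed-edge count satisfies $\sum_{y \in G} |\{y' \in G : \pi_n^{-1}(y') \in Q_y\}| \leq |G| \cdot q$, so by symmetry the graph has average degree at most $2q$. The Caro--Wei bound then yields an independent set of size at least $|G|/(2q+1) \geq \delta \cdot 2^n/(2q+1) \geq 2^{3n/5}/3 =: k$; fix any such set and call it $Y$.

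Given $Y$, the full permutation is determined by the tuple $(Y,\, T,\, \pi_n|_{\zn \setminus T})$ with $T := \pi_n^{-1}(Y)$: for every $y \in Y$, simulating $\Cc^{\pi_n}(y)$ recovers $\pi_n^{-1}(y)$, since by the invariant on $Y$ all queries of the simulation lie in $\zn \setminus T$ and can be answered from $\pi_n|_{\zn \setminus T}$, and since $y \in G$ the simulation returns exactly $\pi_n^{-1}(y)$. Hence the number of good permutations is at most $\binom{2^n}{k}^2 \cdot (2^n - k)!$ (choices of $Y$, of $T$, and of the restricted bijection); dividing by $(2^n)! = \binom{2^n}{k}\cdot k! \cdot (2^n-k)!$ and applying Stirling gives the bound $(e^2 \cdot 2^n/k^2)^k = 2^{-\Omega(n \cdot 2^{3n/5})}$, comfortably below the target $2^{-2^{n/2}}$ for large $n$.

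The main delicate point is the degree bound in the first step. The ``forward'' contribution $|\{y' : \pi_n^{-1}(y') \in Q_y\}|$ is trivially at most $|Q_y| \leq q$, but the ``backward'' contribution $|\{y' : \pi_n^{-1}(y) \in Q_{y'}\}|$ can be much larger than $q$ for individual $y$, so a max-degree argument on the conflict graph fails; this is why the argument must proceed via the global sum $\sum_y |Q_y| \leq |G| q$ and the Caro--Wei bound rather than a naive sequential greedy. Once the independent set is in hand, the remaining encoding and counting are routine and leave ample slack under the claimed $2^{-2^{n/2}}$.
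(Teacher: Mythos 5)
Your proof is correct, and it follows the same compression/reconstruction blueprint as \cite{GennaroT00} and the paper's own generalization (Claim~\ref{claim:reconstruction2}), but the way you extract the reconstruction set $Y$ is a genuine variation. You form a \emph{symmetric} conflict graph on $G$ and invoke Caro--Wei to find a true independent set. Your diagnosis that a max-degree greedy would fail is right: the backward degree $\size{\set{y' : \pi_n^{-1}(y) \in Q_{y'}}}$ is not bounded by $q$. However, your conclusion that ``a naive sequential greedy'' is therefore ruled out, forcing Caro--Wei, overlooks the trick used in \cite{GennaroT00} and in Claim~\ref{claim:reconstruction2}: process $G$ in a fixed (say lexicographic) order, and when $y$ is selected remove only its \emph{forward} conflicts $\set{y' : \pi_n^{-1}(y') \in Q_y}$. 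One then runs the decoder in that same order; when the decoder works on $y$ and meets a query $q$ with $\pi_n(q) \in Y$ and $\pi_n(q) <_\lex y$, that value is already reconstructed and the query is answerable, so only forward conflicts --- of which there are at most $q$ per vertex --- need to be excluded. This gives $\size{Y} \ge \size{G}/(q+1)$ directly, with no graph-theoretic detour. What your Caro--Wei route buys is a one-shot symmetric invariant that lets the decoder traverse $Y$ in any order, which is a slight conceptual simplification of the decoder at the cost of the extra step. Two small imprecisions worth fixing: (i) ``all queries of the simulation lie in $\zn \setminus T$'' is not quite right, since $\Cc^{\pi_n}(y)$ may query $\pi_n^{-1}(y) \in T$; the right statement is that, by the independence of $Y$, the only $T$-element that the simulation on $y$ can query is $\pi_n^{-1}(y)$ itself, at which point the decoder outputs it. (ii) For the counting you should pass to a subset of $Y$ of size exactly $k$ so the encoding has fixed length. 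Neither affects the substance of the argument.
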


In this paper we make a step further, showing that random permutations (and trapdoor random permutations) are to invert even in the presence of the \emph{exponential-time} oracle \Sam.

\subsection{Random Trapdoor Permutations and One-Way Trapdoor Permutations}
A collection of trapdoor permutations is represented as a triplet
$\tau = \left( G, F, F^{-1} \right)$. Informally, $G$ corresponds
to a key generation procedure, which is queried on a string $td$
(intended as the ``trapdoor'') and produces a corresponding public
key $pk$. The procedure $F$ is the actual permutation, which is
queried on a public key $pk$ and an input $x$. Finally, the
procedure $F^{-1}$ is the inverse of $F$ ---  $G(td) = pk$ and
$F(pk, x) = y$, implies $F^{-1}(td, y) = x$.

\begin{definition}[trapdoor permutations]\label{def:TDP}
Let $\Tau$ the set of all function triplets $\tau =
( G, F, F^{-1})$ with
\begin{enumerate}
\item $G \in \Pi$.

\item For every $n\in \N$ and $pk \in
\zn$, the function $F_{pk}$ over $\zn$ defined as $F_{pk}(x) = F(pk, x)$, is in $\Pi_n$.

\item For every $n\in \N$ and $sk,x \in \zn$, it holds that $F^{-1} (sk, F(G(sk), x)) = x$.
\end{enumerate}
A tuple $\tau \in \Tau$ is called  a {\sf family of trapdoor permutations}.
\end{definition}
As in the case of standard permutations, we consider random instances of such trapdoor permutations collections.
\begin{definition}[random trapdoor permutations]\label{def:RandTDP}
A random choice $\tau =(G,F,F^{-1})$ from $\Tau$, denoted $\tau \la \Tau$, means that $G \la \Pi$, and every $n\in \N$ and $pk\in \zn$, the permutation $F_{pk}$, defined in \cref{def:TDP}, is chosen uniformly at random and independently from $\Pi_n$.
\end{definition}

A collection of trapdoor permutations is hard, if no algorithm, equipped with only the public key, can invert it with high probability.
\begin{definition}\label{def:HardnessOfTDP}
A family of trapdoor permutations $\tau = (G, F, F^{-1}) \in \Tau$ is {\sf $\s(n)$-hard}, if
\begin{align*}
\prob{td \la \zn;y\la \zn}{\Ac^{\tau}(1^n, G(td), y) = F^{-1} (td, y)} \leq
\frac{1}{\s(n)},
\end{align*}
for every oracle-aided algorithm \Ac of running time $\s(n)$ and all sufficiently large $n$, where the probability is also taken over the random coins of \Ac. The family $\tau$ is {\sf polynomially hard}, if it is $\s(n)$-hard for some $\s(n) = n^{\omega(1)}$.
\end{definition}

Since we are concerned with providing a lower bound, we do not consider the most general definition of trapdoor permutations. (see \cite{Goldreich01} for such a definition). In addition, \cref{def:TDP} refers to the difficulty of inverting the permutation
$F_{pk}$ on a uniformly distributed image $y$, when given only $pk = G(td)$ and $y$. Some
applications, however, require \emph{enhanced} hardness conditions. For example, it may be required (\cf \cite[Appendix \Cc]{Goldreich04}) that it is hard to invert $F_{pk}$ on $y$ even given the
random coins used in the generation of $y$. Our formulation captures such hardness
condition, and therefore the impossibility results proved in this paper hold also for
enhanced trapdoor permutations.\footnote{A different enhancement, used by \cite{Haitner04},
requires the permutations' domain to be polynomially dense in $\zn$. Clearly, our formulation is
polynomially dense.} Finally, since the generator $\G$ of an $\s$-hard trapdoor permutations family, of the above type, is an $s$-hard \emph{one-way permutation} (\ie no algorithm of running
time $s(n)$ inverts with probability better than $1/s(n)$), the lower bounds we state here \wrt families of trapdoor permutations, yield analog bounds for one-way permutation.

\subsection{Commitment Schemes}\label{sec:SHC}
A commitment scheme is a two-stage interactive protocol between a sender and a receiver.
Informally, after the first stage of the protocol, which is referred to as the \emph{commit stage},
the sender is bound to at most one value, not yet revealed to the receiver. In the second stage,
which is referred to as the \emph{reveal stage}, the sender reveals its committed value to the
receiver. In this paper, where we are interested in proving an impossibility result for commitment
schemes, it will be sufficient for us to deal with bit-commitment schemes, \ie commitment schemes
in which the committed value is only one bit.

\begin{definition}[bit-commitment scheme]\label{def:commitment}
A {\sf bit-commitment scheme} is a triplet of \ppt's $(\Sc,\Rc, \Vc)$ such that
\begin{align*}
\Pr_{(\decom, \com) \la \exec{(\Sc(b), \Rc)(1^n)}_{\out^{\Sc},\out^{\Rc}}}[\Vc (\com, \decom) = b] =1,
\end{align*}
for both $b\in \zo$ and all $n\in \N$.\footnote{Note that there is no loss of generality in
assuming that the decommitment stage in non interactive. This is since
any such interactive algorithm can be replaced with a non-interactive
one as follows: let $\decom$ is the internal state of $\Sc$ when the decommitment starts, and let $\Vc(\decom)$ simulate the sender and the interactive verifier in the interactive decommitment stage.}

\end{definition}

The security of a commitment scheme can be defined in two complementary ways, protecting against
either an all-powerful sender or an all-powerful receiver. In this paper, we deal with commitment
schemes of the latter type, which are referred to as \emph{statistically hiding} commitments.
\begin{definition}[statistical hiding]\label{def:hiding}
Let $\Com = (\Sc, \cdot, \cdot)$ be a bit-commitment scheme. For algorithm $\Rs$, bit $b$ and integer $n$, let $\Trans^\Rs(b,n) = \exec{(\Sc(b), \Rc)(1^n)}_{\trans}$. The scheme $\Com$ is {\sf $\rho(n)$-hiding}, if $\SD\left(\Trans^\Rs(0,n),\Trans^\Rs(1,n)\right)\leq \rho(n)$ for \emph{any} algorithm $\Rs$ and large enough $n$.\footnote{It is more common to require that $\Rs$'s views (and not the transcripts) are statistically close. Since. however, we put no restriction on the computation power of \Rs, the two definitions are equivalent.} $\Com$ is {\sf statistically hiding}, if it is $\rho(n)$-hiding for some negligible function $\rho(n)$.
When limiting the above to $\Rs = \Rc$, then $\Com$ is called {\sf honest-receiver $\rho(n)$-hiding/statistically hiding}.
\end{definition}

\begin{definition}[computational binding]\label{def:binding}
A bit-commitment scheme $\Com = (\cdot, \Rc, \Vc)$ is {\sf $\mu(n)$-binding}, if
\[ \pr{{((decom, decom'),\com) \la \exec{(\Ss, \Rc)(1^n)}_{\out^\Ss,\out^\Rc}}\colon \MyAtop{\Vc (\com, \decom) = 0,}{\Vc (\com, {\sf decom'}) = 1 }} < \mu(n) \enspace \]%
for any \ppt $\Ss$ and sufficiently large $n$. $\Com$ is {\sf computationally binding}, if it is $\mu(n)$-binding [\resp {\sf honest-sender $\mu(n)$-binding}] for some negligible function $\mu(n)$, and is {\sf weakly binding}, if it is $(1 - 1/p(n))$-binding for some polynomial $p(n)$.

When limiting the above $\Ss$ that acts {\sf honestly} in the commit stage,\footnote{I.e., in the commitment stage $\Ss$ acts as $\Ss(b;r)$, for some $b\in \zo$ and $r$ that is uniformly chosen from the possible coins for $\Sc$.} then $\Com$ is called {\sf honest-sender $\mu(n)$-binding/computationally binding/weakly binding}.
\end{definition}

\subsection{Black-Box Reductions}\label{subsection:reductions}
A reduction of a primitive $P$ to a primitive $Q$ is a construction of $P$ out of $Q$. Such a
construction consists of showing that if there exists an implementation $\Cc$ of $Q$, then there
exists an implementation $\Mc_\Cc$ of $P$. This is equivalent to showing that for every adversary that
breaks $\Mc_\Cc$, there exists an adversary that breaks $\Cc$. Such a reduction is \emph{semi black box}, if it ignores the internal structure of $Q$'s implementation, and it is \emph{fully black box}, if
the proof of correctness is black-box as well (\ie the adversary for breaking $Q$ ignores the
internal structure of both $Q$'s implementation and of the [alleged] adversary breaking $P$).
Semi-black-box reductions are less restricted and thus more powerful than fully black-box
reductions. A taxonomy of black-box reductions was provided by \citet{ReingoldTV04}, and the reader is referred to their paper for a more complete and formal view
of these notions.

We now formally define the class of constructions considered in this paper. Our main result is
concerned with the particular setting of fully black-box constructions of weakly binding
statistically hiding commitment schemes from trapdoor permutations. We focus here on a specific
definition for these particular primitives and we refer the reader to \cite{ReingoldTV04} for a
more general definition.

\begin{definition}\label{definition:fully}
A fully black-box construction of weakly binding, statistically hiding commitment
scheme from $\s(n)$-hard family of trapdoor permutations, is a quadruple of oracle-aided
\ppt's $(\Sc, \Rc, \Vc, \Ac)$ such that the following hold:
\begin{description}
\item[Correctness and hiding] The scheme $\Com^\tau = (\Sc^{\tau}, \Rc^{\tau}, \Vc^{\tau})$ is a correct, honest-receiver statistically hiding commitment scheme for every $\tau \in \Tau$.

\item[Black-box proof of binding:] For every $\tau = (G,F,F^{-1}) \in \Tau$ and every algorithm $\Ss$ such that $\Ss$ breaks the weakly binding of
$(\Sc^{\tau}, \Rc^{\tau}, \Vc^{\tau})$, according to \cref{def:binding}, it holds that
\begin{align*}
\prob{td \la \zn;y\la \zn}{\Ac^{\tau}(1^n, G (td), y) = F^{-1} (td, y)} >\frac{1}{\s(n)}
\end{align*}
for infinitely many $n$'s.\footnote{A natural relaxation of \cref{definition:fully} is to consider the running time of the ``security proof" \Ac as an additional parameter. Allowing it, for instance, to run at exponential time when the trapdoor permutation of interest are ``exponentially hard" (\ie $\s(n) = 2^{cn}$). For the sake of presentation clarity, however, we chose no to consider such generalization.}
\end{description}

The construction is of honest-sender commitment, if the above only considers honest senders.
\end{definition}

It would be useful for us to consider the following property of fully black-box reduction: consider a malicious sender $\Ss$ that breaks the binding of the commitment scheme and consider the machine \Ac that wishes to break the security of the trapdoor permutation. Then, \Ac receives a security parameter $1^n$ and invokes $\Ss$ in a black-box manner. \cref{definition:fully}, however, does not restrict the range of security parameters that \Ac is allowed to invoke $\Ss$ on. For example, \Ac may invoke $\Ss$ on security parameter $1^{n^2}$, or even on security parameter $1^{\Theta(\s(n))}$, where $\s(n)$ is the running time of \Ac. The following definition will enable us to capture this property of the construction, and again, we present a specific definition for our setting.
\begin{definition}\label{definition:ParameterExpanding}
A black-box construction $(\Sc, \Rc,\Vc, \Ac)$ according to \cref{definition:fully} is {\sf $\ell$-security-parameter expanding},
if for every malicious sender $\Ss$, the machine \Ac on security parameter $1^n$ invokes $\Ss$ on security
parameter at most $1^{\ell(n)}$.
\end{definition}

\Ignore{%
A \emph{relativizing reduction} of $P$ to $Q$ is a reduction that
holds relative to any oracle $\mathcal{O}$, \ie a proof showing
that for every oracle $\mathcal{O}$, if there exists an
implementation of $Q$ relative to $\mathcal{O}$ then there exists
also an implementation of $P$ relative to $\mathcal{O}$.
Informally, an implementation relative to an oracle $\mathcal{O}$
means that all parties are given access to the oracle
$\mathcal{O}$. The notion of relativizing reductions is at least
as strong as the notion of fully black-box reductions. In
particular, in our setting when dealing with impossibility
results, showing that there is no relativizing reduction of $P$ to
$Q$ implies that there is no fully black-box reduction. We note
that, in most cases, this implies that there is no \emph{semi-black-box} reduction as well. Again, the reader is referred
to \cite{ReingoldTV04} for an elaborated discussion.

In this paper, we base our separation result on a slightly weaker
form of relativizing reductions. We show that there exists a pair
of oracles $(\tau, \mathcal{O}^{\tau})$ relative to which $\tau$
is an implementation $P$ (a collection of trapdoor permutations),
and there is no implementation of $Q$ (a statistically hiding
commitment scheme with a low round complexity) of the form
$\Cc^{\tau}$. In other words, our current statement of the result
restricts implementations of $Q$ to have oracle access only to one
of the oracles. Nevertheless, such a separation result still
implies that there is no fully black-box reduction, and in fact,
we have a sound belief that the above restriction is not
essential, \ie that our proof can be generalized to imply that
there is no relativizing reduction. We elaborate more on the
specific flavor of our result and on the possibility of
generalizing our statements in \cref{section:FurtherIssues}. }

\section{The Oracle \Sam and the Separation Oracle}\label{section:SeperationOracle}
In this section we describe the oracle that is later used for proving our lower bounds. The oracle
is of the form $\left( \tau , \Sam^{\tau,\h} \right)$, where $\tau$ is a family of
trapdoor permutations (\ie $\tau \in \Tau$), and $\Sam^{\tau,\h}$ is an oracle that, very informally, receives as input a
description of a circuit $\Cc$ (which may contain $\tau$-gates) and a string $w$, and outputs (using $\h$ as its source of ``randomness'' as described below) a
uniformly distributed preimage of $\Cc(w)$ under the mapping defined by $\Cc$. For generality, we define \Sam for an arbitrary oracle $\oracle$ and not necessarily for $\tau\in \Tau$. In \cref{section:inverting} we use this generalization for first showing that \Sam is not useful for inverting random permutations, and then use this result for proving that \Sam is not useful for inverting random trapdoor permutations.

Moving to the formal description, a valid input (\ie query) to \Sam is a tuple of the form $(w,\Cc,\Cc_\next)$, where $\Cc$ and $\Cc_\next$ are oracle-aided circuits of the same input length $\Inp$, and $w\in \zo^\Inp$. The parameters  $\Cc$ and $w$ are allowed to (simultaneously) take the value $\perp$. Let $\cQ$ stand for the family of all valid queries, and for $\q = (w,\Cc,\Cc_\next) \in \cQ$ let $\Inp(\q)$ stand for the input length of $\Cc_\next$. Let $\calH$ be the ensemble of permutation families $\set{\h = \set{\h_\q}_{\q\in\cQ} : \h_\q \in \Pi_{\Inp(\q)}}$;  that is, each $h\in \calH$ is an infinite set of hash functions, indexed by $\q\in\cQ$. The definition yields that for $\h\la \calH$, the function $\h_\q$ is uniformly random  permutation over $\zo^{\Inp(\q)}$.
We define $\Sam$ as follows.

\begin{algorithm}[\Sam]\label{fig:Sam}

\item {Input:} $\q = (w,\Cc, \Cc_\next) \in \cQ$.

\item {Oracles:} $\oracle$ and $\h \in \calH$.

\item {Operation:} Let $\Inp = \Inp(\q)$.

\begin{itemize}
\item If $\Cc = \bot$, output $\h_\q(0^\Inp)$.

\item Else, output $\h_\q(v)$, where $v$ is the lexicographically smallest $v \in \zo^\Inp$ with $\Cc^\oracle(\h_\q(v)) = \Cc^\oracle(w)$.

\end{itemize}%
\end{algorithm}
\Sam answers arbitrarily on  queries  not in $\cQ$. Note that the input parameter $\Cc_\next$ was merely used to determine the value of $\Inp$, but it will be crucial for the bookkeeping we employ below.

As mentioned in the introduction, algorithm \Sam can be used for inverting \emph{any} oracle, and thus there are no one-way function, or trapdoor permutation, relative to \Sam. Below we define a restricted class of algorithms, called ``normal form algorithms", for which \Sam is not useful for inverting one-way functions, but is useful for breaking the binding of any low round-complexity, or low sender-communication complexity, commitment.

\paragraph{Normal form algorithms.}
Towards defining what normal form algorithms are, we associate the following structure with the queries \Sam is asked upon (the reader is referred to \cref{Figure:Forest} for a specific example).

\begin{definition}[Query forest]\label{def:query forest}
Let $\qv$ be an ordered list $\set{\q_1,w_1,\dots,\q_t,w_t}$ of \Sam queries/answers. A query $\q_j = (\cdot,\cdot, \Cc) \in \qv $ is the {\sf parent}, \wrt $\qv$, of all queries in $\qv$ of the from $\q_i = (w_j,\Cc, \cdot)$ with $i>j$, that do not have a lower index parent in $\qv$. We let $\parent(\q) = \q'$ denote that $\q'$ is the parent of $\q$, and let $\parent(\q) = \perp$ in case $q$ has no parent according to the above definition. The {\sf depth} of $\qv$ is the depth of the above forest.
An oracle-aided algorithm \Ac is of {\sf query depth} $d$, denote a $d$-depth algorithm, if, when given access to \Sam and an $n$-bit input, the resulting queries/answers list it makes to \Sam is of depth at most $d(n)$.
\end{definition}


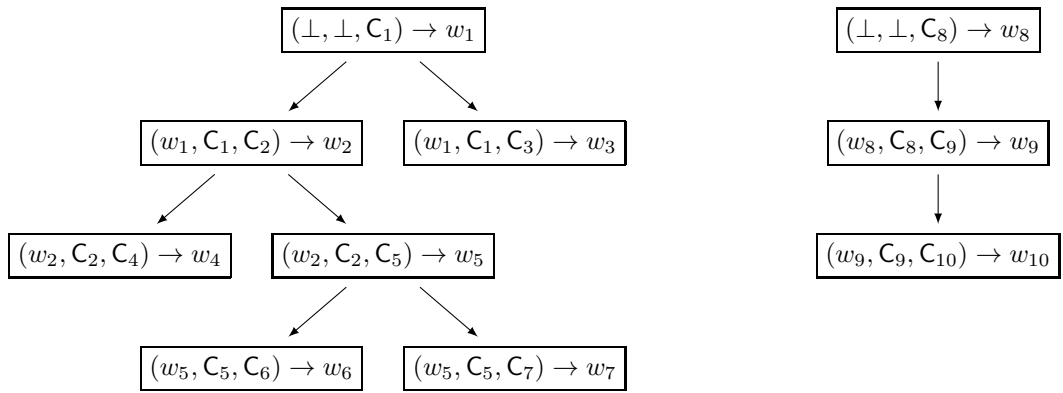
\begin{figure}\label{Figure:Forest}
\begin{small}
\begin{center}
\noindent
\begin{multicols}{2}

\noindent
\begin{tikzpicture}[edge from parent/.style={draw,->,>=latex}]
\node(0){$\boxed{(\bot, \bot, \Cc_1) \rightarrow w_1}$}
    [sibling distance=35mm]
    child{node{$\boxed{(w_1, \Cc_1, \Cc_2) \rightarrow w_2}$}
        child{node{{$\boxed{(w_2, \Cc_2, \Cc_4) \rightarrow w_4}$}}}
        child{node{{$\boxed{(w_2, \Cc_2, \Cc_5) \rightarrow w_5}$}}
            child{node{{$\boxed{(w_5, \Cc_5, \Cc_6) \rightarrow w_6}$}}}
            child{node{{$\boxed{(w_5, \Cc_5, \Cc_7) \rightarrow w_7}$}}}
            }
        }
    child{node{$\boxed{(w_1, \Cc_1, \Cc_3) \rightarrow w_3}$}};
\end{tikzpicture}

\noindent
\begin{tikzpicture}[edge from parent/.style={draw,->,>=latex}]

\node(0){$\boxed{(\bot, \bot, \Cc_8) \rightarrow w_8}$}
    child{node{$\boxed{(w_8, \Cc_8, \Cc_9) \rightarrow w_9}$}
        child{node{{$\boxed{(w_9, \Cc_9, \Cc_{10}) \rightarrow w_{10}}$}}
        }
    };
\end{tikzpicture}
\end{multicols}
\caption{An example of a query forest that consists of two trees.}
\end{center}
\end{small}

\end{figure}

We also formally define what a ``circuit extension" means.
\begin{definition}[circuit extension]\label{def:CircuitExtension}
A circuit $\Cc'$ is a {\sf extension} of an $\Inp$-bit input, $o$-bit output circuit $\Cc$, if $\Cc'$ has $\Inp$ input wires, and the function defined by the first $o$ output wires of $\Cc'$ (assuming some arbitrary order on the wires) is identical to the function defined by the circuit $\Cc$.
\end{definition}
\noindent Namely, a circuit $\Cc'$ is an ``extension" of the circuit $\Cc$, if it contains $\Cc$ as a ``sub-circuit". Equipped with the above two definitions, we define normal form algorithms as follows.
\begin{definition}[normal-form algorithms]\label{def:NormalFormAlgorithm}
An ordered list $\qv=\set{\q_1,w_1,\dots,\q_t,w_t}$ of \Sam queries/answers is in a {\sf normal form}, if $\q_1,\ldots,\q_t\in \cQ$, exists no $i\neq j \in [t]$ with  $q_i = (\cdot,\Cc_\next)$ and $q_j = (\cdot,\Cc_\next)$ for the {\sf same} circuit $\Cc_\next$, and the following holds for every $\q = (w,\Cc\neq \perp, \Cc_\next)\in \qv$:
\begin{enumerate}
 \item $\Cc_\next$ is an extension of $\Cc$, and
 \item $\parent(\q) \neq \perp$.
\end{enumerate}
An oracle-aided algorithm \Ac  is of a normal form, if, when given access to \Sam, the resulting list of queries/answers it makes to \Sam is always in normal form.
\end{definition}
Note that in the query forest defined by a normal-form algorithms, the roots are all of the form $(\cdot,\perp, \cdot)$. The above definitions naturally extend to oracle-aided (families of) circuits, assuming a reasonable order on the circuit gates (see \cref{sec:prelim:Alg}).

While restricted, normal-from algorithms are not at all useless. Specifically, combining the fully black-box reduction from $\Theta(n/\log n)$-round statistically hiding commitment to one-way permutation due to \cite{HaitnerR12,KoshibaS06} (extending \cite{NaorOVY98}) and \cref{theorem:collisionInLowRound}, yields the existence of an $\Theta(n/\log n)$-depth normal-form algorithm, that uses $\Sam^{\pi,\cdot}$ to invert \emph{any} $\pi \in \Pi$. In contrast, in \cref{section:inverting} we show that an $o(n/\log n)$-depth normal-form algorithm \emph{cannot} invert a random $\pi \in \Pi$.

We will also note that an algorithm with oracle access (\ie black-box access) to a normal-form algorithm, and without direct access to \Sam, is a normal-form algorithm by itself.
\begin{proposition}\label{prop:InseritNormalForm}
let \Ac be oracle-aided algorithm, let \Bc be a $d$-depth normal-from algorithm, and let $\Cc$ be the algorithm that given oracle-access to \Sam, acts as $\Ac^{\Bc^\Sam}$ (in particular \Ac does not make direct calls to \Sam). Then algorithm $\Cc$ is in a normal form. Assume further that on input of length $n$, algorithm \Ac calls \Bc on input of maximum length $\ell(n)$, then \Cc is of depth $ d(\ell(n))$.
\end{proposition}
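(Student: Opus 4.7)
The plan is to exploit the fact that $\Ac$ makes no direct calls to $\Sam$, so every query that $\Cc$ sends to $\Sam$ arises inside some invocation of $\Bc^{\Sam}$. Partition the chronological query/answer list $\qv$ produced by $\Cc$ into consecutive sub-lists $\qv^{(1)}, \qv^{(2)}, \ldots$, where $\qv^{(k)}$ contains exactly those queries made during the $k$-th call $\Ac$ makes to $\Bc$. Because $\Bc$ is a $d$-depth normal-form algorithm, each $\qv^{(k)}$ in isolation is in normal form, and since the input to the $k$-th invocation has length at most $\ell(n)$, the depth of the forest associated with $\qv^{(k)}$ is at most $d(\ell(n))$.

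Next I would verify that the full concatenated list $\qv$ is itself in normal form. For every query $\q = (w, D_1 \neq \perp, D_2) \in \qv^{(k)}$ (writing $D_1, D_2$ for the circuits appearing in the query), the normal-form property of $\Bc$ guarantees that $D_2$ is an extension of $D_1$ and that $\q$ already has a parent inside $\qv^{(k)}$, hence a parent in the full $\qv$. The only delicate point is that no two queries anywhere in $\qv$ share the same third coordinate $D_2$. Within a single sub-list this is immediate from $\Bc$'s normal form; across sub-lists it is enforced by the standard bookkeeping convention that $\Bc$ tags each circuit it constructs with the index of the current invocation (a cosmetic modification — e.g., appending an unused sub-circuit encoding the invocation counter — that leaves the circuit's computation unchanged). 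Under this convention, circuits arising in distinct invocations are syntactically distinct, which simultaneously precludes any parent relation from crossing sub-list boundaries: a query in $\qv^{(k')}$ cannot have its second coordinate equal to the third coordinate of a query from $\qv^{(k)}$ with $k \neq k'$.

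Consequently the forest defined by $\qv$ decomposes as the disjoint union of the forests defined by the sub-lists $\qv^{(k)}$, so its depth equals the maximum depth of any single $\qv^{(k)}$, which by the previous paragraph is at most $d(\ell(n))$. This yields both the normal-form property of $\Cc$ and the depth bound $d(\ell(n))$, as claimed. The only non-trivial obstacle is the uniqueness of third coordinates across invocations; it is handled entirely by the tagging argument above, and everything else follows directly from the local normal-form guarantee of $\Bc$.
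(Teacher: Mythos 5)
Your proof follows the same high-level decomposition as the paper's: partition $\Cc$'s $\Sam$-query/answer list by the invocation of $\Bc$ that generated each query, use $\Bc$'s normal form and depth bound within each part, and combine. Unlike the paper's one-paragraph argument, however, you explicitly isolate the two obstructions that can prevent a concatenation of normal-form lists from itself being normal form: distinct invocations may emit the same third coordinate $\Cc_\next$ (violating the uniqueness clause of \cref{def:NormalFormAlgorithm}), and the parent relation of \cref{def:query forest}, which selects the \emph{lowest-index} matching predecessor, can jump across sub-lists and, because chains in an earlier sub-list may already be long, potentially lengthen a chain beyond $d(\ell(n))$. Both are genuine issues that the paper's proof silently asserts away (and the proposition as literally stated can fail, e.g.\ if $\Ac$ calls $\Bc$ twice on identical input, since $\Sam$ is deterministic given $\h$, so the two sub-lists are equal). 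Your tagging device handles both at once, and the observation that tagging localizes the parent relation --- so the combined forest is the disjoint union of the per-invocation forests --- is exactly what is needed to preserve the depth bound.

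The one loose end is the attribution of the tagging to $\Bc$: a black-box $\Bc$ that is invoked afresh each time carries no state across invocations and therefore cannot know its invocation index, so it cannot tag its own circuits. The tagging must instead be carried out by $\Cc$, which internally simulates both $\Ac$ and $\Bc$ and does know the counter: $\Cc$ intercepts each $\Sam$-query that the simulated $\Bc$ emits and replaces its circuits by trivial extensions indexed by the current invocation (in the sense of $\extension_j$) before forwarding to $\Sam$. This is a cosmetic change that leaves the output distribution unchanged, since each $\h_\q$ is an independent uniform permutation, but it does mean the resulting $\Cc$ is not syntactically $\Ac^{\Bc^\Sam}$ on every fixed $\h$; the proposition is best read as asserting the existence of such a cosmetically adjusted $\Cc$. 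With this small repair your argument is complete and considerably more careful than the paper's.
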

\begin{proof}
 Since \Ac accesses \Bc in a black-box manner, the interaction of \Cc with \Sam is the combined (possibly partial) interactions of \Bc with \Sam done in this execution. Since \Bc is in a normal form, the list of \Sam queries/answers of each of these partial interactions is in a normal form. It follows that the joint list is in such a from, and therefore so is \Cc.

 The depth restriction of \Cc immediately follows from the above observation, and the depth restriction of \Bc.
\end{proof}

\paragraph{Augmented query complexity.}
We use the following measure for the query complexity of \Sam-aided algorithms.
\begin{definition}[augmented query complexity]\label{def:AugmentQC}
The {\sf augmented query complexity} of an algorithm $\Ac$ on input $x$ with oracle access to $\Sam^{\oracle,\h}$, is the number of oracle calls that \Ac makes, counting each call of the form $\Sam(\cdot,\Cc, \cdot)$ as $\aq(\Cc)$ --- the (standard) query complexity of $\Cc$. Algorithm \Ac is has {\sf augmented query complexity \aq} (sometimes denoted, \Ac is a $\aq$-\augQ algorithm), if on input of length $n$, and any choice of $\oracle,\h$, it makes at most $\aq(n)$ augmented queries.
\end{definition}

\paragraph{Trivial circuit extension.}
While the above definition dictates \Sam to use the \emph{same} ``randomness" when queried twice on the same query $\q$ (same function $\h_\q$ is used), it is simple to effectively make \Sam to use \emph{independent} randomness on the ``same" query (\ie by making a dummy change, one that does not effect the circuit input/output behaviour,  to the circuit part of $\q$).

\begin{definition}[trivial circuit extension]
A circuit $\Cc'$ is a {\sf trivial extension} of the circuit $\Cc$, if both circuits computes the {\sf same} function. For a circuit $\Cc$ and $i\in \N$, let $\extension_i(\Cc)$ be the circuit $\Cc$ augmented with $i$ OR gates that have no effect on the output (\ie their output is ignored).
\end{definition}
Note that $\set{\extension_i(\Cc)}_{i\in \N}$ are \emph{distinct, trivial} extensions of $\Cc$.

\section{The Power of \Sam}\label{section:PowerOfSam}
In this section we present normal-form algorithms that use \Sam for finding collisions in \emph{any} protocol of low round complexity, or of low communication complexity, aided with \emph{any} oracle. Namely, a cheating (normal-form) party can use \Sam to interact with the other party such that the following hold: 1) at the end of the protocol the cheating party outputs several independent random inputs that are consistent with the execution of the protocol, and 2) the transcript of the resulting execution has the same distribution as of a random honest execution of the protocol. The case of low round complexity follows directly from the definition of \Sam, where for the low communication complexity case we have to work slightly harder. Along the way, we show that \Sam can be used to find collisions in any oracle-aided function with ``short" outputs (\ie the function output is significantly shorter than its input).

The depth parameter of the attackers presented below are functions of the round or communication complexity of the protocol, or of the function's output length. For being useful in applications such as the ones given in \cref{section:LowerBounds,section:implications}, this parameter needs to be ``small". Hence, the attacker described below are only useful for the type of protocols and functions we considered above.

\subsection{Finding Collisions in Protocols of Low Round Complexity}\label{sec:CollisionInLowRoundCOmp}
In the following we focus on no-input protocols that get the security parameter $1^n$ as their common input.

\begin{theorem}\label{theorem:collisionInLowRound}
For every $d$-round, $t$-query oracle-aided protocol $\left(
\Ac, \Bc\right)$, there exists a deterministic normal-from algorithm $\As$ such that the following hold for every $n,k\in \N$ and function $\co$: let $\H$ be uniformly distributed over $\calH$ and let $(\Transs,(R_1,\dots,R_k)) = \exec{(\As^{\co,\Sam^{\co, \H}}(1^k), \Bc^\co)(1^n)}_{\trans,\out^\As}$,\footnote{I.e., the common transcript and $\Ac$'s output in a random execution of $(\As^{\Sam^{\co, \h}}(1^k), \Bc^\co)(1^n)$.} then
\begin{enumerate}
 \item $\Transs$ has the same distribution as $\Trans = \exec{(\Ac^\co, \Bc^\co)(1^n)}_\trans$.
 \item $R_1,\ldots, R_k$ are sampled independently from the distribution over the random coins of $\Ac$ that are consistent with $\co$, $\Transs$ and $1^n$.

 \item $\As$ makes queries of depth at most $d(n)+1$.
 \item $\As$ makes  $k + d(n)$ \Sam-queries, all on $t(n)$-query circuits.

  \item Assuming $\Ac$ is a \ppt, then $\As$ runs in time $p(n)\cdot k$ for some $p\in \poly$.

\end{enumerate}
\end{theorem}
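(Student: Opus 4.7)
My plan is to construct $\As$ round-by-round, using $\Sam$ to sample fresh random $\Ac$-coins that are consistent with the partial transcript seen so far. For each $i\in[d(n)]$, once $\Bc$'s first $i-1$ messages $b_1,\ldots,b_{i-1}$ have been received, let $C_i$ denote the oracle-aided circuit that, on input $r$, computes and concatenates the first $i$ messages $a_1,\ldots,a_i$ that $\Ac^\co$ would send when using coins $r$ and receiving $b_1,\ldots,b_{i-1}$. Note that $C_i$ is an extension of $C_{i-1}$ and contains at most $t(n)$ $\co$-gates. The algorithm $\As$ proceeds as follows: it first queries $\Sam(\bot,\bot,C_1)$ to obtain $w_1$ and sends $a_1=C_1(w_1)$ to $\Bc$; at round $i\geq 2$, upon receiving $b_{i-1}$ and building $C_i$, it queries $\Sam(w_{i-1},C_{i-1},C_i)$ to obtain $w_i$, and sends the $i$-th message encoded in $C_i(w_i)$. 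Once the $d(n)$-th round has completed, $\As$ issues $k$ further queries $\Sam(w_{d(n)},C_{d(n)},\extension_j(C_{d(n)}))$ for $j=1,\ldots,k$, and outputs the answers as $R_1,\ldots,R_k$.

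For items (1) and (2), I would argue by induction on $i$ that the joint distribution of the messages $(a_1,b_1,\ldots,a_i)$ exchanged together with $w_i$, as produced by $(\As,\Bc)$, is identical to the joint distribution of the partial transcript in an honest execution of $(\Ac,\Bc)$ together with a uniformly random consistent set of $\Ac$-coins. The base case is that $\h_\q(0^{|r|})$ is uniform over $\zo^{|r|}$ when $\h_\q$ is a uniformly random permutation. The inductive step rests on the following property of $\Sam$: for $\Cc\neq\bot$, the answer to a query $\Sam(w,\Cc,\cdot)$ is uniformly distributed (over $\h_\q$) in the set $\set{w' : \Cc^\co(w')=\Cc^\co(w)}$, by symmetry of uniform random permutations. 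Thus $w_i$ is uniform over coins consistent with the first $i-1$ $\Ac$-messages, which in turn makes $a_i$ (and then $b_i$) distributed as in an honest execution. For the final $k$ outputs, the circuits $\extension_j(C_{d(n)})$ are pairwise distinct, so the $k$ queries lie at distinct elements of $\cQ$ and invoke \emph{independent} hash functions $\h_\q$; each answer is therefore an independent uniform sample from the coins consistent with $\Transs$. I expect this inductive distributional argument, together with the trivial-extension trick needed to obtain independence of the final $k$ samples, to be the main conceptual content.

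The bookkeeping items (3)--(5) follow by inspection. The query forest is a single tree: the root $(\bot,\bot,C_1)$ at depth $1$; the chain $(w_{i-1},C_{i-1},C_i)$ for $i=2,\ldots,d(n)$ occupying depths $2,\ldots,d(n)$; and the $k$ final queries at depth $d(n)+1$, all children of $(w_{d(n)-1},C_{d(n)-1},C_{d(n)})$. Hence $\As$ makes exactly $d(n)+k$ \Sam-queries of depth at most $d(n)+1$, each on a circuit with at most $t(n)$ $\co$-gates (the $\extension_j$ transformation adds only OR-gates). The normal-form conditions hold: distinct queries use distinct third coordinates (the $C_i$'s differ in output length; the $\extension_j(C_{d(n)})$'s differ by construction), each third coordinate extends the second, and every non-root query has a well-defined parent in the forest. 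When $\Ac$ is a \ppt, constructing each $C_i$ or $\extension_j(C_{d(n)})$ takes $\poly(n)$ time, and since $d(n)$ is polynomially bounded by $\Ac$'s running time, the overall running time of $\As$ is $\poly(n)\cdot(d(n)+k)=p(n)\cdot k$ for some polynomial $p$.
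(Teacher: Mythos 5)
Your construction of $\As$ and the accompanying inductive distributional argument are essentially identical to the paper's own proof (the paper fixes the opposite convention of who speaks first, which is immaterial), and the bookkeeping for depth, query count, query complexity, and running time matches as well. The key observations you use — that $\Sam$'s answer is a uniform preimage because a fresh $\h_\q$ induces a uniformly random ordering of the domain, and that the trivial extensions $\extension_j$ force independent hash functions for the $k$ output queries — are exactly the ones the paper relies on.
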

\begin{proof}
For ease of notation, we assume that $\Bc$ sends the first message in $(\Ac,\Bc)$. We fix $n$ and $k$, and omit the security parameter $1^n$ whenever its value is clear from the context.

In order for $\As$ to interact with \Sam, it identifies $\Ac$ with the sequence of circuits $\Ac_1, \ldots, \Ac_d$ for which the following is an accurate description of $\Ac$'s actions: upon reliving the $i$'th message $b_i$ from \Bc, $\Ac$ sends $a_i = \Ac_i (r_\Ac, (b_1,\dots,b_i))$ to $\Bc$, where $r_\Ac$ are $\Ac$'s random coins, and $b_1,\dots,b_{i-1}$ are the first $i-1$ messages sent by $\Bc$. We assume \wlg that each message $a_i$ contains the previous messages $a_1, \ldots a_{i-1}$ as its prefix, and therefore each circuit $\Ac_i$ is an \emph{extension} of $\Ac_{i-1}$ (as discussed in \cref{section:SeperationOracle}). Note that assuming $\Ac$ is a \ppt, then descriptions of the circuits $\Ac_1, \ldots, \Ac_d$ can be computed in polynomial-time from the description of
$\Ac$.

Given the above discussion, the oracle-aided interactive algorithm $\As$ is defined as follows.
\begin{algorithm}[\As]\label{alg:As}
\item[Input:] $1^n$ and $k\in \N$.

\item [Oracles:] $\co$ and $\Sam^{\co, \h}$ for some $h \in \calH$.

\item [Operation:]~
\begin{description}
\item[Round $1 \leq i \leq d = d(n)$:] upon receiving the $i$'th message $b_i$ from \Bc.
\begin{enumerate}
\item Let $\Ac_{b_1,\cdots,b_i}$ be the circuit $\Ac_i$ defined above with $(b_1,\cdots,b_i)$ \emph{fixed} as its second input (\ie as \Bc's first $i$ messages).

\item In case $i=1$ (first round), set $r_1 = \Sam^{\co, \h}(\bot, \bot,\Ac_{b_1})$.\\
Otherwise, set $r_i = \Sam^{\co, \h}(r_{i-1},\Ac_{b_1,\cdots,b_{i-1}}, \Ac_{b_1,\cdots,b_i})$.

\item Send $\Ac^\co_{(b_1,\cdots,b_i)}(r_i)$ back to $\Bc$.
\end{enumerate}

\item [Output phase:]~
\begin{enumerate}
\item For $j\in [k]$ set $r_{d,j} = \Sam^{\co, \h}(r_d,\Ac_{b_1,\cdots,b_d}, \extension_j(\Ac_{b_1,\cdots,b_d}))$.\footnote{Recall that $\set{\extension_j(\Ac_{b_1,\cdots,b_d}))}_{j\in [d]}$ are arbitrary \emph{distinct} extensions of $\Ac_{b_1,\cdots,b_d}$. The role of these extensions is to make \Sam to use fresh randomness in each call (\ie to apply a different part of $\h$).}

\item Output $r_{d,1},\dots,r_{d,k}$.
\end{enumerate}

\end{description}
\end{algorithm}
Note that the only role of the circuits $\set{\extension_j(\Ac_{b_1,\cdots,b_d}))}_{j\in [d]}$ used in the output phase of the above description of $\As$, is causing \Sam to use \emph{independent} randomness per call (\ie by using a different function from $\h$). It is also easy to verify that $\As$ queries $\Sam$ up to depth at most $d(n) + 1$, performs at most $(d(n) + k)\cdot t(n)$ augmented oracle queries, and that $\As$ runs in polynomial time (excluding the oracle calls) assuming that $\Ac$ is a \ppt. Since $\As$ queries \Sam up to depth $d(n) + 1$ and the assumption that $\Ac_i$ is an extension of $\Ac_{i-1}$, yields that $\As$ is indeed in a normal form. The above observation yields that $r_{d,1},\ldots, r_{d,k}$ are independently distributed conditioned on $\trans$, where each of them is uniformly distributed over the random coins of $\Ac$ that are consistent with $\co$ and $\trans$ -- the transcript generated by the interaction of $(\As^{\Sam^{\co, \H}}(1^k),\Bc^\co)$. Hence, for completing the proof all we need to prove is that the transcript induced by a random execution of $(\As^{\Sam^{\co, \H}}(1^k), \Bc^\co)$, which we denote here by $\Transs$, has the same distribution as that induced by a random execution $(\Ac^\co, \Bc^\co)$, denoted here as $\Trans$.
\begin{claim}\label{claim:TransHasTheRightDIst}
$\Transs$ and $\Trans$ are identically distributed.
\end{claim}
\begin{proof}
Notice that in each round of the protocol, $\As$ acts exactly like $\Ac$ would on the given (partial) transcript. That is, like $\Ac$ does on random coins that are sampled according to the right distribution: the distribution of $\Ac$'s coin in a random execution of $\left(\Ac, \Bc\right)$ that yields this transcript. The formal (and somewhat tedious) proof follows.

The proof is by induction on $i$, the number of messages sent so far in the protocol, that $\Transs_{1,\dots,i}$ and $\Trans_{1,\dots,i}$ are identically distributed. The base case $i=0$ is trivial. In the following we condition on $\Transs_{1,\dots,i}= \Trans_{1,\dots,i} = \trans$, and prove that under this conditioning $\Transs_{1,\dots,i+1}$ and $\Trans_{1,\dots,i+1}$ are identically distributed.

Note that both in $(\As^{\Sam^{\co,\h}},\Bc^\co)$ and in $(\Ac^\co,\Bc^\co)$, the distribution of the (conditional) parties' joint view, is a \emph{product} distribution. (This hold since the only oracle shared by the parties, \ie $\co$, is fixed.) In particular, the distribution of $\Bc$'s coins in both protocols is uniform over the possible coins for $\Bc$ that are consistent with $\co$ and $\trans$ (and the definition of \Bc). Since the next message of a party is a deterministic function of $\co$, $\trans$ and its random coins, in case the $i+1$ message is in $\Bc$'s control, it holds that $\Transs_{1,\dots,i+1}$ and $\Trans_{1,\dots,i+1}$ are identically distributed.

The complimentary case, where the $i+1$ message is in $\As$'s or in $\Ac$'s control, is slightly more complicated. Note that the $i+1$ message sent by $\As$ is determined by the value of $r_{i+1}$, returned by \Sam, exactly in the same way that the $i+1$ message sent by \Ac is determined by its random coins $r_\Ac$; in both cases, the same deterministic function is applied to $\co$, $\trans$ and the coins. We complete the proof showing that $r_{i+1}$ and $r_\Ac$ are identically distributed.

Similarly to the coins of \Bc discussed above, $r_\Ac$ are uniformly distributed over the possible coins for \Ac that are consistent with $\co$ and $\trans$ (and the definition of \Ac). The value of $r_{i+1}$ on the other hand, is determined by value of $\H_{\q_{i+1}}$, where $\q_{i+1}$ is the query \As makes to \Sam in the $i+1$ round. Since $\H_{\q_{i+1}}$ was not queried by \Sam in the first $i$ rounds of $(\As,\Bc)$, under the above the conditioning $\H_{\q_{i+1}}$ is a uniformly chosen permutation over the coins of $\Ac$. Hence, the definition of \Sam yields that, again, under the above conditioning, the coins it returns are uniformly distributed over the coins of $\Ac$ that are consistent with $\co$ and $\trans$, yielding that $r_{i+1}$ and $r_\Ac$ are identically distributed.
\end{proof}
\end{proof}

\subsection{Inverting Functions of Short Outputs}\label{sec:CollisionInShortOutputFunc}
In this section we show how to use \Sam to invert any function (\ie deterministic algorithm) with oracle access to a trapdoor permutation oracle, given that the function output is ``short". Combined with the results of \cref{section:inverting}, this would imply, for instance, that it is impossible to use in a fully black-box manner an $n$-bit one-way function to construct an $o(n)$-bit one-way function.\footnote{Note that the following theorem does not stand in contradicting with the one-wayness of a random permutation in the presence of $\Sam$, proved in \cref{section:inverting}. The functions in consideration there have long outputs.}

\begin{theorem}\label{theorem:collisionInShortOutputFunc}
For every $t$-query oracle-aided function $f\colon \zn \mapsto \zo^{\ell(n)}$, there exists a deterministic normal-from algorithm \Inv such that the following holds for every $n,k,d \in \N$, $\eps \in (0,1]$ and a function $\co$: let $(X_1,\dots,X_k)= \Inv^{\co,\Sam^{\co, \H}}(1^n,k,d,\eps,f^\co(X))$, where $X$ and $H$ are uniformly chosen from $\zn$ and $\calH$ respectively, then
\begin{enumerate}
 \item $\Pr[(X_1,\dots,X_k) = \perp] \leq \eps$.

 \item Conditioned on $(X_1,\dots,X_k) \neq \perp$, the variables $X_1,\dots,X_k$ are iid over $(f^\co)^{-1}(f^\co(X))$.

\item $\Inv$ makes queries of depth at most $d+1$.

 \item $\Inv$ makes at most $k + d \cdot 2^{\ceil{\ell(n)/d}}/\eps$ \Sam queries, all on $t$-query circuits.

 \item Assuming  $f$ is polynomial-time computable, then \Inv runs in time $p(n)\cdot (2^{\ceil{\ell(n)/d}} + k)$, for some $p\in \poly$.
\end{enumerate}
\end{theorem}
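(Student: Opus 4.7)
The plan is to adapt the block-decomposition attack sketched in \cref{section:Intro:LowCom} for breaking low-sender-communication commitments. Write $y = y_1\circ\cdots\circ y_d$ with $|y_i|\leq \ceil{\ell(n)/d}$, let $f_i^\oracle$ be the oracle-aided circuit that outputs only the first $i$ blocks of $f^\oracle$ (so each $f_{i+1}$ naturally extends $f_i$ in the sense of \cref{def:CircuitExtension}, and $f_d = f$), and drill down to a uniform preimage of $y$ one block at a time. Concretely, \Inv proceeds in two phases. In Phase~1 it maintains the invariant ``$w_i$ is uniformly distributed in $(f_i^\oracle)^{-1}(y_{1..i})$'': starting from $w_0 = C_0 = \bot$, at step $i$ it repeatedly queries $\Sam(w_{i-1}, C_{i-1}, \extension_j(f_i))$ with fresh indices $j$; each answer is a uniform preimage of $y_{1..i-1}$ under $f_{i-1}$, and the first one $r$ satisfying $f_i^\oracle(r) = y_{1..i}$ is adopted, with $w_i \la r$ and $C_i \la \extension_j(f_i)$. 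A global budget of $B = d\cdot 2^{\ceil{\ell(n)/d}}/\eps$ tries is imposed, on exceeding which \Inv outputs $\bot$. In Phase~2, \Inv performs $k$ further queries $\Sam(w_d, C_d, \extension_{J_0+j}(f_d))$ with previously-unused indices $J_0+j$, outputting the answers as $X_1,\ldots,X_k$.

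The key analytic step is bounding the expected reciprocal of the per-step acceptance probability $p_i := |(f_i^\oracle)^{-1}(y_{1..i})|/|(f_{i-1}^\oracle)^{-1}(y_{1..i-1})|$. Grouping preimages $x$ by their image $f_i^\oracle(x)$ and using that at most $2^{\ceil{\ell(n)/d}}$ values of the $i$-th block can extend any given prefix, a short counting argument yields $\Exp_{X\la U_n}[1/p_i]\leq 2^{\ceil{\ell(n)/d}}$. Since at step $i$ the number of tries is (conditionally) geometric with mean $1/p_i$, linearity of expectation gives $\Exp[\text{total Phase-1 tries}] \leq d\cdot 2^{\ceil{\ell(n)/d}}$, and Markov's inequality bounds by $\eps$ the probability that the budget $B$ is exhausted, establishing item~1. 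The uniformity invariant on $w_i$ follows by induction: fresh indices invoke independent components of $\H$, so conditioned on $w_{i-1}$ each returned $r$ is uniform in $(f_{i-1}^\oracle)^{-1}(y_{1..i-1})$ and the first accepted one is uniform in the intersection with $(f_i^\oracle)^{-1}(y_{1..i})$. Applying the same reasoning to Phase~2 (where $C_d$ already computes $f^\oracle$ and fresh $\H$-components are used) shows that $X_1,\ldots,X_k$ are conditionally iid uniform in $(f^\oracle)^{-1}(y)$, giving item~2.

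For items~3--5, normal-form compliance (\cref{def:NormalFormAlgorithm}) is immediate: at step $i$ the $\Cc$-argument of each query equals the $\Cc_\next$ of the successful step-$(i-1)$ query, and $\extension_j(f_i)$ is by construction a circuit-extension of $C_{i-1}$; the induced query forest is a length-$d$ chain with $k$ siblings at its leaf, hence depth $d+1$ (item~3). The total number of $\Sam$-queries is $\leq B + k$ (item~4), and each is on a circuit with the same $t(n)$ oracle gates as $f$, since $\extension_j$ only adds dummy OR-gates. Outside of \Sam, \Inv only evaluates $f^\oracle$ on each returned sample, giving the claimed running time (item~5) once the factor $d/\eps$ is absorbed into $\poly(n)$. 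The subtlety I anticipate is the joint distribution across the two phases: conditioning on Phase~1 having succeeded could in principle skew the hash-family distribution seen by Phase~2, but since Phase~2 uses entirely fresh indices $J_0+j$ (hence disjoint components of $\H$), it effectively runs with independently random permutations, so its outputs are, conditional only on $y$, iid uniform in $f^{-1}(y)$.
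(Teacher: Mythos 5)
Your proposal is correct and follows essentially the same approach as the paper: the same block decomposition of $f$ into prefix circuits $f_1,\ldots,f_d$, the same rejection-sampling algorithm with a global budget, and the same analysis via bounding the expected total number of tries by $d\cdot 2^{\ceil{\ell/d}}$ and applying Markov. Your counting bound $\Exp_{X}[1/p_i]\le 2^{\ceil{\ell(n)/d}}$ is a minor rephrasing of the paper's observation that the target block $Y$ and the sampled blocks $Y_1,Y_2,\ldots$ are i.i.d.\ over $\zo^{\ceil{\ell/d}}$ so the expected first index of a match is at most the support size; you also spell out the fresh-$\H$-component argument for items~1--3, which the paper dismisses as immediate.
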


\begin{proof}
Fix $n$, $k$ and $\co$. For ease of notation we and omit the security parameter $1^n$ and assume $\ell$ is a multiple of $d$. Let $v = \ell/d$, and for $x\in \zn$ and $i\in [v]$, let $f(x)_{(i)}$ denote the $i$'th block of $f(x)$, \ie $f(x)_{(i-1)d +1,\dots,id}$. For $i\in [d]$, let $f_i$ be the circuit that on input $x\in \zn$ outputs $f(x)_{(1)},\ldots,f(x)_{(i)}$. We invert $f$ on $y = (y_1,\dots,y_d)\in (\zo^v)^d$, by gradually causing \Sam to output $x_i$ with $f_i(x_i) = (y_1,\dots,y_i)$ for $i=1$ to $d$. Doing that for $i=1$ is easy: keep calling $\Sam$ on input $(\perp,\perp,f_1)$, until it returns $x_1$ with $f_1(x_1) = y_1$. Since a call to $\Sam(\perp,\perp,f_1)$ returns \emph{uniform and independent} element in $\zn$, about $2^v$ \Sam calls yield the desired answer. Assuming that we have successfully made \Sam to answer on $(\cdot, \cdot, f_{i-1})$ with $x_{i-1}$ such that $f_{i-1}(x_{i-1}) = (y_1,\dots,y_{i-1})$, we make \Sam answer with $x_i$ such that $f_i(x_i) = (y_1,\dots,y_i)$ using similar means to the ones used to get $x_i$; keep calling $\Sam$ on input $(x_{i-1},f_{i-1},f_i)$, until it returns the right $x_i$. As in the first round, about $2^v$ \Sam calls suffices to get the desired answer. The formal definition of algorithm $\Inv$ is given below.

\begin{algorithm}[\Inv]\label{alg:Inv}~
\item[Input:] $1^n$, $k,d \in \N$, $\eps \in (0,1]$ and $y = (y_1,\dots,y_d)\in (\zo^v)^d$.

\item [Oracles:] $\co$ and $\Sam^{\co, \h}$, for some $h\in \calH$.

\item [Operation:]~

\begin{enumerate}
 \item For $i=1$ to $d$ do: \label{step:mainLoop}

 Set $j =0$, and do the following loop:
\begin{enumerate}
\item $j^{++}$.

 \item Let $x_i = \Sam^{\co, \h}(x_{i-1}, f_{i-1}^\ast, \extension_j(f_i))$. In the case $i=1$, set $f_{i-1}^\ast = x_{i-1} = \perp$.

 \item If $f_i(x_i) = (y_1,\dots,y_i)$, set $f_i^\ast = \extension_j(f_i)$ and break the inner loop.

 \item If \emph{overall} number of \Sam calls exceeds $d \cdot 2^v/\eps$, return $\perp$ and abort.
\end{enumerate}

\item For $j=1$ to $k$: set $x_{d,j} = \Sam^{\co, \h}(x_d,f_d^\ast, \extension_j(f_d^\ast))$.

 \item Return $x_{d,1},\dots,x_{d,k}$.
\end{enumerate}

\end{algorithm}

The second and third properties of $\Inv$ immediately follow from the definition of \Sam, so the only interesting part is showing that \Inv aborts (\ie outputs $\perp$) with probability at most $\eps$. Let \tInv be the unbounded version of \Inv, \ie Step $1.(d)$ is removed. It is clear that \tInv's output is identical to that of \Inv conditioned on \Inv not aborting, and that the probability that \Inv aborts is the probability that \tInv make more than $d \cdot 2^v/\eps$ \Sam calls. We show that the expected number of \Sam calls made by \tInv is bounded by $d \cdot 2^v$, and proof follows by a Markov bound.

We bound the expected number of overall \Sam calls made by \tInv in a single round of Step $1$, and the proof follows by linearity of expectation. Fix a value for $y_1,\ldots,y_{i-1}$. Let $Y=f(X)_{(i)}$ conditioned that $f_{i-1}(X) = y_1,\ldots,y_{i-1}$, and let $Y_j$ be the value of $f(x_{i})_{(i)}$ sampled in the $j$'th inner loop of a random execution of $\tInv(k,d,\eps,y_1,\ldots,y_{i-1},\ldots)$. If less than $j$ inner loops happen, we let $Y_j$ be an independent copy of $Y_1$. The definition of $\Sam$ yields that over a random choice of $\h$, the variables $Y,Y_1,Y_2,\ldots$ are iid over $\zo^v$. It follows that $\pr{Y=Y_j} \geq 2^{-v}$ for every $j$, and the expected value of the first $j$ with $Y_j =Y$ is bounded by $2^{v}$. Hence, the expected number of \Sam calls made by $\tInv$ (over the choice of $X$ and $\h$) is bounded by $d \cdot 2^v$.
\end{proof}

\subsection{Finding Collisions in Low Communication Complexity Protocols}\label{sec:CollisionInLowComCOmp}
The following theorems show how to find collision in protocols in which the communication of the ``attacking" party is low.

\begin{theorem}\label{theorem:collisionInLowComComp}
Let $\pi= \left(\Ac, \Bc\right)$ oracle-aided protocol in which $\Ac$, on input of length $n$, makes at most $t(n)$ oracle-queries and sends at most $c(n)$ bits. Then there exists a deterministic normal-from algorithm \Inv such that the following holds for every $n,k,d\in \N$, $\eps \in (0,1]$ and function $\co$: let $\H$ be uniformly distributed over $\calH$ and let $(X_1,\dots,X_k)= \Inv^{\co,\Sam^{\co, \H}}(1^n,1^k,d,\eps,\Trans)$, where $\Trans = \exec{\pi^\co)(1^n)}_\trans$, then
\begin{enumerate}
 \item $\Pr[(X_1,\dots,X_k) = \perp] \leq \eps$,
 \item Conditioned on $(X_1,\dots,X_k) \neq \perp$, the variables $X_1,\dots,X_k$ are iid over the random coins of \Ac that are consistent with $\Trans$.

\item $\Inv$ makes queries of depth at most $d+1$, and

\item $\Inv$ makes at most $k+ d \cdot 2^{\ceil{c(n)/d}}/\eps$ \Sam-queries, all on $t(n)$-query circuits.

\item Assuming that $\pi$ is polynomial-time computable, then \Inv runs in time $p(n)\cdot (2^{\ceil{c(n)/d}} + k)$, for some $p\in \poly$.

\end{enumerate}
\end{theorem}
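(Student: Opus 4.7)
The plan is to reduce to Theorem~\ref{theorem:collisionInShortOutputFunc} by viewing $\Ac$-coin consistency with $\Trans$ as a function-inversion task. Given $\Trans$, let $b_\Trans$ denote the sequence of $\Bc$-messages and $a_\Trans$ the concatenation of $\Ac$-messages. The $\Ac$-coins consistent with $\Trans$ are exactly the preimages of $a_\Trans$ under the oracle-aided function $f^{b_\Trans}$ that takes $\Ac$-coins $r$ and outputs the concatenation of $\Ac$'s messages in an interaction in which the peer sends $b_\Trans$. Since $\Ac$ sends at most $c(n)$ bits and makes at most $t(n)$ oracle queries, $f^{b_\Trans}$ has output length $\leq c(n)$ and query complexity $\leq t(n)$. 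Partitioning $f^{b_\Trans}$'s output into $d$ blocks of $\lceil c(n)/d \rceil$ bits defines the extension chain $f^{b_\Trans}_1, \ldots, f^{b_\Trans}_d$ needed by the inverter of Theorem~\ref{theorem:collisionInShortOutputFunc}.

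Algorithm $\Inv$ extracts $b_\Trans$ and $a_\Trans$ from $\Trans$, constructs the $f^{b_\Trans}_i$'s, and runs the inverter of Theorem~\ref{theorem:collisionInShortOutputFunc} targeting $a_\Trans$ under $f^{b_\Trans}_d$: iteratively for $i = 1, \ldots, d$, it calls $\Sam$ on $(x_{i-1}, f^{b_\Trans,*}_{i-1}, \extension_j(f^{b_\Trans}_i))$ with increasing $j$ until the returned $x_i$ matches the first $i \cdot \lceil c(n)/d \rceil$ bits of $a_\Trans$ under $f^{b_\Trans}_i$, aborting if the total number of $\Sam$ calls exceeds $d \cdot 2^{\lceil c(n)/d \rceil}/\eps$; it then emits $k$ outputs via $\Sam$ on trivial extensions of $f^{b_\Trans}_d$. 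The bounds on depth, augmented-query count, running time, and the iid-uniform output distribution (conditional on non-abort) follow directly from the corresponding guarantees of Theorem~\ref{theorem:collisionInShortOutputFunc}.

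The remaining task is bounding $\Pr[\perp] \leq \eps$ over a random honest execution. As in Theorem~\ref{theorem:collisionInShortOutputFunc}, the expected number of $\Sam$ calls in round $i$ (over $\H$, conditional on $\Trans$) equals $N^{b_\Trans}_{i-1}((a_\Trans)_{<i})/N^{b_\Trans}_i((a_\Trans)_{\leq i})$, where $N^b_j(y) = |\{r : f^b_j(r) = y\}|$. The hard part will be showing that the expectation of this quantity over a random honest transcript remains bounded by $2^{\lceil c(n)/d \rceil}$ per round, despite the fact that conditioning on $b_\Trans$ biases the marginal of $a_\Trans$ compared to the ``uniform input'' distribution for which Theorem~\ref{theorem:collisionInShortOutputFunc}'s analysis applies directly. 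My resolution is to avoid conditioning on $b_\Trans$ entirely and instead sum the work over the joint $(r_\Ac, r_\Bc)$ coin-pair randomness: grouping contributions by the pair $(b^*, y_{<i})$ of $\Bc$'s first $i$ messages and $\Ac$'s first $i-1$ messages, a double-counting argument---each coin pair $(r_\Ac, r_\Bc)$ uniquely determines such a pair through the iterative structure of the honest execution---yields $\sum_{b^*, y_{<i}} M(b^*, y_{<i}) \cdot N^{b^*}_{i-1}(y_{<i}) = |\cR_\Ac| \cdot |\cR_\Bc|$, where $M(b^*, y_{<i})$ counts the $\Bc$-coins whose first $i$ responses to $\Ac$-messages $y_{<i}$ are $b^*$, and $\cR_\Ac, \cR_\Bc$ are the two parties' coin spaces. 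Combining with the obvious $2^{\lceil c(n)/d \rceil}$ upper bound on the support of the $i$-th block of $a_\Trans$ and summing over $d$ rounds, Markov's inequality gives the claimed $\eps$ bound.
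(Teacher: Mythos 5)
Your proposal is correct, but it takes a genuinely different route than the paper's and is somewhat more hands-on. The paper first assumes $\Bc$ is \emph{deterministic}; then the transcript (in particular the concatenated $\Ac$-message string $a_\Trans$) is a function of $r_\Ac$ alone, so the input to the inverter is literally $f(X)$ for uniform $X$, and the abort bound from Theorem~\ref{theorem:collisionInShortOutputFunc} applies verbatim as a black box. The paper then fixes the per-circuit query-complexity issue by first defining an inverter $\Invd$ for the ``honest'' $f$ (whose circuits include $\Bc$'s queries) and observing that replacing $f$ with $g_\trans$ --- the version with $\Bc$'s messages hardcoded --- leaves the output distribution over $\h$ unchanged, since $\Sam$ is only ever queried on preimages of prefixes of $a_\trans$, on which the two functions coincide; randomized $\Bc$ is then handled by fixing $r_\Bc$ and averaging. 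You instead go directly to $g_\trans = f^{b_\Trans}$ (skipping the detour through $\Invd$) and correctly observe that $a_\Trans$ conditioned on $b_\Trans$ need not be distributed as $f^{b_\Trans}(U)$, so the abort bound cannot be imported as a black box; you then re-derive the per-block expected-call bound $2^{\lceil c(n)/d\rceil}$ from scratch by partitioning the joint coin space $(r_\Ac,r_\Bc)$ by transcript prefix and double-counting. Your identity $\sum_{b^*,y_{<i}} M(b^*,y_{<i})\cdot N^{b^*}_{i-1}(y_{<i}) = |\cR_\Ac|\cdot|\cR_\Bc|$ is correct (the preimage of each prefix under the interaction map factors as $\Ac$-coins times $\Bc$-coins, because the two events are in separate coin spaces), and together with the $\le 2^{\lceil c(n)/d\rceil}$ support bound on the $i$-th block it yields the same per-round expectation as the paper, after which Markov closes the argument. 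Both routes are sound; the paper's deterministic-$\Bc$ trick is the cleaner way to sidestep the distribution mismatch, while your route avoids ever constructing circuits containing $\Bc$'s oracle queries. When writing this up, be a bit careful that the bit-blocks of $a_\Trans$ need not align with message or round boundaries, so the definition of the prefix $(b^*,y_{<i})$ and the factoring of its preimage require some attention --- the same care the paper's own ``$f$ and $g_\trans$ agree on the relevant preimages'' step silently relies on.
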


\begin{remark}[Comparing \cref{theorem:collisionInLowComComp} to \cref{theorem:collisionInLowRound}]
Both \cref{theorem:collisionInLowRound} and \cref{theorem:collisionInLowComComp} are useful for finding collisions in the given protocols. While the attacker of \cref{theorem:collisionInLowRound} never fails, the attacker of \cref{theorem:collisionInLowComComp} (who might fail) has the advantage of not using \Sam through the execution, but only after it ends. We use this property in \cref{section:ComCompLB} to rule out constructions of \emph{honest-sender} low sender-communication commitments from trapdoor functions.
\end{remark}

\begin{proof}[Proof of \cref{theorem:collisionInLowComComp}]
We start by assuming that $\Bc$ is deterministic. Let $f\colon \zn \mapsto \zo^{c(n)}$ map $\Ac$'s random coins to the messages it send to $\Bc$ in $\pi$. Consider the algorithm \Invd that on input $(x,\trans)$ returns $\Invf(x,\trans_\Ac)$ (with the same oracles), for \Invf being the inverter \cref{theorem:collisionInShortOutputFunc} guarantees for the function $f$, and $\trans_\Ac$ being \Ac's part in $\trans$. By \cref{theorem:collisionInShortOutputFunc}, algorithm \Invd satisfies the first three and fifth properties, stated in the theorem, and makes at most $k+ d \cdot 2^{\ceil{c(n)/d}}/\eps$ \Sam-queries. Algorithm \Invd, however, might apply \Sam on circuits of query complexity larger than $t(n)$ (as they contain the queries made by \Bc).

Consider the following variant of \Invd. For a transcript $\trans$ of $\pi$, let $g_\trans \colon \zn \mapsto \zo^{c(n)}$ map $\Ac$'s random coins to the messages it sends to $\Bc$ in $\pi$, assuming that  \Bc sends \Ac the message it sends in $\trans$. On input $(x,\trans)$, algorithm \Inv returns $\Inv_{g_\trans}(x,\trans_\Ac)$ (with the same oracles), for $\Inv_{g_\trans}$ being the inverter \cref{theorem:collisionInShortOutputFunc} guarantees for the function $g_\trans$, and $\trans_\Ac$ being \Ac's part in $\trans$. The point to notice is that by construction, on the same input and a random choice of $\h$, algorithms \Inv and \Invd have \emph{exactly} the same output distribution. In follows that \Inv satisfies all the properties satisfied by \Invd, where by construction, on only invoke \Sam on $t(n)$-query circuits. Furthermore, since the implementation of \Inv in obvious to the definition of \Bc, it has the same success probability also when considering a probabilistic \Bc.
\end{proof}

\section{Random Permutations are Hard for Low-Depth Normal-Form Algorithms}\label{section:inverting}

In this section we prove that for low-depth normal-from algorithms, \Sam is not useful for inverting random permutations and random trapdoor permutations. We start with random permutations, and then extend the result to random trapdoor permutations.

Following \cite{GennaroGKT05}, we state our results in the stronger non-uniform setting. Hence, our goal is to upper bound the success probability of a circuit family having oracle access to \Sam in the task of inverting a uniformly chosen permutation $\pi \in \Pi$ on a uniformly chosen image $y \in \zn$. We relate this success probability to the maximal depth of the \Sam-queries made by the circuit family and to the augmented query complexity of the family (see \cref{def:AugmentQC}). We prove the following theorem.

\begin{theorem}\label{theorem:RandomPermutationHardForSam}
The following holds for large enough $n\in \N$: for every $\aq$-\augQ, $d$-depth, normal-form circuit \Ac such that $\aq^{3 d
+ 1} < 2^{n/8}$, it holds that
\begin{align*}
\prob{\MyAtop{\pi \la \Pi, \h \la \calH}{y \la \zn}}{\Ac^{\pi, \Sam^{\pi, \h}} (y) = \pi^{-1}(y)} \leq 2/\aq.
\end{align*}
\end{theorem}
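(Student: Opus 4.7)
The plan is to prove the theorem by a dichotomy argument, splitting executions of $\Ac$ into those that make a \emph{hitting} $\Sam$-query (one whose internal evaluation of the input circuit queries $\pi$ on $\pi^{-1}(y)$) and those that never do. I will show both that a hitting $\Ac$ can be converted into a non-hitting inverter of similar quality, and that non-hitting inverters cannot succeed on random permutations with probability more than $1/\aq$. These two statements together imply the bound $2/\aq$.

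For the hitting-to-non-hitting reduction, assume $\Ac$ produces a $y$-hitting query with probability at least $1/\aq$. I will exploit the depth bound by an averaging argument across the $d$ layers of the query forest: there must exist a layer $i^\ast \in [d]$ at which the per-layer hitting probability increases by a factor exceeding $\aq^{1/d}$ relative to layer $i^\ast - 1$. This gap, together with the normal-form constraint that each query $(w_i, \Cc_i, \Cc_{i+1})$ receives as its $w_{i+1}$ a uniform preimage of $\Cc_i(w_i)$ under $\Cc_i$, lets me argue that if the \emph{answer} to a query hits $y$ much more often than the input, then evaluating $\Cc_i$ on the random input $w_i$ itself hits with comparable probability. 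I would then define an algorithm $\Mc$ that runs $\Ac$, but before passing each query $(w,\Cc,\Cc_\next)$ to $\Sam$, it first evaluates $\Cc(w)$ directly, and outputs $\pi^{-1}(y)$ and halts whenever that evaluation queries $\pi$ on $\pi^{-1}(y)$. Algorithm $\Mc$ is non-hitting by construction, uses only polynomially more direct $\pi$-queries than $\Ac$, and inverts $\pi$ with probability $\Omega(\aq^{-1 - 1/d})$.

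For the non-hitting case I will follow the Gennaro--Trevisan reconstruction paradigm: assume for contradiction that some non-hitting normal-form circuit $\Mc$ inverts $\pi$ on at least a $1/\aq$ fraction of images, and encode $\pi$ using $\Mc$ in fewer than $\log(2^n!)$ bits, contradicting the uniform distribution on $\Pi_n$. The standard reconstruction works by listing $y$'s on which $\Mc$ succeeds, then recovering $\pi^{-1}(y)$ using the transcript of $\Mc(y)$'s $\pi$-calls (which were already reconstructed earlier in the list order). The obstacle, and by far the most delicate part of the proof, is that a single $\Sam$-query can internally probe $\pi$ on many inputs, and a naive accounting would blow up the encoding. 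I will handle this by implementing $\Sam$ \emph{non-deterministically} in the reconstruction: for each $\Sam$-call, the short certificate is just the index $v$ inside $\h_\q$ whose image yields the lexicographically smallest colliding preimage (plus a few bits witnessing minimality). Verifying the certificate needs only polynomially many evaluations of the input circuit, and under the non-hitting hypothesis none of these circuit evaluations queries $\pi$ on any input that the reconstruction is trying to learn. Hence $\Sam$'s internal $\pi$-queries can be resolved from the partially recovered $\pi$ together with the certificate.

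Counting up the encoding cost, each of the $\aq$ augmented queries contributes $O(\log \aq + n)$ bits of advice, for a total blow-up of $O(\aq \cdot (\log \aq + n))$, while the savings over a random permutation are $\Omega((n \cdot 2^n)/\aq)$ coming from the images successfully inverted by $\Mc$. The hypothesis $\aq^{3d+1} < 2^{n/8}$ is exactly what is needed to make the advice negligible against the savings (after also absorbing the loss incurred in the hitting-to-non-hitting step, which costs a $\aq^{1/d}$ factor per layer of depth). This forces the contradiction, yielding the bound $2/\aq$ claimed in the theorem.
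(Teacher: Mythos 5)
Your overall plan matches the paper's: a dichotomy into hitting and non-hitting executions, a reduction from a hitting adversary to a non-hitting one, and a Gennaro--Trevisan reconstruction argument showing non-hitting inverters must fail. The description of $\Mc$ (check $\Cc(w)$ on the query argument, halt if it touches $\pi^{-1}(y)$) and the non-deterministic implementation of $\Sam$ in the decoder are both the right ideas.

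There is, however, a genuine gap in the hitting-to-non-hitting step, and it is the heart of the proof. You claim that a simple multiplicative averaging --- ``some layer where the per-layer hitting probability jumps by a factor $\aq^{1/d}$'' --- yields $\Mc$ inverting with probability $\Omega(\aq^{-1-1/d})$. This is far too optimistic and does not follow from the averaging argument as stated. The true relationship provided by the normal-form/extension structure is that the \emph{expectation} of the next layer's hitting probability $\alpha_{i+1}$, conditioned on the history, equals $\beta_i$ (the probability that $\Mc$'s check at step $i$ fires). But a single realized $\alpha_{i+1}$ can be much larger than $\beta_i$ --- you only control it by Markov. An adversary can therefore ``gamble'' at every level, and the loss from these Markov applications compounds across the $d$ layers. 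The paper's Lemma 4.3 tracks this carefully via ``jump'' events ($\alpha_i$ exceeds $\frac{8}{\delta^2}\alpha_{i-1}$ or a floor $(\delta^2/8)^{d+1}$) and ``gap'' events ($\beta_i$ noticeably exceeds $\alpha_i$), and the resulting bound for $\Mc$ is only $(\eps/2)^{3d+1}$, i.e.\ roughly $\aq^{-(3d+1)}$ --- exponentially smaller in $d$ than your $\aq^{-1-1/d}$. Notice, too, that your own claimed bound is inconsistent with the theorem's hypothesis: if $\Mc$ really succeeded with probability $\aq^{-1-1/d}$, then $\aq^{3d+1}<2^{n/8}$ would be vastly stronger than needed for the reconstruction lemma (which requires the non-hitting success probability to exceed $\approx 2^{-n/5}$), whereas with the actual bound $\aq^{-(3d+1)}$ the hypothesis is essentially tight. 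Your description also flips the direction of the key inference --- the useful fact is not that ``the input hits with comparable probability when the answer hits often,'' but that the adversary cannot make the \emph{next-query} answer distribution hit-heavy without also (on average) making $\Mc$'s current check fire, which is what $\Exp[\alpha_{i+1}\mid\text{history}]=\beta_i$ captures. To close the gap you would need to reproduce (or replace) the jump/gap bookkeeping, not a one-shot averaging.

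The reconstruction side is in the right spirit, though the per-query ``certificate'' accounting you sketch isn't how the paper's encoding works: the decoder simply reruns $\Sam$ by brute force over $v\in\zo^\Inp$ against the already-reconstructed portion of $\pi$ (together with $\pi_{-n}$ and $h$, which are given), so no per-call advice is charged; the encoding saving comes purely from omitting $\pi^{-1}$ on a set $\cY$ of inverted images of size $a\ge \eps 2^n/(2\aq)$. Your rough tally of advice versus savings should be replaced by the explicit bound $2\log\binom{2^n}{a}+\log((2^n-a)!)$ compared against $\log(2^n!)$.
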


Before turning to prove \cref{theorem:RandomPermutationHardForSam}, we first provide a brief overview of the structure of the proof. Consider a normal-from circuit \Ac trying to invert an input $y \in \zn$ (\ie to find $\pi^{-1}(y)$), while having oracle access to both $\pi$ and \Sam. We distinguish between two cases: one in which \Ac obtains information on the value $\pi^{-1}(y)$ via one of its \Sam-queries, and the other in which none of \Ac's \Sam-queries provides sufficient information for retrieving
$\pi^{-1}(y)$. Specifically, we define:

\begin{definition}[Hits]\label{definition:hit}
An execution $\Ac^{\pi, \Sam^{\pi, \h}} (y)$ is {\sf hitting}, denoted by the event $\Hit_{\Ac,\pi,\h}(y)$, if $\Ac$ makes a \Sam-query $\q = (\cdot, \Cc,\cdot)$, replied with $w$ such that the computation $\Cc^\pi(w)$ queries $\pi$ on $\pi^{-1}(y)$.
\end{definition}

The proof proceeds in two modular parts. In the first part of the proof, we consider the
case that the event $\Hit(y) = \Hit_{\Ac,\pi,\h}(y)$ does not occur, and prove a ``reconstruction lemma'' that extends an information-theoretic argument of \citet{GennaroT00}. They showed
that if a circuit \Ac manages to invert a permutation $\pi$ on a relatively large set of images,
then this permutation has a rather short representation given \Ac. We generalize their argument to deal
with circuits having oracle access to \Sam. In this part we do not restrict the depth of \Ac, neither require it to be in a normal form.

\begin{lemma}\label{lemma:reconstruction1}
The following holds for large enough $n \in \N$: let \Ac be a $2^{n/5}$-\augQ circuit, then
\begin{align*}
\prob{\pi \la \Pi, \h \la \calH}{\prob{y \la \zn}{\Ac^{\pi, \Sam^{\pi, \h}} (y) = \pi^{-1} (y)
\enspace \land \neg \Hit_{\Ac,\pi,\h}(y)} \geq 2^{-n/5}} \leq 2^{-2^{\frac{3n}5}}.
\end{align*}
\end{lemma}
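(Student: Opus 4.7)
The approach is to extend the Gennaro--Trevisan reconstruction paradigm to the $\Sam$-augmented setting. I would fix $\Ac$ and proceed in two stages: first, for each fixed $\h \in \calH$, bound the fraction of permutations $\pi \in \Pi_n$ that are ``bad'' (\ie on which $\Ac^{\pi,\Sam^{\pi,\h}}$ inverts well without hitting); the stated probability over $(\pi,\h)$ then follows by averaging. Concretely, for each bad $\pi$ I would exhibit, given $(\Ac,\h)$, an injective encoding of $\pi$ of length at most $\log((2^n)!) - 2^{3n/5}$, from which a counting argument yields the fraction bound $2^{-2^{3n/5}}$.

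The encoding uses the standard greedy trick. Given a bad $\pi$, let $Y \eqdef \{y : \Ac^{\pi,\Sam^{\pi,\h}}(y) = \pi^{-1}(y) \land \neg\Hit_{\Ac,\pi,\h}(y)\}$, so $|Y| \geq 2^{4n/5}$. Greedily extract $Y' \subseteq Y$ of size $\Omega(|Y|/\aq) = \Omega(2^{3n/5})$ with the property that for every distinct $y_i, y_j \in Y'$, the execution $\Ac^{\pi,\Sam^{\pi,\h}}(y_i)$ never queries $\pi$ at $\pi^{-1}(y_j)$ -- neither via $\Ac$'s direct $\pi$-queries nor in any of the $C^\pi(w^*)$ computations triggered by processing $\Sam$-replies $w^*$. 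The augmented query complexity (\cref{def:AugmentQC}) bounds the total number of such ``externally observable'' $\pi$-queries per invocation by $\aq$; crucially, the queries performed by $\Sam$ during its internal lex-search are \emph{not} counted here, and the plan is to sidestep them entirely.

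Let $D \eqdef \pi^{-1}(Y') \subseteq \zn$. The encoding of $\pi$ relative to $(\Ac, \h)$ consists of $(i)$ the partial permutation $\pi|_{\zn\setminus D}$ (which implicitly specifies $D$ as the complement of its domain, and $Y'$ as the complement of its image), together with $(ii)$ for each $y_i \in Y'$, the list of $\Sam$-answers observed during $\Ac^{\pi,\Sam^{\pi,\h}}(y_i)$. Reconstruction sweeps over $Y'$ in numerical order; for each $y_i$, it simulates $\Ac^{\pi,\Sam^{\pi,\h}}(y_i)$, answering direct $\pi$-queries via $\pi|_{\zn\setminus D}$ (and identifying the single $x \in D$ that may appear as a query via its membership indicator, answering $y_i$ on it and thereby recording $\pi^{-1}(y_i) = x$), and answering each $\Sam$-query with the next stored answer. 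The non-hit property combined with the greedy property is precisely what guarantees that every $C^\pi(w^*)$ evaluation triggered during simulation touches only $\pi$-values we already know (values in $\zn\setminus D$, or the just-learned $\pi^{-1}(y_i)$), so the simulation never gets stuck despite our complete ignorance of $\pi|_D$.

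The main obstacle is the bit accounting. A Stirling calculation shows that leaving $\pi|_D$ unspecified saves $\log(|D|!/\binom{2^n}{|D|}) = \Omega(|D|\cdot n)$ bits when $|D| = \Omega(2^{3n/5})$, providing substantial headroom beyond the required $2^{3n/5}$ bits of slack. The delicate step is ensuring that the stored $\Sam$-answer sequences fit within this headroom; this is where I would carefully exploit both the bound $\aq = 2^{n/5}$ on the number of $\Sam$-queries per invocation \emph{and} the non-hit property (which formalizes the ``non-deterministic $\Sam$'' intuition from the introduction: the encoder stores only the ``guess'' returned by $\Sam$, never the internal lex-search trace, and non-hit is what lets the reconstructor verify and use the guess with only partial $\pi$). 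Once this accounting closes, counting the number of possible encodings of length at most $\log((2^n)!) - 2^{3n/5}$ yields the claimed probability bound.
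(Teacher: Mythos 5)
Your greedy construction of $Y'$, the observation that at most $2\aq$ new preimages are ``consumed'' per surviving $y$ so that $|Y'|\geq \eps 2^n/(2\aq)$, the role of $\neg\Hit$ in keeping the decoder's simulation from getting stuck, and the plan to leave $\pi|_D$ unspecified and finish with a Stirling count --- all of this matches the structure of the paper's proof of \cref{claim:reconstruction2,lemma:reconstruction-stronger}. But there is a concrete gap in the encoding: you include, for each $y_i\in Y'$, the list of all $\Sam$-answers observed in $\Ac^{\pi,\Sam^{\pi,\h}}(y_i)$, and this does not fit. Even at one bit per stored answer, this costs on the order of $\aq\cdot|Y'|\approx 2^{n/5}\cdot 2^{3n/5}=2^{4n/5}$ bits, and in reality each answer is a string of length $\Inp(q)\geq n$. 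The headroom you gain from omitting $\pi|_D$ is only $\log(|D|!)-\log\binom{2^n}{|D|}\approx |D|\cdot(2\log|D|-n)\approx\tfrac n5\cdot 2^{3n/5}$ bits, which is smaller by a factor of roughly $2^{n/5}$. So the accounting you flag as ``the delicate step'' does not close, no matter how carefully one exploits the $\aq$ bound; the problem is not a constant but an exponential gap.

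The fix, which is what the paper does, is to store \emph{none} of the $\Sam$-answers. The decoder already holds $\h$, the partial permutation $\pi|_{\zn\setminus D}$ (encoded as the order of $\zn\setminus\cY$ under $\pi$), and the values of $\pi$ it has reconstructed so far; on a $\Sam$-query $(w,\Cc,\Cc_\next)$ it simply \emph{re-runs} the lex-search, scanning $v=0,1,\ldots$ and returning $\h_q(v)$ for the first $v$ such that $\Cc^\pi(\h_q(v))$ is fully evaluable from the known part of $\pi$ and equals $\Cc^\pi(w)$. Non-hit is what makes this sound: the true minimal $v_0$ satisfies that $\Cc^\pi(\h_q(v_0))$ never queries $\pi$ at $\pi^{-1}(y)$, so (by the greedy property of $\cY$) every $\pi$-query it makes is already in $\cZ$ or previously reconstructed, and for every $v<v_0$ the correct simulation (if evaluable) necessarily produces a value $\neq\Cc^\pi(w)$, so the decoder never returns a wrong index. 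This is exactly your ``non-deterministic $\Sam$'' intuition, but realized with zero storage cost rather than as a stored certificate. With no $\Sam$-answers in $\aux$, the encoding is precisely $2\log\binom{2^n}{a}+\log((2^n-a)!)$ bits, the counting argument yields a bad-$\pi$ fraction of $\binom{2^n}{a}/a!\leq 2^{-2^{3n/5}}$ for large $n$, and in fact this holds for \emph{every} fixed $\h$ and $\pi_{-n}$, so the averaging over $\h$ in your first paragraph is not needed.
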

Namely in the ``non-hitting case", oracle access to \Sam does not improve ones chances to invert a random permutation.

In the second part of the proof, we show that the case where the event $\Hit(y)$ does occur,
can be reduced to the case where the event $\Hit(y)$ does not occur. Specifically,
given a circuit \Ac that tries to invert a permutation $\pi$, we construct a circuit \Mc that
succeeds almost as well as \Ac, \emph{without} \Mc's \Sam-queries producing any $y$-hits. For this part, the query complexity of the circuit, its depth restriction and it being in a normal form, all play an instrumental role.
\begin{lemma}\label{lemma:Hitting}
For every $\aq$-\augQ, $d$-depth normal-from circuit \Ac there exists a $2\aq$-\augQ circuit \Mc such that the following holds: assuming that
\begin{align*}
\prob{\MyAtop{\pi \la \Pi, \h \la \calH}{y \la \zn}}{\Hit_{\Ac,\pi,\h}(y)} \geq \eps
\end{align*}
for $\eps \in [0,1/\aq]$, then
\begin{align*}
\prob{\MyAtop{\pi \la \Pi, \h \la \calH}{y \la \zn}}{\Mc^{\pi, \Sam^{\pi, \h}} (y) = \pi^{-1}(y) \land \neg \Hit_{\Mc,\pi,\h}(y)} \ge (\eps/2)^{3 d + 1}.
\end{align*}
\end{lemma}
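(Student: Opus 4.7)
The plan is to define the inverter $\Mc$ as an ``interleaved'' simulator of $\Ac$: $\Mc$ runs $\Ac$ step by step, and just before $\Ac$ is about to issue each non-root Sam-query $\q = (w,C,C_{\text{next}})$, algorithm $\Mc$ first computes $C^\pi(w)$ \emph{directly} (making the necessary $\pi$-queries itself). If any $\pi$-query $x$ produced during this internal evaluation satisfies $\pi(x) = y$, then $\Mc$ outputs $x$ and halts, \emph{skipping} the Sam-query $\q$; otherwise $\Mc$ lets $\Ac$ issue $\q$ and continues. Root queries have $C = \bot$ and hence cannot be hitting by definition, so no pre-evaluation is needed there. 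By construction, $\Mc$'s only Sam-queries are the prefix of $\Ac$'s Sam-queries made strictly before the moment of halt, so the event $\neg\Hit_{\Mc,\pi,\h}(y)$ holds whenever $\Mc$ succeeds via a pre-evaluation preceding $\Ac$'s first hit; and $\Mc$'s augmented query complexity is at most $2t$, since each pre-evaluation adds at most $t(C)$ direct $\pi$-queries, matching the contribution of the corresponding Sam-query to $\Ac$'s augmented count.

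The main structural ingredient is a symmetry claim about \Sam. For every non-root query $\q = (w,C,C_{\text{next}})$, the normal-form condition forces its parent to have the form $(w_p, C_p, C)$ with $C$ an extension of $C_p$. Over the randomness of $h_{\q_p}$, the reply $w$ to the parent is uniformly distributed over $C_p^{-1}(C_p(w_p))$; and since $C$ extends $C_p$, conditioning on the value $C(w)$ leaves $w$ uniform over $C^{-1}(C(w))$. The reply $w'$ to $\q$ is by definition of $\Sam$ uniform over $C^{-1}(C(w))$ as well, and independent of $w$ given $C(w)$. Consequently,
\begin{equation*}
\Pr\bigl[C^\pi(w)\text{ queries }\pi^{-1}(y)\bigr] \;=\; \Pr\bigl[C^\pi(w')\text{ queries }\pi^{-1}(y)\bigr],
\end{equation*}
i.e., the probability that $\Mc$'s pre-evaluation at $\q$ succeeds equals the probability that $\q$ is hitting for $\Ac$ --- so we have converted a ``hit'' into an internal computation of $\Mc$ that does not count toward $\Hit_{\Mc}$.

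To turn this marginal equality into the quantitative bound, I plan to apply a pigeonhole/telescoping argument over the at most $d$ depths of the query forest. Letting $P_j$ denote the probability that some query at depth at most $j$ is hitting, the non-decreasing sequence $0 = P_0 \leq P_1 \leq \cdots \leq P_d$ with $P_d \geq \eps$ must exhibit a critical depth $j^*$ carrying an $\Omega(\eps/d)$ share of the total hit probability; at this depth the symmetry claim delivers a pre-evaluation success of comparable magnitude.

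The main technical obstacle is handling the correlation structure carefully. Conditional on the history leading to a query $\q$, the pre-evaluation event $F_\q$ depends on $w$ while the hit event $G_\q$ depends on the independent sample $w'$, so their joint occurrence (which is what is needed for $\Mc$ to succeed exactly when the first hit would have occurred) incurs a squared probability cost. Combined with the need to simultaneously condition on all prior queries being non-hitting (so that $\Mc$ actually reaches the critical query without $\Ac$ already having hit), each of the $d$ depth levels of the iterative analysis appears to cost a factor of order $\eps$, which I expect to be the source of the $(\eps/2)^{3d+1}$ exponent. Carrying out this bookkeeping across all $d$ depths --- tracking the precise constants that yield exactly $3d+1$, and verifying the assumption $\eps \leq 1/t$ is used to suppress the lower-order overhead --- constitutes the bulk of the proof.
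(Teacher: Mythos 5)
Your $\Mc$ is essentially the paper's (the paper evaluates $\Cc_\next^\pi(w)$ right after a \Sam reply $w$; you do it just before the follow-up query that re-uses $w$, the same computation shifted by one emulation step), your $2\aq$ accounting is right, and your symmetry observation is exactly the paper's martingale property, \cref{claim:expectation}: writing $\alpha_i$ for the conditional probability that the $i$-th \Sam reply hits and $\beta_{i-1}$ for the conditional probability that the pre-evaluation on the $(i-1)$-st reply hits, $\Ex[\alpha_i \mid W_{<i}] = \beta_{i-1}$. So the algorithm and the key structural ingredient are correct.

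The gap is in passing from this marginal identity to $(\eps/2)^{3d+1}$, and both of the ideas you sketch for that step are off. First, the ``squared probability cost'' intuition: the event that $\Mc$ wins at step $i$ without hitting is $\hit(\Cc_i,W_i)\land\neg\hit(\Cc_{i-1},W_i)\land\neg\Hit_{\leq i-2}$, whose conditional probability given $W_{<i}$ is the \emph{difference} $\beta_{i-1}-\alpha_{i-1}$, not a product of two small probabilities --- you need the pre-evaluation to hit while the anticipated reply does not, never for both to hit together. Second, the depth pigeonhole is too weak: locating $j^\ast$ with $\Pr[\Hit_{j^\ast}]\geq\eps/d$ only gives $\Ex[\beta_{j^\ast-1}]\geq\eps/d$, which says nothing about $\beta_{j^\ast-1}-\alpha_{j^\ast-1}$ being noticeable (these two quantities could essentially coincide at every level), nor about all earlier levels being simultaneously non-hitting. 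The missing idea is the jump/gap argument of \cref{subsection:hitting}: one introduces a geometrically growing threshold schedule for the $\alpha_i$'s (the jump events); if no threshold is ever exceeded, then a hit has probability at most $\delta^5/512$ (\cref{claim:HitImpliesJump}), so some jump occurs with probability at least $\delta/2$ (\cref{claim:Jump}); and Markov's inequality applied to the martingale $\Ex[\alpha_i\mid W_{<i}]=\beta_{i-1}$ then shows that a first jump at level $i$ forces the gap $\beta_{i-1}>2\alpha_{i-1}$ to open with noticeable probability (\cref{claim:jump-given-notgap,claim:GapFirst}). Only at that point does the difference $\beta_{I-1}-\alpha_{I-1}$ become provably large at a reachable step, and the exponent $3d+1$ falls out of the geometric thresholds $(\delta^2/8)^{d-j+3}$, not from a per-level factor of $\eps$. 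You would also need to first prove a fixed-$(\pi,y)$ version of the claim and average afterwards (\cref{lemma:HittingFixY}), and to handle query forests that are not a single path (\cref{subsection:HittingExtension}), neither of which appears in the proposal.
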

In what follows we show that \cref{theorem:RandomPermutationHardForSam} is a straightforward corollary of \cref{lemma:reconstruction1,lemma:Hitting}. In \cref{subsection:TDPHardForSam} we
extend our statement to deal with trapdoor permutations. Then, in \cref{subsection:reconstruction,subsection:hitting} we prove \cref{lemma:reconstruction1,lemma:Hitting}, respectively.

\begin{proof}[Proof of \cref{theorem:RandomPermutationHardForSam}]
Assume towards a contradiction that for infinitely many $n$'s, there exists a $\aq$-query, $d$-depth normal-from circuit \Ac such that $\aq^{3 d + 1} < 2^{n/8}$ and
\begin{align*}
\prob{\MyAtop{\pi \la \Pi, \h \la \calH}{y \la \zn}}{\Ac^{\pi, \Sam^{\pi, \h}} (y) = \pi^{-1}(y)} \geq 2/\aq.
\end{align*}
Consider now the circuit $\Ac'$ that emulates $\Ac$ and makes sure that whenever $\Ac$ inverts $y$ then the event $\Hit_{\Ac',\pi,\h}(y)$ occurs. Note that $\Ac'$ can be easily implemented based on $\Ac$ by performing two additional queries to $\Sam$ (containing a circuit with a $\pi$-gate that has hardwired the output of $\Ac$). Thus, $\Ac'$ is a $(\aq+2)$-query $d$-depth normal-from circuit, and it holds that
\begin{align*}
\prob{\MyAtop{\pi \la \Pi, \h \la \calH}{y \la \zn}}{\Hit_{\Ac',\pi,\h}(y)} \geq 2/\aq.
\end{align*}
\cref{lemma:Hitting} implies that for infinitely many $n$'s there exists an $(2(\aq+2) \leq 2^{n/7})$-\augQ circuit \Mc such that
\begin{align*}
\prob{\MyAtop{\pi \la \Pi, \h \la \calH}{y \la \zn}}{\Mc^{\pi, \Sam^{\pi, \h}} (y) = \pi^{-1}(y) \land \neg \Hit_{\Mc,\pi,\h}(y)} \ge \left(\frac{1}{\aq}\right)^{3 d + 1} > \frac{1}{2^{n/8}},
\end{align*}
in contradiction to \cref{lemma:reconstruction1}.
\end{proof}

\subsection{Extension to Trapdoor Permutations}\label{subsection:TDPHardForSam}
We prove the following theorem:
\begin{theorem}\label{theorem:TDPHardForSam}
For $\aq$-\augQ, $d$-depth normal-form circuit \Ac with $(3\aq)^{3 d
+ 1} < 2^{n/8}$ and large enough $n$, it holds that
\begin{align*}
\alpha \eqdef \prob{\MyAtop{\tau = ( G, F, F^{-1})\la \Tau, \h \la \calH}{td \la \zn, y \la \zn}}{\Ac^{\tau, \Sam^{\tau, \h}} (G (td),y) = F^{-1}(td,y)} \leq 4/\aq.
\end{align*}
\end{theorem}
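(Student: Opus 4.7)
The plan is to reduce \cref{theorem:TDPHardForSam} to \cref{theorem:RandomPermutationHardForSam} via a simulation argument. Given a $\aq$-\augQ, $d$-depth, normal-form circuit $\Ac$ inverting random trapdoor permutations with success probability $\alpha$, I will construct a $3\aq$-\augQ, $d$-depth, normal-form circuit $\Bc$ inverting a random permutation with probability at least $\alpha - \negl(n)$. Since $(3\aq)^{3d+1} < 2^{n/8}$ by hypothesis, \cref{theorem:RandomPermutationHardForSam} applied to $\Bc$ yields $\alpha - \negl(n) \leq 2/(3\aq)$, and hence $\alpha \leq 4/\aq$ for sufficiently large $n$.

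The reduction $\Bc$, on input $y$ with oracle access to $(\pi, \Sam^{\pi, \h})$, embeds $\pi$ as $F_{pk^\ast}$ for a uniformly random $pk^\ast \la \zn$. It non-uniformly hardcodes the remaining components of a random family $\tau' = (G', F', F'^{-1})$ consistent with $F'_{pk^\ast} = \pi$ and $G'(td^\ast) = pk^\ast$ for a uniform $td^\ast \la \zn$. $\Bc$ then runs $\Ac$ on challenge $(pk^\ast, y)$. Direct $\tau'$-queries by $\Ac$ are answered from the hardcoded tables (for $G'$, for $F'_{pk}$ with $pk \neq pk^\ast$, and for $F'^{-1}(td, \cdot)$ with $G'(td) \neq pk^\ast$) or by routing $F'(pk^\ast, \cdot)$ queries to $\pi$. $\Sam$-queries on circuits $\Cc$ with $\tau'$-gates are rewritten into circuits $\Cc''$ with $\pi$-gates by replacing each $\tau'$-gate with a sub-circuit that either performs a hardcoded-table lookup or invokes $\pi$; this rewriting respects the extension structure required by normal-form queries, preserves depth, and at most triples the augmented query complexity, so $\Bc$ meets the $3\aq$-\augQ, $d$-depth, normal-form requirements. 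Upon success, $\Ac$'s output equals $F'^{-1}(td^\ast, y) = \pi^{-1}(y)$, which $\Bc$ returns.

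The principal obstacle is handling the sub-circuit for an $F'^{-1}(td, \cdot)$-gate when $G'(td) = pk^\ast$ (equivalently, $td = td^\ast$), because the correct output would require $\pi^{-1}$. I plan to replace such bad evaluations with a fixed dummy output and to bound the probability $\Pr[B]$ of the bad event that any $F'^{-1}$-gate in the simulation evaluates on $td = td^\ast$---either during $\Ac$'s direct queries or during $\Sam$'s internal collision searches on the rewritten circuits. Conditioning on $\neg B$ the simulation is perfectly faithful to the real TDP experiment, so $\Bc$ succeeds whenever $\Ac$ does. The technical core is bounding $\Pr[B] = \negl(n)$: since $td^\ast$ is uniformly distributed over $\zn$ and, up to the first occurrence of $B$, information-theoretically hidden from $\Ac$'s view, a lazy-sampling view of $G'$ together with careful bookkeeping of the runtime wire values feeding into $F'^{-1}$-gates during $\Sam$'s internal evaluations yields the negligible bound, exploiting the hypothesis $\aq < 2^{n/8}$. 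This bookkeeping step---showing that even though $\Sam$ may evaluate a given circuit on many inputs while searching for a collision, the probability any such evaluation produces $td = td^\ast$ remains negligible---is the most delicate part of the plan.
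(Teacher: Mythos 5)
Your overall plan---embed the challenge permutation $\pi$ as $F'_{pk^\ast}$, hardwire the remaining components of the trapdoor family, replace evaluations of $F'^{-1}$ at $td^\ast$ by a dummy, and bound the resulting error---correctly identifies the right high-level structure, and your reduction circuit $\Bc$ essentially coincides with the one the paper constructs (in the paper, the dummy answer is $\perp$, corresponding to its punctured oracle $\tau_{\neg td}$). The genuine gap is the claim that $\Pr[B] = \negl(n)$, and in particular its justification via $td^\ast$ being ``information-theoretically hidden.''

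This claim is false as stated. Your event $B$ explicitly includes $F'^{-1}$-gates evaluated at $td^\ast$ \emph{during $\Sam$'s internal collision searches}. A single $\Sam$-query $(\cdot, \Cc, \cdot)$ may cause $\Sam$ to evaluate $\Cc$ on exponentially many candidate inputs $\h_\q(0), \h_\q(1), \dots$ while searching for a collision, and each such evaluation may itself make up to $\aq$ queries to $G'$. An adversary can therefore supply a circuit $\Cc$ that, on input $w$, queries $G'$ on $\aq$ candidate trapdoors derived from $w$ and, whenever the answer equals $pk^\ast$, feeds that candidate into an $F'^{-1}$-gate. If $\Cc$ is close to injective, $\Sam$ evaluates it on $\Theta(2^{\Inp})$ inputs before finding a collision, so the total number of $G'$-queries made inside the $\Sam$-call can exceed $2^n$, and $B$ then occurs with probability $\Omega(1)$, not $\negl(n)$. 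The hypothesis $\aq < 2^{n/8}$ bounds the number of oracle gates in $\Cc$, not the number of inputs $\Sam$ tries, so it does not prevent this. Moreover $td^\ast = (G')^{-1}(pk^\ast)$ is \emph{not} information-theoretically hidden from the rewritten circuits: you hardcode the full $G'$-table into them, so the ``lazy-sampling view of $G'$'' has no room to operate.

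The paper avoids this trap with a different decomposition. It never attempts to bound the probability that \emph{any} evaluation hits $td^\ast$; it restricts attention to a polynomially-bounded set of \emph{augmented queries}, namely the direct $\tau$-queries of $\Ac$ together with those made by $\Cc(w)$ and $\Cc(w')$ for each $\Sam$-call $(w, \Cc, \cdot)$ returning $w'$, and shows that the success gap $\alpha - \beta$ between the real and punctured worlds is accounted for by such queries. The bound $\alpha - \beta \leq 2/\aq$ is then obtained not information-theoretically, but by a \emph{second reduction to \cref{theorem:RandomPermutationHardForSam}}: from an execution making an augmented $F^{-1}(td, \cdot)$-query, the paper builds a $3\aq$-\augQ, $d$-depth, normal-form circuit that inverts the random permutation $G$ on $pk = G(td)$. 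Your proof is missing this second invocation of \cref{theorem:RandomPermutationHardForSam}; and no purely counting-style argument can replace it, because the event you wish to bound is, at bottom, precisely the event that a $\Sam$-aided computation inverts the random permutation $G$, and that can only be bounded via the hardness theorem itself.
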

Assume \Ac inverts $F$ with probability $5/\aq$. If \Ac queries $F^{-1}(td,\cdot)$ with probability $2.5/\aq$, then it can be used to invert (the random permutation) $G$ with this probability, in contradiction to \cref{theorem:RandomPermutationHardForSam}. If the latter does not happen, then \Ac inverts $F$ with probability $2.5/\aq$ \emph{without} using $F^{-1}$, which is again in contradiction to \cref{theorem:RandomPermutationHardForSam}. Formal proof follows.

\begin{proof}
Let $n$ be sufficiently large as required for \cref{theorem:RandomPermutationHardForSam}. For $\tau \in \Tau$ and $td\in \zn$, let $\tau_{\neg td}$ be the variant of $\tau$ that answers on queries of the form $F^{-1}(td,\cdot)$ with $\perp$. We claim that
\begin{align}\label{eq:brr}
\beta \eqdef\prob{\MyAtop{\tau = ( G, F, F^{-1}) \la \Tau, \h \la \calH}{td \la \zn,y \la \zn}}{\Ac^{\tau_{\neg td}, \Sam^{\tau_{\neg td}, \h}} (G(td),y) = F^{-1}(td,y)} > \alpha - 2/\aq.
\end{align}
Assuming \cref{eq:brr} holds, then it still holds for some fixing of $G$, $td$ and $\set{F_{pk'}}_{pk' \neq G(tk)}$. Hence, there exists a $\aq$-\augQ, $d$-depth normal-form circuit \Bc, with the above fixing ``hardwired" into it, such that
\begin{align*}
\prob{\MyAtop{F_{pk} \la \Pi, \h \la \calH}{y \la \zn}}{\Bc^{F_{pk}, \Sam^{F_{pk}, \h}} (\pi(y)) = (F_{pk})^{-1}(y)} > \alpha - 2/\aq,
\end{align*}
and \cref{theorem:RandomPermutationHardForSam} yields that $\alpha \leq 4/\aq$.

The rest of the proof is devoted for proving \cref{eq:brr}. The augmented queries made by an algorithm with oracle to \Sam, are those queries made by the algorithm directly, plus those queries made by $\Cc(w)$ and $\Cc(w')$, for each query $w'= \Sam(w,\Cc,\cdot)$ made by the algorithm. It is easy to verify that
\begin{align}\label{eq:brr1}
\prob{\MyAtop{\tau = ( G, F, F^{-1}) \la \Tau, \h \la \calH}{td \la \zn,y \la \zn}}{\Ac^{\tau_{\neg td}, \Sam^{\tau_{\neg td}, \h}} (G(td),y) \mbox{ makes an augmented query $F^{-1}(td,\cdot)$}} \geq \alpha - \beta
\end{align}
For a circuit $\Cc$ and $pk \in \zn$, let $\Cc^{pk}$ be the variant of $\Cc$ that before each query of the form $F^{-1}(td,\cdot)$, it queries $G$ on $td$, and if the answer is $pk$, it replies to the query $F^{-1}$ with $\perp$ (without making the call). Let $\Dc$ the variant of $\Ac^{pk}$ that on input $(pk,y)$, replaces each \Sam query $(\cdot,\Cc,\Cc_\next)$ done by $\Ac^{pk}$, with the query $(\cdot,\Cc^{pk},\Cc_\next^{pk})$. That is, $\Dc^{\tau, \Sam^{\tau, \h}}(G(td),y)$ emulates $\Ac^{\tau_{\neg td}, \Sam^{\tau_{\neg td}, \h}}(G(td),y)$. It follows that
\begin{align}
\prob{\MyAtop{\tau = ( G, F, F^{-1}) \la \Tau, \h \la \calH}{td \la \zn,y \la \zn}}{\Dc^{\tau, \Sam^{\tau, \h}} (pk=G(td),y) \mbox{ makes the augmented query $G(td)$}} \geq \alpha - \beta
\end{align}
Let $\Ec$ be the variant of $\Dc$, that if one of its augmented queries is of the form $G(td') = pk$, it halts and return $td'$. It is clear that
\begin{align}
\prob{\MyAtop{\tau = ( G, F, F^{-1}) \la \Tau, \h \la \calH}{td \la \zn,y \la \zn}}{\Ec^{\tau, \Sam^{\tau, \h}} (pk=G(td),y) = td} \geq \alpha - \beta
\end{align}
In particular, there exists a \emph{fix} value $(y, F)$ for which the above holds \wrt this fixing.\footnote{By the definition of $\Tau$, such fixing does not change the distribution of $G$ (\ie $G$ is a uniform random permutation giving this fixing).} Let $I$ be the function that inverts $F$ given only the public key. That is, $I(pk,y) = (F_{pk})^{-1}$ (recall that $F_{pk}(y) = F(pk,y)$). Let $\Mc$ the variant of \Ec with this fixed value of $(y, F,I)$ ``hardwired" into it that replaces each call $F^{-1}(td',y')$ made by \Ec, with $I(G(td',y'))$. It is clear that
\begin{align*}
\prob{\MyAtop{G \la \Pi, \h \la \calH}{td \la \zn}}{\Mc^{G, \Sam^{G, \h}} (G(td)) = G^{-1}(td)} \geq \alpha - \beta,
\end{align*}
and, by inspection, \Mc is a $3\aq$-\augQ, $d$-depth normal-form circuit. \cref{theorem:RandomPermutationHardForSam} yields that $\alpha - \beta \leq 2/\aq$, and \cref{eq:brr} follows.
\end{proof}

\subsection{The Reconstruction Lemma --- Proving \texorpdfstring{\cref{lemma:reconstruction1}}{\cref{lemma:reconstruction1}}}\label{subsection:reconstruction}
The following extends the reconstruction lemma of \citet{GennaroT00}. The idea
underlying the claim is the following: if a circuit \Ac manages to invert a permutation $\pi$ on
some set, then given the circuit \Ac, the permutation $\pi$ can be described without specifying its
value on a relatively large fraction of this set.

\begin{claim}\label{claim:reconstruction2}
There exists a deterministic algorithm \Decoder such that the following holds for every $\aq$-\augQ circuit \Ac, $\pi\in \Pi$, $\h \in \calH$ and $n\in \N$. Assuming that
\begin{align*}
\prob{y \la \zn}{\Ac^{\pi, \Sam^{\pi, \h}} (y) = \pi^{-1}_n (y)
 \land \neg \Hit_{\Ac,\pi,\h}(y)} \ge \epsilon,
\end{align*}
then there exists an $\left(2 \log \binom{2^n}{a} + \log ((2^n - a)!)\right)$-bit string \aux, such that $\Decoder(\aux, \Ac, \h,\pi_{-n}) = \pi_n$, where $a \ge \epsilon 2^n / \left( 2 \aq\right)$ and $\pi_{-n} = \set{\pi_i}_{i\in \N\setminus \set{n} }$.
\end{claim}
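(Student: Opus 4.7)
The plan is to extend the Gennaro--Trevisan reconstruction argument to the setting where $\Ac$ has oracle access to \Sam. Fix $\Ac,\pi,h$ satisfying the hypothesis and let $\Good \eqdef \{y \in \zn : \Ac^{\pi,\Sam^{\pi,h}}(y) = \pi_n^{-1}(y) \land \neg \Hit_{\Ac,\pi,h}(y)\}$, so $|\Good| \ge \epsilon 2^n$. For each $y \in \Good$ I will define the \emph{effective} $\pi_n$-query set $U_y \subseteq \zn$, consisting of every $x \in \zn$ such that during the execution $\Ac^{\pi,\Sam^{\pi,h}}(y)$ the value $\pi_n(x)$ is asked either (i) directly by $\Ac$, or (ii) during the evaluation of $\Cc^\pi(w^{\rm in})$ or $\Cc^\pi(w^{\rm out})$ for some \Sam-query $\q=(w^{\rm in},\Cc,\Cc_{\next})$ of $\Ac$ with answer $w^{\rm out}$. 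The augmented-query bound immediately gives $|U_y| = O(\aq)$.

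A greedy argument (process $\Good$ in arbitrary order and add $y$ to $Y'$ whenever $\pi^{-1}(y) \notin \bigcup_{y' \in Y'} U_{y'}$ and $\pi^{-1}(y') \notin U_y$ for every $y'$ already chosen) combined with a careful count will then produce $Y' \subseteq \Good$ enjoying the \emph{isolation property} $U_y \cap \pi^{-1}(Y') \subseteq \{\pi^{-1}(y)\}$ for every $y \in Y'$, of size $a\eqdef|Y'| \ge \epsilon 2^n/(2\aq)$ as required. The witness $\aux$ will encode the two unordered sets $X' \eqdef \pi^{-1}(Y')$ and $Y'$ (at cost $2\log\binom{2^n}{a}$ bits) together with the bijection $\pi_n|_{\zn\setminus X'}\colon \zn\setminus X' \to \zn\setminus Y'$ (at cost $\log((2^n-a)!)$ bits), matching the budget exactly. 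Given $(\aux,\Ac,h,\pi_{-n})$ the decoder $\Decoder$ will process $y_1,y_2,\ldots \in Y'$ in lexicographic order while maintaining a ``partial $\pi_n$'' consisting of $\pi_n|_{\zn\setminus X'}$ plus all pairs decoded so far, and will simulate $\Ac^{\pi,\Sam^{\pi,h}}(y_i)$. Whenever $\Ac$ queries $\pi_n(x)$: if $x\notin X'$ answer from the partial $\pi$; if $x = x_j$ for some decoded $j<i$ answer $y_j$; otherwise, by the isolation property $x$ must coincide with the still-unknown $\pi_n^{-1}(y_i)$, so answer $y_i$. When $\Ac$ halts its output is $\pi_n^{-1}(y_i)$, which $\Decoder$ records.

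The hard part will be simulating each \Sam-query $(w^{\rm in},\Cc,\Cc_{\next})$ that arises during this simulation. $\Decoder$'s strategy will be to enumerate $v=0,1,\ldots,2^{\Inp}-1$ and on each iteration attempt to evaluate $\Cc^{\pi'}(w^{\rm in})$ and $\Cc^{\pi'}(\h_\q(v))$ against the partial $\pi'$ described above --- treating any query that lands on some $x \in X'$ which is neither already decoded nor equal to the current target $y_i$ as an \emph{abort} of that particular iteration --- and returning $\h_\q(v)$ at the first $v$ for which both evaluations complete and agree. Correctness rests on two facts. First, at the true smallest $v^*$, non-hitting guarantees that $\Cc^\pi(w^{\rm out}) = \Cc^\pi(\h_\q(v^*))$ never queries $\pi_n^{-1}(y_i)$, and combined with the isolation property this forces every $\pi_n$-query inside $\Cc^\pi(w^{\rm in})$ and $\Cc^\pi(\h_\q(v^*))$ to land either in $\zn\setminus X'$ or in a previously decoded $x_j$; hence both evaluations succeed with the correct values, they agree, and $\h_\q(v^*)$ is returned. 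Second, for $v<v^*$ the true values $\Cc^\pi(\h_\q(v))$ and $\Cc^\pi(w^{\rm in})$ disagree by the very definition of $v^*$, so whether $\Decoder$'s computation completes (and correctly declares a mismatch) or aborts (because $\Sam$'s internal exploration of $\Cc$ happens to hit some undecoded $x_j\in X'$), silently skipping $v$ is consistent with reality and can never cause $\Decoder$ to return a wrong smaller collision. Iterating through all of $Y'$ thus recovers $\pi_n$ in full.
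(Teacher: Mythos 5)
Your architecture closely tracks the paper's: prune $\Good$ to a large $Y'$, encode $Y'$, $X'=\pi_n^{-1}(Y')$ and the bijection $\pi_n|_{\zn\setminus X'}$, then re-simulate $\Ac$ on each $y\in Y'$ in lex order, answering $\pi_n$-queries from the partial reconstruction and handling $\Sam$-queries by scanning $v$ and only committing at the first evaluable-and-matching value (with non-hitting guaranteeing success at the true $v^*$ and harmlessness of skipping smaller $v$). All of that is right. The gap is in the greedy. You demand the two-sided isolation $U_y\cap\pi^{-1}(Y')\subseteq\{\pi^{-1}(y)\}$, enforced by rejecting a candidate $y$ whenever either $\pi^{-1}(y)\in U_{y'}$ or $\pi^{-1}(y')\in U_y$ for some earlier $y'\in Y'$. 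The first condition discards at most $|U_{y'}|\le 2\aq$ candidates per accepted $y'$, as you want. But the second is not so bounded: for a fixed accepted $y'$, the number of $y\in\Good$ with $\pi^{-1}(y')\in U_y$ is the in-degree of $\pi^{-1}(y')$ in the query graph, which $\Ac$'s query complexity does not control. Concretely, if some $x^*$ lies in $U_y$ for every $y\in\Good$ and $\pi_n(x^*)\in\Good$ happens to be processed first, then $\pi_n(x^*)$ is accepted and every later $y$ is rejected, so $|Y'|=1$; even an optimal ordering only yields an independent set of size about $|\Good|/(4\aq)$, a factor of $2$ short of the claimed $\epsilon 2^n/(2\aq)$.

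The fix is that two-sided isolation is unnecessary. Your own decoder already splits queries $x\in X'$ into the case $j<i$ (look up the decoded $x_j$) and the case $j\ge i$ (conclude $j=i$); for the latter branch you only need the \emph{one-sided} property: for $y,y''\in\cY$ with $y''>_\lex y$, one has $\pi^{-1}(y'')\notin U_y$. This is exactly what the paper's greedy delivers: take the lex-smallest remaining $y$, add it to $\cY$, and remove from the \emph{remaining} pool the (at most $2\aq$) values $\pi_n(U_y)$ appearing as outputs of $\pi_n$-gates in $\Ac(y)$'s execution — direct queries together with both evaluations $\Cc^\pi(w)$ and $\Cc^\pi(w')$ for each $\Sam$-query. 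Each iteration then discards at most $2\aq$ future candidates, giving $|\cY|\ge\epsilon 2^n/(2\aq)$ exactly, and your decoder's correctness argument goes through unchanged with only the one-sided property (any undecoded $x\in X'\cap U_{y_i}$ must satisfy $\pi_n(x)\le_\lex y_i$, hence equals $x_i$).
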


\begin{proof}
Denote by $\cI \subseteq \zn$ the set of points $y \in \zn$ on which $\Ac^{\pi, \Sam^{\pi, \h}}$ successfully inverts $\pi_n$ with no $y$-hits. We claim that there exists a relatively large set $\cY \subseteq
\cI$, such that the value of $\pi^{-1}_n$ on the set $\cY$, is determined by the description of, \Ac, $\h$, $\pi_{-n}$, and the set $\cZ = \set{(y,\pi_n^{-1}(y))\colon y \in \zn \setminus \cY}$.

The set $\cY$ is defined via the following process.
\begin{algorithm}~
\item Set $\cY = \emptyset$, and repeat until $\cI = \emptyset$:
\begin{enumerate}
 \item Remove the lexicographically smallest element $y$ from $\cI$ and insert it into $\cY$.
 \item Let $\set{(w_1,\Cc_1,\cdot),w_1',\dots,(w_\aq,\Cc_\aq,\cdot),w_\aq'}$ be the queries made by $\Ac^{\pi,\Sam^{\pi,h}}(y)$ to \Sam and their answers, and let $y_1, \ldots, y_\aq$ be the outputs of the $\pi_n$-gates in the computations of $\Cc^\pi_1(w_1), \Cc^\pi_1(w'_1), \ldots, \Cc^\pi_\aq(w_\aq),\Cc^\pi_\aq(w'_\aq)$ and the outputs of all \Ac's direct queries to $\pi_n$. Then, remove $y_1, \ldots, y_\aq$ from $\cI$.
\end{enumerate}
\end{algorithm}
 Since at each iteration of the above process one element is inserted into the set $\cY$ and at most $2\aq$ elements are removed from the set $\cI$, and since the $\cI$ initially contains at least $\epsilon 2^n$ elements, when the process terminates we have that
\begin{align}
a \eqdef \size{\cY} \ge \epsilon 2^n / 2 \aq
\end{align}
In addition, note that given the set $\cY$ and $\cX = \pi^{-1}_n (\cY)$, the set $\cZ$ can be described using $\log ((2^n - |\cY|)!)$ bits (by giving the order of the elements of $\zn \setminus \cY$, induced by applying $\pi_n$ on $\cX$). It follows that $\cY$, can be described by a string $\aux$ with
\begin{align}
\size{\aux} \leq 2 \log \binom{2^n}{a} + \log ((2^n - a)!)
\end{align}
We complete the proof by presenting the algorithm \Decoder that reconstructs
$\pi_n$ from the description of \Ac, $\h$, $\pi_{-n}$, $\cY$ and $\cZ$.
\begin{algorithm}[\Decoder]\label{alg:decoder}
\item [Input:] The description of \Ac, $\h$, $\pi_{-n}$, $\cY$ and $\cZ$.
\item[Operation:] For each $y \in \cY$ taken in lexicographical increasing order:
 \begin{enumerate}
 \item Emulate $\Ac^{\pi, \Sam^{\pi, \h}} (y)$ by answering \Ac's query as follows:
\begin{enumerate}

\item On a $\pi$-query $\q\in \zo^\ast$:
 \begin{itemize}
 \item If $\pi(\q)$ is defined by $\pi_{-n}$, the set $\cZ$ or the previously reconstructed values of $\pi_n$, answer with this value.
 \item Else, halt the emulation and set $\pi_{n}^{-1}(y)= \q$.
 \end{itemize}

 \item On a \Sam-query $\q = (w,\Cc, \Cc_\next) \in \cQ$:\footnote{We assume \wlg that \Ac's \Sam-queries are always in $\cQ$, since it can answer other \Sam-queries (\ie not in $\cQ$) by itself (by answering $\perp$).}
\begin{enumerate}
\item If $\Cc = \perp$ answer with $\h_\q(0^\Inp)$, where $\Inp$ is the input length of $\Cc_\next$.

\item Else answer with $\h_\q(v)$, where $v\in \zn$ is the minimal element such that $\Cc^\pi(\h_\q(v))$ can be evaluated (\ie all $\pi_n$-queries made by $\Cc$ are defined by $\pi_{-n}$, the set $\cZ$ or the previously reconstructed values of $\pi_n$) and its resulting value is $\Cc^\pi(w)$.
\end{enumerate}

 \end{enumerate}
 \item If the emulation reached its end and output $x$, set $\pi_{n}^{-1}(y)= x$.
 \end{enumerate}
\end{algorithm}
Assume that \Decoder has recontracted $\pi_{n}$ correctly for the first $k$ elements of $\cY$, we proved that it also does so for the $(k+1)$ element $y$ of $\cY$. To do that we show that on each query $\q$ asked by \Ac during the emulation done by \Decoder, either \Decoder halts on $\q$ and then $\q=\pi^{-1}_{n}(y)$, or \Decoder answers $\q$ correctly. (Hence, \Decoder sets the right value for $\pi^{-1}_{n}(y)$).

We first handle the case that $\q$ is a $\pi$-query. It is easy to verify that \Decoder answers correctly in case it does not halt. If halting, it must be the case that $\q\in \pi_n^{-1}(\cY)$ and $\pi_n (\q) \geq_\lex y$ (otherwise, $\pi_n (\q)$ would have been previously constructed). On the other hand, the definition of $\cY$ yields that $\pi_n (\q) \leq_\lex y$ (otherwise, $(\pi_n (\q),\q)$ would have added to $\cZ$), yielding that $\q=\pi^{-1}_{n}(y)$.

In case $\q$ is a \Sam-query $(w,\Cc, \Cc_\next)\in \cQ$, we assume \wlg that $\Cc \neq \perp$ (the case $\Cc = \perp$ is clear), and show that \Decoder returns $\h_\q(v)$ for the lexicographically smallest $v$ such that $\Cc^\pi(\h(v)) = \Cc^\pi(w)$ (hence, it answers correctly). Let $v_0$ be this minimal $v$. It is sufficient to show
that \Decoder has enough information to evaluate $\Cc^\pi(\h_\q(v_0))$. Indeed, since no $y$-hit happens in the computation of $\Ac^{\pi, \Sam^{\pi, \h}} (y)$, the evaluation of $\Cc^\pi(\h_\q(v_0))$ does \emph{not} query $\pi_n$ on $\pi^{-1}_{n}(y)$. Hence, the definition of $\cY$ guarantees that the answers to \emph{all} queries asked by $\Cc^\pi(\h_\q(v_0))$ are described in $\cZ$, or were previously reconstructed during the emulation of \Decoder.
\end{proof}

Now we are able to prove the following lemma, which (by holding for any \emph{fix} choice of $\pi_{-n}$ and $\h$) is a stronger form of \cref{lemma:reconstruction1}.

\begin{lemma}\label{lemma:reconstruction-stronger}
The following holds for all sufficiently large $n$, $\pi_{-n} = \set{\pi_i \in \Pi_i}_{i\in \N \setminus \set{n}}$, $\h\in \calH$ and $2^{n/5}$-\augQ circuit \Ac:
\begin{align*}
\prob{\pi_n \la \Pi_n}{\prob{y \la \zn}{\Ac^{\pi, \Sam^{\pi, \h}} (y) = \pi^{-1}_n (y)
\enspace \land \neg \Hit_{\Ac,\pi,\h}(y)} \geq 2^{-n/5}} \leq 2^{-2^{\frac{3n}5}}.
\end{align*}
\end{lemma}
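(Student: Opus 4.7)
The plan is to use a standard counting (incompressibility) argument based on \cref{claim:reconstruction2}: a random $\pi_n$ should be ``incompressible,'' so only a tiny fraction of permutations can admit the short description guaranteed by the claim. Fix $\pi_{-n}$, $\h$, and $\Ac$ as in the statement, and call a permutation $\pi_n \in \Pi_n$ \emph{good} if it satisfies the inner event, \ie $\Pr_{y \la \zn}[\Ac^{\pi,\Sam^{\pi,\h}}(y) = \pi_n^{-1}(y) \land \neg \Hit_{\Ac,\pi,\h}(y)] \geq 2^{-n/5}$. It suffices to bound $|\text{good}|/(2^n)!$ by $2^{-2^{3n/5}}$.

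Apply \cref{claim:reconstruction2} with $\epsilon = 2^{-n/5}$ and $\aq = 2^{n/5}$: every good $\pi_n$ admits an auxiliary string $\aux = \aux(\pi_n)$ of length at most $L(a) := 2\log\binom{2^n}{a} + \log((2^n-a)!)$, from which $\Decoder(\aux, \Ac, \h, \pi_{-n})$ reconstructs $\pi_n$, where $a = a(\pi_n) \geq a_{\min} := 2^{n}/(2\cdot 2^{n/5} \cdot 2^{n/5}) = 2^{3n/5}/2$. Since $\Decoder$ is deterministic and the other inputs are fixed, distinct good $\pi_n$'s yield distinct $\aux$'s. Partitioning good permutations by the value of $a(\pi_n)$ and summing,
\begin{equation*}
|\text{good}| \;\leq\; \sum_{a=a_{\min}}^{2^n} 2^{L(a)} \;=\; \sum_{a=a_{\min}}^{2^n} \binom{2^n}{a}^{2}(2^n-a)! \;\leq\; 2^n \cdot \max_{a \geq a_{\min}} \binom{2^n}{a}^{2}(2^n-a)!.
\end{equation*}
Dividing by $(2^n)! = \binom{2^n}{a} \cdot a! \cdot (2^n-a)!$ gives
\begin{equation*}
\frac{|\text{good}|}{(2^n)!} \;\leq\; 2^n \cdot \max_{a \geq a_{\min}} \frac{\binom{2^n}{a}}{a!}.
\end{equation*}

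To bound this, I first argue the maximum is attained at $a = a_{\min}$. The ratio of consecutive terms is $\binom{2^n}{a+1}/((a+1)\binom{2^n}{a}) \cdot (a!/(a+1)!) = (2^n-a)/(a+1)^2$, which is $<1$ precisely when $a \gtrsim 2^{n/2}$. Since $a_{\min} = 2^{3n/5}/2 \gg 2^{n/2}$ for $n$ large, $\binom{2^n}{a}/a!$ is decreasing on $[a_{\min},2^n]$, so the maximum is at $a_{\min}$. Using $\binom{2^n}{a} \leq (2^n)^a/a!$ and $a! \geq (a/e)^a$, we obtain
\begin{equation*}
\frac{\binom{2^n}{a_{\min}}}{a_{\min}!} \;\leq\; \left(\frac{2^n e^2}{a_{\min}^{2}}\right)^{a_{\min}} \;=\; \left(\frac{4e^2}{2^{n/5}}\right)^{a_{\min}} \;\leq\; 2^{-n\,a_{\min}/10}
\end{equation*}
for all sufficiently large $n$. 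Plugging in $a_{\min} = 2^{3n/5}/2$ yields
\begin{equation*}
\frac{|\text{good}|}{(2^n)!} \;\leq\; 2^n \cdot 2^{-n\cdot 2^{3n/5}/20} \;\leq\; 2^{-2^{3n/5}}
\end{equation*}
once $n \geq 40$, which is the desired bound.

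I do not expect a significant obstacle here: the reconstruction claim already does the combinatorial heavy lifting, and what remains is a Stirling-style estimate. The only mildly delicate point is that $a(\pi_n)$ varies across good permutations, which I handle by the trivial union bound over $a$ and then checking that $\binom{2^n}{a}/a!$ is monotonically decreasing above $2^{n/2}$, so the worst case is the minimum allowed $a$.
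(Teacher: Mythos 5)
Your proposal is correct and follows essentially the same approach as the paper: apply \cref{claim:reconstruction2} with $\epsilon=\aq=2^{n/5}$ to obtain a short description of every good $\pi_n$, count descriptions, and finish with the Stirling-type estimates $a!\ge(a/e)^a$ and $\binom{2^n}{a}\le(2^ne/a)^a$. Your treatment is somewhat more careful than the paper's one-line version in that you explicitly handle the dependence of the description length on $a=a(\pi_n)$ by a union bound over $a$ together with the monotonicity of $\binom{2^n}{a}/a!$, a point the paper glosses over by treating $a$ as if it were fixed at its minimum value.
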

\begin{proof}
\cref{claim:reconstruction2} implies that the fraction of permutations $\pi_n\in \Pi_n$ for
which
\begin{align}
\prob{y \la \zn}{\Ac^{\pi, \Sam^{\pi, \h}} (y) = \pi^{-1}_n (y)
\enspace \land \neg \Hit_{\Ac,\pi,\h}(y)} \ge 2^{-n/5}
\end{align}
is at most $\alpha = \frac{\binom{{2^n}}{a}^2 ({2^n} - a)!}{{2^n}!} = \frac{\binom{{2^n}}{a}}{a!}$, for $a \ge 2^{-n/5} \cdot {2^n} / (2 \cdot 2^{n/5}) = 2^{\frac{3n}5}/2$. Using the inequalities $a!
\ge (a/e)^a$ and $\binom{{2^n}}{a} \leq (2^n e/a)^a$, it holds that $\alpha \leq \left( \frac{{2^n} e^2}{a^2} \right)^a \leq \left( \frac{4 e^2}{2^{n/5}} \right)^a \leq 2^{-a}$ for sufficiently large $n$.
\end{proof}

\subsection{Avoiding \texorpdfstring{$y$}{y}-Hits by \texorpdfstring{\Sam}{Sam} -- Proving \texorpdfstring{\cref{lemma:Hitting}}{\cref{lemma:Hitting}}}\label{subsection:hitting}
Fix $n\in \N$ and a $\aq$-\augQ, $d$-depth normal-from circuit \Ac such that
\begin{align}\label{eq:AcSuccess}
\prob{\MyAtop{\pi \la \Pi, \h \la \calH}{y \la \zn}}{\Hit_{\Ac,\pi,\h}(y)} \geq \eps
\end{align}
for $\eps \in [0,1/\aq]$. We prove \cref{lemma:Hitting} by presenting a $2\aq$-\augQ, $d$-depth normal-from circuit \Mc, with
\begin{align}\label{eq:McSuccess}
\prob{\MyAtop{\pi \la \Pi, \h \la \calH}{y \la \zn}}{\Mc^{\pi, \Sam^{\pi, \h}} (y) = \pi^{-1}(y) \land \neg \Hit_{\Mc,\pi,\h}(y)} \ge (\eps/2)^{3 d + 1}
\end{align}

Let $\q = (w,\Cc,\cdot)$ be a \Sam-query asked in $\Ac^{\Sam^{\pi,h}}(y)$ that produces a $y$-hit (\ie $\Cc^\pi(\Sam^{\pi,h}(\q))$ queries $\pi$ on $\pi^{-1}(y)$). Since \Ac is in a normal form, \emph{previously} to asking $\q$ it made a \Sam-query $\q'= (\cdot,\cdot,\Cc)$, and answered by $w$. The main observation (see \cref{sec:singlePass,subsection:HittingExtension}) is that, with high probability, $\Cc^\pi(w)$ also queries $\pi$ on $\pi^{-1}(y)$. This suggests the following circuit for inverting random permutations with no hits.
\begin{algorithm}[\Mc]\label{alg:Mc}
\item[Input:] $y\in \zn$.

\item [Oracle:] $\pi \in \Pi$ and $\Sam = \Sam^{\pi, \h}$ for some $\h \in \calH$.

\item [Operation:]~
\begin{enumerate}
 \item Emulate $\Ac^{\pi,\Sam^{\pi, \h}}(y)$ while adding the following check each \Sam-query $(\cdot,\cdot, \Cc_\next)$ that \Ac makes that answered with $w$:

 If $\Cc_\next^\pi(w)$ queries $\pi$ on $x = \pi^{-1}(y)$, return $x$ and halt.

 \item Return $\perp$.
\end{enumerate}
\end{algorithm}
The rest of the proof is devoted for proving that \cref{eq:McSuccess} holds for the above definition of \Mc. The heart of the proof lies in the following lemma.
\begin{lemma}\label{lemma:HittingFixY}
The following holds for every $\pi\in \Pi$ and $y \in \zn$. Assume that
\begin{align}\label{equation:hittingassumption2}
\prob{\h \la \calH}{\Hit_{\Ac,\pi,\h}(y)} \ge \delta
\end{align}%
for $\delta \in [0,1/\aq]$, then
\begin{align*}
\prob{\h \la \calH}{\Mc^{\pi, \Sam^{\pi, \h}} (y) =
\pi^{-1}(y) \land \neg \Hit_{\Mc,\pi,\h}(y)} \ge \delta^{3 d}.
\end{align*}
\end{lemma}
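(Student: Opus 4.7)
My plan is to fix $\pi$ and $y$ and, over the randomness of $\h$, compare two families of events indexed by \Ac's queries. For each non-root query $q=(w,\Cc,\Cc_\next)$ with response $w'$, let $H_q$ denote the event ``$\Cc^\pi(w')$ queries $\pi^{-1}(y)$'' (so $\Hit_{\Ac,\pi,\h}(y)=\bigcup_q H_q$), and let $K_q$ denote the event ``$\Cc_\next^\pi(w')$ queries $\pi^{-1}(y)$,'' that is, the event that \Mc's check at $q$ fires. By the normal-form assumption, each non-root query $q$ has a unique parent $q^\parent$ whose third coordinate equals $q$'s second coordinate; call this common circuit $\Cc$. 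The heart of the proof is the probabilistic identity
\[
\pr{K_{q^\parent}\mid\text{reach }q^\parent}=\pr{H_q\mid\text{reach }q^\parent}=|U|/|S_\parent|,
\]
where $S_\parent$ is the \Sam-preimage set associated with $q^\parent$ and $U=\{w\in S_\parent : \Cc^\pi(w)\text{ queries }\pi^{-1}(y)\}$. The identity follows by partitioning $S_\parent$ by the value of $\Cc^\pi$, observing that the response to $q^\parent$ is uniform in $S_\parent$ while the response to $q$ is uniform in a single cell of this partition, and summing; a Cauchy--Schwarz step on the cell sizes further gives the positive correlation $\pr{K_{q^\parent}\cap H_q\mid\text{reach }q^\parent}\geq(|U|/|S_\parent|)^2$.

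Next I would apply a depth pigeon-hole. Setting $a_i=\pr{\text{\Ac hits at some depth-}i\text{ query}}$ and $P_i=\pr{\bigcup_{j\leq i}\{\text{hit at depth }j\}}$, we have $P_1=0$ (roots have $\Cc=\perp$ and cannot hit) and $P_d\geq\delta$. A standard pigeon-hole yields an index $i^*$ with $a_{i^*}\geq\delta/d$; combined with the correlation bound and a union bound over earlier hits,
\[
\pr{K_{q_{i^*-1}}\land\neg\bigcup_{j<i^*}H_{q_j}}\geq a_{i^*}^2-P_{i^*-1}.
\]
If $P_{i^*-1}\leq a_{i^*}^2/2$ this already exceeds $\delta^{3d}$; otherwise the ``earlier hits'' themselves carry mass $\geq a_{i^*}^2/2$, and one recurses on the subforest of queries at depths $<i^*$ with a reduced ``effective $\delta$.'' Iterating at most $d$ times and accounting for the quadratic loss per level from the correlation step yields the claimed $\delta^{3d}$ bound.

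On the resulting high-probability event, \Mc halts at its first catch $q_{i^*-1}$ with output $\pi^{-1}(y)$; and since \Mc's \Sam-queries up to that point coincide with \Ac's queries at depths $\leq i^*-1$---none of which triggers an $H$ event, by choice of $i^*$---the execution is non-hitting, yielding $\pr{\Mc^{\pi,\Sam^{\pi,\h}}(y)=\pi^{-1}(y)\land\neg\Hit_{\Mc,\pi,\h}(y)}\geq\delta^{3d}$ as claimed.

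The main technical obstacle will be carrying out the recursive depth argument cleanly: each recursion must propagate both the conditional identity and the positive-correlation bound through the conditioning ``no hits at shallower depths,'' and the accumulated per-level losses must combine to produce the exponent $3d$ (the factor $3$ per level absorbing the depth pigeon-hole, the Cauchy--Schwarz correlation loss, and the union-bound loss for excluding earlier hits). Handling the general forest structure of \Ac's query tree (multiple root queries, branching within a tree) is dispatched by aggregating the identity over all queries at a given depth before applying the pigeon-hole, and does not affect the exponent.
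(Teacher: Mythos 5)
There is a genuine gap, and it is at the very heart of the lemma: you never establish a depth at which $\Mc$'s catch probability is \emph{substantially larger} than the accumulated hit probability up to (and including) that same query. Your expectation identity $\pr{K_{q^\parent}\mid\cdot}=\pr{H_q\mid\cdot}$ matches the paper's \cref{claim:expectation} and is correct, but the Cauchy--Schwarz step points in the wrong direction: it lower-bounds $\pr{K_{q^\parent}\cap H_q}$, i.e.\ it shows that a catch by $\Mc$ is \emph{positively} correlated with a hit by $\Ac$. What the lemma needs is that $K_{q^\parent}$ fires while \emph{no} $H$ fires, so a positive-correlation bound on $K\cap H$ is, if anything, evidence of trouble rather than help. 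Indeed, consider an adversary whose conditional hit probability ramps up smoothly, so that $\alpha_i$ and $\beta_i$ stay within a factor $2$ of one another at every depth. Then at the step your pigeon-hole selects, $a_{i^*}$ and $P_{i^*-1}$ are of the same order and $a_{i^*}^2-P_{i^*-1}$ is negative; the only recourse is your recursion.

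But the recursion does not close. Each time you recurse you replace the effective $\delta$ by something on the order of $a_{i^*}^2/2\le(\delta/d)^2/2$, and a pigeon-hole over $O(d)$ depths can force $O(d)$ recursion levels; the accumulated loss is therefore of the shape $\delta^{2^{\Theta(d)}}$, not $\delta^{3d}$. The idea you are missing is precisely the paper's $\Jump$ / $\Gap$ machinery: rather than selecting a depth by an unconditional pigeon-hole, the paper argues (via \cref{claim:Jump,claim:HitImpliesJump}) that, because the initial $\alpha$ is zero and some hit occurs with probability $\delta$, there must be a \emph{first multiplicative jump} in $\alpha_i$ along the path; then \cref{claim:expectation} together with Markov's inequality (\cref{claim:jump-given-notgap}) shows that such a jump is unlikely unless the preceding $\beta_i$ already satisfies $\beta_i>2\alpha_i$ (the $\Gap$ event). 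Conditioning on the \emph{first} such gap (\cref{claim:GapFirst}) yields, in a single shot, a step where the catch probability $\beta_I$ dominates both the simultaneous hit probability $\alpha_I$ and the accumulated prior-hit mass --- there is no compounding recursion, only one conditioning, which is how the exponent stays linear in $d$. Without an argument of this type, the ``recurse with reduced effective $\delta$'' step cannot be made to yield $\delta^{3d}$.
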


We prove \cref{lemma:HittingFixY} in the next section, but first let us use it for concluding the proof \cref{lemma:Hitting}.

\begin{proof}[Proof of \cref{lemma:Hitting}]
Let $T =\{ (y, \pi) \in \zn\times \Pi \colon \Pr_{\h\la \calH}[\Ac^{\pi, \Sam^{\pi,
\h }} (y) = \pi^{-1} (y) \land \Hit_{\Ac,\pi,\h}(y)] \ge \eps/2\}$. The assumed success probability of \Ac (as stated in \cref{eq:AcSuccess}) together with a simple averaging argument, yield that
\begin{align}
\prob{y \la \zn, \pi \la \Pi}{(y, \pi) \in T} \ge \eps/2
\end{align}
Hence, by \cref{lemma:HittingFixY}
\begin{align*}
\prob{\h \la \calH}{\Mc^{\pi, \Sam^{\pi, \h}} (y) =
\pi^{-1}(y) \land \neg \Hit_{\Mc,\pi,\h}(y)} \ge (\eps/2)^{3 d}
\end{align*}
for every $(y, \pi) \in T$. We conclude that
\begin{align*}
\lefteqn{\prob{\MyAtop{\pi \la \Pi, \h \la \calH}{y \la \zn}}{\Mc^{\pi, \Sam^{\pi, \h}} (y) = \pi^{-1}(y) \land \neg \Hit_{\Mc,\pi,\h}(y)}}\\
& \ge \prob{\pi\la \Pi,y \la \zn}{(y, \pi) \in T} \cdot
\prob{\MyAtop{\pi \la \Pi, \h \la \calH}{y \la \zn}}{\Mc^{\pi, \Sam^{\pi, \h}} (y) = \pi^{-1}(y) \land \neg \Hit_{\Mc,\pi,\h}(y)\mid (y, \pi) \in T}\\
& \ge \eps/2 \cdot (\eps/2)^{3 d} = (\eps/2)^{3 d +1}.
\end{align*}
\end{proof}

\subsubsection{Proving \texorpdfstring{\cref{lemma:HittingFixY}}{\cref{lemma:HittingFixY}} --- The Single-Path Case}\label{sec:singlePass}
In this section we prove \cref{lemma:HittingFixY} for a simplified case that captures the main difficulties of the proof. The extension for the general case is given in \cref{subsection:HittingExtension}. In this simplified case \Ac queries \Sam on exactly $d$ queries that lie \emph{along a single path} --- \Ac queries \Sam with $\q_1,\ldots, \q_d$ satisfying $\parent(\q_i) = \q_{i-1}$ for every $2 \leq i \leq d$ (\ie $\q_i = (w_i,\Cc_i,\Cc_{\next,i})$ implies that $w_i$ is \Sam's answer on $\q_{i-1} = (\cdot,\cdot,\Cc_i)$).

In the following we fix $y\in \zn$, $\pi\in \Pi$ and $\delta \in [0,1/\aq]$ such that
\begin{align}\label{equation:delta2}
\prob{\h \la \calH}{\Hit \eqdef \Hit_{\Ac,\pi,\h}(y)} \ge \delta.
\end{align}
We let $\hit(\Cc,w)$, for circuit $\Cc$ and string $w$, be the event that $\Cc(w)$ queries $\pi$ on $\pi^{-1} (y)$ (hereafter, $\Cc(x)$ stands for $\Cc^\pi(x)$), and use the following random variables.
\begin{definition}
The following random variables are defined \wrt a random execution of $\Ac^{\pi, \Sam^{\pi,\H}}(y)$, where $\H$ is uniformly drawn from $\calH$.\footnote{Since $\Ac$ is a circuit, and hence deterministic, these random variables are functions of $\H$.}
\begin{itemize}

\item $\Q_1=(W_1=\perp,\Cc_1= \perp, \Cc_2),\Q_2=(W_2,\Cc_2,\Cc_3),\ldots, \Q_d =(W_d,\Cc_d, \cdot)$ denote \Ac's queries to \Sam.\footnote{Our simplifying assumption yields that \Ac's queries are indeed of the above structure, and that $W_{i+1}$ is \Sam's answer on $\Q_i$ for every $i\in [d-1]$. In particular, it holds that $\Q_1,\dots,\Q_i$ are determined by $W_1,\dots,W_i$ and $\Cc_1,\dots,\Cc_i$.}

\item $\Hit_i$, for $i\in [d]$, is the event $\hit(\Cc_i,W_{i+1})$, letting $\hit(\perp,\cdot)= \emptyset$, and let $\Hit_{\leq i} \eqdef \bigcup_{j\in [i]} \Hit_j$. (Note that $\Hit = \Hit_{\leq d}$.)

\item $D_i$, for $i\in [d]$, is the distribution of \Sam's answer on the $i$'th query $\Q_i$ --- $D_1$ is the uniform distribution over $\zo^\Inp$, and for $2 \leq i \leq d$, $D_i$ is the uniform distribution over the set $\Cc_i^{-1} (\Cc_i(W_i))$.

 \item $\alpha_0 = \alpha_1 = 0$, and for $2 \leq i \leq d$ let $\alpha_i=\prob{w \la D_i}{\hit(\Cc_i,w)}$.

\item $\Jump_i$, for $i\in [d]$, is the event $\alpha_i > \max \set{ \frac{8} {\delta^2} \cdot \alpha_{i-1}, (\frac{\delta^2}{8})^{d + 1} }$. $\Jump_{\leq i} \eqdef \bigcup_{j\in [i]} \Jump_j$ and $\Jump \eqdef \Jump_{\leq d}$.
\end{itemize}
\end{definition}

It is instructive to view the interaction between \Ac and \Sam
as a $d$ round game, where in the $i$'th round \Ac chooses a query $\Q_i$, and the oracle \Sam samples $W_{i+1}$ from the distribution $D_i$. The goal of the circuit \Ac in this game is to cause $\hit(\Cc_i,W_{i+1})$ (\ie causing the event $\Hit$ to happen).

By \cref{equation:delta2}, \Ac produces a $y$-hit (\ie causing the event $\Hit$) with probability at least $\delta$, and therefore wins the game with at least this probability. Our first observation is that the latter induces that the event $\Jump$ occurs with probability at least $\delta/2$. Intuitively, in case $\Jump$ does not occur, then the $\alpha_i$'s are too small in order to produce a $y$-hit with noticeable probability.

\begin{claim}\label{claim:Jump}
$\Pr[\Jump] > \delta/2$.
\end{claim}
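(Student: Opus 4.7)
My plan is to argue by contraposition: if $\Jump$ is rare, then under $\neg\Jump$ the quantities $\alpha_i$ grow too slowly (from the base case $\alpha_0=\alpha_1=0$) to account for a $y$-hit probability as large as $\delta$, contradicting \cref{equation:delta2}.

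The first step is the basic identity $\Pr[\Hit_i] = \Ex[\alpha_i]$. This holds because, conditioned on the transcript up to the $i$-th query $\Q_i = (W_i,\Cc_i,\Cc_{i+1})$, the answer $W_{i+1}$ that \Sam returns is distributed according to $D_i$, and $\hit(\Cc_i, W_{i+1})$ has probability $\alpha_i$ by definition. A union bound then gives
\begin{equation*}
\delta \;\le\; \pr{\Hit} \;=\; \pr{\Hit_{\le d}} \;\le\; \pr{\Jump} \;+\; \sum_{i=1}^d \pr{\Hit_i \wedge \neg\Jump}.
\end{equation*}

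The second step is to establish a pointwise bound on $\alpha_i$ under $\neg\Jump$. By definition $\neg\Jump_i$ means $\alpha_i \le \frac{8}{\delta^2}\alpha_{i-1}$ or $\alpha_i \le (\delta^2/8)^{d+1}$, both of which imply $\alpha_i \le \max\{\frac{8}{\delta^2}\alpha_{i-1},\,(\delta^2/8)^{d+1}\}$. Since $\alpha_0 = \alpha_1 = 0$, an induction on $i$ yields, under $\neg\Jump$,
\begin{equation*}
\alpha_i \;\le\; \left(\frac{\delta^2}{8}\right)^{d+3-i} \qquad \text{for every } 2 \le i \le d,
\end{equation*}
since at each step multiplying the previous bound by $8/\delta^2$ decreases the exponent by one, and dominates the absolute threshold $(\delta^2/8)^{d+1}$ as soon as the exponent drops below $d+1$.

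The last step combines the two. Because the pointwise bound on $\alpha_i$ holds under $\neg\Jump$,
\begin{equation*}
\sum_{i=1}^d \pr{\Hit_i \wedge \neg\Jump} \;\le\; \sum_{i=2}^d \Ex\!\left[\alpha_i \cdot \mathbf{1}_{\neg\Jump}\right] \;\le\; \sum_{i=2}^d \left(\frac{\delta^2}{8}\right)^{d+3-i} \;\le\; 2\left(\frac{\delta^2}{8}\right)^{3} \;\le\; \frac{\delta^6}{256},
\end{equation*}
which, using $\delta \le 1/\aq \le 1$, is strictly less than $\delta/2$. Rearranging the union bound then gives $\pr{\Jump} > \delta - \delta/2 = \delta/2$, as claimed.

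The only subtle point — and the step I would double-check carefully — is the identity $\pr{\Hit_i} = \Ex[\alpha_i]$, since $\Q_i$ itself is a random variable determined by prior \Sam answers; the key is that $D_i$ is precisely the conditional distribution of $W_{i+1}$ given the history up through $\Q_i$, which is exactly what the definition of \Sam guarantees (uniform over $\Cc_i^{-1}(\Cc_i(W_i))$ via the random permutation $\H_{\Q_i}$, whose independence from the prior history is ensured because normal-form queries use pairwise distinct $\Cc_\next$ components). Everything else is arithmetic and induction.
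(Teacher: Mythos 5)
Your overall strategy coincides with the paper's. The paper also bounds $\Pr[\Hit \land \neg\Jump]$ via the inductive estimate $\alpha_j \le (\delta^2/8)^{d-j+3}$ under the no-jump conditions, starting from $\alpha_0=\alpha_1=0$ (this is the content of \cref{claim:HitImpliesJump}), and then concludes $\Pr[\Jump] \ge \Pr[\Hit] - \Pr[\Hit\land\neg\Jump] \ge \delta - \delta^5/512 > \delta/2$. The only cosmetic difference is that you sum the geometric series in $i$, giving the slightly sharper $\delta^6/256$, while the paper replaces each term by the uniform bound $(\delta^2/8)^3$ and multiplies by $d$, using $\delta \le 1/\aq \le 1/d$ to absorb the factor of $d$; both remainders are far below $\delta/2$.

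However, the inequality $\Pr[\Hit_i \land \neg\Jump] \le \Ex[\alpha_i\cdot\mathbf{1}_{\neg\Jump}]$ is not valid as written, and this is precisely the subtlety you flag at the end of your proposal without actually resolving. Replacing $\mathbf{1}_{\Hit_i}$ by its conditional expectation $\alpha_i$ is only licensed when the indicator it is multiplied against is a function of $W_1,\dots,W_i$. But $\neg\Jump = \neg\Jump_{\le d}$ depends on $\alpha_{i+1},\dots,\alpha_d$, hence on $W_{i+1},\dots,W_d$; so $\neg\Jump$ and $\Hit_i$ are both functions of $W_{i+1}$ and need not be conditionally independent given $W_1,\dots,W_i$, which is what the claimed step implicitly assumes. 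The repair is small and does not change your arithmetic: bound $\Pr[\Hit_i\land\neg\Jump] \le \Pr[\Hit_i\land\neg\Jump_{\le i}] = \Ex[\alpha_i\cdot\mathbf{1}_{\neg\Jump_{\le i}}]$, which is legitimate because $\neg\Jump_{\le i}$ is determined by $W_1,\dots,W_i$, and observe that your inductive bound $\alpha_i \le (\delta^2/8)^{d+3-i}$ already holds under the weaker hypothesis $\neg\Jump_{\le i}$ (the induction only invokes $\neg\Jump_2,\dots,\neg\Jump_i$). This is exactly the conditioning event the paper uses in its proof of \cref{claim:HitImpliesJump}. With that one correction your argument is sound.
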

The proof of \cref{claim:Jump} immediately follows from the next observation.
\begin{claim}\label{claim:HitImpliesJump}
$\pr{\exists i\in [d] \colon \Hit_{\leq i}\land \neg \Jump_{\leq i}} \le \delta^5/512 $.
\end{claim}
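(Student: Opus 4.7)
The plan is to bound $\alpha_i$ inductively on the event $\neg \Jump_{\leq i}$, translate this into a bound on $\Pr[\Hit_i \land \neg \Jump_{\leq i}]$, and then conclude by a union bound. I will first show by induction on $i \geq 2$ that on the event $\neg \Jump_{\leq i}$ one has $\alpha_i \leq (\delta^2/8)^{d+3-i}$. The base case $i=2$ uses $\alpha_1=0$ together with $\neg\Jump_2$ to give $\alpha_2 \leq (\delta^2/8)^{d+1}$. For the inductive step, $\neg\Jump_i$ gives $\alpha_i \leq \max\{(8/\delta^2)\alpha_{i-1},\, (\delta^2/8)^{d+1}\}$; substituting the inductive hypothesis $\alpha_{i-1} \leq (\delta^2/8)^{d+4-i}$ and noting that $(\delta^2/8)^{d+3-i}\geq (\delta^2/8)^{d+1}$ (since $d+3-i\leq d+1$ and $\delta^2/8\leq 1$) yields $\alpha_i \leq (\delta^2/8)^{d+3-i}$, as required.

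Next, conditioned on the history $(\Q_1, W_2, \ldots, W_i, \Q_i)$ up through the query $\Q_i$, the answer $W_{i+1}$ is distributed according to $D_i$: the normal-form assumption ensures that $\Q_1,\ldots,\Q_d$ have pairwise distinct third components and are therefore pairwise distinct queries, so $\H_{\Q_i}$ is a uniform random permutation independent of the prior randomness. Hence $\Pr[\Hit_i \mid \text{history}] = \alpha_i$, and since $\neg \Jump_{\leq i}$ is measurable \wrt the history, the bound from the previous paragraph yields $\Pr[\Hit_i \land \neg \Jump_{\leq i}] \leq (\delta^2/8)^{d+3-i}$.

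To conclude, observe that $\{\exists i : \Hit_{\leq i} \land \neg \Jump_{\leq i}\} = \{\exists i : \Hit_i \land \neg \Jump_{\leq i}\}$: the reverse inclusion is trivial, and for the forward direction, take the smallest hit-index $i_0$ and note $\neg \Jump_{\leq i} \subseteq \neg \Jump_{\leq i_0}$ for any witness $i \geq i_0$. A union bound then yields
\[
\Pr\left[\exists i : \Hit_{\leq i} \land \neg \Jump_{\leq i}\right] \;\leq\; \sum_{i=2}^d (\delta^2/8)^{d+3-i} \;\leq\; \frac{(\delta^2/8)^3}{1 - \delta^2/8} \;\leq\; 2(\delta^2/8)^3 \;=\; \frac{\delta^6}{256} \;\leq\; \frac{\delta^5}{512},
\]
where the last two inequalities use $\delta \leq 1/\aq \leq 1/2$. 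The only delicate step is the conditional identity $\Pr[\Hit_i \mid \text{history}] = \alpha_i$, which relies crucially on the normal-form condition ensuring that $\H_{\Q_i}$ is a fresh uniform random permutation; the remainder of the argument is a routine geometric-sum calculation.
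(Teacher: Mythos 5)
Your proof is correct and follows essentially the same route as the paper's: you establish the same inductive bound $\alpha_j \le (\delta^2/8)^{d+3-j}$ on $\neg\Jump_{\le i}$, translate it through the conditional identity $\Pr[\Hit_i \mid \mathrm{history}_i] = \alpha_i$, and finish with a union bound. The only cosmetic difference is in the closing calculation: the paper uniformly bounds each term by $(\delta^2/8)^3 \le \frac1d\cdot\frac{\delta^5}{512}$ (using $\delta \le 1/\aq \le 1/d$) and multiplies by $d$, whereas you keep the $i$-dependence and sum the geometric series, using $\delta \le 1/\aq \le 1/2$ instead; both yield $\delta^5/512$. You also make explicit two steps the paper leaves implicit — the event identity $\{\exists i\colon \Hit_{\le i}\land\neg\Jump_{\le i}\} = \{\exists i\colon \Hit_i\land\neg\Jump_{\le i}\}$ and the freshness of $\H_{\Q_i}$ via the normal-form requirement that the third components are pairwise distinct — which is a welcome bit of rigor.
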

Namely, we expect no hit unless a jump has previously occurred.
\begin{proof}[Proof of \cref{claim:HitImpliesJump}]
We prove that
\begin{align}\label{eq:HitImpliesJump}
\Pr[\Hit_i \mid \neg \Jump_{\leq i} ] \le \frac1d \cdot \frac {\delta^5}{512}
\end{align}
for every $2 \leq i\le d$, and the proof of the \cref{claim:HitImpliesJump} follows by union bound.

Assuming $ \set{\neg \Jump_{\leq i}}$, we first show that
\begin{align}\label{eq:HitImpliesJump2}
\alpha_j \leq \left( \frac{\delta^2}{8} \right)^{d - j + 3}
\end{align}
for every $2\leq j \le i$. For $j=2$ compute
\begin{align}
\alpha_2 &\leq \max \set{ \frac{8} {\delta^2} \cdot \alpha_{1}, \left(\frac{\delta^2}{8}\right)^{d + 1}}= \max \set{\frac{8} {\delta^2} \cdot 0, \left(\frac{\delta^2}{8}\right)^{d + 1}} = \left(\frac{\delta^2}{8}\right)^{d - 2 + 3},
\end{align}
where the inequality holds since we assume $\neg \Jump_{\leq i} $. Assuming \cref{eq:HitImpliesJump2} holds for $2\le j-1 \leq i-1$, compute
\begin{align*}
\alpha_j &\leq \max \set{ \frac{8} {\delta^2} \cdot \alpha_{j-1}, \left(\frac{\delta^2}{8}\right)^{d + 1}}\\
&\leq \max \set{\frac{8} {\delta^2} \cdot \left(\frac{\delta^2}{8}\right)^{d - j + 4}, \left(\frac{\delta^2}{8}\right)^{d + 1}}\\
&= \left(\frac{\delta^2}{8}\right)^{d - j + 3}
\end{align*}
proving \cref{eq:HitImpliesJump2}. The first inequality holds since we assume $\neg \Jump_{\leq i} $, and the second one by the induction hypothesis. It follows that
\begin{align*}
\Pr[\Hit_i \mid \neg \Jump_{\leq i} ] \leq \left(\frac{\delta^2}{8}\right)^3\leq \frac1d \cdot \frac {\delta^5}{512 },
\end{align*}
where the last inequality holds since (by the statement of the lemma) $\delta \leq 1/\aq \leq 1/d$.
\end{proof}
Given the above, the proof of \cref{claim:Jump} is immediate.
\begin{proof}[Proof of \cref{claim:Jump}]
Compute
\begin{align*}
\Pr[\Jump] &\geq \Pr[\Hit] - \Pr[\Hit \land \neg \Jump]\\
& \geq \delta - \delta^5/512 > \delta/2,
\end{align*}
where the second inequality follows by \cref{claim:HitImpliesJump}.
\end{proof}

Consider \Mc's point of view in the aforementioned game. Recall that following each query $Q_i =
(W_i,\Cc_i, \Cc_{i+1})$, the circuit \Mc evaluates $\Cc_{i+1} (W_{i+1})$, and if a $\pi$-gate in this computation has input $\pi^{-1} (y)$, then \Mc outputs $\pi^{-1} (y)$ and halts. Algorithm \Mc ``wins" the game (\ie inverts $\pi$ with no hit), if it manages to retrieve
$\pi^{-1} (y)$ \emph{before} \Ac produces any $y$-hits. Let $\beta_i$ be the probability that \Mc outputs
$\pi^{-1} (y)$ and halts after query $\q_i$.
\begin{definition}
For $i\in [d]$ let $\beta_i = \prob{w \la D_i}{\hit(\Cc_{i+1},w)}$.
\end{definition}
The game between \Ac and \Mc can be now described as follows: in the $i$'th round, \Ac chooses a query $\q_i$, which
determines $\beta_i$, and \Sam samples $w_{i+1}$, which determines $\alpha_{i + 1}$. If $\q_i$ implies ``high" $\beta_i$, then \Mc has high probability in winning the game (\ie with high probability the computation $\Cc_{i+1} (w_{i+1})$ done by \Mc finds $\pi^{-1} (y)$. Therefore to win the game, \Ac should not choose high $\beta_i$. We claim, however, that if
$\beta_i$ is low, then with high probability $\alpha_{i + 1}$ will be low as well. But if
$\alpha_{i+1}$ is low, then \Ac has a low probability of producing a $y$-hit in the next query
$\q_{i+1}$. This means that in order for \Ac to win the game, at some point it must ``take a risk''
and produce high $\beta_i$.

The following claim states that conditioned on $\Q_1, \ldots, \Q_i$, the expectation of
$\alpha_{i+1}$ is $\beta_i$. Therefore, if $\beta_i$ is
low then $\alpha_{i+1}$ is low with high probability. Note that under the above conditioning (which determines the value of $W_1\cdots,W_i$), the value of $\beta_i$ is determined, while $\alpha_{i+1}$ is still a random variable, to be determined by the value of $W_{i+1} = \Sam(\Q_i)$.

\begin{claim}\label{claim:expectation}
$\Ex[\alpha_{i+1} \mid W_1, \ldots, W_i] = \beta_i$ for every $i\in [d-1]$.
\end{claim}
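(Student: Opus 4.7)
The plan is to interpret both $\alpha_{i+1}$ and $\beta_i$ as expectations over uniform distributions on preimage sets of the extension circuits, and then invoke a standard ``two-step sampling equals one-step sampling'' identity. First I would observe that conditioning on $W_1,\ldots,W_i$ determines the queries $\Q_1,\ldots,\Q_i$ (by the single-path assumption), and in particular fixes $\Cc_i$, $\Cc_{i+1}$, and $W_i$. Since \Ac is in normal form, the query $\Q_i$ is distinct from $\Q_1,\ldots,\Q_{i-1}$, so the permutation $\H_{\Q_i}$ is independent of the $\H_{\Q_j}$'s used to compute $W_1,\ldots,W_i$. Thus the conditional distribution of $W_{i+1}=\Sam^{\pi,\H}(\Q_i)$ is the distribution of Sam's answer when $\H_{\Q_i}$ is a fresh uniformly random permutation; a standard symmetry argument (transposing any two elements of $S:=\Cc_i^{-1}(\Cc_i(W_i))$ in the random ordering induced by $\H_{\Q_i}$ yields a measure-preserving involution) shows that this is precisely $D_i$, i.e., the uniform distribution on $S$.

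Next I would use the normal-form assumption that $\Cc_{i+1}$ is an extension of $\Cc_i$: any $w'$ with $\Cc_{i+1}(w')=\Cc_{i+1}(W_{i+1})$ has, in particular, matching first-$o$ output bits, so $\Cc_i(w')=\Cc_i(W_{i+1})=\Cc_i(W_i)$, which forces $w'\in S$. Hence $\Cc_{i+1}^{-1}(\Cc_{i+1}(W_{i+1}))\subseteq S$, and $S$ partitions into $\Cc_{i+1}$-equivalence classes $\{T_j\}_j$ (where $w\sim w'$ iff $\Cc_{i+1}(w)=\Cc_{i+1}(w')$), with $D_{i+1}$ being the uniform distribution on the class $T_{j(W_{i+1})}$ containing $W_{i+1}$.

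The computation is then immediate. Letting $\calB=\{w\in S:\hit(\Cc_{i+1},w)\}$, I would expand
\begin{align*}
\Ex[\alpha_{i+1}\mid W_1,\ldots,W_i]
&=\Ex_{W_{i+1}\sim U(S)}\!\left[\Pr_{W\sim U(T_{j(W_{i+1})})}[\hit(\Cc_{i+1},W)]\right]\\
&=\sum_j \frac{|T_j|}{|S|}\cdot\frac{|T_j\cap\calB|}{|T_j|}
=\frac{|S\cap\calB|}{|S|}
=\Pr_{w\sim D_i}[\hit(\Cc_{i+1},w)]
=\beta_i,
\end{align*}
which is the desired identity. There is no genuinely hard step here; the only subtlety is ensuring that the extension property holds (so that $D_{i+1}$ really is concentrated inside $S$), which is guaranteed precisely by the normal-form condition, and that $\H_{\Q_i}$ remains uniform conditional on earlier answers, which follows from the distinctness of the query labels under normal form.
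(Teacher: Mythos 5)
Your proof is correct and follows essentially the same route as the paper's: both decompose $\Ex[\alpha_{i+1}]$ over the $\Cc_{i+1}$-fibers inside $\cS = \Cc_i^{-1}(\Cc_i(W_i))$ and use the extension (normal-form) property to see that these fibers partition $\cS$, after which the weighted sum telescopes to $\beta_i$. Your write-up is in fact a bit more careful than the paper's on two points the paper leaves implicit — verifying that $W_{i+1}$ is conditionally distributed as $D_i$ via freshness of $\H_{\Q_i}$, and working with the partition $\{T_j\}$ rather than summing over all of $\zo^{\Out}$ — but this is the same argument.
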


\begin{proof}
Fix $i\in [d-1]$ and a fixing $w_1,\dots,w_i$ for $W_1, \ldots, W_i$ (which implies a fixing $\q_1=(w_1,\Cc_1,\Cc_2), \ldots, \q_i= (w_i,\Cc_i,\Cc_{i+1})$ for $\Q_1, \ldots, \Q_i$). We write
\begin{align}
\Ex[\alpha_{i+1}] &= \sum_{z\in \zo^\Out} \prob{w\la D_i}{\Cc_{i+1}(w) = z} \cdot \prob{w \la \Cc_{i + 1}^{-1}(z)}{\hit(\Cc_{i + 1}, w)}\\
&= \sum_{z} \frac{\size{\Cc^{-1}_{i+1}(z)}}{\size{\cS}} \cdot \frac{\size{ \set{ w \in
\Cc^{-1}_{i + 1} (z) \colon \hit(\Cc_{i + 1}, w)}}}{\size{\Cc^{-1}_{i + 1} (z) }}\nonumber\\
&= \sum_{z} \frac{\size{\set{ w \in \Cc^{-1}_{i + 1} (z) \colon\hit(\Cc_{i + 1}, w)} }}{\size{\cS}},\nonumber
\end{align}
where $\cS \eqdef \Cc^{-1}_i (\Cc_i(w_i))$ and $\Out$ is the output length of $\Cc_{i+1}$. Note that while $\Q_{i+1}$ is not determined by $\q_1, \ldots, \q_i$, the circuit $\Cc_{i + 1}$ is. In addition, since \Ac is in a normal form,\footnote{This is the only place throughout the whole proof, where the normal-from assumption that \Ac is in a normal form is being used.} the circuit $\Cc_{i +1}$ is an extension of $\Cc_i$. Thus
\begin{align*}
\Ex[\alpha_{i + 1}] & = \sum_{z\in \zo^\Out} \frac{\size{\set{ w \in \Cc^{-1}_{i + 1} (z) \colon\hit(\Cc_{i + 1}, w)} }}{\size{\cS}}\\
& = \frac{\size{\set{ w \in \cS\colon \hit(\Cc_{i+1}, w)}}}{\size{\cS}} \\
& = \prob{w \la \cS}{ \hit(\Cc_{i+1}, w)} \\
& = \prob{w \la D_i}{ \hit(\Cc_{i+1}, w)} \\
& = \beta_i.
\end{align*}%
\end{proof}

Up to this point, we have reached the conclusion that in order for \Ac to win the game, it must be
that at least one of the $\alpha_{i + 1}$'s is high, implying that $\Jump$ occurs. We
have also seen that the latter requires \Ac to choose a query $\q_i$ that determines a high
$\beta_i$. We claim that in this case, it holds that $\beta_i$ is \emph{significantly
larger} than $\alpha_i$. Formally,

\begin{definition}
Let $\Gap_i$, for $i\in [d]$, be the event $\set{\beta_i > \max \setb{ 2 \alpha_i, \left(\frac{\delta^2}{8}\right)^{d+2}}}$.
\end{definition}
The following claim states that with a noticeable probability, there exists an index $i$ such that $\Gap_i$ occurs and $\Jump_{\leq i}$ does not occur. In other words, $\beta_i$ is significantly larger than $\alpha_j$ for all $j\leq i$. We later show that this $\beta_i$ enables \Mc to retrieve $\pi^{-1} (y)$ before \Ac produces any $y$-hits.

\begin{claim}\label{claim:GapFirst}
Let $\GapFirst$ be the event $\set{\exists i\in [d] \colon \Gap_i \land \neg \Jump_{\leq i}}$, then $\pr{\GapFirst} \ge \delta/4$.
\end{claim}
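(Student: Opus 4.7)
The plan is to leverage the lower bound $\Pr[\Jump] \ge \delta/2$ from Claim~\ref{claim:Jump}, and to upper bound $\Pr[\Jump \wedge \neg \GapFirst]$ by $\delta/4$, so that $\Pr[\GapFirst] \ge \Pr[\Jump] - \Pr[\Jump \wedge \neg \GapFirst] \ge \delta/4$. The guiding intuition is that a ``first jump'' at round $j$ must, with high conditional probability, be preceded by a gap at round $j-1$: Claim~\ref{claim:expectation} asserts $\Ex[\alpha_j \mid W_1, \dots, W_{j-1}] = \beta_{j-1}$, so unless $\beta_{j-1}$ is itself large (that is, $\Gap_{j-1}$ holds) it is rare for $\alpha_j$ to cross the elevated jump threshold.

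First I would note that $\Jump_1$ cannot occur since $\alpha_0 = \alpha_1 = 0$, so on the event $\Jump$ the first-jump index $J := \min\{i : \Jump_i\}$ is well defined and satisfies $J \ge 2$, and by definition $\neg \Jump_{\leq J-1}$ holds. Thus $\Gap_{J-1}$ would exhibit $i = J-1$ as a witness for $\GapFirst$, and hence
\begin{align*}
\Pr[\Jump \wedge \neg \GapFirst] \le \sum_{j=2}^{d} \Pr\bigl[\Jump_j \wedge \neg \Jump_{\leq j-1} \wedge \neg \Gap_{j-1}\bigr].
\end{align*}

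For each $j$ I would condition on $W_1, \dots, W_{j-1}$. Under this conditioning $\alpha_{j-1}$ and $\beta_{j-1}$ are determined, as are the indicators of $\neg \Jump_{\leq j-1}$ and $\neg \Gap_{j-1}$, while $\alpha_j$ remains random with mean $\beta_{j-1}$ by Claim~\ref{claim:expectation}. The event $\Jump_j$ requires $\alpha_j > \theta_j := \max\{(8/\delta^2)\alpha_{j-1}, (\delta^2/8)^{d+1}\}$, whereas $\neg \Gap_{j-1}$ caps $\beta_{j-1} \le \max\{2 \alpha_{j-1}, (\delta^2/8)^{d+2}\}$. A short case split on which term realises this latter maximum, combined with Markov's inequality, should give
\begin{align*}
\Pr[\Jump_j \mid W_1, \dots, W_{j-1}] \le \beta_{j-1}/\theta_j \le \delta^2/4
\end{align*}
in each case (with the degenerate subcase $\alpha_{j-1} = 0$ forced into the ``constant'' branch).

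Summing over $2 \le j \le d$ and using $d\delta \le 1$ (which follows from $\delta \le 1/\aq$ together with the depth being bounded by the augmented query complexity) yields $\Pr[\Jump \wedge \neg \GapFirst] \le d \delta^2 / 4 \le \delta/4$, and hence $\Pr[\GapFirst] \ge \delta/2 - \delta/4 = \delta/4$, as required. I do not expect any serious obstacle here; the only care needed is the two-case split on the $\max$'s in the definitions of $\Jump_j$ and $\Gap_{j-1}$, and checking that the constants in those definitions have been tuned precisely so that $\beta_{j-1}/\theta_j$ is $\delta^2$-small regardless of which branch of the maximum dominates.
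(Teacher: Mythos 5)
Your proposal is correct and follows essentially the same route as the paper: you bound $\Pr[\Jump \wedge \neg\GapFirst]$ by summing over the first-jump index, and the per-round bound $\delta^2/4$ you obtain via Markov's inequality (using \cref{claim:expectation}) together with the case split on the maxima is precisely the content of the paper's \cref{claim:jump-given-notgap}. The only cosmetic difference is that you inline that case analysis rather than citing it as a separate claim.
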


For proving \cref{claim:GapFirst} we use the following claim, showing that unless $\beta_i$ is significantly larger than $\alpha_i$, then $\alpha_{i + 1}$ is not significantly larger than $\alpha_i$.
\begin{claim}\label{claim:jump-given-notgap}
$\Pr[\Jump_{i + 1} \mid \neg \Gap_i] \leq \delta^2/4$ for every $i \in [d-1]$.
\end{claim}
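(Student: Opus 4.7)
The plan is to apply a conditional Markov inequality, invoking \cref{claim:expectation} to compute the relevant expectation. First I would observe that in the single-path setting under discussion, the queries $\Q_1,\dots,\Q_i$ (and hence the circuit $\Cc_{i+1}$, the value $\alpha_i$, and the event $\Gap_i$) are all determined by the \Sam-answers $W_1,\dots,W_i$. So it suffices to fix an arbitrary realization $w_1,\dots,w_i$ of $W_1,\dots,W_i$ for which $\neg\Gap_i$ holds, and prove that under this conditioning $\Pr[\Jump_{i+1}]\le\delta^2/4$; averaging over such realizations then gives the claim.

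Under this conditioning, $\beta_i$ is a fixed scalar satisfying $\beta_i\le\max\{2\alpha_i,\,(\delta^2/8)^{d+2}\}$, while $\alpha_{i+1}$ is a non-negative random variable (a function of $W_{i+1}=\Sam(\Q_i)$, whose remaining randomness is $\H_{\Q_i}$) with expectation exactly $\beta_i$ by \cref{claim:expectation}. Writing $M=\max\{(8/\delta^2)\alpha_i,\,(\delta^2/8)^{d+1}\}$ for the threshold that defines $\Jump_{i+1}$, Markov's inequality yields $\Pr[\alpha_{i+1}>M]\le\beta_i/M$, so the whole task reduces to verifying $\beta_i/M\le\delta^2/4$.

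To finish, I would split into cases based on which term dominates in the upper bound on $\beta_i$. If $\beta_i\le 2\alpha_i$, then in the sub-case $M=(8/\delta^2)\alpha_i$ one gets $\beta_i/M\le 2\alpha_i/\bigl((8/\delta^2)\alpha_i\bigr)=\delta^2/4$ directly, while in the sub-case $M=(\delta^2/8)^{d+1}$ the defining inequality $(\delta^2/8)^{d+1}\ge(8/\delta^2)\alpha_i$ forces $\alpha_i\le(\delta^2/8)^{d+2}$, and hence $\beta_i/M\le 2(\delta^2/8)^{d+2}/(\delta^2/8)^{d+1}=\delta^2/4$. In the remaining case $\beta_i\le(\delta^2/8)^{d+2}$, using $M\ge(\delta^2/8)^{d+1}$ gives $\beta_i/M\le\delta^2/8\le\delta^2/4$.

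I do not anticipate any genuine obstacle; the only subtle point is that the Markov bound requires $M>0$, which is exactly why the definitions of $\Jump$ and $\Gap$ include the additive floors $(\delta^2/8)^{d+1}$ and $(\delta^2/8)^{d+2}$ — they guarantee $M>0$ even when $\alpha_i=0$, and the factor-of-$(\delta^2/8)$ gap between the two floors is precisely what makes the case analysis close with the claimed bound $\delta^2/4$.
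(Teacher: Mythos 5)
Your proposal is correct and uses essentially the same approach as the paper: condition on $\neg\Gap_i$ (equivalently, fix $W_1,\dots,W_i$), apply Markov's inequality to $\alpha_{i+1}$ using $\Ex[\alpha_{i+1}\mid W_1,\dots,W_i]=\beta_i$ from Claim~\ref{claim:expectation}, and exploit the factor-of-$(\delta^2/8)$ gap between the thresholds defining $\Jump_{i+1}$ and $\neg\Gap_i$. The only cosmetic difference is that you apply Markov directly at the $\Jump$-threshold $M$ and then verify $\beta_i/M\le\delta^2/4$ by cases, whereas the paper applies Markov at the auxiliary threshold $4\beta_i/\delta^2$ and then observes the residual event $\Jump_{i+1}\land\neg\Gap_i\land\{\alpha_{i+1}\le 4\beta_i/\delta^2\}$ is empty; since $4\beta_i/\delta^2\le M$ under $\neg\Gap_i$, these are the same computation.
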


\begin{proof}
We write
\begin{align}
\pr{\Jump_{i + 1} \mid \neg \Gap_i} \leq \pr{\alpha_{i + 1} > \beta_i \cdot \delta^2/4 } + \pr{\Jump_{i + 1} \mid \neg \Gap_i \land \set{\alpha_{i + 1} \leq \beta_i \cdot \delta^2/4} }
\end{align}%
\cref{claim:expectation} and Markov's inequality imply that
\begin{align}
\pr{\alpha_{i + 1} > \frac{4}{\delta^2}\cdot \beta_i} \leq \delta^2/4
\end{align}

Since the event $\set{\Jump_{i + 1} \cap \neg \Gap_i \cap \set{\alpha_{i + 1} \leq \frac{4}{\delta^2}\cdot \beta_i}}$ is empty, we conclude that
\begin{align*}
\pr{\Jump_{i + 1} \mid \neg \Gap_i} \leq \delta^2/4.
\end{align*}
\end{proof}
Given \cref{claim:jump-given-notgap}, we prove \cref{claim:GapFirst} as follows.
\begin{proof}[Proof of \cref{claim:GapFirst}]
Compute
\begin{align}
\pr{\Jump \land \neg \GapFirst}&\le \sum_{i\in [d]} \pr{\Jump_i \land \neg \Jump_{\leq i-1} \land \neg \Gap_i}\\
&\leq \sum_{i\in [d]} \pr{\Jump_i \mid \neg \Gap_i}\nonumber\\
&\leq d\cdot \delta^2/4\nonumber\\
&\leq \delta/4,\nonumber
\end{align}%
where the before to last inequality holds by \cref{claim:jump-given-notgap}, and last inequality holds since $\delta \leq 1/\aq \leq 1/d$.

Since (by \cref{claim:Jump}) $\pr{\Jump} \geq \delta/2$, it follows that $\pr{\GapFirst} \geq \delta/4$.
\end{proof}

\paragraph{Putting it together}
Given the above observation, we are ready to prove \cref{lemma:HittingFixY}.
\begin{proof}[Proof of \cref{lemma:HittingFixY} (Single-path case)]
Let $I$ be the smallest index in $[d-1]$ for which $\Gap_{i}$ occurs, letting $I=\perp$ in case no such event happens. Note that whenever $\GapFirst$ happens, then $I\neq \perp$ and $\Jump_i$ does not occur for all $i\in [I]$. Compute
\begin{align}
\Pr[\Hit_{\leq I-1} \mid \GapFirst] &= \Pr[\Hit_{\leq I-1}\land \neg \Jump_{\leq I-1}\mid \GapFirst]\\
&\leq \prb{\exists i\in [d] \colon \Hit_{\leq i}\land \neg \Jump_{\leq i}} \cdot \frac1 {\Pr[\GapFirst]}\nonumber\\
&\le\frac{\delta^5}{512 } \cdot \frac4\delta < 1/2,\nonumber
\end{align}
where the second inequality holds by \cref{claim:GapFirst,claim:HitImpliesJump}. In addition, for the event $E = \set{\hit(\Cc_{I+1},W_{I+1}) \cap \neg \hit(\Cc_{I},W_{I+1})}$ it holds that
\begin{align}
\lefteqn{\Pr[E\mid \GapFirst, \neg \Hit_{\leq I-1}]}\\
&\geq \pr{\hit(\Cc_{I+1},W_{I+1})\mid \GapFirst, \neg \Hit_{\leq I-1}} - \pr{ \hit(\Cc_{I},W_{I+1})\mid \GapFirst, \neg \Hit_{\leq I-1}}\nonumber\\
&= \exx{\beta_I\mid \GapFirst, \neg \Hit_{\leq I-1}} - \exx{\alpha_I\mid \GapFirst, \neg \Hit_{\leq I-1}}\nonumber\\
&= \exx{\beta_I - \alpha_I\mid \GapFirst, \neg \Hit_{\leq I-1}}\nonumber\\
 & \geq \frac12 \cdot \left(\frac{\delta^2}{8}\right)^{d+2},\nonumber
\end{align}
where the second inequality holds by the definition of $\Gap$. We conclude that
\begin{align*}
\lefteqn{\pr{\Mc^{\pi, \Sam^{\pi, \H}} (y) = \pi^{-1}(y) \land \neg \Hit}} \\
&= \pr{\exists i \colon \hit(\Cc_{i+1},W_{i+1}) \land \neg \Hit_{\leq i}}\\
&\geq \pr{\GapFirst} \cdot \pr{\hit(\Cc_{I+1},W_{I+1}) \land \neg \Hit_{\leq I} \mid \GapFirst}\\
&= \pr{\GapFirst} \cdot \pr{E \land \neg \Hit_{\leq I-1} \mid \GapFirst}\\
&\ge \pr{\GapFirst}\cdot \Pr[\neg \Hit_{\leq I-1} \mid \GapFirst] \cdot \Pr[E \mid \GapFirst, \neg\Hit_{\leq I-1}]\\
&\ge \frac\delta4\cdot \frac12 \cdot \frac12 \cdot \left(\frac{\delta^2}{8}\right)^{d+2}\\
&= \left(\frac{\delta^2}{8}\right)^{d+3}\\
&\geq \delta^{3d}.
\end{align*}%
\end{proof}

%

\subsubsection{Proving \texorpdfstring{\cref{lemma:HittingFixY}}{\cref{lemma:HittingFixY}} --- The General Case}\label{subsection:HittingExtension}
This section extends the proof given in \cref{lemma:HittingFixY} to the general case where \Ac's
queries are not necessarily along a single path. The extension below is mainly technical, and requires no more than
refining some events and notations.


Assuming that $\Q_1, \ldots, \Q_\aq$ are the \Sam-queries asked by \Ac, let $\parent(i)$, for $i\in [s]$, be the index of the query $\parent(\Q_i)$ (\ie the index of $\Q_i$'s parent, see \cref{def:query forest}) in the above query list, letting $\parent(i) = 0$ in case $\parent(\Q_j) = \perp$. Note that unlike the single-path case studies in \cref{sec:singlePass}, the values of $\parent(1),\dots,\parent(\aq)$ are not predetermined (in particular $\parent(i)$ might not be $i-1$). This difference reflects the fact that \Ac might \emph{repetitively} ask the same query, each time dictating \Sam to use \emph{fresh randomness} by slightly modifying the value of the parameter $\Cc_\next$, until the answer serves it best: until there is a big jump in the value of $\alpha$. It turns out that while repeating a query does increase the probability of \Ac to ``win" the game against \Mc (\ie to make a $y$-hit before \Mc inverts $y$), since the expected value $\alpha_i$ is the value of $\alpha_{\parent(i)}$, such repetition does not increase the \Ac's winning probability by too much.

We now describe in detail the required technical changes to the proof of \cref{lemma:HittingFixY}. Fix $y\in \zn$, $\pi\in \Pi$ and $\delta \in [0,1/\aq]$ such that $\prob{\h \la \calH}{\Hit \eqdef \Hit_{\Ac,\pi,\h}(y)} \ge \delta$. The definitions of the following random variables are natural generalization of those given in \cref{sec:singlePass}. Recall that $\hit(\Cc,w)$ is the event that $\Cc(w)$ queries $\pi$ on $\pi^{-1} (y)$. The index $i$ in the following definitions takes values in $[\aq]$.
\begin{definition}
The following random variables are defined \wrt a random execution of $\Ac^{\pi, \Sam^{\pi,\H}}(y)$, where $\H$ is uniformly drawn from $\calH$.

\begin{itemize}
\item $\Q_1 = (W_1,\Cc_1,\Cc_{\next,1}, \ldots, \Q_s = (W_\aq,\Cc_\aq,\Cc_{\next,\aq})$, denote \Ac's queries to \Sam, and $W^\ans_1,\dots,W^\ans_\aq$ denote their answers.\footnote{Since $\Ac$ is in a normal form, for every $i\in [\aq]$ it holds that $W_i = W^\ans_{\parent(i)}$ and $\Cc_i = \Cc_{\next,\parent(i)}$, letting $W_0= \Cc_{\next,0}= \perp$. Note that $\Q_1,\dots,\Q_i$ are determined by $W^\ans_1,\dots,W^\ans_{i-1}$ and the circuits $\Cc_1,\Cc_{\next,1}, \ldots, \Cc_i,\Cc_{\next,i}$.}

 \item $\Hit_i$ is the event $\hit(\Cc_i,W^\ans_{i})$, letting $\hit(\perp,\cdot)= \emptyset$, and $\Hit_{\leq i} \eqdef \bigcup_{j\in [i]} \Hit_j$.

\item $D_i$ is the uniform distribution over $\Cc_i^{-1} (\Cc_i(W_i))$ in case $W_i \neq \perp$, and the uniform distribution over $\zo^\Inp$ otherwise.

 \item $\alpha_i = \prob{w \la D_i}{\hit(\Cc_i,w)}$ in case $W_i \neq \perp$, and $\alpha_i = 0$ otherwise.

 \item $\beta_i = \prob{w \la D_i}{\hit(\Cc_{\next,i},w)}$.

 \item $\Gap_i$ is the event $\beta_i > \max \set{ 2 \alpha_i, \left(\frac{\delta^2}{8}\right)^{d+2}}$

\end{itemize}
\end{definition}

In addition, we make use of the following definition.

\begin{definition}~
\begin{itemize}
\item $D^\dec_i$ is the uniform distribution over $\Cc_{\next,i}^{-1} (\Cc_{\next,i}(W^\ans_i))$, and $\alpha^\dec_i = \prob{w \la D^\ans_i}{\hit(\Cc_{\next,i},w)}$.\footnote{Note that for any $j$ with $p(j)=i$, if such exists, it holds that $D^\dec_i = D_j$ and $\alpha^\dec_i = \alpha_j$.}

\item $\PJump_i$ (for ``Potential $\Jump$'') is the event $\alpha^\ans_i > \max \set{ \frac{8} {\delta^2} \cdot \alpha_i}$.
\end{itemize}
\end{definition}

The following claims are analogues to the claims given in \cref{sec:singlePass}.

\begin{claim}\label{claim:GJump}
$\Pr[\PJump] > \delta/2$.
\end{claim}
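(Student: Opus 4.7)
The plan is to mirror the proof of \cref{claim:Jump} by establishing the analogue of \cref{claim:HitImpliesJump}, namely that $\Pr[\exists i \in [\aq] : \Hit_i \land \neg \bigcup_{j < i}\PJump_j] \leq \delta^5/512$. Combining this with the assumption $\Pr[\Hit] \geq \delta$ then yields $\Pr[\PJump] \geq \Pr[\Hit] - \Pr[\Hit \land \neg \PJump] \geq \delta - \delta^5/512 > \delta/2$, exactly as in the single-path case.

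The observation driving the induction is that the normal-form constraint makes the query list a forest: every non-root query $\Q_i$ has a parent $\Q_{\parent(i)}$ with $\parent(i) < i$, and unwinding the definitions gives $D_i = D^\ans_{\parent(i)}$ and $\Cc_i = \Cc_{\next,\parent(i)}$, so $\alpha_i = \alpha^\ans_{\parent(i)}$. At roots ($\parent(i) = 0$) we have $\alpha_i = 0$. Under $\neg\bigcup_{j < i}\PJump_j$ the pointwise bound $\alpha^\ans_j \leq \max\{\tfrac{8}{\delta^2}\alpha_j,\,(\delta^2/8)^{d+1}\}$ holds for every proper ancestor $\Q_j$ of $\Q_i$, and the ancestor chain has length at most $d$ by the depth assumption on $\Ac$. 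An induction structurally identical to the one in the proof of \cref{claim:HitImpliesJump} then gives $\alpha_i \leq (\delta^2/8)^{3}$ for every non-root $\Q_i$.

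To translate this into the desired probability bound I use that $\alpha_i$ and the event $\bigcup_{j<i}\PJump_j$ are both measurable with respect to the history through step $i-1$, while $\Pr[\Hit_i \mid \text{history}] = \alpha_i$ holds pointwise by the definitions of $D_i$ and $\Hit_i$. Hence
\begin{equation*}
\Pr\bigl[\Hit_i \land \neg\textstyle\bigcup_{j<i}\PJump_j\bigr] = \Ex\bigl[\alpha_i \cdot 1_{\neg\bigcup_{j<i}\PJump_j}\bigr] \leq (\delta^2/8)^3,
\end{equation*}
and a union bound over $i \in [\aq]$, together with $\aq \leq 1/\delta$, yields $\Pr[\Hit \land \neg \PJump] \leq \aq \cdot (\delta^2/8)^3 \leq \delta^5/512$.

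The main technical point, relative to the single-path argument, is that the inequality chain now runs along an ancestor path in a forest with possible branching, rather than along a single path of length $d$. This is handled by doing the union bound over all $\aq$ queries instead of over $d$ levels; because the forest depth is still bounded by $d$, each ancestor chain admits the same geometric decay of $\alpha$, and the slack $\aq \leq 1/\delta$ absorbs the loss from the larger union.
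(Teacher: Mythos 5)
Your proposal is correct and matches the paper's intended argument: the paper's proof of this claim is exactly ``same as the proof of \cref{claim:Jump}, replacing \cref{claim:HitImpliesJump} with \cref{claim:GHitImpliesJump},'' and you have reconstructed precisely the content of \cref{claim:GHitImpliesJump} (including the depth-indexed geometric decay $\alpha_j \leq (\delta^2/8)^{d - \depth(j) + 3}$ along ancestor chains and the replacement of the union bound over $d$ levels by a union bound over $\aq$ queries, absorbed by $\aq \leq 1/\delta$). The one cosmetic difference is that you state the claim with the event $\exists i : \Hit_i \land \neg\PJump_{\leq i-1}$ rather than the paper's $\exists i : \Hit_{\leq i} \land \neg\PJump_{\leq i-1}$, but since $\PJump_{\leq i-1}$ is monotone in $i$ these events coincide, so the argument is the same.
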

\begin{proof}
Same as the proof of \cref{claim:Jump}, replacing \cref{claim:HitImpliesJump} with \cref{claim:GHitImpliesJump}.
\end{proof}
\begin{claim}\label{claim:GHitImpliesJump}
$\pr{\exists i\in [\aq] \colon \Hit_{\leq i}\land \neg \PJump_{\leq i-1}} \le \delta^5/512 $.
\end{claim}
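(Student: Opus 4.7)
The plan is to adapt the single-path argument of \cref{claim:HitImpliesJump} to the forest setting. The key structural observation is that in the general case, the chain of \emph{ancestors} of a query $\Q_j$ in the query forest (rather than its linearly preceding queries) is what controls $\alpha_j$. Specifically, if $\parent(j') = j$ then $W_{j'} = W^\ans_{j}$ and $\Cc_{j'} = \Cc_{\next, j}$, so $D_{j'} = D^\dec_{j}$ and hence $\alpha_{j'} = \alpha^\dec_j$. Thus the same ``multiply by $\tfrac{8}{\delta^2}$ per step'' inductive inequality that drove the single-path argument now travels along paths in the forest.

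First, I would rewrite the event $\exists i\in [\aq]\colon \Hit_{\leq i}\land \neg \PJump_{\leq i-1}$ as $\exists j\in [\aq]\colon \Hit_j \land \neg \PJump_{\leq j-1}$ (if $\Hit_j$ holds for some $j\le i$ and $\neg \PJump_{\le i-1}$, then $\neg \PJump_{\le j-1}$ since $j-1\le i-1$), and by the union bound it suffices to show $\pr{\Hit_j \land \neg \PJump_{\leq j-1}} \le (\delta^2/8)^3$ for each $j$. Fixing $j$, let $i_1 < i_2 < \cdots < i_k = j$ be the ancestor chain of $\Q_j$ in the forest, so $i_1$ is a root and $\parent(i_\ell) = i_{\ell-1}$ for $2\le \ell \le k$; note $k\le d$ by the depth bound.

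Next, I would condition on the history of queries and answers up to (but not including) $\Q_j$, under which $\Pr[\Hit_j \mid \text{history}] = \alpha_j$ and each event $\PJump_{j'}$ with $j' < j$ is determined. Under $\neg \PJump_{\leq j-1}$, in particular $\neg \PJump_{i_\ell}$ holds for every $\ell \in [k-1]$ (these are ancestors, so have index $< j$). I would then prove by induction on $\ell$ that $\alpha_{i_\ell} \leq (\delta^2/8)^{d-\ell+3}$: the base case $\alpha_{i_1} = 0$ follows since roots have $\Cc_{i_1} = \perp$ and $\hit(\perp,\cdot) = \emptyset$; the inductive step uses $\alpha_{i_{\ell+1}} = \alpha^\dec_{i_\ell} \leq \max\bigl\{\tfrac{8}{\delta^2} \alpha_{i_\ell},\, (\delta^2/8)^{d+1}\bigr\}$ under $\neg \PJump_{i_\ell}$, giving $\alpha_{i_{\ell+1}} \le \max\bigl\{(\delta^2/8)^{d-\ell+2}, (\delta^2/8)^{d+1}\bigr\} = (\delta^2/8)^{d-\ell+2}$, as in the computation of \cref{claim:HitImpliesJump}. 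Taking $\ell = k \le d$ yields $\alpha_j \le (\delta^2/8)^3$.

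Therefore $\pr{\Hit_j \land \neg \PJump_{\leq j-1}} = \Ex\bigl[\alpha_j \cdot \mathbf{1}_{\neg \PJump_{\leq j-1}}\bigr] \leq (\delta^2/8)^3$, and summing over $j\in [\aq]$ gives $\aq \cdot (\delta^2/8)^3 = \aq\cdot \delta^6 / 512 \leq \delta^5/512$, where the last inequality uses the hypothesis $\delta \le 1/\aq$. The main obstacle is conceptual bookkeeping: one must carefully verify that the relevant $\alpha$-values depend only on the ancestor chain in the forest (so that conditioning on the much stronger event $\neg \PJump_{\le j-1}$ is no tighter than conditioning on $\neg \PJump_{i_\ell}$ for the specific ancestors), and that the forest depth $d$ plays exactly the role that the path length did in \cref{sec:singlePass}. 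Once these are in place, the inductive inequality is identical to the single-path case, and the extra factor of $\aq$ from the union bound is absorbed by the hypothesis $\delta\le 1/\aq$.
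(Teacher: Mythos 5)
Your proposal is correct and follows essentially the same route as the paper, which tersely instructs to re-run the single-path argument with the inequality $\alpha_j \leq (\delta^2/8)^{d-\depth(j)+3}$ (your $\ell$ is exactly the paper's $\depth(j)$, and your observation $\alpha_{i_{\ell+1}} = \alpha^{\dec}_{i_\ell}$ is the paper's own footnote). You have simply unpacked the details — the reduction to $\exists j\colon \Hit_j \land \neg\PJump_{\le j-1}$, the induction along the ancestor chain, and the absorption of the factor $\aq$ via $\delta \le 1/\aq$ — all of which are accurate.
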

\newcommand{\depth}{{\rm depth}}
\begin{proof}
Same as the proof of \cref{claim:Jump}, replacing \cref{eq:HitImpliesJump2} with
\begin{align}\label{eq:GHitImpliesJump2}
\alpha_{j} \leq \left(\frac{\delta^2}{8}\right)^{d - \depth(j) + 3},
\end{align}
where $\depth(j)= 0$ for $j=0$, and $\depth(\parent(j))+1$ otherwise.
\end{proof}

\begin{claim}\label{claim:Gexpectation}
$\Ex[\alpha^\ans_i \mid W^\ans_1,\dots,W^\ans_{i-1}] = \beta_i$ for every $i\in [\aq]$.
\end{claim}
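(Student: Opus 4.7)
My plan is to adapt the proof of Claim~\ref{claim:expectation} essentially verbatim, with the only adjustments being notational: here the pair $(\Cc_i,\Cc_{\next,i})$ plays the role of $(\Cc_i,\Cc_{i+1})$, the conditioning is on the answers $W^\ans_1,\ldots,W^\ans_{i-1}$ rather than on $W_1,\ldots,W_i$, and the random variable whose expectation we compute is $\alpha^\ans_i$ (a function of $W^\ans_i$) rather than $\alpha_{i+1}$ (a function of $W_{i+1}$).

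First I would observe that after fixing $W^\ans_1=w^\ans_1,\dots,W^\ans_{i-1}=w^\ans_{i-1}$, the entire query sequence $\Q_1,\dots,\Q_i$ is determined, because \Ac is a deterministic circuit and (by the normal-form structure) each $\Q_j$ is a function of the previous oracle answers alone. In particular $\Cc_i,\Cc_{\next,i}$ and $W_i = W^\ans_{\parent(i)}$ are all fixed, so the only remaining randomness relevant to $\alpha^\ans_i$ is $W^\ans_i$, whose conditional distribution is precisely $D_i$ (uniform on $\cS := \Cc_i^{-1}(\Cc_i(W_i))$). Letting $\Out$ be the output length of $\Cc_{\next,i}$, I then decompose by $z = \Cc_{\next,i}(W^\ans_i)$:
\begin{align*}
\Ex[\alpha^\ans_i \mid W^\ans_1,\dots,W^\ans_{i-1}]
&= \sum_{z\in\zo^\Out} \prob{w\la D_i}{\Cc_{\next,i}(w)=z}\cdot\prob{w\la\Cc_{\next,i}^{-1}(z)}{\hit(\Cc_{\next,i},w)}\\
&= \sum_{z} \frac{|\Cc_{\next,i}^{-1}(z)\cap\cS|}{|\cS|}\cdot\frac{|\{w\in\Cc_{\next,i}^{-1}(z):\hit(\Cc_{\next,i},w)\}|}{|\Cc_{\next,i}^{-1}(z)|}.
\end{align*}

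The crucial (and only) non-trivial step is to cancel the two $|\Cc_{\next,i}^{-1}(z)|$-type factors. For this I invoke the normal-form assumption, which guarantees that $\Cc_{\next,i}$ is a circuit extension of $\Cc_i$ (Definition~\ref{def:CircuitExtension}); hence the fibers of $\Cc_{\next,i}$ refine those of $\Cc_i$, i.e.\ every fiber of $\Cc_{\next,i}$ is either contained in $\cS$ or disjoint from it. Consequently, whenever $\Cc_{\next,i}^{-1}(z)\cap\cS\neq\emptyset$ we have $|\Cc_{\next,i}^{-1}(z)\cap\cS|=|\Cc_{\next,i}^{-1}(z)|$, and the sum collapses to
\begin{align*}
\sum_{z:\,\Cc_{\next,i}^{-1}(z)\subseteq\cS}\frac{|\{w\in\Cc_{\next,i}^{-1}(z):\hit(\Cc_{\next,i},w)\}|}{|\cS|}
=\frac{|\{w\in\cS:\hit(\Cc_{\next,i},w)\}|}{|\cS|}
=\prob{w\la D_i}{\hit(\Cc_{\next,i},w)}=\beta_i,
\end{align*}
as desired. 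This is the exact analogue of the telescoping in Claim~\ref{claim:expectation}, and as there, it is the unique place in the argument where the normal-form hypothesis is used.

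I do not anticipate a genuine obstacle: the proof is mechanical once one recognizes that the definition of $D^\ans_i$ (uniform on $\Cc_{\next,i}^{-1}(\Cc_{\next,i}(W^\ans_i))$) is the natural two-step sampling whose outer step is $D_i$ and whose inner step is the uniform distribution on the corresponding $\Cc_{\next,i}$-fiber; averaging inside the fiber then recovers exactly the $\hit$-frequency on $\cS$, which by definition is $\beta_i$. The only mild care required is in the indexing under the query-forest structure (using $W_i = W^\ans_{\parent(i)}$ and the fact that $\parent(i)<i$ so the parent's answer is indeed part of the conditioning), and in allowing the edge case $\parent(i)=0$, where $W_i=\perp$, $D_i$ is uniform on $\zo^{\Inp}$, and the identical argument goes through.
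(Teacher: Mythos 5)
Your proof is correct and follows the same route as the paper, which simply references the single-path computation in Claim~\ref{claim:expectation} with $\Cc_{i+1}$ replaced by $\Cc_{\next,i}$. If anything you are slightly more careful than the printed argument in Claim~\ref{claim:expectation}: you first write $\prob{w\la D_i}{\Cc_{\next,i}(w)=z}=\size{\Cc_{\next,i}^{-1}(z)\cap\cS}/\size{\cS}$ and only then invoke the extension property to drop the intersection, whereas the paper passes directly to $\size{\Cc_{i+1}^{-1}(z)}/\size{\cS}$; both are justified by the same refinement-of-fibers observation, and your handling of $W_i=W^{\ans}_{\parent(i)}$ and the $\parent(i)=0$ root case is exactly as intended.
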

\begin{proof}
Same as the proof of \cref{claim:expectation}, replacing $\Cc_{i+1}$ with $\Cc_{\next,i}$.
\end{proof}

\begin{claim}\label{claim:GGapFirst}
Let $\GapFirst$ the event $\set{\exists i\in [\aq] \colon \Gap_i \land \neg \PJump_{\leq i-1}}$, then $\pr{\GapFirst} \ge \delta/4$.
\end{claim}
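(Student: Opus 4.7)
The plan is to mirror the strategy used in the single-path case (\cref{claim:GapFirst}), adapting each of the two ingredients to the general setting. The first ingredient will be a ``$\PJump$-given-no-$\Gap$'' estimate, and the second will be the union bound plus \cref{claim:GJump}.

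First I would prove the analogue of \cref{claim:jump-given-notgap}: for every $i \in [\aq]$,
\[
\Pr[\PJump_i \mid \neg \Gap_i] \leq \delta^2/4.
\]
The key tool is \cref{claim:Gexpectation}, which gives $\Ex[\alpha^\ans_i \mid W^\ans_1,\dots,W^\ans_{i-1}] = \beta_i$. Since the pair $(\alpha_i,\beta_i)$ is determined by $W^\ans_1,\dots,W^\ans_{i-1}$ (they only depend on $\Q_i$ and on $W_i = W^\ans_{\parent(i)}$), applying Markov's inequality conditioned on this history yields
\[
\Pr\!\left[\alpha^\ans_i > \tfrac{4}{\delta^2}\cdot \beta_i \,\Big|\, W^\ans_1,\dots,W^\ans_{i-1}\right] \leq \delta^2/4,
\]
and hence the unconditional bound $\Pr[\alpha^\ans_i > \tfrac{4}{\delta^2}\beta_i] \leq \delta^2/4$. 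I would then verify by the same case analysis used in the single-path proof that the event $\PJump_i \cap \neg \Gap_i \cap \{\alpha^\ans_i \leq \tfrac{4}{\delta^2}\beta_i\}$ is empty: on $\neg \Gap_i$ one has $\beta_i \leq \max\{2\alpha_i,\,(\delta^2/8)^{d+2}\}$, so in either sub-case $\alpha^\ans_i \leq \tfrac{4}{\delta^2}\beta_i$ forces $\alpha^\ans_i \leq \tfrac{8}{\delta^2}\alpha_i$ (possibly up to the small additive slack), contradicting $\PJump_i$.

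Second, I would upper bound $\Pr[\PJump \land \neg \GapFirst]$ by a union bound over the first index at which $\PJump$ fires. Note that if $I$ is the minimal index with $\PJump_I$, then $\neg \PJump_{\leq I-1}$ holds, so $\neg \GapFirst$ forces $\neg \Gap_I$. Consequently,
\begin{align*}
\Pr[\PJump \land \neg \GapFirst]
&\leq \sum_{i\in [\aq]} \Pr\!\left[\PJump_i \land \neg \PJump_{\leq i-1} \land \neg \Gap_i\right] \\
&\leq \sum_{i \in [\aq]} \Pr[\PJump_i \mid \neg \Gap_i] \;\leq\; \aq\cdot \delta^2/4 \;\leq\; \delta/4,
\end{align*}
where the last inequality uses the hypothesis $\delta \leq 1/\aq$.

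Combining with \cref{claim:GJump}, which gives $\Pr[\PJump] \geq \delta/2$, yields
\[
\Pr[\GapFirst] \;\geq\; \Pr[\PJump] - \Pr[\PJump \land \neg \GapFirst] \;\geq\; \delta/2 - \delta/4 \;=\; \delta/4,
\]
as claimed. The only mildly subtle step is the first one, where one must be careful that the conditioning in \cref{claim:Gexpectation} truly determines $(\alpha_i,\beta_i)$ (so that the Markov argument is applied to a genuinely random variable whose conditional expectation equals $\beta_i$); the rest is the same bookkeeping as in the single-path case, with $\aq$ replacing $d$ in the union bound.
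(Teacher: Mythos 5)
Your proof is correct and follows essentially the same route as the paper, which simply refers the reader to the single-path proof of \cref{claim:GapFirst} with \cref{claim:jump-given-notgap} replaced by \cref{claim:Gjump-given-notgap}. You spell out the union bound over the minimal index $I$ at which $\PJump$ fires more explicitly than the paper does, and correctly observe that with the index shift built into the definitions of $\PJump_i$ and of $\GapFirst$ (using $\neg\PJump_{\leq i-1}$ rather than $\neg\Jump_{\leq i}$), the bound $\Pr[\PJump_i \mid \neg\Gap_i]$ is exactly what is needed at the first firing index, giving a cleaner bookkeeping than the single-path display. The re-derivation of \cref{claim:Gjump-given-notgap} is unnecessary since the paper supplies it, but your sketch is correct once one reads the definition of $\PJump_i$ as including the additive floor $(\delta^2/8)^{d+1}$ (as in $\Jump_i$), which is clearly the intent despite the singleton $\max$ in the paper's statement; that floor is what makes the case $\beta_i \le (\delta^2/8)^{d+2}$ vacuous.
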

\begin{proof}
Same as the proof of \cref{claim:GapFirst}, replacing \cref{claim:jump-given-notgap} with \cref{claim:Gjump-given-notgap}, and recalling that $\delta < 1/\aq$.
\end{proof}

\begin{claim}\label{claim:Gjump-given-notgap}
$\Pr[\PJump_{i} \mid \neg \Gap_i] \leq \delta^2/4$ for every $i \in [\aq]$.
\end{claim}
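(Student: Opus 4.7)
The plan is to mirror the proof of Claim~\ref{claim:jump-given-notgap} from the single-path case, substituting Claim~\ref{claim:Gexpectation} for Claim~\ref{claim:expectation}. The first thing I would observe (this is implicit in the footnote attached to the random-variable definitions, and uses the normal-form assumption $W_i=W^\ans_{\parent(i)}$ and $\Cc_i=\Cc_{\next,\parent(i)}$) is the measurability fact that the query $\Q_i=(W_i,\Cc_i,\Cc_{\next,i})$ is determined by $W^\ans_1,\dots,W^\ans_{i-1}$. Consequently $D_i$, $\alpha_i$, $\beta_i$ and the event $\Gap_i$ are all functions of $W^\ans_1,\dots,W^\ans_{i-1}$, whereas $\alpha^\ans_i$ is allowed to depend on the fresh randomness (namely $\H_{\Q_i}$) that \Sam\ uses to produce $W^\ans_i$. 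This is the precise analogue of the ``$\Q_i$ is determined by $W_1,\dots,W_i$ and the circuits'' observation that drove the single-path proof.

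Step~1 (decomposition). Writing
\[
\pr{\PJump_i \mid \neg \Gap_i} \le \pr{\alpha^\ans_i > \tfrac{4}{\delta^2}\beta_i \mid \neg \Gap_i} + \pr{\PJump_i \;\wedge\; \alpha^\ans_i \le \tfrac{4}{\delta^2}\beta_i \mid \neg \Gap_i},
\]
I reduce the bound to controlling the two summands separately.

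Step~2 (Markov on the first summand). By Claim~\ref{claim:Gexpectation}, $\Ex[\alpha^\ans_i \mid W^\ans_1,\dots,W^\ans_{i-1}] = \beta_i$. Apply Markov's inequality conditionally on each fixing $w^\ans_1,\dots,w^\ans_{i-1}$ consistent with $\neg\Gap_i$ (this conditioning is legitimate precisely because $\Gap_i$ is a function of these variables by Step~1), and then average over those fixings. This yields $\pr{\alpha^\ans_i > \tfrac{4}{\delta^2}\beta_i \mid \neg\Gap_i} \le \delta^2/4$.

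Step~3 (the second summand is empty). Under $\neg \Gap_i$, we have $\beta_i \le \max\set{2\alpha_i,(\delta^2/8)^{d+2}}$, so $\alpha^\ans_i \le \tfrac{4}{\delta^2}\beta_i$ implies $\alpha^\ans_i \le \max\set{\tfrac{8}{\delta^2}\alpha_i,\tfrac{4}{\delta^2}\cdot(\delta^2/8)^{d+2}} = \max\set{\tfrac{8}{\delta^2}\alpha_i,\tfrac12(\delta^2/8)^{d+1}}$, which contradicts $\PJump_i$ (reading the definition of $\PJump_i$ as the natural general-case analogue of $\Jump_i$, with a $(\delta^2/8)^{d+1}$ floor in the max, matching the single-path statement). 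Combining Steps~2 and~3 gives the desired $\delta^2/4$ bound. The only delicate point—and the only place where normal-form is invoked—is the measurability claim in Step~1; once that is in hand, the argument is a verbatim translation of the single-path proof, with the shift ``$\alpha_{i+1}$ relative to $\alpha_i$'' replaced by ``$\alpha^\ans_i$ relative to $\alpha_i$'' (capturing the parent/child relationship encoded by $\parent$).
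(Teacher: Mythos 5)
Your proof is correct and follows the paper's own route: the paper's proof is a one-line substitution into the single-path argument (\cref{claim:jump-given-notgap}), and you simply spell out that substitution with the Markov/emptiness decomposition that \cref{claim:jump-given-notgap} uses. You have also correctly recognized that the paper's displayed definition of $\PJump_i$ is missing the $\left(\delta^2/8\right)^{d+1}$ floor (the $\max$ has only one argument as written), and patched it to the intended analogue of $\Jump_i$ --- without that floor, the emptiness step would fail when $\alpha_i$ is very small, so the patch is not optional but is clearly the intended reading.
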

\begin{proof}
Same as the proof of \cref{claim:jump-given-notgap}, replacing $\alpha_{i + 1}$ with $\alpha^\ans_i$, and $\Jump_{i + 1}$ with $\PJump_i$.
\end{proof}

As in \cref{sec:singlePass}, the proof of \cref{lemma:HittingFixY} (here for the general case) easily follow the above claims.
\begin{proof}[Proof of \cref{lemma:HittingFixY} (General case)]
Same lines as the proof given in \cref{sec:singlePass}, replacing \cref{claim:GapFirst,claim:HitImpliesJump} with \cref{claim:GGapFirst,claim:GHitImpliesJump}.
\end{proof}

\section{Lower Bounds on Statistically Hiding commitments}\label{section:LowerBounds}

In this section we combine the results presented in \cref{section:PowerOfSam,section:inverting} to derive our lower bounds on black-box constructions of statistically-hiding commitments from trapdoor permutations. Throughout the section, we assume for ease of notation that the integer functions $d$, $c$ and $\s$, measuring the round and sender communication complexity of the considered commitment scheme, and the hardness of the considered trapdoor permutations family, respectively, are non-decreasing.

\subsection{The Round Complexity Lower Bound}\label{section:RoundCompLB}
In this section we give lower bound on the \emph{round complexity} of black-box constructions of statistically hiding commitment from trapdoor permutations. We first give two results for the case where the reduction is to polynomially hard families . The first result is for ``security-preserving" constructions, and the second one is for arbitrary ones.

\begin{theorem}[restating \cref{thm:Intro:RoundComplexityLB}]
Any $O(n)$-security-parameter-expanding, fully black-box construction of a weakly binding and
honest-receiver statistically hiding commitment scheme from a polynomially hard family of trapdoor permutations has
$\Omega \left( n / \log n \right)$ communication rounds.
\end{theorem}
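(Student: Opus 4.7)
The plan is to argue by contradiction, combining the power of $\Sam$ established in \cref{section:PowerOfSam} with its weakness against low-depth normal-form algorithms established in \cref{section:inverting}. Suppose there exists a fully black-box construction $(\Sc,\Rc,\Vc,\Ac)$ with round complexity $d(n) \in o(n/\log n)$ and security-parameter expansion $\ell(n) \in O(n)$, reducing weakly binding honest-receiver statistically hiding commitments to a polynomially hard family of trapdoor permutations. I would fix a uniformly random $\tau \la \Tau$ and $\h \la \calH$, and construct a malicious sender $\Ss$ that breaks the binding of $\Com^\tau = (\Sc^\tau,\Rc^\tau,\Vc^\tau)$ relative to the oracle pair $(\tau,\Sam^{\tau,\h})$.

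The attacker $\Ss$ is obtained by applying \cref{theorem:collisionInLowRound} to the honest protocol $(\Sc^\tau,\Rc^\tau)$ with parameter $k=2$. The resulting $\Ss$ is a deterministic normal-form oracle-aided algorithm of query depth at most $d(n)+1$, issuing at most $d(n)+2$ $\Sam$-queries on circuits of polynomial size. By Properties~1 and~2 of that theorem, the transcript generated by $(\Ss,\Rc^\tau)$ is distributed exactly as in an honest execution, and $\Ss$ outputs two independent samples $R_1,R_2$ from the conditional distribution of $\Sc$'s coins consistent with that transcript. Since $\Com^\tau$ is honest-receiver statistically hiding, for all but a negligible fraction of commit-stage transcripts the posterior distribution on the committed bit is statistically close to uniform; hence with probability $\tfrac{1}{2} - \negl(n)$ the pair $(R_1,R_2)$ decommits to opposite bits, meaning that $\Ss$ breaks weak binding with noticeable probability over the choice of $\tau,\h$.

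By the fully black-box security proof (\cref{definition:fully}), the reduction $\Ac^{\tau,\Ss}$ must then invert $\tau$ on a random image with probability at least $1/\s(n)$ for some polynomial $\s$. Crucially, $\Ac$ itself accesses $\Sam$ only indirectly, through black-box invocations of $\Ss$. Since $\Ss$ is a normal-form algorithm, \cref{prop:InseritNormalForm} implies that $\Ac^{\tau,\Sam^{\tau,\h}}$ is itself a normal-form oracle-aided circuit. Because $\Ac$ on security parameter $1^n$ invokes $\Ss$ on security parameters at most $1^{\ell(n)}$, and $\Ss$ on input length $m$ has query depth $d(m)+1$, the circuit $\Ac^{\Ss}$ has query depth at most $d(\ell(n))+1 \in o(n/\log n)$. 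Moreover, its augmented query complexity $\aq(n)$ is bounded by some fixed polynomial in $n$, since $\Ac$, the circuits fed to $\Sam$ by $\Ss$, and the honest-sender circuit $\Sc^\tau$ are all of polynomial size.

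To close the argument I would invoke \cref{theorem:TDPHardForSam}: for sufficiently large $n$, any $\aq$-augmented-query, $d$-depth normal-form circuit satisfying $(3\aq)^{3d+1} < 2^{n/8}$ inverts a random trapdoor permutation (relative to $\Sam$) with probability at most $4/\aq$. With our parameters, $(3\aq(n))^{3d(n)+1} = 2^{o(n/\log n) \cdot O(\log n)} = 2^{o(n)}$, so the hypothesis is comfortably satisfied, and the inversion probability is $o(1/\poly(n))$, contradicting the assumed success of $\Ac$. The main obstacle in executing this plan is purely the bookkeeping of the two compositions: one must verify that the circuits supplied by the attacker from \cref{theorem:collisionInLowRound} preserve polynomial size when embedded inside $\Ac$, and that the depth bookkeeping of \cref{prop:InseritNormalForm} indeed yields $d(\ell(n))+1$ rather than something worse; both follow from the definitions once the normal-form structure of $\Ss$ and the security-parameter-expansion bound are in place.
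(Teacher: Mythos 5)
Your proposal follows the paper's own two-stage strategy---\cref{theorem:collisionInLowRound} together with \cref{prop:InseritNormalForm} feeds \cref{theorem:TDPHardForSam}---but there is a genuine gap in how you instantiate the attacker $\Ss$.

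You apply \cref{theorem:collisionInLowRound} with $k=2$ and observe that the two samples $R_1,R_2$ disagree on the committed bit with probability about $\tfrac12$. This is not enough. The black-box security reduction in \cref{definition:fully} is only triggered when $\Ss$ breaks the \emph{weak} binding of $\Com^\tau$ as given by \cref{def:binding}, i.e.\ when $\Ss$ produces two valid openings with probability at least $1-1/p(n)$ for the polynomial $p$ witnessing weak binding, so its success probability must approach $1$ rather than merely be bounded away from $0$. A sender succeeding with probability $\approx\tfrac12$ does not violate $(1-1/p(n))$-binding for, say, $p(n)=10$, and you are therefore not entitled to invoke $\Ac$. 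The paper fixes this in \cref{corollary:BreakingLowRCSHC} by calling \cref{theorem:collisionInLowRound} with $k=n$: conditioned on the commit transcript lying in the ``balanced'' set (an event of probability $1-\negl(n)$ by honest-receiver statistical hiding), the chance that all $n$ i.i.d.\ samples of $\Sc$'s input-and-coins carry the same bit is at most $(2/3)^{n-1}$, so the attacker succeeds with all but negligible probability. This change is cost-free for your depth bookkeeping, since the extra $\Sam$-queries all sit at depth $d(n)+1$ and only multiply the augmented query count by a polynomial factor.

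A second, smaller imprecision: \cref{definition:fully} quantifies over a \emph{fixed} $\tau$ for which $\Ss$ breaks weak binding at infinitely many $n$, so ``noticeable probability over the choice of $\tau,\h$'' is not quite the statement you need. The paper first shows that for every $\tau$ and each $n$ separately the failure event over $\h$ has probability $O(1/n^2)$, then applies Borel--Cantelli (since $\sum_n 1/n^2 < \infty$) to conclude that for measure-one $(\tau,\h)$ the attacker succeeds at all but finitely many $n$; only then can one conclude that $\Ac^{\tau,\Ss^{\tau,\Sam^{\tau,\h}}}$ inverts infinitely often for measure-one $(\tau,\h)$, which is what lets you average back against \cref{theorem:TDPHardForSam}.
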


\begin{theorem}
Any fully black-box construction of a weakly binding and honest-receiver statistically hiding
commitment scheme from a polynomially hard family of trapdoor permutations has $n^{\Omega(1)}$ communication rounds.
\end{theorem}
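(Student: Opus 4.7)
The plan is to combine \cref{theorem:collisionInLowRound,theorem:TDPHardForSam} exactly as in the preceding theorem, but tuned to an arbitrary polynomial security-parameter expansion $\ell(n)\le n^c$; such a constant $c$ exists because $\Ac$ is a \ppt. Let $s(n)=n^{\log n}$ be a slowly growing super-polynomial hardness witness for the underlying trapdoor family, so that $\log s(n^c)=O((\log n)^{2})$.

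Suppose toward a contradiction that the round complexity fails to be $n^{\Omega(1)}$. Fix $\epsilon<1/(2c)$ so that $d(n)<n^{\epsilon}$, hence $d(\ell(n))\le n^{c\epsilon}\le\sqrt{n}$, for all sufficiently large $n$. I would then apply \cref{theorem:collisionInLowRound} with multiplicity $k=4s(n)$ to obtain a normal-form depth-$(d(n)+1)$ attacker $\Ss$ against the commitment, whose augmented query complexity is $s(n)^{1+o(1)}$. Because the scheme is statistically hiding, any two of the $k$ independent coins that $\Ss$ outputs open to opposite bits with probability $\tfrac12-\negl(n)$, so $\Ss$ breaks weak binding with overwhelming probability; crucially, this holds unconditionally for \emph{every} $\tau$. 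By \cref{prop:InseritNormalForm}, the composed circuit $\Ac^{\Ss}$ is in normal form with depth $D(n)\le d(\ell(n))+1=O(\sqrt{n})$ and augmented query complexity $\aq(n)\le\poly(n)\cdot s(\ell(n))^{1+o(1)}\le s(n^c)^{1+o(1)}$.

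Next I would check the hypothesis $(3\aq(n))^{3D(n)+1}<2^{n/8}$ of \cref{theorem:TDPHardForSam}: after taking logs it reads $D(n)\cdot\log s(n^c)=o(n)$, which holds because $D(n)=O(\sqrt{n})$ while $\log s(n^c)=O((\log n)^{2})$. The theorem then yields
\[
\prob{\tau,h,td,y}{\Ac^{\tau,\Ss^{\tau,\Sam^{\tau,h}}}(1^n,G(td),y)=F^{-1}(td,y)}\le \frac{4}{\aq(n)}<\frac{1}{s(n)}
\]
for all large $n$ (using $s(n^c)\gg s(n)$). Summing the failure probabilities over $n$ gives a convergent series, so a Borel--Cantelli / averaging argument fixes a single pair $(\tau,h)$ for which both (i) $\Ss^{\tau,\Sam^{\tau,h}}$ still breaks weak binding of $\Com^{\tau}$, and (ii) the inversion probability at input size $n$ is below $1/s(n)$ for all but finitely many $n$, contradicting \cref{definition:fully}.

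The central obstacle is the joint parameter balance: $k$ must be large enough that $4/\aq$ drops strictly below $1/s(n)$ in the conclusion of \cref{theorem:TDPHardForSam}, yet $\aq$ must not grow so fast that $D\cdot\log\aq$ exceeds $n/O(1)$ and invalidates its hypothesis. The polynomial expansion $n\mapsto n^c$ translates the commitment's round complexity at security parameter $n^c$ into the depth of $\Ac^{\Ss}$ at input size $n$, costing a $c$-th root in the exponent --- precisely what degrades the tight $\Omega(n/\log n)$ bound of the expansion-bounded theorem to the $n^{\Omega(1)}$ bound obtained here.
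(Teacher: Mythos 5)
The overall route is right: you specialize the combination of \cref{theorem:collisionInLowRound} and \cref{theorem:TDPHardForSam} to polynomial security-parameter expansion, which is exactly how the paper derives this statement as a corollary of \cref{theorem:SHCRoundCompLB} (the paper does not spell out the specialization, so your explicit version of the bookkeeping is welcome). Your identification of the ``central obstacle'' --- the interplay between the expansion $\ell$, the depth $d(\ell(n))$, and the query bound $\aq$ in the hypothesis of the hardness theorem --- is the crux of the argument and you have it right.

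That said, there are a few places where your derivation quietly deviates from the paper's route and needs to be tightened. First, ``Let $s(n) = n^{\log n}$ be a slowly growing super-polynomial hardness witness'' is stated as a given, but $\s$ is a parameter of the construction (\cref{definition:fully}), not something you choose; you should argue that ``polynomially hard'' entitles you to instantiate \cref{definition:fully} with any conveniently slow super-polynomial $\s$ (roughly because a PPT reduction refuting polynomial hardness has non-negligible success, hence success $>1/\s(n)$ for every super-polynomial $\s$). Second, your attacker uses multiplicity $k = 4s(n)$, which makes $\Ss$ itself super-polynomial in its local input length and also leaves ambiguous whether ``$n$'' in $s(n)$ is the outer security parameter (which $\Ss$ does not see) or the local one. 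The paper instead uses multiplicity $k$ equal to the local security parameter, so $\Ss$ is $\poly$-\augQ, and then simply plugs $\aq := 4\s(n)$ as an \emph{upper bound} on the augmented query complexity of $\Ac^{\Ss}$ when invoking \cref{theorem:TDPHardForSam}; this is cleaner and avoids both issues. Third, the concluding Borel--Cantelli step differs from the paper: the paper averages over $(\tau,\h)$ to derive \cref{equation:AbreakTau} and then applies \cref{theorem:TDPHardForSam} directly as an average-case statement, whereas you try to fix a single pair $(\tau,\h)$ satisfying (i) and (ii). To get (ii) for a single pair from the average-case bound of \cref{theorem:TDPHardForSam}, you need a Markov step before Borel--Cantelli, which leaves you with $\Pr_{td,y}[\text{invert}] < O(n^2)/\aq$ rather than $<1/\s(n)$ for a.e.\ $(\tau,\h)$; this extra polynomial factor must be absorbed by making $\aq$ at least $\Omega(n^2\s(n))$, which your parameters do allow but which should be spelled out. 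None of these are fatal --- the approach is the paper's --- but as written the proposal skips steps that are precisely where the specialization to polynomial $\ell$ must be careful.
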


The above two theorems are in fact corollaries of the more general statement given below, stated for trapdoor permutations of arbitrary hardness.
\begin{theorem}\label{theorem:SHCRoundCompLB}
For every $\ell$-security-parameter-expanding, fully black-box construction of a $d$-round
weakly binding and honest-receiver statistically hiding commitment scheme from an $\s \geq n^{\omega(1)}$-hard family of trapdoor permutations, it holds that $d(\ell(n)) \in \Omega \left( n / \log \s(n) \right)$.
\end{theorem}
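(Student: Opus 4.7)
The strategy is to combine the binding attacker of \cref{theorem:collisionInLowRound} with the inversion-hardness bound of \cref{theorem:TDPHardForSam}, via an amplification step whose sole purpose is to inflate the augmented query complexity of the would-be inverter high enough to make the hardness bound bite. Assume, toward contradiction, that on an infinite set of $n$'s we have $d(\ell(n))=o(n/\log\s(n))$. Instantiate \cref{theorem:collisionInLowRound} on the protocol $(\Sc^\tau,\Rc^\tau)$ with $k\eqdef m$ to obtain a deterministic, normal-form, interactive algorithm $\Ss$ that, on security parameter $1^m$, issues \Sam queries of depth at most $d(m)+1$, has polynomial augmented query complexity, and outputs $m$ iid samples $R_1,\ldots,R_m$ from the honest-sender's coins consistent with an honestly distributed transcript. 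Honest-receiver statistical hiding implies that, conditioned on a typical transcript, each $R_i$ opens the commitment to $0$ with probability $\tfrac12\pm\negl(m)$, independently across $i$; hence the probability that all $m$ openings reveal the same bit is at most $2^{-(m-1)}+\negl(m)$, so with overwhelming probability $\Ss$ recovers two openings of distinct bits, thereby breaking the weakly-binding property of $\Com^\tau$ in the sense of \cref{definition:fully} for every $\tau\in\Tau$.

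By the black-box proof-of-binding clause of \cref{definition:fully}, the oracle machine $\Ac^{\tau,\Ss}(1^n,G(td),y)$ therefore outputs $F^{-1}(td,y)$ with probability exceeding $1/\s(n)$ for infinitely many $n$. Since the construction is $\ell$-security-parameter-expanding, every internal invocation of $\Ss$ is on security parameter at most $1^{\ell(n)}$ and thus of query-depth at most $d(\ell(n))+1$; because $\Ac$'s only route to \Sam is through $\Ss$, \cref{prop:InseritNormalForm} guarantees that the composite $\Bc\eqdef\Ac^{\tau,\Ss}$ is itself a normal-form oracle algorithm of query-depth at most $d(\ell(n))+1$, with augmented query complexity bounded by some fixed polynomial $p(n)$.

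Next, non-uniformly convert $\Bc$ into an oracle-aided circuit family (hardwiring good randomness per input length) and amplify by sequentially running $\kappa(n)\eqdef\s(n)$ independent copies, tagging the $i$-th copy's \Sam queries by the trivial extension $\extension_i$ so that all cross-copy $\Cc_\next$ circuits are pairwise distinct and the per-copy query sub-forests are disjoint; after each copy, verify the candidate against $\tau$ and output the first preimage that checks. The resulting circuit $\Bc^\ast$ is normal-form, has query-depth at most $d(\ell(n))+1$, has augmented query complexity $\aq=\kappa(n)\cdot p(n)=\s(n)\cdot n^{O(1)}$, and inverts a uniform challenge with probability at least the first copy's, which is still $>1/\s(n)$. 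The contradiction assumption yields $(3d(\ell(n))+4)\cdot\log(3\aq) = o(n/\log\s(n))\cdot O(\log\s(n)) = o(n)$, so $(3\aq)^{3(d(\ell(n))+1)+1}<2^{n/8}$ for all sufficiently large $n$, and \cref{theorem:TDPHardForSam} then upper-bounds $\Bc^\ast$'s inversion probability by $4/\aq = 4/(\s(n)\cdot n^{O(1)}) < 1/\s(n)$, contradicting the $>1/\s(n)$ lower bound.

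The delicate point is the amplification step: preserving both the normal-form property and the query-depth bound $d(\ell(n))+1$. Naive sequential re-running of $\Bc$ could produce \Sam queries across different copies that share the same $\Cc_\next$ (violating the uniqueness condition of \cref{def:NormalFormAlgorithm}), or even allow a query in one copy to parent a query in another copy, potentially inflating the combined depth past $d(\ell(n))+1$. The fix is the very bookkeeping device that $\As$ uses inside \cref{theorem:collisionInLowRound}: tagging the $i$-th copy's $\Cc_\next$ circuits by the trivial extension $\extension_i$ makes all cross-copy $\Cc_\next$'s pairwise distinct, makes cross-copy parent relations impossible, and---since distinct $\h_\q$'s are independent---keeps the per-copy success probability exactly equal to that of $\Bc$. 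The combined query forest is then literally the disjoint union of depth-$(d(\ell(n))+1)$ sub-forests, and \cref{prop:InseritNormalForm} yields the normal-form and depth conclusion for $\Bc^\ast$ for free.
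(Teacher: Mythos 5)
The overall skeleton of your proof matches the paper's: invoke \cref{theorem:collisionInLowRound} to obtain a deterministic normal-form binding attacker $\Ss$ of query-depth $d+1$, feed it to the black-box security reduction $\Ac$, use \cref{prop:InseritNormalForm} to conclude that the composite $\Ac^{\Ss}$ is a normal-form circuit of depth $d(\ell(n))+1$, and then apply \cref{theorem:TDPHardForSam} to extract the round-complexity bound. Two points, however, need attention.

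The first is a genuine gap. You write that the overwhelming-probability estimate ``thereby break[s] the weakly-binding property of $\Com^\tau$ in the sense of \cref{definition:fully} for every $\tau\in\Tau$,'' but the probability you compute is taken over $\h\la\calH$ (and the parties' coins), whereas \cref{definition:fully} must be invoked with a \emph{fixed} attacker: once $\tau$ is fixed, each $\h$ yields a distinct deterministic algorithm $\Ss^{\tau,\Sam^{\tau,\h}}$, and ``breaks the weakly binding'' requires that this one algorithm achieves two-opening probability $\geq 1-1/p(n)$ for infinitely many $n$'s. Knowing that a $(1-\negl(n))$-fraction of $\h$'s succeed at each individual $n$ does not by itself produce a single $\h$ that succeeds at infinitely many $n$'s. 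The paper closes precisely this gap: it first establishes a per-$n$ failure bound $\Pr_{\h\la\calH}\bigl[\NoBreak^{\tau,\h}_n\bigr]\in O(1/n^2)$ (\cref{corollary:BreakingLowRCSHC}) and then applies the Borel--Cantelli lemma to conclude that for measure one of $(\tau,\h)$ the attacker succeeds for all sufficiently large $n$. You need this (or an equivalent) quantifier-fixing step before invoking the black-box proof of binding.

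The second point is not an error but a detour. Your $\s(n)$-fold ``amplification'' with the $\extension_i$ tagging exists solely to inflate the augmented query complexity so that $4/\aq < 1/\s(n)$. This is unnecessary: by \cref{def:AugmentQC}, being $\aq$-\augQ is an \emph{upper bound} (``at most $\aq(n)$ augmented queries''), so a circuit making only $\poly(n)$ augmented queries is automatically a $4\s(n)$-\augQ circuit, and the paper simply sets $\aq=4\s$ and applies \cref{theorem:TDPHardForSam} directly to $\Ac^{\Ss}$. Your verification that the amplified circuit remains normal-form and of depth $d(\ell(n))+1$ (the $\extension_i$ tagging does prevent cross-copy parenting and repeated $\Cc_\next$'s) is correct, but it is machinery you did not need to build.
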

\begin{proof}
Let $(\Com= (\Sc, \Rc, \Vc))$ be an $\ell$-security-parameter-expanding fully black-box construction
of a $d$-round, $\delta $-binding and honest-receiver, honest-sender, statistically hiding commitment scheme from an $\s$-hard family of trapdoor permutations, where $\delta(n) = 1 - 1/p(n)$ for some $p\in \poly$, and let $m = m(n)$ be a bound on the running time of $\Rc$ on security parameter $n$. \cref{theorem:collisionInLowRound} yields that  relative to most fixing of $(\tau,\Sam^{\tau,\h})$, there exists an efficient breaker for the binding of \Com.
\begin{claim}\label{corollary:BreakingLowRCSHC}
There exists a $(d +1)$-depth, deterministic $\poly$-\augQ, normal-from oracle-aided algorithm $\Ss$ such that the following holds for every $\tau\in \Tau$: for $h\in \calH$ and $r_\Rc \in \zm$ let $\TwoDecom^\h_n(r_\Rc)$ be the event that  $\Vc(\com, \decom) \neq \Vc (\com, \decom') \in \zo$ for $((\decom, \decom'),\com) = \exec{(\Ss^{\tau,\Sam^{\tau,\h}}, \Rc^\tau(r_\Rc))(1^n)}_{\out^\Ss,\out^\Rc}$, and let $\NoBreak^{\tau, \h}_n$ be the event that  $\ppr{r_\Rc \la \zm}{\neg \TwoDecom^{\h}_n(r_\Rc)} \geq \delta(n)$. Then $\ppr{\h\la \calH}{\NoBreak^\h_n} \in O(1/n^2)$.
\end{claim}
We defer the proof of \cref{corollary:BreakingLowRCSHC} of to \cref{sec:BreakingLowRCSHC}, and first use it for proving \cref{theorem:SHCRoundCompLB}. \cref{corollary:BreakingLowRCSHC} yields that
\begin{align}
\sum_{n = 1}^{\infty} \prob{\h\la \calH}{\NoBreak^{\tau, \h}_n} < \infty
\end{align}
for any $\tau\in \Tau$, where $\Ss$ and $\NoBreak^{\h}_n$ are as in the claim statement. By the Borel-Cantelli lemma, the probability over the choices of $\h\la\calH$ that $\NoBreak^{\h}_n$ occurs for infinitely many $n$'s, is \emph{zero}. It follows that with probability one over the choice of $(\tau,\h)\la \Tau \times \calH$, it holds that $\Ss^{\tau,\Sam^{\tau,\h}}$ breaks the weak binding of $\Com$. Hence, with probability one over the choice of $(\tau =(G,F,F^{-1}),\h)$, it holds that
\begin{align}\label{equation:EisBounded}
\prob{td \la \zn,y\la \zn}{\Ac^{\tau, \Sc^{\tau,\Sam^{\tau,\h}}}(1^n, G (td), y) = F^{-1} (td, y)}> \frac{1}{\s(n)}
\end{align}%
for infinitely many $n$'s. Since \cref{equation:EisBounded} holds \wrt measure one of the oracles $(\tau,\h)$, we have that
\begin{align}\label{equation:AbreakTau}
\prob{\MyAtop{\tau \la \Tau,\h\la \calH}{td \la \zn,y\la \zn}}{\Ac^{\tau, \Ss^{\tau,\Sam^{\tau,\h}}}(1^n, G (td), y) = F^{-1} (td, y)}> \frac{1}{\s(n)}
\end{align}
for infinitely many $n$'s.

By \cref{prop:InseritNormalForm}, the circuit $\Ac^\Ss$ (\ie the circuit that  given oracle access to $\tau$ and $\Sam^{\tau,\h}$, acts as $\Ac^{\tau, \Ss^{\tau,\Sam^{\tau,\h}}}$) is in a normal form and of depth $d'(n) = d(\ell(n))+1$. Hence, \cref{equation:AbreakTau} yields the existence of an $\aq = 4\s$-\augQ, normal form, $d'$-depth, oracle-aided circuit family $\As = \set{\As_n}_{n\in \N}$ with
\begin{align}
\prob{\MyAtop{\tau \la \Tau,\h\la \calH}{td \la \zn,y\la \zn}}{\As_n^{\tau, \Sam^{\tau,\h}}(G (td), y) = F^{-1} (td, y)}
> \frac{4}{\aq(n)}
\end{align}
for infinitely many $n$'s.

\cref{theorem:TDPHardForSam} yields that $2^{n/8} \leq (4\aq(n))^{3 d(\ell(n)) +1} \leq (4\s(n))^{6 d(\ell(n)) +2}$, implying that $d(\ell(n)) \in \Omega(n/\log \s(n))$.
\end{proof}

\subsubsection{Proving \texorpdfstring{\cref{corollary:BreakingLowRCSHC}}{\cref{corollary:BreakingLowRCSHC}}}\label{sec:BreakingLowRCSHC}
\begin{proof}[Proof \cref{corollary:BreakingLowRCSHC}]
Let \As be the deterministic, polynomial-\augQ algorithm guaranteed by \cref{theorem:collisionInLowRound} for the protocol $(\Sc, \Rc)$. Recall the following holds for every $\h \in \calH$ and $k\in \N$: following the execution of $(\As^{\Sam^{\tau, \h}}(1^k), \Rc^\tau)(1^n))$ that yields a transcript $\trans$, algorithm $\Ac$ outputs a set $\set{(b_i,r_i))}_{i\in [k]}$ such that the $k$ pairs are independent uniform values for the input and random coins of $\Sc$, consistent with $\trans$. Also recall that over a uniform choice of $\h \la \calH$, the value of $\trans$ has the same distribution has the one induced by $\exec{\Sc^\tau, \Rc^\tau)(1^n)}$.

Algorithm $\Sc$ with oracle access to $\tau$ and $\Sam^{\tau,\h}$, acts through the interaction with $\Rc$ as $\Ac^{\tau,\Sam^{\tau,\h}}$ would on input $(1^n,1^n)$ (\ie we set $k=n$). If in the set output by \Ac there exist two pairs $(0,r_0)$ and $(1,r_1)$, \Ss uses them to generate two decommitments $\decom_0$ and $\decom_1$. Note that, if such pairs were found, then it holds that $\Vc(\com,\decom_0)=0$ and $\Vc(\com,\decom_1)=1$, where $\com$ is the commitment output by $\Rc$ when interacting with $\Ss$. In the following we prove that \Ss finds such a good couple of pairs with save but negligible probability over the choice of $\h\in \calH$.

We next define a set of ``good'' transcripts that enable $\Ss$ to reveal to both $0$ and $1$ with overwhelming
probability. For $n\in \N$ and $b\in \zo$, let $\Trans_n^b = \exec{\Sc^\tau(b), \Rc^\tau)(1^n)}_\trans$ (\ie the random variable induced by the transcript of a random execution of $(\Sc^\tau, \Rc^\tau)$, where $\Sc$'s input bit is $b$), let $\Trans = \Trans^u_n$, for $u\la \zo$, and let $\Balanced_n = \set{\trans\in \Supp(\Trans_n)\colon \frac12 \leq \frac{\Pr_{\Trans^0_n}[\trans]}{\Pr_{\Trans^1_n}[\trans]}\leq \frac32} $. Since \Com is statistically hiding (at least, against the honest receiver), it follows that
\begin{align}\label{eq:BreakingLowRCSHC:1}
\ppr{\Trans_n}{\neg \Balanced_n} = \negl(n)
\end{align}

For $\h \in\calH$ and $r\in \zm$, let $\trans_n^{\h,r_\Rc} = \exec{\Ss^{\tau,\Sam^{\tau,\h}}, \Rc^\tau(r_\Rc))(1^n)}_\trans$. \cref{theorem:collisionInLowRound} yields that $\trans_n^{\h,r_\Rc}$, for uniformly chosen values of $\h$ and $r_\Rc$, and $\Trans_n$, are identically distributed, and that
\begin{align}\label{eq:BreakingLowRCSHC:2}
\ex{\h\la \calH,r_\Rc \la \zm}{\neg \TwoDecom^{\h}_n(r_\Rc)\mid \trans_n^{\h,r_\Rc} \in \Balanced} \leq \left(\frac{2}{3}\right)^{n-1}
\end{align}

We conclude that
\begin{align*}
\lefteqn{\ppr{\h\la \calH}{\NoBreak^\h_n}}\\
& = \ppr{h\la \calH}{\ppr{r_\Rc \la \zm}{\neg \TwoDecom^{\h}_n(r_\Rc)} \geq \delta(n)}\\
&\leq \frac{\ex{\h\la \calH,r_\Rc \la \zm}{\neg\TwoDecom^{\h}_n(r_\Rc)}}{\delta(n)}\\
&\leq \frac1{\delta(n)}\cdot\left(\ppr{\Trans_n}{\neg \Balanced_n}+ \ex{\h\la \calH,r_\Rc \la \zm}{\neg \TwoDecom^{\h}_n(r_\Rc)\mid \trans_n^{\h,r_\Rc} \in \Balanced}\right )\\
&\leq \negl(n),
\end{align*}
where the last inequality follows from \cref{eq:BreakingLowRCSHC:1,eq:BreakingLowRCSHC:2}.
\end{proof}

\subsection{The Communication Complexity Lower Bound}\label{section:ComCompLB}
In this section we give lower bound on the \emph{sender communication complexity} of black-box constructions of statistically hiding commitment from trapdoor permutations. We first give two results for the case where the reduction is to polynomially hard families. The first result is for ``security-preserving" construction, and the second one is for arbitrary one.

\begin{theorem}[restating \cref{thm:Intro:CommComplexityLB}]\label{theorem:SHCComCompLB}
In every $O(n)$-security-parameter-expanding, fully black-box construction of a weakly binding,
honest-receiver, honest-sender statistically hiding commitment scheme from a polynomially hard family of trapdoor permutations, the sender sends $\Omega(n)$ bit.
\end{theorem}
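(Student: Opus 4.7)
The plan is to mirror the proof of Theorem~\ref{theorem:SHCRoundCompLB}, but replace the appeal to Theorem~\ref{theorem:collisionInLowRound} by one to Theorem~\ref{theorem:collisionInLowComComp}. Suppose for contradiction that $(\Sc, \Rc, \Vc, \Ac)$ is an $\ell(n) = O(n)$-security-parameter-expanding, fully black-box construction whose sender communicates $c(n) \in o(n)$ bits on security parameter $n$. I would first pick a ``depth parameter'' $d(m)$ satisfying both $d \in \omega(c(m)/\log m)$ and $d \in o(m/\log m)$; such a $d$ exists exactly because $c \in o(n)$. The lower bound makes $d(m) \cdot 2^{\lceil c(m)/d(m)\rceil}$ polynomial in $m$, so that the augmented query count of Theorem~\ref{theorem:collisionInLowComComp} is polynomial, and the upper bound makes $d(\ell(n)) + 1 \in o(n/\log n)$, which is what one needs to contradict Theorem~\ref{theorem:TDPHardForSam}. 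The cheating sender $\Ss$ is then defined as follows: in the commit phase it runs exactly as $\Sc(u)$ on a uniformly chosen bit $u$ and uniform tape $r$; after the commit phase, given the transcript $\trans$, it invokes the algorithm $\Inv$ from Theorem~\ref{theorem:collisionInLowComComp} with $k = n$, $\eps = 1/n^2$, and the $d$ chosen above, to obtain $n$ tapes $r_1, \dots, r_n$ that, conditional on $\trans$, are iid and uniform over the coins of $\Sc$ consistent with $\trans$. Finally, $\Ss$ decommits once using $r$ and, if possible, a second time using an $r_i$ whose implied bit differs from $u$; otherwise it aborts. Since the commit phase of $\Ss$ is literally $\Sc$'s prescribed behavior, this is an honest-sender attack, which is why the theorem's honest-sender binding assumption suffices.

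To show that $\Ss$ breaks honest-sender weak binding with all-but-negligible probability, I would repeat the ``balanced transcript'' analysis from the proof of Claim~\ref{corollary:BreakingLowRCSHC}: by the honest-receiver statistical hiding property, all but a negligible fraction of transcripts $\trans$ are balanced, and on any balanced transcript the conditional distribution of $\Sc$'s committed bit places mass in $[2/5, 3/5]$ on each value. Hence, conditional on a balanced $\trans$ and on $\Inv$ not aborting, the probability that none of the iid samples $r_1, \dots, r_n$ decommits to $1-u$ is at most $(3/5)^n$. Combining this with the $\eps = 1/n^2$ failure probability of $\Inv$ and the negligible mass of non-balanced transcripts, $\Ss$ succeeds except with probability $O(1/n^2)$ (over $\h \la \calH$ and $\Rc$'s coins), which is more than enough against weak binding. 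A Borel--Cantelli argument identical to the one used in the proof of Theorem~\ref{theorem:SHCRoundCompLB} then fixes $(\tau, \h)$ so that $\Ac^{\tau, \Ss^{\tau, \Sam^{\tau, \h}}}$ inverts $\tau$ with probability $> 1/\s(n)$ for infinitely many $n$.

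Finally, by Proposition~\ref{prop:InseritNormalForm} the composed circuit is normal-form of depth $d(\ell(n))+1 \in o(n/\log n)$ with polynomially many augmented $\Sam$-queries, which contradicts Theorem~\ref{theorem:TDPHardForSam}. The main new subtlety, compared to the round-complexity argument, is that Theorem~\ref{theorem:collisionInLowComComp} has a genuine failure probability $\eps$, whereas Theorem~\ref{theorem:collisionInLowRound} does not; the choice $\eps = 1/n^2$ is enough to be absorbed into the other negligible error terms, provided that the specific $d$ chosen above simultaneously keeps the augmented query complexity polynomial and the depth within the range required by Theorem~\ref{theorem:TDPHardForSam}. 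This is where the hypothesis $c(n) \in o(n)$ (as opposed to merely $c(n) \leq n - 1$) is used tightly: the intersection $\omega(c(n)/\log n) \cap o(n/\log n)$ is non-empty precisely under this hypothesis, which is also the source of the tightness in the statement.
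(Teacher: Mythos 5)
Your proposal follows the paper's overall architecture (honest commit phase, post-hoc use of $\Inv$ from Theorem~\ref{theorem:collisionInLowComComp}, ``balanced transcript'' analysis, Borel--Cantelli, composition via Proposition~\ref{prop:InseritNormalForm}, and a final appeal to Theorem~\ref{theorem:TDPHardForSam}). There is, however, a genuine gap in your parameter choice for the depth $d$, and it propagates into the final step.

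You choose $d$ in $\omega(c(m)/\log m) \cap o(m/\log m)$ so that $\Ss$ makes only polynomially many augmented $\Sam$-queries, and you conclude that the composed circuit $\Ac^{\Ss}$ having polynomially many augmented queries and depth $o(n/\log n)$ ``contradicts Theorem~\ref{theorem:TDPHardForSam}.'' It does not. When you apply Theorem~\ref{theorem:TDPHardForSam} with a polynomial augmented-query bound $\aq$, its conclusion is that the inversion probability is at most $4/\aq = 4/\poly(n)$. But the black-box reduction only guarantees inversion probability exceeding $1/\s(n)$ with $\s(n) = n^{\omega(1)}$; since $1/\s(n) \ll 4/\poly(n)$, there is no contradiction whatsoever. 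To obtain a contradiction you must apply Theorem~\ref{theorem:TDPHardForSam} with $\aq \geq 4\s(n)$ so that its bound $4/\aq \leq 1/\s(n)$ is violated --- and with $\aq = \Theta(\s(n))$ the hypothesis $(3\aq)^{3d^*+1} < 2^{n/8}$ forces $d^* \cdot \log \s(n) = O(n)$, i.e.\ $d^* = O(n/\log \s(n))$. This is a \emph{strictly stronger} requirement than your $d^* \in o(n/\log n)$, because $\log \s(n) = \omega(\log n)$ for polynomially hard families. Worse, the corrected upper bound $O(m/\log \s(m))$ need not intersect your lower bound $\omega(c(m)/\log m)$: e.g.\ $c(m) = m/\log\log m$ and $\log \s(m) = (\log m)^2$ makes the intersection empty even though $c = o(m)$. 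So the claim that the window $\omega(c/\log n)\cap o(n/\log n)$ being non-empty ``is precisely'' what drives the theorem is also misplaced.

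The paper sidesteps this by choosing $d = \lceil 4c/\log \s\rceil + 1$ directly (see Claim~\ref{corollary:BreakingLowCCCom}). Then $2^{\lceil c/d\rceil} \leq \s^{1/4}$, so $\Ss$ makes $O(\sqrt[3]{\s})$ (not polynomial, but still much smaller than $\s$) augmented queries, and $d(\ell(n)) \cdot \log\s(n) = O(c(\ell(n)))$, which is $o(n)$ whenever $c = o(n)$ --- exactly the hypothesis that Theorem~\ref{theorem:TDPHardForSam} needs with $\aq = \Theta(\s(n))$. Both of your constraints should thus be scaled by $\log\s$ rather than $\log n$: you need $d \in \omega(c/\log\s) \cap O(n/\log\s)$ at the relevant parameter, which is non-empty iff $c = o(n)$. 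The rest of your argument (honest commit to a uniform bit, the $[2/5,3/5]$ balanced-transcript reasoning, $\eps = 1/n^2$, Borel--Cantelli, Proposition~\ref{prop:InseritNormalForm}) is sound and matches the paper.
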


\begin{theorem}
In every fully black-box construction of a weakly binding and honest-receiver, honest-receiver statistically hiding
commitment scheme from a polynomially hard family of trapdoor permutations, the sender sends $n^{\Omega(1)}$ bits.
\end{theorem}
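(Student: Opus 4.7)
The proof plan closely parallels that of \cref{theorem:SHCRoundCompLB}, replacing the low-round collision-finding algorithm \cref{theorem:collisionInLowRound} by the low-communication analogue \cref{theorem:collisionInLowComComp}. First, I would establish a general bound: for every $\ell$-security-parameter-expanding, fully black-box construction of a $c$-sender-communication, weakly binding, honest-receiver and honest-sender statistically hiding commitment scheme from a family of $\s \geq n^{\omega(1)}$-hard trapdoor permutations, one has $c(\ell(n)) = \Omega(n)$. The claimed theorem follows as a corollary: any such construction is $\ell$-security-parameter expanding for some polynomial $\ell(n) = n^\alpha$, so $c(n^\alpha) = \Omega(n)$, which rearranges to $c(m) = m^{\Omega(1)}$.

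To construct the cheating sender $\Ss$, I would have $\Ss$ run $\Sc(0)$ honestly through the entire commit stage (this is where the honest-sender binding assumption is crucially used, analogously to \cref{thm:Intro:CommComplexityLB}). After the commit stage produces transcript $\trans$, $\Ss$ invokes the inverter $\Inv$ from \cref{theorem:collisionInLowComComp} on input $(1^n, 1^n, d, 1/n^2, \trans)$ with $d = \lceil c(\ell(n))/\log \ell(n)\rceil$, obtaining with probability $1 - 1/n^2$ a list of $n$ iid samples of $\Sc$'s random coins consistent with $\trans$. If these samples contain openings to both bits, $\Ss$ outputs the corresponding decommitment pair. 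Statistical hiding implies, by essentially the same calculation as in \cref{corollary:BreakingLowRCSHC}, that all but a negligible fraction of transcripts $\trans$ are balanced (assigning probability close to $1/2$ to each bit under the conditional distribution of $\Sc$'s coins), so $n$ iid samples miss one bit with probability $2^{-\Omega(n)}$. Overall, $\Ss$ breaks binding except on an event of probability $O(1/n^2)$ over the receiver coins and the hash $\h$, and Borel--Cantelli then yields success for all but finitely many $n$'s with probability one over $(\tau,\h)$.

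With $d = \lceil c(\ell(n))/\log \ell(n)\rceil$, algorithm $\Ss$ is a normal-form algorithm of depth $d+1$ and augmented query complexity $O(n + d \cdot n^2 \cdot 2^{c(\ell(n))/d}) = \poly(\ell(n))$. Composing $\Ss$ with the security-proof reduction $\Ac$ (a \ppt\ invoking $\Ss$ on security parameter at most $1^{\ell(n)}$) yields, by \cref{prop:InseritNormalForm}, a normal-form oracle-aided circuit family of depth $d(\ell(n))+1$ and augmented query complexity $q(n) = \poly(n) \cdot \poly(\ell(n)) = \poly(n)$. Following the argument of \cref{theorem:SHCRoundCompLB} verbatim, this composition inverts a random trapdoor permutation with probability at least $1/\s(n) \geq 4/q(n)$ for infinitely many $n$, so \cref{theorem:TDPHardForSam} forces $(3q(n))^{3(d(\ell(n))+1)+1} \geq 2^{n/8}$. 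Taking logarithms and using $\log q(n) = O(\log n)$ gives $d(\ell(n)) \cdot \log n = \Omega(n)$, and substituting the definition of $d$ yields $c(\ell(n)) \cdot \log n/\log \ell(n) = \Omega(n)$, hence $c(\ell(n)) = \Omega(n \log\ell(n)/\log n) = \Omega(n)$.

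The main obstacle I expect is the parameter balancing: unlike the round-complexity attacker of \cref{theorem:collisionInLowRound}, the communication-complexity attacker of \cref{theorem:collisionInLowComComp} pays $\sim 2^{c/d}$ \Sam-queries, so the depth $d$ must be chosen large enough to keep the augmented query complexity polynomial (so that \cref{theorem:TDPHardForSam} applies nontrivially against \ppt\ reductions), yet small enough that the product $d \log q$ in the hardness bound does not overshoot the cap $n/8$. The setting $d = c/\log \ell$ is exactly the balance point; more conservative choices would only weaken the constant in the $m^{\Omega(1)}$ conclusion without breaking the argument.
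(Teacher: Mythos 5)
There is a genuine gap in the final accounting step. You write that the composed circuit has augmented query complexity $q(n) = \poly(n)\cdot\poly(\ell(n)) = \poly(n)$ and then claim that its inversion probability is ``at least $1/\s(n) \geq 4/q(n)$.'' This inequality requires $q(n) \geq 4\s(n)$, but the paper defines polynomial hardness as $\s(n) = n^{\omega(1)}$, which is superpolynomial. With $q(n)=\poly(n)$ you have $4/q(n) \gg 1/\s(n)$, so \cref{theorem:TDPHardForSam} applied with $\aq = q(n)$ does not yield a contradiction: the theorem only forbids inversion probability exceeding $4/q(n)$, and the reduction's guarantee is the much weaker $>1/\s(n)$. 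To obtain a contradiction you must instead invoke \cref{theorem:TDPHardForSam} with $\aq = 4\s(n)$ (which is legal, since $\poly(n) \leq 4\s(n)$ means the circuit is trivially $4\s(n)$-\augQ), and the conclusion then becomes $d(\ell(n)) = \Omega(n/\log \s(n))$, not $\Omega(n/\log n)$.

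Once the depth bound is corrected to $\Omega(n/\log\s(n))$, your choice of depth parameter $d(m)=\lceil c(m)/\log m\rceil$ no longer delivers the theorem. Substituting gives $c(\ell(n)) = \Omega(n \log\ell(n)/\log\s(n))$, which can be far smaller than $\Omega(n)$, and need not even be $n^{\Omega(1)}$: for instance $\ell(n)=n^2$ and $\s(n)=2^{n/\log n}$ (a valid $n^{\omega(1)}$ hardness) yields only $c(\ell(n))=\Omega((\log n)^2)$. The root of the problem is that you calibrated $d$ so that $\Ss$ makes $\poly(\ell)$ augmented \Sam queries, but that budget is needlessly conservative; since $\Ss$ need not be a \ppt, it may make up to roughly $\s^{1/3}$ augmented queries while staying safely below the $4\s$ cap. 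The paper's choice $d(m) = \lceil 4c(m)/\log\s(m)\rceil + 1$ (giving $\Ss$ augmented query complexity $O(\s^{1/3})$, as in \cref{corollary:BreakingLowCCCom}) is the correct balance point: it yields $c(\ell(n)) = \Omega(n\cdot\log\s(\ell(n))/\log\s(n)) = \Omega(n)$ using the monotonicity of $\s$, and the $n^{\Omega(1)}$ statement follows from $\ell(n)=\poly(n)$.
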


The above two theorems are in fact corollaries of the more general statement given below, for trapdoor permutations of arbitrary hardness.
\begin{theorem}\label{theorem:SHCComCompLBGeneral}
In every $\ell$-security-parameter-expanding fully black-box construction of a $d$-round
weakly binding and honest-receiver, and honest-receiver statistically hiding commitment scheme from an $\s(n) \geq n^{\omega(1)}$-hard
family of trapdoor permutations in which the sender communicates $c(\cdot)$ bits, it holds that $c(\ell(n)) \in \Omega (n)$.
\end{theorem}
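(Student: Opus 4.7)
The plan is to mirror the proof of \cref{theorem:SHCRoundCompLB}, substituting \cref{theorem:collisionInLowComComp} for \cref{theorem:collisionInLowRound} as the source of an efficient cheating sender. The new twist is that here the depth $d^{\ast}$ of $\Ss$'s \Sam-queries is a free parameter to be optimized against the sender's communication $c(\ell(n))$, rather than being pinned to the protocol's round complexity as in the round-complexity proof.

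First, I would construct an \emph{honest-sender} cheating strategy $\Ss$ as follows. On security parameter $1^{\ell(n)}$, $\Ss$ draws a uniform bit $b$ and plays the commit stage honestly with $\Rc$ using input $b$; once the transcript $\trans$ is produced, $\Ss$ invokes the inverter $\Inv$ of \cref{theorem:collisionInLowComComp} with parameters $k = n$, $\eps = 1/n^2$, and depth $d^{\ast}$ (to be fixed below), feeding it $\trans$. With probability at least $1 - 1/n^2$ this returns $n$ iid samples of \Sc-coins consistent with $\trans$. Breaking weak binding then reduces to an honest-sender analogue of \cref{corollary:BreakingLowRCSHC}: by statistical hiding, for a random transcript the $0$-commit and $1$-commit distributions place comparable mass on $\trans$ (a $\Balanced$-type event), so among $n$ iid samples both bit-values appear with overwhelming probability, yielding two valid decommitments to distinct bits.

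Next, I would observe that for each fixing of $\h$ the cheating sender $\Ss$ is a deterministic normal-form algorithm of depth $d^{\ast} + 1$, making at most $n + d^{\ast} \cdot 2^{\lceil c(\ell(n))/d^{\ast}\rceil} \cdot n^2$ \Sam-queries, each on a $\poly(\ell(n))$-query circuit. By \cref{prop:InseritNormalForm}, the composed circuit $\As = \Ac^{\Ss}$ is then normal-form of depth $d^{\ast} + 1$ with augmented query complexity bounded by $\aq = \s(n) \cdot \poly(n) \cdot 2^{\lceil c(\ell(n))/d^{\ast}\rceil}$. A Borel--Cantelli argument identical to the one in the proof of \cref{theorem:SHCRoundCompLB} then shows that $\As$ inverts a random trapdoor permutation with probability exceeding $1/\s(n) > 4/\aq$ for infinitely many $n$.

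Finally, I would invoke \cref{theorem:TDPHardForSam}, which forces $(3\aq)^{3(d^{\ast}+1)+1} \ge 2^{n/8}$. Taking $d^{\ast} = \lceil c(\ell(n))/\log n \rceil$ keeps $2^{\lceil c(\ell(n))/d^{\ast}\rceil} \le 2n$, hence $\log \aq = O(\log \s(n))$, and the inequality rearranges to $c(\ell(n)) \cdot \log\s(n)/\log n \ge \Omega(n)$, yielding the claimed $\Omega(n)$ bound in the polynomially-hard regime (where $\log\s(n) = O(\log n)$). The main obstacle will be the first step: unlike the round-complexity attacker, $\Ss$ cannot adaptively shape the transcript through \Sam-queries, so breaking binding hinges on a careful combination of the iid-sample guarantee of \cref{theorem:collisionInLowComComp} with the honest-receiver balance provided by hiding; additional care is also needed to track the security-parameter expansion when translating the $c(\ell(n))$ bound back to the trapdoor-permutation parameter $n$.
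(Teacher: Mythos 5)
Your framework is correct and mirrors the paper's proof (honest commit phase, then invoke $\Inv$ from \cref{theorem:collisionInLowComComp}, then Borel--Cantelli and \cref{theorem:TDPHardForSam}), but the optimization of the depth parameter $d^{\ast}$ is wrong, and this breaks the final bound.

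You choose $d^{\ast} = \lceil c(\ell(n))/\log n\rceil$, which makes $2^{\lceil c(\ell(n))/d^{\ast}\rceil}\le 2n$ and $\log\aq = \Theta(\log\s(n))$. Plugging into $(3\aq)^{3(d^{\ast}+1)+1}\ge 2^{n/8}$ gives $d^{\ast}\ge\Omega(n/\log\s(n))$, hence $c(\ell(n))\ge\Omega(n\log n/\log\s(n))$. You then note this is $\Omega(n)$ ``in the polynomially-hard regime (where $\log\s(n)=O(\log n)$).'' But this is exactly backwards: the theorem hypothesizes $\s(n)\ge n^{\omega(1)}$, i.e.\ $\log\s(n)=\omega(\log n)$, which is the regime in which your bound degrades to $o(n)$. (You appear to have conflated ``polynomially hard'' with ``$\s$ polynomially bounded''; polynomial hardness means $\s$ grows super-polynomially.) So as written, your proof does not establish the theorem for any $\s$ to which it applies.

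The fix is to let the denominator in $d^{\ast}$ scale with $\log\s$ rather than $\log n$, as the paper's \cref{corollary:BreakingLowCCCom} does: set $d^{\ast}=\lceil 4c(\ell(n))/\log\s(\ell(n))\rceil$. Then $2^{\lceil c(\ell(n))/d^{\ast}\rceil}\le 2\,\s(\ell(n))^{1/4}$, so $\Ss$ makes $\poly(n)\cdot\s(\ell(n))^{1/4}$ \Sam-queries (this is the $O(\sqrt[3]{\s})$-\augQ bound in \cref{corollary:BreakingLowCCCom}); the resulting $\aq$ still satisfies $\log\aq = O(\log\s(\ell(n)))$, which is all that the final computation needs. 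Now \cref{theorem:TDPHardForSam} forces $d^{\ast}+1\in\Omega(n/\log\s(n))$, i.e.\ $c(\ell(n))\ge\Omega\bigl(n\cdot\log\s(\ell(n))/\log\s(n)\bigr)$, and since $\s$ is non-decreasing and $\ell(n)\ge n$, the ratio $\log\s(\ell(n))/\log\s(n)\ge 1$, giving $c(\ell(n))\in\Omega(n)$ unconditionally on how fast $\s$ grows. The rest of your plan --- honest-sender $\Ss$, balanced-transcript argument, composition via \cref{prop:InseritNormalForm} --- is sound.
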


\begin{proof}
The proof follows in large parts the proof of \cref{theorem:SHCRoundCompLB} given above, so we only mention the significant differences.

Let $(\Com= (\Sc, \Rc, \Vc))$ be an $\ell$-security-parameter-expanding fully black-box construction
of a $c$-communication-complexity, $\delta $-binding and honest-receiver, honest-sender, statistically hiding commitment scheme from an $\s$-hard family of trapdoor permutations, where $\delta(n) = 1 - 1/p(n)$ for some $p\in \poly$, and let $m = m(n)$ be a bound on the running time of \Sc and \Rc on security parameter $n$. \cref{theorem:collisionInLowComComp} yields that the following holds.
\begin{claim}\label{corollary:BreakingLowCCCom}
There exists a $d= \left(\ceil{\frac {4c}{\log s}}+1\right)$-depth, deterministic $O(\sqrt[3]{\s})$-\augQ, normal-from oracle-aided algorithm $\Ss$ such that the following holds for every $\tau\in \Tau$: for $h\in \calH$ and $r_\Sc,r_\Rc \in \zm$, let $\TwoDecom^\h_n(r_\Sc,r_\Rc)$ be one if and only if $\Vc(\com, \decom) \neq \Vc (\com, \decom') \in \zo$ for $\com =\exec{(\Sc^\tau(0,r_\Sc), \Rc(r_\Rc)^\tau)(1^n)}_{\out^\Rc}$ and $(\decom,\decom') = \Ss^{\tau,\Sam^{\tau,\h}}(\com)$, and let $\NoBreak^{\tau, \h}_n$ be the event that $\ppr{r_\Sc\la \zm, r_\Rc \la \zm}{\neg \TwoDecom^{\h}_n(r_\Sc,r_\Rc)} \geq \delta(n)$. Then $\ppr{\h\la \calH}{\NoBreak^\h_n} \in O(1/n^2)$.
\end{claim}
The proof of \cref{corollary:BreakingLowCCCom} follows similar lines to that of \cref{corollary:BreakingLowRCSHC}, see more details in \cref{sec:BreakingLowCCCom}.
Similarly to the proof of \cref{theorem:SHCRoundCompLB}, \cref{corollary:BreakingLowCCCom}  yields that there exists an $\aq = 4\s$-\augQ, normal form, $d+1$-depth, oracle-aided circuit family $\As = \set{\As_n}_{n\in \N}$ with
\begin{align}
\prob{\MyAtop{\tau \la \Tau,\h\la \calH}{td \la \zn,y\la \zn}}{\As_n^{\tau, \Sam^{\tau,\h}}(G (td), y) = F^{-1} (td, y)}
> \frac{4}{\aq(n)}
\end{align}
for infinitely many $n$'s. By \cref{theorem:TDPHardForSam}, it follows that $d(\ell(n)) \in \Omega(n/\log \s(n))$. Since, by our simplifying assumption, $\s$ is non-decreasing, it follows that $c(\ell(n)) \in \Omega\left(\frac{n\cdot \log(s(\ell(n)))}{\log s(n)}\right) \in \Omega(n)$.
\end{proof}

\subsubsection{Proving \texorpdfstring{\cref{corollary:BreakingLowCCCom}}{\cref{corollary:BreakingLowCCCom}}}\label{sec:BreakingLowCCCom}
\begin{proof}[Proof \cref{corollary:BreakingLowCCCom}]
The proof follows in large parts the proof of \cref{corollary:BreakingLowRCSHC} given above, so we only mention the significant differences.

Let \Inv be the algorithm guaranteed \cref{theorem:collisionInLowComComp} for the protocol $(\Sc, \Rc)$. Following an execution $\exec{(\Sc^\tau(0), \Rc^\tau)(1^n)}$ resulting in transcript $\trans$, algorithm $\Sc^{\tau,\Sam^{\tau,\h}}$ calls $\Inv^{\tau,\Sam^{\tau,\h}}(1^n,1^n,d(n)-1,\eps(n) = \delta(n)/n^2,\trans)$ to get set of pairs $\set{(b_i,r_i))}_{i\in [n]}$. If there exists two pairs $(0,r_0)$ and $(1,r_1)$ in the above set, $\Ss$ uses them to generate two decommitments $\decom_0$ and $\decom_1$. Note that number of augmented queries done by \Ss is bounded by $\poly(n) \cdot 2^{\ceil{c(n)/d(n)}} \leq \poly(n) \cdot \sqrt[4]{\s(n)} \in O(\sqrt[3]{\s(n)})$.

Let $\Balanced_n$ be as in \cref{corollary:BreakingLowCCCom}, and for $r_\Sc,r_\Rc \in \zm$, let $\trans_n^{r_\Sc,r_\Rc} = \exec{(\Sc^\tau(0,r_\Sc), \Rc^\tau(r_\Rc))(1^n)}_\trans$. Since $\Com$ is statistically hiding, it follows (see \cref{corollary:BreakingLowCCCom}) that
\begin{align}\label{eq:BreakingLowCCCom:1}
\ppr{r_\Sc\la \zm,r_\Rc \la \zm}{\trans_n^{r_\Sc,r_\Rc} \notin \Balanced_n} = \negl(n)
\end{align}
Let $\Fail_n = \set{(\h,r_\Sc,r_\Rc) \in \calH \times (\zm)^2 \colon \Inv^{\tau,\Sam^{\tau,\h}}(1^n,1^n,d(n)-1,\eps(n),\trans_n^{r_\Sc,r_\Rc}) = \perp}$. \cref{theorem:collisionInLowRound} yields that
\begin{align}\label{eq:BreakingLowCCCom:2}
\ppr{\h \la\calH,r_\Sc\la \zm,r_\Rc \la \zm}{\Fail_n} \leq \eps(n)
\end{align}
It is also easy to verify that (see again \cref{corollary:BreakingLowCCCom}) that
\begin{align}\label{eq:BreakingLowCCCom:3}
\ex{\MyAtop{\h \la\calH,r_\Sc\la \zm}{r_\Rc \la \zm}}{\neg \TwoDecom^{\h}_n(r_\Sc,r_\Rc)\mid \trans_n^{r_\Rc} \in \Balanced_n \land (\h,r_\Sc,r_\Rc) \notin \Fail_n} = \negl(n)
\end{align}
We conclude that
\begin{align*}
\lefteqn{\ppr{\h\la \calH}{\NoBreak^\h_n}}\\
& = \ppr{h\la \calH}{\ppr{r_\Sc \la \zm,r_\Rc \la \zm}{\neg \TwoDecom^{\h}_n(r_\Sc,r_\Rc)} \geq \delta(n)}\\
&\leq \frac{\ex{\h\la \calH,r_\Sc \la \zm,r_\Rc \la \zm}{\neg\TwoDecom^{\h}_n(r_\Sc,r_\Rc)}}{\delta(n)}\\
&\leq \frac1{\delta(n)}\cdot \left( \ppr{\h \la\calH,r_\Sc\la \zm,r_\Rc \la \zm}{\trans_n^{r_\Sc,r_\Rc} \notin \Balanced_n \lor (\h,r_\Sc,r_\Rc) \in \Fail_n} \right. \\
&\quad \left. + \ex{\h\la \calH,r_\Rc \la \zm}{\neg \TwoDecom^{\h}_n(r_\Sc,r_\Rc)\mid \trans_n^{\h,r_\Sc,r_\Rc} \in \Balanced \land (\h,r_\Sc,r_\Rc) \notin \Fail_n} \right)\\
&\leq \frac{\eps(n) + \negl(n)}{\delta(n)} \in O(1/n^2),
\end{align*}
where the last inequality follows from \cref{eq:BreakingLowCCCom:1,eq:BreakingLowCCCom:2,eq:BreakingLowCCCom:3}
\end{proof}

\newcommand{\pir}{\ensuremath{ \MathAlg{P}}\xspace}
\newcommand{\Ser}{\ensuremath{ \MathAlg{Srv}}\xspace}
\newcommand{\Sen}{\ensuremath{ \MathAlg{Snd}}\xspace}

\section{Implications to Other Cryptographic Protocols}\label{section:implications}
Our lower bounds on the round complexity and the communication complexity of statistically hiding commitment schemes imply similar lower
bounds for several other cryptographic protocols. Specifically, our results can be extended to any cryptographic
protocol that can be used to construct a weakly-binding statistically hiding
commitment scheme in a fully-black-box manner while essentially preserving the round complexity or communication complexity of the underlying protocol. In this section we derive new such lower
bound for interactive hashing, oblivious transfer, and single-server private information retrieval protocols. For simplicity, we state these lower bounds for constructions that are security preserving (\ie $O(n)$-security-parameter expanding), and we note that more general statements, as in \cref{theorem:SHCRoundCompLB}, could be derived as well.

We note that our lower bound proof for the round complexity of statistically hiding commitment schemes did not rely on any
malicious behavior by the receiver. Therefore, our lower bound
holds even for schemes in which the statistical hiding property is guaranteed only against honest receivers. Similarly, our lower bound proof for the communication complexity of statistically hiding commitment schemes did not rely on any malicious behavior by the sender during the commit stage. Therefore, our lower bound
holds even for schemes in which the (weak) binding property is guaranteed only against honest senders.

\subsection{Interactive Hashing}\label{de:InteractiveHashing}
Interactive hashing was introduced by \citet{NaorOVY98} and is
a protocol that allows a sender $\Sc$ to commit to a value $y$ while only revealing to the
receiver $\Rc$ the value $(h, z = h(y))$, where $h$ is a 2-to-1 hash function chosen
interactively during the protocol.\footnote{Several extensions to this definition were suggested,
see \cite{HaitnerR06a, NguyenOV06}.} The two security properties of interactive hashing are binding
($\Sc$ is bound by the protocol to producing at most one value of $y$ which is consistent
with the transcript) and hiding ($\Rc$ does not obtain any information about $y$, except
for $h(y)$).

Naor et al.\ constructed an interactive hashing protocol from any one-way permutation, and showed that it implies in a
fully black-box manner a statistically-hiding commitment scheme. The construction of Naor et al.\ preserves the communication complexity of the underlying interactive hashing protocol, but it does not preserve the round complexity. However, in subsequent work \cite{HaitnerR06a,KoshibaS06} it was shown that it is in fact possible to preserve the number of rounds. Combined with our lower bounds on the round complexity and communication complexity of statistically-hiding commitment schemes, this directly implies the following corollary:

\begin{corollary}\label{corollary:IH}
Any $O(n)$-security-parameter expanding fully black-box construction of an interactive hashing
protocol from a family of trapdoor permutations has round complexity $\Omega ( n / \log n )$ and communication complexity $\Omega(n)$.
\end{corollary}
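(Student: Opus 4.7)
The plan is to derive Corollary~\ref{corollary:IH} by \emph{composing} any alleged too-efficient fully black-box construction of interactive hashing from trapdoor permutations with the known fully black-box reduction from statistically hiding commitment schemes to interactive hashing, and then invoking the previously established lower bounds of \cref{theorem:SHCRoundCompLB,theorem:SHCComCompLBGeneral} (or their polynomial-hardness corollaries \cref{thm:Intro:RoundComplexityLB,thm:Intro:CommComplexityLB}).

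Concretely, suppose toward contradiction that $(\Sen,\Ser)$ is an $O(n)$-security-parameter expanding, fully black-box construction of an interactive hashing protocol from a polynomially-hard family of trapdoor permutations, with round complexity $r(n) = o(n/\log n)$ (resp.\ sender communication $c(n) = o(n)$). I first plug $(\Sen,\Ser)$ into the fully black-box reductions of \cite{NaorOVY98,HaitnerR06a,KoshibaS06}, which transform any interactive hashing protocol into a weakly-binding, honest-receiver statistically hiding bit-commitment scheme while preserving the number of rounds (by \cite{HaitnerR06a,KoshibaS06}) and preserving the sender-communication (by the original \cite{NaorOVY98} transformation, which only adds a single bit sent by the sender in the reveal phase). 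Composing these two black-box constructions yields a fully black-box construction $(\Sc,\Rc,\Vc,\Ac)$ of an honest-receiver (and in the communication case, also honest-sender) weakly binding statistically hiding commitment from the same family of trapdoor permutations, with round complexity $O(r(n))$ and sender communication $O(c(n))$.

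The next step is to verify that this composed reduction remains $O(n)$-security-parameter expanding. Since the reduction from commitments to interactive hashing is itself security-parameter preserving (the commitment on security parameter $1^n$ invokes the interactive hashing protocol on security parameter $1^{\Theta(n)}$, and the black-box proof of binding calls the alleged cheating sender once), and the given interactive hashing construction is $O(n)$-expanding by assumption, the composition remains $O(n)$-expanding. At this point, \cref{thm:Intro:RoundComplexityLB} (for the round bound) or \cref{thm:Intro:CommComplexityLB} (for the communication bound) is directly contradicted, because we have produced an $O(n)$-expanding fully black-box construction of a weakly binding, honest-receiver (honest-sender) statistically hiding commitment from trapdoor permutations with $o(n/\log n)$ rounds (resp., $o(n)$ sender-bits).

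The only non-routine point I expect is a bookkeeping one: I must make sure that the specific security flavors match. In particular, the NOVY-type reduction from statistically hiding commitments to interactive hashing requires only the standard, honest-receiver hiding property of interactive hashing and produces an honest-receiver statistically hiding, weakly binding commitment, and when the underlying interactive-hashing sender is honest the resulting commit-stage sender is honest as well --- exactly the hypotheses under which \cref{theorem:SHCRoundCompLB,theorem:SHCComCompLBGeneral} are stated. Once these alignments are checked, both the $\Omega(n/\log n)$ round bound and the $\Omega(n)$ sender-communication bound for interactive hashing follow immediately from the corresponding bounds on statistically hiding commitments.
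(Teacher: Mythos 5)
Your proposal is correct and follows the same route as the paper: compose the hypothesized interactive-hashing construction with the known fully black-box, round- and communication-preserving reduction from interactive hashing to weakly-binding honest-receiver statistically hiding commitments due to \cite{NaorOVY98,HaitnerR06a,KoshibaS06}, and then invoke \cref{theorem:SHCRoundCompLB,theorem:SHCComCompLBGeneral}. The paper states this more tersely, but your bookkeeping about security-parameter expansion and honest-receiver/honest-sender flavors is exactly the kind of checking the paper implicitly relies on.
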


We note that \citet{Wee07}  showed that a restricted class of
fully black-box constructions of interactive hashing from one-way permutations has $\Omega \left( n / \log n \right)$ rounds. Thus, \cref{corollary:IH} extends Wee's lower bound both to include the most general form of such constructions, and to trapdoor permutations.



\subsection{Oblivious Transfer}
Oblivious transfer (OT), introduced by \citet{Rabin81}, is a fundamental primitive in
cryptography. In particular, it was shown to imply secure multiparty computation
\cite{GoldreichMW87, Kilian88, Yao86}. OT has several equivalent formulations, and we consider the
formulation of $\binom{2}{1}$-OT, defined by \citet*{EvenGL85}.
$\binom{2}{1}$-OT is a protocol between two parties, a sender and a receiver. The sender's input
consists of two secret bits $(b_0, b_1)$, and the receiver's input consists of a value $i \in
\zo$. At the end of the protocol, the receiver should learn the bit $b_i$ while the sender does
not learn the value $i$. The security of the protocol guarantees that even a cheating receiver
should not be able to learn the bit $b_{1 - i}$, and a cheating sender should not be able to learn
$i$.

Given any $\binom{2}{1}$-OT protocol that guarantees statistical security for the sender, Fischlin \cite{Fischlin02} showed how to construct a weakly-binding statistically hiding commitment scheme. The construction is fully black-box and preserves the round complexity and the communication complexity. In addition, Wolf and Wullschleger \cite{WolfW06} showed that any $\binom{2}{1}$-OT protocol that guarantees statistical security for the sender can be transformed into a $\binom{2}{1}$-OT protocol that guarantees statistical security for the receiver. Their transformation is full black-box and preserves the round complexity and the communication complexity. Thus, by combining these with our lower bounds we
obtain the following corollary:

\begin{corollary}\label{corollary:OT}
Any $O(n)$-security-parameter expanding fully black-box construction of a $\binom{2}{1}$-OT
protocol that guarantees statistical security for one of the parties from a family of trapdoor
permutations has round complexity $\Omega ( n / \log n )$ and communication complexity $\Omega(n)$.
\end{corollary}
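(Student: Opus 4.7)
The plan is to reduce the problem to the lower bounds already established for weakly-binding statistically hiding commitment schemes, namely \cref{theorem:SHCRoundCompLB} (or its corollary for security-preserving reductions) and \cref{theorem:SHCComCompLB}. Concretely, I would argue the contrapositive: suppose there existed an $O(n)$-security-parameter expanding fully black-box construction of a $\binom{2}{1}$-OT protocol from trapdoor permutations that guarantees statistical security for one of the parties, with either $o(n/\log n)$ rounds or $o(n)$ sender bits. I would then compose this construction with known black-box transformations to obtain a weakly-binding, honest-receiver statistically hiding commitment scheme (or honest-sender one, respectively) whose round or sender-communication complexity is asymptotically no larger than that of the underlying OT, contradicting the cited theorems.

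The key intermediate step is handling the two flavors of statistical security in a uniform way. If the starting OT guarantees statistical security for the receiver, I would first apply the \citet{WolfW06} transformation to obtain an OT protocol guaranteeing statistical security for the sender; since this transformation is fully black-box and preserves both round and communication complexity (up to constants), fully black-box composition of reductions preserves the fully black-box property, and $O(n)$-security-parameter expansion is preserved as well. I would then apply the \cite{Fischlin02} transformation from OT (with statistical security for the sender) to a weakly-binding statistically hiding commitment scheme; this transformation is also fully black-box and preserves both complexities, so the composed reduction is an $O(n)$-security-parameter expanding fully black-box construction of a statistically hiding commitment scheme from trapdoor permutations, with round complexity $o(n/\log n)$ or sender communication $o(n)$ respectively.

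Finally, I would invoke \cref{theorem:SHCRoundCompLB} to derive a contradiction in the round-complexity case, and \cref{theorem:SHCComCompLB} for the sender-communication case, exactly as in the proof of \cref{corollary:IH}. Since our lower bounds for statistically hiding commitments hold already for honest-receiver hiding and honest-sender binding (as noted in the paragraph preceding the statement), the Fischlin reduction need only guarantee these honest-party variants, which it does. The main thing to verify carefully—and the only place where something could go wrong—is that the cited transformations of \cite{Fischlin02} and \cite{WolfW06} genuinely preserve round and sender-communication complexity up to additive constants (so that asymptotic bounds carry over) and are fully black-box with security-parameter expansion $O(n)$; both properties are folklore for these standard reductions, but would need to be stated explicitly for completeness. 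No new use of the oracle \Sam is needed beyond what is already encapsulated in \cref{theorem:SHCRoundCompLB,theorem:SHCComCompLB}.
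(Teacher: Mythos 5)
Your proposal is correct and follows the same route the paper sketches in the text preceding the corollary: compose the Wolf--Wullschleger symmetry of OT with Fischlin's OT-to-commitment transformation (both fully black-box and round/communication preserving), then invoke \cref{theorem:SHCRoundCompLB} and \cref{theorem:SHCComCompLB}. The only cosmetic difference is the stated direction of the Wolf--Wullschleger transformation (you state it as receiver-security to sender-security, which is the direction actually used; the paper's prose states the other direction, but the result is symmetric so both hold), and you are appropriately explicit about the honest-receiver/honest-sender caveats that make the reduction line up with the commitment lower bounds.
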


We stress that there exist constructions of semi-honest receiver $\binom{2}{1}$-OT protocols,
relying on specific number-theoretic assumptions, where the sender enjoys statistical security with
a constant number of rounds (\eg \citet{AielloIR01} and \citet{NaorP01}). Hence, as for statistically hiding commitment schemes, we demonstrate a large gap between the round complexity of OT constructions based on general assumptions and OT constructions based on
specific number-theoretic assumptions.

\subsection{Single-Server Private Information Retrieval}
A single-server private information retrieval (PIR) scheme \cite{ChorGKS95} is a protocol between a
server and a user. The server holds a database $x \in \zn$, and the user holds an index $i
\in [n]$ to an entry of the database. Informally, the user wishes to retrieve the $i$'th entry of
the database, without revealing to the server the value $i$. A naive solution is to have the user
download the entire database, however, the total communication complexity of this solution is $n$
bits. Based on specific number-theoretic assumptions, several schemes with sublinear communication
complexity were developed (see \cite{CachinMS99, Chang04, GentryR05, Lipmaa05, KushilevitzO97}, and
a recent survey by \citet{OstrovskyS07}). The only non-trivial construction
based on general computational assumptions is due to \citet{KushilevitzO00}. Assuming the existence of trapdoor permutations, they constructed an
interactive protocol whose communication complexity is $n - o(n)$ bits.

\citet*{BeimelIKM99} showed that any single-server PIR protocol
with communication complexity of at most $n / 2$ bits, can be used to construct a weakly-binding
statistically hiding commitment scheme. Their construction is fully black-box and preserves
the number of rounds. Thus, by combining this with our lower bound on the round complexity for statistically hiding commitment schemes, we obtain the following corollary:

\begin{corollary}\label{corollary:PIR}
Any $O(n)$-security-parameter expanding fully black-box construction of a single-server PIR
protocol for an $n$-bit database from a family of trapdoor permutations, in which the server
communicates less than $n / 2$ bits, has communication complexity $\Omega \left( n / \log n \right)$.
\end{corollary}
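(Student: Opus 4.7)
The plan is to derive Corollary~\ref{corollary:PIR} from the round complexity lower bound for statistically hiding commitments (\cref{theorem:SHCRoundCompLB}) via a black-box composition with the reduction of \citet{BeimelIKM99}. Assume toward a contradiction that there is an $O(n)$-security-parameter expanding, fully black-box construction $(\Ser,\User,\Ac)$ of a single-server PIR scheme from a family of polynomially-hard trapdoor permutations, in which the server sends fewer than $n/2$ bits on an $n$-bit database and the overall communication complexity is $o(n/\log n)$. Since each round transmits at least one bit, the number of rounds in $(\Ser,\User)$ is also $o(n/\log n)$.

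Next I would invoke the reduction of \citet{BeimelIKM99}: any single-server PIR in which the server communicates at most $n/2$ bits yields a weakly-binding, honest-receiver statistically hiding commitment scheme. It is important to verify (this is essentially a read-through of their construction) that the reduction is (i) fully black-box, in that the commitment sender and receiver invoke the PIR parties only as subroutines and the simulator for the binding adversary uses the PIR breaker only via black-box access; (ii) round-preserving, up to an additive constant; and (iii) $O(n)$-security-parameter expanding, so that composing with the given PIR construction yields again an $O(n)$-security-parameter expanding construction. Granting these properties, the composition produces a fully black-box, $O(n)$-security-parameter expanding construction of a weakly-binding honest-receiver statistically hiding commitment scheme from a polynomially-hard family of trapdoor permutations whose round complexity is $o(n/\log n)$, directly contradicting \cref{theorem:SHCRoundCompLB}. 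Since the communication complexity of any interactive protocol dominates its round complexity, this rules out communication complexity $o(n/\log n)$.

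The main obstacle is the careful checking of the BIKM reduction: one needs to confirm that their construction is indeed captured by the strict notion of fully black-box reductions used in \cref{definition:fully,definition:ParameterExpanding}, and that the hiding and binding bounds they derive for the commitment scheme translate correctly when the PIR's statistical privacy and computational correctness guarantees are only with respect to honest parties (which is all that \cref{theorem:SHCRoundCompLB} requires on the commitment side). Everything else is routine: composing black-box reductions preserves black-boxness, the additive constant overhead in rounds does not affect the $\Omega(n/\log n)$ bound, and the security-parameter expansion of the composed reduction remains linear in $n$.
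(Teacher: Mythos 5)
Your proposal is correct and takes essentially the same route as the paper: compose the round-preserving, fully black-box reduction of \citet{BeimelIKM99} from single-server PIR with server communication at most $n/2$ to weakly-binding statistically hiding commitments, then invoke the $\Omega(n/\log n)$ round-complexity lower bound for such commitments, and finally observe that communication complexity dominates round complexity. The paper presents the same argument (and like you, emphasizes that it must be the \emph{round}-complexity bound that is used, since the BIKM reduction is round-preserving but not communication-preserving).
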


\cref{corollary:IH} yields in particular an $\Omega \left( n / \log n \right)$ lower bound on the communication complexity of such single-server PIR protocols (and, in particular, on the number of bits that the server must communicate). We note that the construction of Beimel et al.\ does not preserve the communication complexity of the underlying PIR protocol. Therefore, our lower bound on the communication complexity of statistically hiding commitment schemes cannot be directly used for deriving a similar lower bound for PIR protocols. Nevertheless, in \cref{section:PIR2Com} we refine the construction of \citeauthor{BeimelIKM99} to a construction which, in particular, preserves the communication complexity. We thus obtain the following corollary:

\begin{corollary}
In any $O(n)$-security-parameter expanding fully black-box construction of a single-server PIR
protocol for an $n$-bit database from a family of trapdoor permutations, the server communicates $\Omega(n)$ bits.
\end{corollary}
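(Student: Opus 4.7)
The plan is to reduce the statement to \cref{theorem:SHCComCompLB} via a refined, communication-preserving, fully black-box reduction from a weakly-binding, honest-sender statistically hiding bit commitment to a single-server PIR. Suppose we are given an $O(n)$-security-parameter-expanding fully black-box construction of a PIR protocol for $n$-bit databases from trapdoor permutations, in which the server sends $c(n)$ bits. It suffices to build, black-box and security-preservingly, a weakly-binding honest-sender statistically hiding bit commitment from the same family (on security parameter $\Theta(n)$) in which the sender's total communication is $O(c(n))$; \cref{theorem:SHCComCompLB} then forces $c(n) \in \Omega(n)$.

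The starting point is the Beimel--Ishai--Kushilevitz--Malkin construction: the commitment sender $\Sc$ plays the PIR server on a uniformly random database $x \in \zo^n$, the receiver $\Rc$ plays the PIR user on a uniformly random index $i \in [n]$, and after executing the PIR the honest receiver knows $x_i$ while, by user privacy and an entropy argument, $x$ still has $\Omega(n)$ bits of conditional entropy from $\Rc$'s view. To commit to $b$, BIKM has $\Sc$ send an additional masking of $b$ by a predicate of $x$ that the receiver can compute from $x_i$; to drive the hiding error to negligible they amplify via parallel repetition, which inflates $\Sc$'s total communication well beyond $c(n)$. My refinement is to amplify hiding without inflating the sender's communication: have $\Rc$ send (during the PIR's user-to-server rounds, which are charged to the receiver) a short seed $s$ for a pairwise-independent hash $h_s\colon\zo^n\to\zo$; the sender's sole extra message is the single bit $b\oplus h_s(x)$; and the reveal stage is non-interactive, with $\Sc$ disclosing $x$ and its PIR randomness and $\Rc$ checking consistency with the recorded PIR transcript and with $b\oplus h_s(x)$. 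Statistical hiding then follows from the Leftover Hash Lemma applied to $x$ conditioned on $\Rc$'s view, using that $x$ retains entropy $n - O(c(n))$, which is $\omega(\log n)$ whenever $c(n) = o(n)$.

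The main obstacle will be arguing honest-sender weak binding while keeping the sender's commit-phase communication exactly $c(n)+1$ bits. For an honest $\Sc$, pinning down $x$ is essentially automatic: $\Sc$ ran the server algorithm on a specific $x$ and fixed random tape, so the reveal discloses them and the receiver's consistency check on the full PIR transcript (together with $b\oplus h_s(x)$) detects any equivocation except on coordinates where the protocol's intrinsic slack allows an alternative opening, and a union bound against the $1 - 1/\poly(n)$ weak-binding threshold absorbs this. Putting the pieces together, an $o(n)$-server-communication PIR would yield, via this reduction, an $O(n)$-security-parameter-expanding, fully black-box, honest-sender weakly-binding statistically hiding commitment in which the sender communicates $o(n)$ bits, contradicting \cref{theorem:SHCComCompLB} and establishing $c(n) \in \Omega(n)$, as claimed.
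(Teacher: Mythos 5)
Your overall plan is exactly the paper's plan (\cref{theorem:COMisGood} together with \cref{theorem:SHCComCompLB}): build a weakly-binding, honest-receiver statistically-hiding bit commitment from the PIR so that the sender's commit-stage communication is $c(n)+o(n)$, and then invoke the commitment lower bound. Your construction is a minor variant of the one in \cref{fig:PIR2Com} — you move the hash/extractor seed to the receiver's side, whereas the paper has the sender pick an explicit strong-extractor seed $t$ and send $(t,\ext(x,t)\oplus s)$; that is a small improvement (sender sends $c(n)+1$ bits rather than $c(n)+O(d(n))$), but not essential to the conclusion. The hiding analysis you sketch matches the paper's: bound the residual min-entropy of the database $x$ given the $c(n)$-bit transcript (\cref{Claim:hiding}) and apply the extractor/Leftover-Hash Lemma.

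The genuine gap is your binding argument. You claim that ``for an honest $\Sc$, pinning down $x$ is essentially automatic'' because the receiver ``recorded'' the PIR transcript — but with $c(n)<n$ bits of server communication, the transcript cannot determine $x$: there are $2^{n}$ databases and only $2^{c(n)}$ server-message sequences, so even a sender who was honest in the commit stage has a huge set of $x'$ consistent with everything it already sent. Transcript-consistency checks and the mask bit do not by themselves prevent equivocation. What actually forces binding is the receiver's hidden index $i$: the receiver checks $x'_i$ against its PIR output, so a cheating sender must produce $x_1\neq x_2$ that agree on position $i$, which it does not know. Formalizing that gives the reduction in \cref{lemma:COMisBinding}: a sender that breaks weak binding yields a malicious server $\Servers$ and predictor that guess $i$ with probability at least $1/n + 1/n^2$, contradicting \cref{def:PIR-security}. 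Your reference to a ``union bound against the $1-1/\poly(n)$ weak-binding threshold'' gestures at the $1/n$ slack, but without explicitly invoking user privacy it is not a proof — the argument silently assumes $i$ is uniform and independent of the sender's view, which is precisely the PIR security guarantee and must be cited to close the loop. You should also make the $x'_i = x_i$ check explicit in the reveal stage, since that (not the transcript check) is what carries the binding.
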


\section*{Acknowledgment}
We thank Mohammad Mahmoody and Rafael Pass for useful discussions.

\addcontentsline{toc}{section}{References}

\appendix
\section{From PIR to Statistically-Hiding Commitments}\label{section:PIR2Com}
The relation between single-server PIR and commitment schemes was first explored by \citet*{BeimelIKM99}, who showed that any single-server PIR protocol in which
the server communicates at most $n/2$ bits to the user (where $n$ is the size of the server's
database), can be used to construct a weakly binding statistically hiding bit-commitment scheme. In
particular, this served as the first indication that the existence of low-communication PIR
protocols implies the existence of one-way functions. In this section we refine the relation
between these two fundamental primitives by improving their reduction. Our improvements are the following:
\begin{enumerate}
\item The construction of \cite{BeimelIKM99} preserves the round complexity of the underlying
single-server PIR, but it does not preserve its communication complexity. In their construction the
sender is always required to send $\Omega(n)$ bits during the commit stage of the commitment
scheme. We show that it is possible to preserve both the round complexity and the communication
complexity. In our construction the number of bits sent by the sender during the commit stage of
the commitment scheme is essentially the number of bits sent by the server in the PIR protocol.

\item The construction of \cite{BeimelIKM99} requires an execution of the single-server PIR protocol
for every committed bit (that is, they constructed a bit-commitment scheme). We show that it is
possible to commit to a super-logarithmic number of bits while executing the underlying
single-server PIR protocol only once.

\item The construction of \cite{BeimelIKM99} was presented for single-server PIR protocols in which the
server communicates at most $n/2$ bits. Our construction applies to any single-server PIR protocol
in which the server communicates up to $n - \omega(\log n)$ bits.
\end{enumerate}

In the remainder of this section we first state the theorem resulting from our construction. Then, we formally define single-server PIR, provide a few additional preliminaries, and present our construction.

\begin{theorem}\label{theorem:COMisGood}
Assume there exists a single-server PIR protocol in which the server communicates $n - k(n)$ bits, where $n$ is the size of the server's database and $k(n) \geq 2 d(n)$ for $d(n) \in \omega( \log n)$.

Then, there exists a weakly binding statistically hiding commitment scheme for $d(n) / 6$
bits, in which the sender communicates at most $n - k(n) + 2d(n)$ bits during the commit stage.
Moreover, the construction is fully black box.
\end{theorem}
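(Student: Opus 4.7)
The plan is to refine the reduction of \citet{BeimelIKM99} so as to simultaneously (i) commit to $\omega(\log n)$ bits with a single PIR execution, (ii) preserve the sender's communication up to an additive $O(d(n))$ overhead, and (iii) tolerate PIR server communication all the way up to $n - \omega(\log n)$ bits. The idea is to use the PIR as the commitment's skeleton: the sender plays the PIR server on a uniformly random database $x \in \zn$, and the hiding of the committed string $b \in \zo^{d(n)/6}$ is provided by a strong extractor applied to $x$.

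Concretely, fix a strong $(2d(n),\, 2^{-\Omega(d(n))})$-extractor $\Ext\colon \zn \times \zo^{\ell(n)} \to \zo^{d(n)/6}$ (\eg the 2-universal hash of the Leftover Hash Lemma, whose seed length $\ell(n)$ is unconstrained because the theorem bounds only the \emph{sender's} communication). In the commit phase the receiver samples $i \gets [n]$ and a seed $s \gets \zo^{\ell(n)}$, the sender samples $x \gets \zn$, the parties execute the PIR with $x$ as the database and $i$ as the query index (piggy-backing $s$ on the first receiver message), and finally the sender sends $c = \Ext(x, s) \oplus b$. To decommit the sender transmits $(b, x)$ and the receiver accepts iff $x \in S(\tau) \eqdef \set{x' \colon \text{honest PIR-server on } (x',Q_R) \text{ produces } Q_S}$ and $\Ext(x, s) \oplus b = c$. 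The construction treats the PIR as a black box, and the sender sends $(n-k(n)) + d(n)/6 \le n - k(n) + 2d(n)$ bits in the commit phase, as required.

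Hiding follows from a routine entropy calculation and the extractor property: because the PIR server sends at most $n - k(n)$ bits, $\HSh(X \mid Q_R, Q_S) \ge k(n)$, and a standard averaging argument yields a conditional smooth min-entropy of at least $k(n) - O(d(n))$ with probability $1 - \negl(n)$. Since $k(n) \ge 2d(n)$ and $d(n) \in \omega(\log n)$, the extractor then produces $d(n)/6$ bits at statistical distance $\negl(n)$ from uniform (even conditioned on $i$ and $s$, which are independent of $x$), so $c$ statistically hides $b$.

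The main obstacle will be the binding, which I plan to reduce to the computational user-privacy of the underlying PIR. A cheating sender that opens to two distinct strings $b \ne b'$ necessarily produces $x_0 \ne x_1 \in S(\tau)$ with $\Ext(x_0, s) \ne \Ext(x_1, s)$, so the set $A = \set{j \colon x_0[j] \ne x_1[j]}$ is nonempty; PIR correctness forces $x_0[i] = x_1[i]$, hence $i \notin A$. The PIR-user-privacy attacker therefore proceeds as follows: on input $Q_R$ (generated from a uniform $i \gets [n]$) it simulates the sender honestly to produce $\tau$, invokes the cheating strategy to extract $A$, and outputs a uniformly random $j \gets [n] \setminus A$ (or a uniform $j \gets [n]$ when $A = \emptyset$). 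A short calculation shows that a cheating probability of $1 - 1/p(n)$ yields a probability of at least $(1 - 1/p(n))/(n-1)$ of guessing $i$ correctly, exceeding the trivial $1/n$ baseline by $\Omega(1/n^2)$ for large enough $p$; a standard hybrid converts this into a non-negligible distinguishing advantage between $Q_R(i_0)$ and $Q_R(i_1)$ for some pair $i_0 \ne i_1$, contradicting PIR user-privacy. The delicacy lies in tracking the success probability carefully through the reduction and the hybrid argument, so as to rule out cheating with probability $1 - 1/q(n)$ for every polynomial $q$ and thereby establish weak binding.
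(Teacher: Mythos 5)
Your high-level plan -- execute the PIR with the sender as server on a uniform database $x$, mask the committed string with a strong extractor of $x$, and reduce binding to PIR user privacy by extracting the index set on which the two opened databases differ -- matches the paper's construction and proof (Protocol~\ref{fig:PIR2Com}, Lemmas~\ref{lemma:COMisHiding} and~\ref{lemma:COMisBinding}). However, one design decision is a genuine bug: you have the \emph{receiver} choose the extractor seed $s$ (justified by ``the theorem bounds only the sender's communication''). This breaks statistical hiding against a malicious receiver. Since the seed is chosen before the sender has revealed anything, a cheating receiver can deterministically pick an $s$ for which $\Ext(\cdot, s)$ is a \emph{bad} seed -- for example, in a 2-universal family $h_{a,b}(x) = ax + b$ the seed $a=0$ gives a constant map, so $c = \Ext(x,s) \oplus b = b$ leaks the committed string in full. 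Your hiding argument invokes the strong-extractor guarantee ``even conditioned on $i$ and $s$, which are independent of $x$,'' but independence alone is not enough -- the seed must actually be \emph{uniform}, which an adversarial receiver will not respect. The fix is exactly what the paper does: the \emph{sender} picks the seed $t$ uniformly and sends $(t, \Ext(x,t) \oplus s)$ after the PIR execution. This costs the sender $d(n)$ extra bits (the paper instantiates Proposition~\ref{proposition:Extractor} with $k = d(n)/3$, giving seed length $d(n)$ and output length $d(n)/6$, for a total of $7d(n)/6 \le 2d(n)$ extra sender bits), which still fits within the $n - k(n) + 2d(n)$ budget. Your optimization is not needed and is in fact harmful.

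Two smaller remarks. First, your entropy accounting (``$\HSh(X \mid Q_R, Q_S) \ge k(n)$, and a standard averaging argument yields conditional smooth min-entropy $\ge k(n) - O(d(n))$'') is not robust: Shannon entropy does not convert to (smooth) min-entropy by a simple averaging argument, and when $k(n) = 2d(n)$ the claimed residual bound $k(n) - O(d(n))$ could be zero or negative. The paper instead uses a direct counting argument on the preimage sets of the transcript function $f_r$ (Claim~\ref{Claim:hiding}), obtaining the concrete bound $\Hmin(X \mid \trans_\pir) \ge k(n)/6$ with probability $1 - 2^{-k(n)/4}$, which cleanly covers the full range $k(n) \ge 2d(n)$. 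Second, your decommitment check $x \in S(\tau)$ (transcript consistency with some honest server execution) is potentially not efficiently verifiable for a randomized server; the paper's simpler check -- that $x[i]$ matches the bit the honest user output in the PIR -- is trivially efficient and is exactly what the binding reduction needs. Your binding argument is otherwise the same as the paper's (output a uniform index from $[n] \setminus A$ and convert the predictor into a distinguisher via a two-index hybrid), and the success calculation $\ge (1 - 1/n^2)/(n-1) = 1/n + 1/n^2$ matches.
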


\paragraph{An overview of the construction.} Let $(\Server, \User)$ be a single-server PIR protocol in
which the server communicates $n - \omega(\log n)$ bits, where $n$ is the size of the server's
database. Consider the following commitment scheme to a string $s$. The commit stage consists of
the sender and the receiver first choosing random inputs $x \in \zn$ and $i \in [n]$,
respectively, and executing the PIR protocol $(\Server, \User)$ on these inputs (that is, the sender plays the
role of the server with database $x$, and the receiver plays the role of the user with index $i$).
As a consequence, the receiver obtains a bit $x_i$, which by the correctness of the PIR protocol is the
$i$'th bit of $x$. Notice that since the sender communicated only $n - \omega(\log n)$,
the random variable corresponding to $x$ still has $\omega(\log n)$ min-entropy from the receiver's
point of view. We take advantage of this fact, and have the sender choose a uniform seed $t$ for a
strong-extractor $\ext$, and send the pair $(t, \ext(x,t) \oplus s)$ to the receiver. That is, we
exploit the remaining min-entropy of the database $x$ in order to mask the committed string $s$ in
a statistical manner. In the reveal stage, the sender sends the pair $(x, s)$ to the receiver. The
binding property follows from the security of the PIR protocol: in the reveal stage, the
sender must send a value $x$ whose $i$'th bit is consistent with the bit obtained by the receiver
during the commit stage -- but this bit not known to the sender.

\subsection{Single-Server Private Information Retrieval --- Definition}
A single-server Private Information Retrieval (PIR) scheme is a protocol between a server and a
user. The server holds a database $x \in \zn$ and the user holds an index $i \in [n]$ to an entry
of the database. The user wishes to retrieve the $i$'th entry of the database,
without revealing the index $i$ to the server. More formally, a single-server PIR scheme is defined
via a pair of probabilistic polynomial-time Turing-machines $(\Server, \User)$ such that:
\begin{itemize}
\item $\Server$ receives as input a string $x \in \zn$. Following its interaction it
does not have any output.

\item $\User$ receives as input an index $i \in [n]$. Following its interaction it outputs
a value $b \in \{0, 1, \bot\}$.
\end{itemize}

Denote by $b \leftarrow \langle \Server(x), \User(i) \rangle$ the experiment in which $\Server$ and
$\User$ interact (using the given inputs and uniformly chosen random coins), and then
$\User$ outputs the value $b$. It is required that there exists a negligible function
$\nu(n)$, such that for all sufficiently large $n$, and for every string $x = x_1 \circ \cdots
\circ x_n \in \zn$, it holds that $x_i \leftarrow \langle \Server(x), \User(i) \rangle$ with
probability at least $1 - \nu(n)$ over the random coins of both $\Server$ and $\User$.

In order to define the security properties of such schemes, we first introduce the following
notation. Given a single-server PIR scheme $(\Server, \User)$ and a Turing-machine $\Servers$ (a
malicious server), we denote by ${\sf view}_{\langle \Servers,\User(i) \rangle}(n)$ the
distribution on the view of $\Servers$ when interacting with $\User(i)$ where $i \in [n]$. This
view consists of its random coins and of the sequence of messages it receives from $\User$,
and the distribution is taken over the random coins of both $\Servers$ and $\User$.

\begin{definition}\label{def:PIR-security}
A single-server PIR scheme $(\Server, \User)$ is secure if for every probabilistic polynomial-time
Turing-machines $\Servers$ and $\Dc$, and for every two sequences of indices $\{ i_n \}_{n =
1}^{\infty}$ and $\{ j_n \}_{n = 1}^{\infty}$ where $i_n, j_n \in [n]$ for every $n$, it holds that
\[ \left| \pr{v \leftarrow {\sf view}_{\langle \Servers,\User(i_n) \rangle}(n) \colon \Dc(v) = 1} - \pr{v \leftarrow {\sf view}_{\langle \Servers,\User(j_n)
\rangle}(n): \Dc(v) = 1} \right| \leq \nu(n) , \]%
for some negligible function $\nu(n)$ and for all sufficiently large $n$.
\end{definition}

\subsection{Additional Preliminaries}\label{section:additionalpreliminaries}
The min-entropy of a distribution $D$ over a set $\cX$ is defined as $\Hmin(D) = \min_{x \in \cX} \log
1/\prob{D}{x}$. The following standard fact (\cf \cite[Fact 2.6]{SahaiV03}) will be useful for us in analyzing
statistically close distributions.
\begin{fact}\label{fact:SV03}
Let $P$ and $Q$ be two distributions with $\SD(P,Q) <
\epsilon$, then
\begin{align*}
\Pr_{x\la P}\left[( 1 - \sqrt{\epsilon}) \cdot \prob{P}{x} < \prob{Q}{x} <( 1 + \sqrt{\epsilon}) \cdot \prob{P}{x}\right] \geq 1 - 2 \sqrt{\epsilon}.
\end{align*}
\end{fact}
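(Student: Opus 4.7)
The plan is to bound the $P$-mass of the ``bad'' set
$$B = \{x : |Q(x) - P(x)| \geq \sqrt{\epsilon} \cdot P(x)\}$$
by $2\sqrt{\epsilon}$, which is exactly the complement of the event in the statement. The natural decomposition is $B = B^+ \cup B^-$, where
$$B^+ = \{x : Q(x) - P(x) \geq \sqrt{\epsilon} \cdot P(x)\}, \qquad B^- = \{x : P(x) - Q(x) \geq \sqrt{\epsilon} \cdot P(x)\},$$
and I will bound $P(B^+)$ and $P(B^-)$ separately by $\sqrt{\epsilon}$ each.

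For $B^+$, I will use the characterization of statistical distance $\SD(P,Q) = \max_S (Q(S) - P(S))$ to conclude
$$\epsilon > \SD(P,Q) \;\geq\; Q(B^+) - P(B^+) \;=\; \sum_{x \in B^+} (Q(x) - P(x)) \;\geq\; \sqrt{\epsilon}\cdot P(B^+),$$
where the last inequality just uses the definition of $B^+$ termwise. Dividing gives $P(B^+) < \sqrt{\epsilon}$. The argument for $B^-$ is symmetric: by definition of $B^-$, $\sum_{x \in B^-}(P(x) - Q(x)) \geq \sqrt{\epsilon}\cdot P(B^-)$, and this sum equals $P(B^-) - Q(B^-) \leq \SD(P,Q) < \epsilon$, yielding $P(B^-) < \sqrt{\epsilon}$.

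Combining, $P(B) = P(B^+) + P(B^-) \leq 2\sqrt{\epsilon}$, and the claim follows by taking complements. There is no real obstacle here — the entire argument is a two-line Markov-style calculation using the variational characterization of $\SD$; the only thing to be careful about is to use the one-sided form $\SD(P,Q) \geq Q(S) - P(S)$ (rather than the $\tfrac12$ sum definition) so that the constant $2$ in $2\sqrt{\epsilon}$ comes out correctly from the two sides $B^+$ and $B^-$.
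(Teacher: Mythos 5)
Your proof is correct. The paper itself does not prove this fact --- it cites it as Fact~2.6 of Sahai and Vadhan \cite{SahaiV03} --- so there is no in-paper proof to compare against; your two-sided Markov-style argument via the variational characterization $\SD(P,Q)=\max_S\bigl(Q(S)-P(S)\bigr)$ is the standard and essentially unique natural way to obtain the $2\sqrt{\epsilon}$ bound, and it is the argument one finds in the cited source. One micro-comment: $B^+$ and $B^-$ are not literally disjoint (they overlap on points with $P(x)=0$), so the clean statement is $P(B)\le P(B^+)+P(B^-)$, which is all you use; and your chain actually yields the strict inequality $P(B)<2\sqrt{\epsilon}$, which of course implies the stated $\ge 1-2\sqrt{\epsilon}$.
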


\begin{definition}\label{definition:extractor}
A function $E \colon \zn \times \zo^d \rightarrow \zo^m$ is a {\sf $(k, \epsilon)$-extractor}, if
for every distribution $X$ over $\zn$ with $\Hmin(X) \ge k$ the distribution $E(X,U_d)$ is
$\epsilon$-close to uniform. $E$ is a {\sf strong $(k, \epsilon)$-extractor}, if the function $E'(x,
y) = y \circ E(x, y)$ is a $(k, \epsilon)$-extractor (where $\circ$ denotes concatenation).
\end{definition}

In our construction of a statistically hiding commitment  from single-server PIR, we will be
using the following explicit construction of strong extractors, which is an immediate corollary of
\cite[Corollary 3.4]{SrinivasanZ99}.

\begin{proposition}\label{proposition:Extractor}
For any $k \in \omega( \log n )$, there exists an explicit strong $(k, 2^{1-k})$-extractor $\ext \colon \zn \times \zo^{3k} \rightarrow \zo^{k/2}$.
\end{proposition}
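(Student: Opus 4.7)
The plan is a direct parameter-matching argument, since the statement is advertised as an immediate corollary of \cite[Corollary 3.4]{SrinivasanZ99}. The Srinivasan--Zuckerman corollary supplies, for every $n$, every $k' \le n$ and every $\eps' > 0$ satisfying a mild condition of the form $k' \ge c \cdot \log(n/\eps')$, an explicit strong $(k', \eps')$-extractor $E'\colon\zn \times \zo^{d'}\to\zo^{m'}$ with seed length $d' = O(\log(n/\eps'))$ and output length $m' = \Omega(k')$. So the first step is simply to look up that corollary and record the exact constants hidden inside the $O(\cdot)$ and $\Omega(\cdot)$ notation.

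Next, I would instantiate it with $k' = k$ and $\eps' = 2^{1-k}$, so that $\log(1/\eps') = k-1$ and hence $\log(n/\eps') = \log n + k - 1$. Here the hypothesis $k \in \omega(\log n)$ kicks in: it gives $\log n = o(k)$, so $\log(n/\eps') = (1+o(1))\,k$, which ensures both (i) the side condition $k \ge c\log(n/\eps')$ holds for all sufficiently large $n$, and (ii) the seed length produced by SZ99 is $d' = O(k)$ and the output length is $m' = \Omega(k)$.

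The remaining step is cosmetic normalization. To match the exact parameters in the statement, pad the seed up to length $3k$ by ignoring extra input bits (which trivially preserves being a strong extractor, as long as the unused seed bits can be appended to the ``seed part'' of $E'$ without changing its functionality, done by defining $\ext(x,(s,s')) = E'(x,s)$ truncated; then output $s \circ s' \circ (\text{truncation})$ for the strong-extractor wrapper), and truncate the output from $m' = \Omega(k)$ bits down to exactly $k/2$ bits. Truncation of a strong $(k,\eps')$-extractor's output is still a strong $(k,\eps')$-extractor, because the statistical distance of a marginal is at most the statistical distance of the joint distribution; so the resulting error bound $\eps' = 2^{1-k}$ is preserved.

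The only potential obstacle is the precise numeric constants: one must confirm that for $k \in \omega(\log n)$ the SZ99 seed length $O(\log(n/\eps'))$ fits inside $3k$ for every sufficiently large $n$ (after possibly restricting to $n \ge n_0$ and folding the small cases into the ``explicit'' construction arbitrarily), and similarly that the output length $\Omega(k)$ is at least $k/2$ after adjusting constants. Both of these are immediate from the $\omega(\log n)$ assumption combined with truncation/padding, so the whole proof amounts to citing SZ99 and performing this parameter accounting.
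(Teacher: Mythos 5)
Your overall strategy---quote \cite[Corollary 3.4]{SrinivasanZ99}, instantiate, pad the seed, truncate the output---is exactly the paper's, which offers no proof beyond declaring \cref{proposition:Extractor} an ``immediate corollary'' of that result; and the padding and truncation steps are indeed harmless (appending unused uniform seed bits and projecting the output both preserve the strong-extractor property). The gap is in the parameter instantiation, and your own accounting already exposes it. With $\eps' = 2^{1-k}$ you correctly compute $\log(n/\eps') = \log n + k - 1 = (1+o(1))\,k$, so the side condition $k \ge c\log(n/\eps')$ that you posit becomes $k \ge c(1+o(1))k$, which fails for every constant $c \ge 1$: the hypothesis $k \in \omega(\log n)$ controls only the $\log n$ term, not the $\log(1/\eps')$ term, which is itself of order $k$. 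This is not a removable bookkeeping issue. In the relevant regime $k \ll n$, \emph{no} construction can yield a strong $(k, 2^{1-k})$-extractor with output length $k/2$ --- or even one output bit --- once $k$ exceeds a constant. By the Radhakrishnan--Ta-Shma entropy-loss bound, every strong $(k,\eps)$-extractor with $m$ output bits satisfies $k - m \ge 2\log(1/\eps) - O(1)$, which here reads $k/2 \ge 2(k-1) - O(1)$ and is false for large $k$; even the cruder bound $\eps = \Omega(2^{-k/2})$, obtained by feeding a random flat source of size $2^k$ into the first output bit, already rules out error $2^{1-k}$.

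So the statement as literally written is unprovable, and the weak point of your write-up is exactly where you defer to ``parameter accounting'': that accounting cannot close. What \cite{SrinivasanZ99} actually supports, and what the paper needs, is a strong $(k, 2^{-\Omega(k)})$-extractor with seed $O(k)$ and output $\Omega(k)$, where the constant in the error exponent must be below $1/4$ once the output length is $k/2$ (e.g.\ error $2^{1-k/8}$). Since $k \in \omega(\log n)$, such an error is still negligible, and the only use of the proposition (\cref{lemma:COMisHiding}) goes through verbatim with any $2^{-\Omega(k)}$ error. A correct version of your argument should therefore first weaken the error parameter to something consistent with the entropy-loss bound, and only then cite \cite{SrinivasanZ99} and perform the padding/truncation you describe.
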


\subsection{The Construction}
Fix $d(n)$, $k(n)$ and a single-server PIR protocol $\pir = (\Server, \User)$ as in \cref{theorem:COMisGood}.
\cref{fig:PIR2Com} describes our construction of the commitment scheme $\Com = (\Sc, \Rc)$. In
the construction we use a strong $\left( d(n)/3, 2^{1 - d(n)/3} \right)$-extractor
$\ext \colon \zn \times \zo^{d(n)} \rightarrow \zo^{d(n) / 6}$ whose existence is
guaranteed by \cref{proposition:Extractor}.

\begin{protocol}[Protocol $\Com = \left( \Sc, \Rc \right)$]\label{fig:PIR2Com}~ 

\item Common input: security parameter $1^n$.

\item Sender's input: $s \in \zo^{d(n) / 6}$.
\medskip
\item Commit stage:
\begin{enumerate}
\item $\Sc$ chooses a uniformly distributed $x \in \zn$.

\item $\Rc$ chooses a uniformly distributed index $i \in [n]$.

\item $\Sc$ and $\Rc$ execute the single-server PIR protocol $(\Server, \User)$ for database of length $n$, where
$\Sc$ acts as the server with input $x$ and $\Rc$ acts as the user with input $i$. As a result, $\Rc$
obtains a bit $x_i \in \zo$.

\item $\Sc$ chooses a uniformly distributed seed $t \in \zo^{d(n)}$, computes $y = \ext(x, t) \oplus s$,
and sends $(t, y)$ to $\Rc$.
\end{enumerate}

\medskip
\item Reveal stage:

\begin{enumerate}
\item $\Sc$ sends $(s, x)$ to $\Rc$.

\item If the $i$'th bit of $x$ equals $x_i$ and $y = \ext(x, t) \oplus s$, then $\Rc$ outputs $s$.

Otherwise, $\Rc$ outputs $\bot$.
\end{enumerate}
\end{protocol}

The correctness of $\Com$ follows directly from the correctness of the PIR protocol. In addition, notice that
the total number of bits communicated by the sender in the commit stage is the total number of bits
that the server communicates in the PIR protocol plus the seed length and the output length of the extractor
$\ext$. Thus, the sender communicates less than $n - k(n) + 2 d(n)$ bits during the commit stage.
In \cref{lemma:COMisHiding} we prove that $\Com$ is statistically hiding, and in \cref{lemma:COMisBinding} we prove that $\Com$ is weakly binding. We note that the proof of hiding
does not rely on any computational properties of the underlying PIR protocol, but only on
the assumed bound on the number of bits communicated by the server.

\begin{lemma}\label{lemma:COMisHiding}
$\Com$ is statistically hiding.
\end{lemma}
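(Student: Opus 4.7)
To prove statistical hiding I must show that for every (possibly unbounded) receiver $\Rs$ and every pair $s_0, s_1 \in \zo^{d(n)/6}$, the distributions $\Trans^\Rs(s_0,n)$ and $\Trans^\Rs(s_1,n)$ are statistically close. The plan is to exploit the fact that the sender transmits at most $n - k(n)$ bits during the PIR phase, so from $\Rs$'s viewpoint the database $X$ (initially uniform over $\zn$) retains enough min-entropy that the strong extractor masks $s$ statistically.

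Fix an arbitrary $\Rs$ and decompose a random transcript into the PIR sub-transcript $\tau_1$ and the final sender message $(T, Y)$. Let $X$ denote the sender's uniform choice of database, $R_S$ its PIR randomness, $T$ its uniform extractor seed, and $R_R$ the receiver's randomness; the variables $X, R_S, T, R_R$ are mutually independent and $X$ is uniform over $\zn$. Let $M$ denote the concatenation of the sender's PIR messages inside $\tau_1$; by assumption $|M| \le n - k(n)$. The key observation is that once $R_R$ is fixed, the receiver's PIR messages become a deterministic function of the sender's PIR messages received so far, so $\tau_1$ is determined by $(R_R, M)$ and therefore, for each fixing of $R_R$, takes at most $2^{n - k(n)}$ distinct values. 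Combined with the independence of $X$ and $R_R$, the chain rule for average min-entropy yields that the conditional average min-entropy of $X$ given $(\tau_1, R_R)$ is at least $n - (n - k(n)) = k(n) \ge 2 d(n)$.

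By Markov's inequality (the standard smoothing step for average min-entropy), with probability at least $1 - 2^{-d(n)}$ over the draw of $(\bar\tau_1, \bar r_R)$, the conditional min-entropy $\Hmin(X \mid \tau_1 = \bar\tau_1, R_R = \bar r_R)$ is at least $2 d(n) - d(n) = d(n) \ge d(n)/3$. For every such \emph{good} conditioning, the strong $(d(n)/3,\, 2^{1 - d(n)/3})$-extractor of \cref{proposition:Extractor} guarantees that $(T, \ext(X, T))$ is $2^{1 - d(n)/3}$-close to $(T, U_{d(n)/6})$, where independence of $T$ from $(X, \tau_1, R_R)$ is used to preserve the seed's uniformity under the conditioning. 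Hence $(T, \ext(X, T) \oplus s)$ is within the same statistical distance of $(T, U_{d(n)/6})$ \emph{for every fixed} $s$.

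Integrating over the good conditionings and applying the triangle inequality, the full receiver-view distribution $(\tau_1, R_R, T, \ext(X, T) \oplus s)$ is $\eps(n) \eqdef 2^{-d(n)} + 2^{1 - d(n)/3}$-close to the $s$-independent distribution $(\tau_1, R_R, T, U_{d(n)/6})$. One more triangle inequality gives $\SD(\Trans^\Rs(s_0, n), \Trans^\Rs(s_1, n)) \le 2 \eps(n)$, which is negligible since $d(n) \in \omega(\log n)$. The main obstacle is the min-entropy step: one must argue that an arbitrary, possibly malicious and unbounded, $\Rs$ cannot extract more information about $X$ than the $n - k(n)$ bits the sender actually transmitted. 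The observation that, conditioned on $R_R$, the whole PIR transcript is a deterministic function of the sender's messages $M$ is what makes this bound tight regardless of $\Rs$'s adaptive strategy.
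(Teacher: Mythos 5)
Your proof is correct, and it reaches the same destination as the paper's proof by a genuinely different and arguably cleaner route. The paper first makes the receiver deterministic WLOG, then for each fixing $r$ of the sender's PIR coins defines $f_r \colon \zn \to \zo^{n-k(n)}$ mapping the database to the resulting PIR transcript, introduces the collision sets $\mathrm{Col}(x,r) = f_r^{-1}(f_r(x))$, shows by counting that small collision sets are rare, and then runs a delicate two-stage averaging argument (defining a $\mathrm{BAD}$ set of transcripts and relating low conditional min-entropy to membership in $\mathrm{BAD}$). You instead keep the receiver randomized, observe that conditioned on $R_R$ the PIR transcript is a deterministic function of the $(n-k(n))$-bit string $M$ sent by the sender, and go directly through the average-min-entropy chain rule $\tilde{\operatorname{H}}_\infty(X \mid \tau_1, R_R) \ge n - (n-k(n)) = k(n)$ followed by the standard Markov smoothing to pass to pointwise min-entropy. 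This factors out the combinatorial counting into a single black-box invocation of the chain rule and yields a slightly tighter failure probability ($2^{-d(n)}$ versus $2^{-k(n)/4}$, with $k(n)\ge 2d(n)$), though both are negligible. The paper's argument is more elementary and self-contained; yours is shorter once the chain-rule machinery is granted, and it avoids the deterministic-receiver reduction since it explicitly carries $R_R$ in the conditioning. Both versions crucially rely on the same structural fact you flag at the end: conditioned on the receiver's coins, the PIR transcript is determined by what the sender sends, so no adaptive strategy of $\Rs$ can extract more than $n-k(n)$ bits of information about $X$.
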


\begin{proof}
We have to show that for any computationally unbounded receiver $\Rc^*$ and for any two strings
$s_0$ and $s_1$, the statistical distance between the distributions $\{ {\sf view}_{\langle
\Sc(s_0),\mathcal{R}^* \rangle}(n) \}$ and $\{ {\sf view}_{\langle \Sc(s_1),\mathcal{R}^* \rangle}(n)
\}$ (see \cref{def:binding}) is negligible in $n$. The transcript of the commit
stage consists of the transcript $\trans_{\pir}$ of the execution of $\pir$ and of the pair $(t,
\ext(x, t) \oplus s)$, where $s$ is the committed string. Note that since $\trans_{\pir}$ is
independent of the committed string, it is sufficient to prove that the statistically distance
between the distribution of $(t, \ext(x, t))$ given $\trans_{\pir}$ and the uniform distribution is
negligible in $n$.

We argue that due to the bound on the number of bits communicated by the server in $\pir$, then
even after executing $\pir$, the database $x$ still has sufficient min-entropy in order to
guarantee that $(t, \ext(x, t))$ is sufficiently close to uniform. More specifically, let $\Rc^*$ be
an all-powerful receiver (recall that without loss of generality such an $\Rc^*$ is deterministic),
and denote by $X$ the random variable corresponding to the value $x$ in $\Com$. The
following claim states the with high probability $X$ has high min-entropy from $\Rc^*$'s point of
view.

\begin{claim}\label{Claim:hiding}
It holds that
\[ \prob{\trans_{\pir} \la \Com}{\Hmin(X \mid \trans_{\pir})< \frac{k(n)}{6} } < 2^{- \frac{k(n)}{4}}
, \]%
where $\trans_{\pir}$ is the transcript of the embedded execution of $\pir$ in $\Com$.
\end{claim}

\begin{proof}
For any value of $r$, the random coins used by $\Sc$ in the execution of $\pir$, let $f_{r}:\zn
\mapsto \zo^{n-k(n)}$ be the function that maps $x$ to the value of $\trans_{\pir}$ generated by
the interaction of $(\Sc(x,r),\Rc^*)$, and let ${\rm Col}(x,r) \eqdef \set{x'\in \zn: f_r(x') =
f_r(x)}$. Since $f_r$ has at most $2^{n-k(n)}$ possible outputs, it follows that
\begin{align}\label{Equation:Hiding1}
\prob{x, r}{\size{{\rm Col}(x, r)} < 2^{\frac{k(n)}{2} + 1}} < \frac{2^{n - k(n)} \cdot
2^{\frac{k(n)}{2} + 1}}{2^n} = 2^{1 - \frac{k(n)}{2}} .
\end{align}%
Let
\[ {\rm BAD} = \left\{ \trans_{\pir} \enspace \colon \enspace \prob{x, r}{\left. \size{{\rm Col}(x, r)}
< 2^{\frac{k(n)}{2} + 1} \mbox{ } \right| \mbox{ } \trans_{\pir} } > 2^{\frac{k(n)}{4}} \cdot 2^{1
- \frac{k(n)}{2}} \right\} , \]%
a a standard averaging argument yields that
\begin{align}
  \prob{\trans_{\pir} \la \Com}{\trans_{\pir} \in {\rm BAD}} \leq 2^{- \frac{k(n)}{4}}
\end{align}

Denote by $U_r$ the random variable corresponding to $r$ in the execution of $\Com$. The
following holds every value of $x$ and $\trans_{\pir}$:
\begin{align}
& \pr{X=x \enspace | \enspace \trans_{\pir}} \label{Equation:Hiding2} \\
& \quad = \pr{\left. X=x \land \size{{\rm Col}(X,U_r)} < 2^{\frac{k(n)}{2}+1} \enspace \right|
\enspace \trans_{\pir}} \nonumber \\
& \quad \quad + \pr{\left. X=x \land \size{{\rm Col}(X,U_r)} \geq 2^{\frac{k(n)}{2}+1}
\enspace \right| \enspace \trans_{\pir}} \nonumber \\
& \quad \leq \pr{\left. \size{{\rm Col}(X,U_r)} < 2^{\frac{k(n)}{2}+1} \enspace \right| \enspace
\trans_{\pir}} + 2^{- \left( \frac{k(n)}{2}+1 \right)} . \nonumber
\end{align}

Note that if $\Hmin(X \mid \trans_{\pir}) < k(n)/6$ for some $\trans_{\pir}$, then there
exists an $x$ for which
\[ \pr{X = x \enspace | \enspace \trans_{\pir}} \ge 2^{- \frac{k(n)}{6}} , \]%
and therefore \cref{Equation:Hiding2} implies that
\begin{align}
\pr{\left. \size{{\rm Col}(X,U_r)} < 2^{\frac{k(n)}{2}+1} \enspace \right| \enspace \trans_{\pir}}
> 2^{- \frac{k(n)}{6}} - 2^{- \left( \frac{k(n)}{2}+1 \right)} > 2^{1 - \frac{k(n)}{4}}
\end{align}

Thus,
\begin{align*}
\prob{\trans_{\pir} \la \Com}{\Hmin(X \mid \trans_{\pir}) < \frac{k(n)}{6} } &  \leq \prob{\trans_{\pir} \la \Com}{\pr{\left. \size{{\rm Col}(X,U_r)} <
2^{\frac{k(n)}{2}+1} \enspace \right| \enspace \trans_{\pir}} > 2^{1 - \frac{k(n)}{4}}} \\
&   \leq \prob{\trans_{\pir} \la \Com}{\trans_{\pir} \in {\rm BAD}} \\
&   \leq 2^{- \frac{k(n)}{4}}.
\end{align*}
\end{proof}

Since $d(n) \in \omega(\log n)$ and $k(n)/6 \ge d(n)/3$, \cref{Claim:hiding} implies that
with probability $1 - {\rm neg}(n)$, the extractor $\ext$ guarantees that the statistical distance
between the pair $(t, \ext(x, t))$ (given $\trans_{\pir}$) and the uniform distribution is at most
$2^{1 - d(n)/3}$ (which is again negligible in $n$). Therefore $\Com$ is
statistically hiding. More specifically, for every string $s \in \zo^{d(n)/6}$ it holds that
\begin{align}
& \SD \left( \{ \trans_{\pir}, t , \ext(X,t) \oplus s \}, \{ \trans_{\pir}, U_{7 d(n) / 6}
\} \right) \\
& \quad \quad \leq \pr{\Hmin(X \mid \trans_{\pir}) < \frac{k(n)}{6}} \nonumber\\
& \quad \quad \quad + \SD \left( \{ \trans_{\pir}, t , \ext(X,t) \oplus s \}, \{
\trans_{\pir}, U_{7 d(n) / 6} \} \enspace \left|\enspace \Hmin(X \mid \trans_{\pir}) \ge \frac{k(n)}{6} \right. \right)\nonumber \\
& \quad \quad \leq 2^{- \frac{k(n)}{4}} + 2^{1 - \frac{d(n)}{3}}.\nonumber
\end{align}%
Therefore, for any two strings $s_0, s_1 \in \zo^{d(n)/6}$ it holds  that
\begin{align*}
& \SD\left( \left\{ {\sf view}_{\langle \Sc(s_0),\mathcal{R}^* \rangle}(n) \right\}, \left\{
{\sf view}_{\langle \Sc(s_1),\mathcal{R}^* \rangle}(n) \right\} \right) \\
& \quad \quad = \SD \left( \{
\trans_{\pir},
t, \ext(X,t) \oplus s_0 \}, \{ \trans_{\pir}, t, \ext(X,t) \oplus s_1 \} \right) \\
& \quad \quad \leq 2 \cdot \left( 2^{- \frac{k(n)}{4}} + 2^{1 - \frac{d(n)}{3}} \right)
,
\end{align*}%
which is negligible in $n$ as required.
\end{proof}

Let $U_r$ be the random variable taking the value of $r$ in the execution of $\Com$. By the above
equation, the following holds every value of $x$ and $\trans_{\pir}$.
\begin{align*}
& \Pr[X=x \mid \trans_{\pir}] \\
& = \Pr[X=x \land \size{{\rm Col}(X,U_r)} < 2^{\frac{k(n)}{2}+1}
\mid \trans_{\pir}] + \Pr[X=x \land \size{{\rm Col}(X,U_r)} \geq 2^{\frac{k(n)}{2}+1} \mid \trans_{\pir}] \\
& \leq  \Pr[\size{{\rm Col}(X,U_r)} < 2^{\frac{k(n)}{2}+1} \mid \trans_{\pir}] +
2^{-(\frac{k(n)}{2}+1)} .
\end{align*}

We conclude that,
\begin{align*}
&\Pr[\trans_{\pir} \la \Com\colon \Hmin(X\mid \trans_{\pir})<\frac{k(n)}{2}]\\
& \quad = \Pr\bigl[\trans_{\pir} \la \Com\colon\max_{x\in \zn} \set{\Pr[X=x \mid \trans_{\pir}]} > 2^{\frac{k(n)}{2}}\bigl]\\
& \quad \leq \Pr\bigl[\trans_{\pir} \la \Com\colon\Pr[\size{{\rm Col}(X,U_r)} < 2^{\frac{k(n)}{2}+1} \mid \trans_{\pir}]\bigl]\\
& \quad =\Pr[\size{{\rm Col}(X,U_r)} < 2^{\frac{k(n)}{2}+1}] < 2^{1-k(n)/2}.
\end{align*}

Recall that $\Rc^*$'s view is the concatenation of the values of $\trans_{\pir}$, $\ext(x,t) \oplus
s$ and $t$. Using standard reduction it follows that for any two strings $s_1,s_2\in
\zo^{\floor{d(n)/2}}$, the statistical difference between $\view[s_1]$ and $\view[s_2]$ is at most
twice the statistical difference between $(\trans_{\pir},\ext(x,t),t)$ and
$(\trans_{\pir},U_{\floor{d(n)/2}},t)$, where the values of $\trans_{\pir}$, $x$ and $t$ are
induced by a random execution of $\Com$. The following concludes the proof of the lemma by showing
that the latter distance is negligible.
\begin{align*}
& \SD\biggr((\trans_{\pir},\ext(X,t),t),(\trans_{\pir}, U_{\floor{d(n)/2}},t)\biggr)\\
& \quad \leq \Pr\biggr[\Hmin(X\mid \trans_{\pir})< \frac{k(n)}{2}\biggr] \\
& \quad \quad + \SD\biggr((\trans_{\pir},\ext(X,t),t),(\trans_{\pir}, U_{\floor{d(n)/2}},t) \mid
\Hmin(X\mid \trans_{\pir})\geq \frac{k(n)}{2}\biggr)\\
& \quad \leq 2^{1-\frac{k(n)}2} + 2^{2-\frac{d(n)}3} = \operatorname{neg}(n).
\end{align*}

\begin{lemma}\label{lemma:COMisBinding}
$\Com$ is weakly binding.
\end{lemma}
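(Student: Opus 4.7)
The plan is to reduce binding to the security of the underlying PIR protocol, and in fact to show that $\Com$ is $(1-1/(2n))$-binding (hence weakly binding). Suppose for contradiction that a PPT cheating sender $\Ss$ produces, with probability at least $1 - 1/(2n)$, two valid openings $(s,x) \neq (s',x')$ of the same commit-stage transcript $(\trans_{\pir}, t, y)$. Since $y = \ext(x,t) \oplus s = \ext(x',t) \oplus s'$ and $s \neq s'$, the two output databases $X$ and $X'$ must satisfy $X \neq X'$; and since both openings must be consistent with the bit the honest receiver learned from the PIR execution on its random index $I$, we also get $X_I = X'_I$. Thus the ``break'' event is contained in $\set{X \neq X'} \cap \set{X_I = X'_I}$.

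Next I will use PIR security to decouple the breaking event from the receiver's secret $I$. For each fixed position $j\in [n]$, consider the following polynomial-time PIR adversary: it runs $\Ss$ as the PIR server, and then as a distinguisher on the server's view $V$ it simulates the sender's choice of $(t,y)$ and the subsequent computation of the two openings to obtain the pair $(X(V), X'(V))$, returning $1$ iff $X_j(V) = X'_j(V)$. By \cref{def:PIR-security} applied to the constant sequences, its acceptance probability is the same up to $\negl(n)$ on any two fixed PIR user inputs. Averaging over uniform $j \la [n]$, and recalling that in a real execution $I$ is also uniform,
\[
\ppr{I,V_I}{X_I(V_I) = X'_I(V_I)} = \ex{V}{1 - \size{\set{j \in [n] : X_j(V) \neq X'_j(V)}}/n} \pm \negl(n),
\]
where on the right $V$ is the server view sampled using any single fixed reference PIR input. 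An identical reduction shows that $q \eqdef \ppr{V}{X(V) = X'(V)}$ is independent of $I$ up to $\negl(n)$.

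To conclude, since $X \neq X'$ implies Hamming distance at least $1$, the right-hand side of the above display is at most $q \cdot 1 + (1-q)(1-1/n)$, and therefore
\[
\Pr[\Ss\text{ breaks}] \leq \ppr{I,V_I}{X_I = X'_I} - \ppr{V}{X = X'} \leq (1-q)(1-1/n) + \negl(n) \leq 1 - 1/n + \negl(n),
\]
which contradicts the assumed lower bound $1 - 1/(2n)$ for all sufficiently large $n$.

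The main delicacy I anticipate is packaging the PIR distinguisher carefully: it must take as input \emph{only} the PIR server's view (not the whole commit-stage transcript) and must internally simulate the sender's choice of $(t,y)$ and the output of $\Ss$'s two openings in polynomial time, so that \cref{def:PIR-security} applies cleanly and uniformly over the fixed position $j$ and the two compared user inputs. Once this is set up, the rest is the routine averaging computation above, and the bound $p(n) = 2n$ falls out with some slack.
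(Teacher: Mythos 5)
Your argument is correct and takes a genuinely different route from the paper's. The paper constructs an explicit index predictor: from two valid openings it extracts a position $j$ where the opened databases $x_1$ and $x_2$ disagree, argues that the receiver's secret index $i$ cannot equal $j$, and then has a predictor output a uniformly random element of $[n]\setminus\{j\}$. This succeeds with probability at least $(1-1/n^2)/(n-1) \geq 1/n + 1/n^2$, and the $1/n^2$ advantage over the trivial $1/n$ is converted into a distinguishing advantage by the standard comparison of two fixed user indices. You instead take an averaging route: you invoke PIR security once per fixed position $j$ to show that $\Pr[X_j = X'_j]$ is essentially independent of the user's input, so $\ppr{I}{X_I = X'_I}$ cannot exceed the average per-position agreement $\ex{V}{1 - \size{\set{j : X_j \neq X'_j}}/n}$ by more than a negligible amount; conditioning on $X\neq X'$ then forces at least one disagreement, pushing this average down by $1/n$. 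Both proofs exploit exactly the same structural fact (the sender cannot know $I$, so it cannot place the forced disagreement away from $I$), but the paper's packaging is more directly a ``guess the index'' reduction fitting \cref{def:PIR-security} with minimal accounting, whereas yours is more analytic, needs a sum of $n$ applications of PIR security (which is fine, as $n\cdot\negl(n)=\negl(n)$), and incidentally yields a tighter binding error of $1-1/(2n)$ versus the paper's $1-1/n^2$ (irrelevant for weak binding). You correctly flag the one real delicacy — the distinguisher must reconstruct the openings from the server's view alone; this works because the view contains $\Ss$'s coins, so the post-PIR computation of $(t,y)$ and the two openings is a deterministic, polynomial-time function of $V$ (or can be internally randomized by the distinguisher).
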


\begin{proof}
We show that $\Com$ is $(1 - 1/n^2)$-binding. Given any malicious sender $\Ss$ that
violates the binding of the commitment scheme $\Com$ with probability at least $1 - 1/n^2$, we
construct a malicious server $\Servers$ that breaks the security of the single-server PIR protocol
$\pir$.

Let $\Ss$ be a polynomial-time malicious sender that violates the binding of $\Com$ with
probability at least $1 - 1/n^2$. As an intermediate step, we first construct a malicious server
that has a non-negligible advantage in predicting a uniformly chosen index held by the user in
$\pir$. More specifically, we construct a malicious server $\Servers$ and a predictor $\Dc'$ such that
\[ \pr{v \leftarrow {\sf view}_{\langle \Servers,\User(i) \rangle}(n) \colon
\Dc'(v) = i} \ge \frac{1}{n} + \frac{1}{n^2} , \]%
where the probability is taken over the uniform choice of $i \in [n]$ and over the coin tosses of
$\Servers$, $\Dc'$ and $\User$. Recall that ${\sf view}_{\langle \Servers,\User(i) \rangle}(n)$
denotes the distribution on the view of $\Servers$ when interacting with $\User(i)$ where $i
\in [n]$. This view consists of its random coins and of the sequence of messages it receives from
$\User$.

The malicious server $\Servers$ follows the malicious sender $\Ss$ in the embedded execution of
$\pir$ in $\Com$. Following the interaction, $\Servers$ proceeds the execution of $\Ss$ to obtain
a pair $(t, y)$ and two decommitments $(x_1, s_1)$ and $(x_2, s_2)$. If $x_1 = x_2$, then $\Servers$
fails. Otherwise, denote by $j \in [n]$ the minimal index such that $x_1[j] \neq x_2[j]$. Now, the
predictor $\Dc'$ outputs a uniformly distributed value $i'$ from the set $[n] \setminus \{ j \}$.

In order to analyze the success probability in predicting $i$, note that if $(x_1, s_1)$ and $(x_2,
s_2)$ are valid decommitments and $s_1 \neq s_2$ (\ie $\Ss$ broke the binding of $\Com$), then
it must hold that $x_1 \neq x_2$. In this case, let $j \in [n]$ be the minimal index such that
$x_1[j] \neq x_2[j]$, then it must be the case that $i \neq j$, as otherwise $\Rc$ will not accept
the two decommitments. Therefore, when the predictor $\Dc'$ outputs a uniformly distributed $i' \in
[n] \setminus \{ j \}$, it will output $i$ with probability $1 / (n-1)$. Thus,
\begin{align}
\pr{v \leftarrow {\sf view}_{\langle \Servers,\User(i) \rangle}(n) \colon \Dc'(v) = i} & \ge
 \left( 1 - \frac{1}{n^2} \right) \cdot \frac{1}{n-1} \\
& =  \frac{n + 1}{n^2} \nonumber\\
& =  \frac{1}{n} + \frac{1}{n^2}\nonumber.
\end{align}

In the remainder of the proof we apply a rather standard argument in order to be fully consistent
with \cref{def:PIR-security} of the security of single-server PIR. That is, we
show that there exists a pair of indices $i, j \in [n]$, a malicious server $\Servers$ and a
distinguisher $\Dc$ such that
\begin{align}
\left| \pr{v \leftarrow {\sf view}_{\langle \Servers,\User(i) \rangle}(n) \colon
\Dc(v) = 1} - \pr{v \leftarrow {\sf view}_{\langle \Servers,\User(j)
\rangle}(n)\colon \Dc(v) = 1} \right| \ge \frac{1}{p(n)} ,
\end{align}
for some polynomial $p(n)$. We prove that this holds for independently and uniformly chosen $i, j
\in [n]$ (and therefore there exist $i$ and $j$ for which this holds) where $\Servers$ is the
malicious server described above, and $\Dc = \Dc_{i,j}$ is a distinguisher that uses $\Dc'$ as
follows:
\begin{itemize}
\item If $\Dc'$ outputs $i$, then $\Dc$ outputs $1$.

\item If $\Dc'$ outputs $j$, then $\Dc$ outputs $0$.

\item Otherwise, $\Dc$ outputs a uniformly distributed $b \in \zo$.
\end{itemize}
It follows that
\begin{align}
& \pr{v \leftarrow {\sf view}_{\langle \Servers,\User(i) \rangle}(n) \colon \Dc (v) = 1} \\
& \quad \quad = \pr{v \leftarrow {\sf view}_{\langle \Servers,\User(i) \rangle}(n) \colon \Dc'(v) =
i}\nonumber \\
& \quad \quad \quad + \frac{1}{2} \cdot \pr{v \leftarrow {\sf view}_{\langle \Servers,\User(i) \rangle}(n) \colon
\Dc'(v) \notin \{i,j\}} \nonumber\\
& \quad \quad \ge \frac{1}{n} + \frac{1}{n^2} + \frac{1}{2} \cdot \pr{v \leftarrow {\sf
view}_{\langle \Servers,\User(i) \rangle}(n) \colon \Dc'(v) \notin \{i,j\}},\nonumber
\end{align}%
and
\begin{align}
& \pr{v \leftarrow {\sf view}_{\langle \Servers,\User(j) \rangle}(n) \colon \Dc (v) = 1} \\
& \quad \quad = \pr{v \leftarrow {\sf view}_{\langle \Servers,\User(j) \rangle}(n) \colon \Dc'(v) =
i} \nonumber\\
& \quad \quad \quad + \frac{1}{2} \cdot \pr{v \leftarrow {\sf view}_{\langle \Servers,\User(j) \rangle}(n) \colon
\Dc'(v) \notin \{i,j\}}\nonumber \\
& \quad \quad = \frac{1}{n} + \frac{1}{2} \cdot \pr{v \leftarrow {\sf view}_{\langle
\Servers,\User(j) \rangle}(n) \colon \Dc'(v) \notin \{i,j\}},\nonumber
\end{align}%
where the last equality holds since both $i$ and $j$ are independently chosen. Finally, note that
\[ \pr{v \leftarrow {\sf
view}_{\langle \Servers,\User(i) \rangle}(n) \colon \Dc'(v) \notin \{i,j\}} = \pr{v \leftarrow {\sf
view}_{\langle \Servers,\User(j) \rangle}(n) \colon \Dc'(v) \notin \{i,j\}} , \]%
and we conclude that
\[ \left| \pr{v \leftarrow {\sf view}_{\langle \Servers,\User(i) \rangle}(n) \colon
\Dc(v) = 1} - \pr{v \leftarrow {\sf view}_{\langle \Servers,\User(j)
\rangle}(n)\colon \Dc(v) = 1} \right| \ge \frac{1}{n^2} . \]%
\end{proof}

\end{document}